\setlist[enumerate]{nosep}
\setlist[itemize]{nosep}
\newcommand{\ignore}[1]{}
\tikzset{nomorepostaction/.code={\let\tikz@postactions\pgfutil@empty}}
\tikzset{middlearrow/.style={
    decoration={markings,
    mark= at position 0.5 with {\arrow{#1}} ,
    },
    postaction={decorate}
}
}
\tikzset{onethirdarrow/.style={
    decoration={markings,
    mark= at position 0.33 with {\arrow{#1}} ,
    },
    postaction={decorate}
}
}
\tikzset{twothirdarrow/.style={
    decoration={markings,
    mark= at position 0.67 with {\arrow{#1}} ,
    },
    postaction={decorate}
}
}
\tikzset{endarrow/.style={
    decoration={markings,
    mark= at position 0.9 with {\arrow{#1}} ,
    },
    postaction={decorate}
}
}
\tikzset{startarrow/.style={
    decoration={markings,
    mark= at position 0.1 with {\arrow{#1}} ,
    },
    postaction={decorate}
}
}
\newcommand{\gt}{\textsc{Grid-Tiling}\xspace}
\newcommand{\gtleq}{\textsc{Grid-Tiling-}$\leq$\xspace}
\newcommand{\edp}{\textsc{Edge-Disjoint-Paths}\xspace}
\newcommand{\kclique}{\textsc{\texorpdfstring{$k$}{k}-Clique}\xspace}
\newcommand{\vdp}{\textsc{Vertex-Disjoint-Paths}\xspace}
\newcommand{\Le}{\texttt{Left}}
\newcommand{\Ri}{\texttt{Right}}
\newcommand{\To}{\texttt{Top}}
\newcommand{\Bo}{\texttt{Bottom}}
\newcommand{\LB}{\text{LB}}
\newcommand{\TR}{\text{TR}}
\newcommand{\lefty}{\texttt{west}}
\newcommand{\righty}{\texttt{east}}
\newcommand{\topy}{\texttt{north}}
\newcommand{\bottomy}{\texttt{south}}
\newcommand{\poly}{\text{poly}}
\newcommand{\w}{\textbf{w}}
\newcommand{\x}{\textbf{x}}
\newcommand{\Matching}{\texttt{Matching}}
\newcommand{\Sink}{\texttt{Sink}}
\newcommand{\Source}{\texttt{Source}}
\newcommand{\Row}{\texttt{RowPath}}
\newcommand{\Column}{\texttt{ColumnPath}}
\newcommand{\Columna}{\texttt{Column}}
\newcommand{\z}{\textbf{z}}
\newcommand{\edgesplitt}{\texttt{Edge}}
\newcommand{\vertsplitt}{\texttt{Vertex}}
\newcommand{\dsp}{\textsc{Disjoint-Shortest-Paths}\xspace}
\newcommand{\DSP}{\text{Disjoint-Shortest-Paths}\xspace}
\newcommand{\disjp}{\textsc{Disjoint-Paths}\xspace}
\newcommand{\edsp}{\textsc{Edge-Disjoint-Shortest-Paths}\xspace}
\newcommand{\vdsp}{\textsc{Vertex-Disjoint-Shortest-Paths}\xspace}
\newcommand{\EDSP}{\text{EDSP}\xspace}
\newcommand{\EDP}{\textsc{EDP}\xspace}
\newcommand{\VDSP}{\text{VDSP}\xspace}
\newcommand{\edge}{\text{edge}}
\newcommand{\vertex}{\text{vertex}}
\newcommand{\inter}{\text{int}}
\newcommand{\Mid}{\text{Mid}}
\newcommand{\Hor}{\text{Hor}}
\newcommand{\Ver}{\text{Ver}}
\newcommand{\Clique}{\textsc{Clique}\xspace}
\newcommand{\onesplit}{\texttt{one-split}\xspace}
\newcommand{\twosplit}{\texttt{two-split}\xspace}
\newcommand{\vertsplit}{\texttt{vertex-split}\xspace}
\newcommand{\notsplit}{\texttt{not-split}\xspace}
\newcommand{\UEDSP}{Undirected-$k$-\textsc{EDSP}\xspace}
\newcommand{\dedsp}{\textsc{Directed-$k$-Edge-Disjoint-Shortest-Paths}\xspace}
\newcommand{\DEDSP}{Directed-\texorpdfstring{$k$}{k}-\textsc{EDSP}\xspace}
\newcommand{\UVDSP}{Undirected-$k$-\textsc{VDSP}\xspace}
\newcommand{\dvdsp}{\textsc{Directed-$k$-Vertex-Disjoint-Shortest-Paths}\xspace}
\newcommand{\DVDSP}{Directed-$k$-\textsc{VDSP}\xspace}
\newcommand{\CanInter}{\textsc{Canonical}\textsubscript{int}\textsuperscript{D}}
\newcommand{\CanEdge}{\textsc{Canonical}\textsubscript{edge}\textsuperscript{D}}
\newcommand{\CanVertex}{\textsc{Canonical}\textsubscript{vertex}\textsuperscript{D}}
\newcommand{\HorizontalInter}{\textsc{Horizontal}\textsubscript{int}\textsuperscript{D}}
\newcommand{\HorizontalEdge}{\textsc{Horizontal}\textsubscript{edge}\textsuperscript{D}}
\newcommand{\HorizontalVertex}{\textsc{Horizontal}\textsubscript{vertex}\textsuperscript{D}}
\newcommand{\VerticalInter}{\textsc{Vertical}\textsubscript{int}\textsuperscript{D}}
\newcommand{\VerticalEdge}{\textsc{Vertical}\textsubscript{edge}\textsuperscript{D}}
\newcommand{\VerticalVertex}{\textsc{Vertical}\textsubscript{vertex}\textsuperscript{D}}
\newcommand{\Gint}{\texorpdfstring{$D_{\inter}$\xspace}{Dint}}
\newcommand{\GintU}{\texorpdfstring{$U_{\inter}$\xspace}{Uint}}
\newcommand{\Gedge}{\texorpdfstring{$D_{\edge}$\xspace}{Dedge}}
\newcommand{\GedgeU}{\texorpdfstring{$U_{\edge}$\xspace}{Uedge}}
\newcommand{\Gvert}{\texorpdfstring{$D_{\vertex}$\xspace}{Dvert}}
\newcommand{\GvertU}{\texorpdfstring{$U_{\vertex}$\xspace}{Uvert}}
\newcommand{\CanInterU}{\textsc{Canonical}\textsubscript{int}\textsuperscript{U}}
\newcommand{\CanEdgeU}{\textsc{Canonical}\textsubscript{edge}\textsuperscript{U}}
\newcommand{\CanVertexU}{\textsc{Canonical}\textsubscript{vertex}\textsuperscript{U}}
\definecolor{darkblue}{rgb}{0,0,1}
\definecolor{darkred}{rgb}{0.6,0,0}
\definecolor{darkgreen}{rgb}{0,0.6,0}
\definecolor{darkyellow}{rgb}{1,0.75,0.03}
\definecolor{red}{rgb}{1,0,0}
\newcommand{\red}[1]{{\textcolor{red} { #1}}}
\definecolor{green}{HTML}{89CE00}
\newcommand{\green}[1]{{\textcolor{green} { #1}}}
\definecolor{magenta}{HTML}{e6308a}
\newcommand{\magenta}[1]{{\textcolor{magenta} { #1}}}
\newcommand{\black}[1]{{\textbf{#1}}}
\begin{document}

	\title{Lower Bounds for Approximate (\& Exact) \texorpdfstring{$k$}{k}-\dsp}

	\titlerunning{Lower Bounds for Approximate (\& Exact) \texorpdfstring{$k$}{k}-\dsp}

	\author{Rajesh Chitnis\inst{1}\textsuperscript{\orcidlink{0000-0002-6098-7770}} \and
	Samuel Thomas\inst{1,2}\textsuperscript{\orcidlink{0000-0002-6278-6833}} \and
	Anthony Wirth\inst{2,3}\textsuperscript{\orcidlink{0000-0003-3746-6704}}\\
	\email{rajeshchitnis@gmail.com}\\
	\email{samuelthomascs@gmail.com}\\
	\email{anthony.wirth@sydney.edu.au}}

	\authorrunning{R. Chitnis, S. Thomas and A. Wirth}

	\institute{School of Computer Science, The University of Birmingham, United Kingdom \and
	School of Computing and Information Systems, The University of Melbourne, Australia \and School of Computer Science, The University of Sydney, Australia}

	\maketitle

\begin{abstract}
    Given a graph $G=(V,E)$ and a set $\mathcal{T}=\big\{ (s_i, t_i) : 1\leq i\leq k \big\}\subseteq V\times V$ of $k$ pairs, the $k$-\vdp (resp.\ $k$-\edp) problem asks to determine whether there exist~$k$ pairwise vertex-disjoint (resp.\ edge-disjoint) paths $P_1, P_2, \ldots, P_k$ in $G$ such that, for each $1\leq i\leq k$, $P_i$ connects $s_i$ to $t_i$.
    Both the edge-disjoint and vertex-disjoint versions in undirected graphs are famously known to be FPT (parameterized by $k$) due to the Graph Minor Theory of Robertson and Seymour.

    Eilam-Tzoreff [DAM `98] introduced a variant, known as the $k$-\dsp problem, where each individual path is further required to be a shortest path connecting its pair. They showed that the
    $k$-\dsp problem is NP-complete on both directed and undirected graphs; this holds even if the graphs are planar and have unit edge lengths.
    We focus on four versions of the problem, corresponding to considering edge/vertex disjointness, and to considering directed/undirected graphs.
    Building on the reduction of Chitnis [SIDMA `23] for $k$-\edp on planar DAGs, we obtain the following \emph{inapproximability lower bound} for each of the four versions of $k$-\dsp on $n$-vertex graphs:
    \begin{itemize}
        \item Under the gap version of the Exponential Time Hypothesis (Gap-ETH), there exists a constant $\delta>0$ such that for any constant $0<\epsilon\leq \frac{1}{2}$ and any computable function $f$, there is no  $(\frac{1}{2}+\epsilon)$-approximation\footnote{An $\alpha$-approximation for $k$-\dsp distinguishes between these two cases: either (i) all $k$ pairs can be satisfied; or (ii) the maximum number of pairs that can be satisfied is less than $\alpha\cdot k$.} in $f(k)\cdot n^{\delta\cdot k}$ time.
    \end{itemize}
    We provide a single, \textbf{unified framework} to obtain lower bounds for \emph{each of the four versions} of $k$-\dsp.
    We are able to further strengthen our results by restricting the structure of the input graphs in the lower bound constructions as follows:
    \begin{itemize}
        \item \underline{Directed}:
        The inapproximability lower bound for edge-disjoint (resp. vertex-disjoint) paths holds even if the input graph is a planar (resp.\ 1-planar) DAG with max in-degree and max out-degree at most $2$.
        \item \underline{Undirected}:
        The inapproximability lower bound for edge-disjoint (resp. vertex-disjoint) paths hold even if the input graph is planar (resp.\ 1-planar) and has max degree $4$.
    \end{itemize}
    The reductions outlined in this paper produce graphs in which half of the terminal pairs are trivially satisfiable, so any improvement of our $(\frac{1}{2} + \epsilon)$ inapproximability factor requires a different approach.

    As a byproduct of our reductions, we also show that the exact versions of each problem is $W[1]$-hard and give a $f(k)\cdot n^{o(k)}$-time lower bound for them under ETH.
    This exact lower bound shows that the $n^{O(k)}$-time algorithms of Bérczi and Kobayashi [ESA `17] for \DEDSP and \DVDSP are tight.

\end{abstract}

\section{Introduction}
\label{sec:introduction-app}

The $k$-\disjp problem is one of the oldest and best-studied in graph theory: given a graph on $n$ vertices and a set of $k$ terminal pairs, the question is to determine whether there exists a collection of $k$ pairwise-disjoint paths where each path connects one of the given terminal pairs. There are four versions of the $k$-\disjp problem depending on whether the underlying graph is undirected or directed, and whether the paths are required to be pairwise edge-disjoint or vertex-disjoint. The undirected \disjp problem is a fundamental ingredient in the algorithmic Graph Minor Theory of Robertson and Seymour: they designed an algorithm~\cite{DBLP:journals/jct/RobertsonS95b} for $k$-\disjp which runs in $f(k)\cdot n^{3}$ time for some function $f$, i.e., an FPT algorithm parameterized by the number $k$ of terminal pairs. The dependence on~$n$ was improved from cubic to quadratic by Kawarabayashi et al., who designed an algorithm running in $g(k)\cdot n^2$ time for some function $g$~\cite{DBLP:journals/jct/KawarabayashiKR12}. However, functions $f$ and $g$ are rapidly growing and this led to the development of faster algorithms (with explicit bounds) for the special case of planar graphs~\cite{DBLP:journals/jct/AdlerKKLST17,DBLP:conf/stoc/LokshtanovMP0Z20}.

In this paper, we focus on a variant of the $k$-\disjp problem, called the $k$-\dsp problem, where there is an additional requirement that each of the paths must be a shortest path for the terminal pair that it connects. This problem was introduced by Eilam-Tzoreff~\cite{eilam}. There are four versions of the $k$-\dsp problem, depending on whether we require edge-disjointness or vertex-disjointness and whether the input graph is directed or undirected. The $k$-\dsp problem arises in several real-world scenarios, such as effective packet switching~\cite{ogierDistributedAlgorithmsComputing1993,srinivasFindingMinimumEnergy2005} and integrated circuit design~\cite{frankPackingPaths1990,robertsonOutlineDisjointPaths}.

\subsection{Organization of the paper}
\label{sec:organization}

We first briefly survey some of the known results for $k$-\dsp on directed graphs (\autoref{sec:prior-directed}) and undirected graphs (\autoref{sec:prior-undirected}) before stating our results in~\autoref{sec:our-result}.

Our results in this paper are all obtained by reductions from known hardness results for the \kclique problem (\autoref{sec:kclique-results}). For each of the four versions of the $k$-\dsp problem, a similar template (see~\autoref{fig:flowchart} for a visual depiction) is followed that entails firstly obtaining an intermediate graph from an instance of \kclique before then applying an operation to vertices of that graph.
\autoref{sec:our-result} gives details of our theorems, whilst~\autoref{sec:app-notation} explains our graph-theoretic notation. Organization of the later sections is as follows:
\begin{itemize}
    \item \underline{Directed graphs}: The reductions from $k$-Clique to edge-disjoint and vertex-disjoint versions of $k$-\dsp on digraphs have a common step which is the construction of an intermediate graph \Gint described in~\autoref{sec:setting-up-the-redn-dir}. Then, the graphs in the reduction for the edge-disjoint version (\autoref{sec:fpt-inapprox-edsp-planar-dags}) and vertex-disjoint version (\autoref{sec:fpt-inapprox-vdsp-1-planar-dags}) are obtained by problem-specific splitting operations from the digraph~\Gint.
    \item \underline{Undirected graphs}: The reductions from $k$-Clique to edge-disjoint and vertex-disjoint versions of $k$-\dsp on undirected graphs have a common step being the construction of an intermediate graph \GintU (\autoref{sec:setting-up-the-U}). Then, the graphs in the reduction for the edge-disjoint version (\autoref{sec:fpt-inapprox-edsp-planar-undir}) and vertex-disjoint version (\autoref{sec:fpt-inapprox-vdsp-undir}) are obtained by problem-specific splitting operations from the undirected graph \GintU.
\end{itemize}

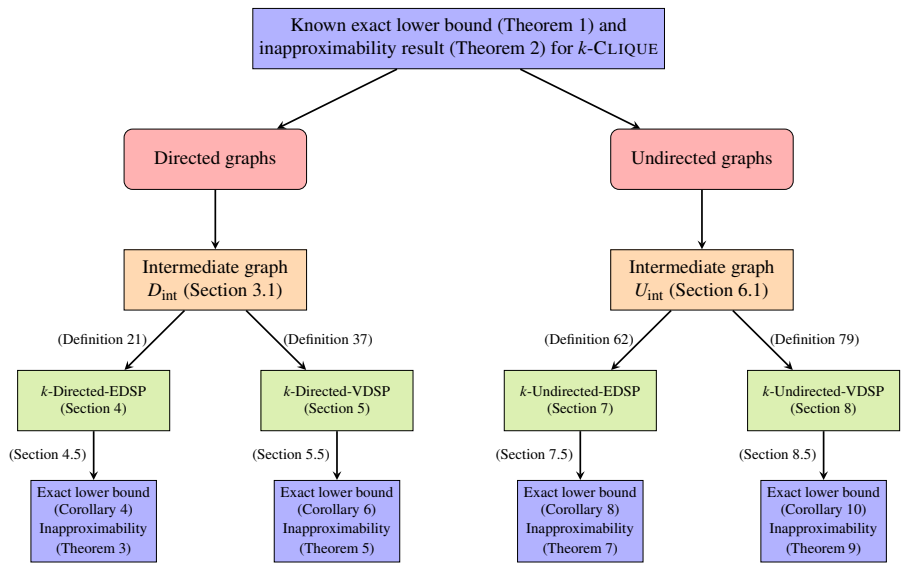
\begin{figure}
    \centering
    \scalebox{0.8}{
        \begin{tikzpicture}[ node distance = 2cm]
            \tikzstyle{startstop} = [rectangle, rounded corners, minimum width=3cm, minimum height=1cm,text centered, draw=black, fill=red!30]

            \tikzstyle{process} = [rectangle, minimum width=3cm, minimum height=1cm, text centered, draw=black, fill=orange!30]
            \tikzstyle{decision} = [rectangle, minimum width=2.5cm, minimum height=1cm, text centered, draw=black, fill=green!30]
            \tikzstyle{bluestar} = [rectangle, minimum width=3cm, minimum height=1cm, text centered, draw=black, fill=blue!30]

            \tikzstyle{bluesmall} = [rectangle, minimum width=2cm, minimum height=1cm, text centered, draw=black, fill=blue!30]

            \tikzstyle{arrow} = [thick,->,>=stealth]

            \node (dir) [startstop] {Directed graphs};
            \node (dirint) [process, below of=dir] {\shortstack{Intermediate graph \\ \Gint (\autoref{sec:construction-of-Gint})}  };

            \node (undir) [startstop, right of=dir, xshift=6cm] {Undirected graphs};
            \node (undirint) [process, below of=undir] {\shortstack{Intermediate graph \\ \GintU (\autoref{sec:construction-of-GintU})}  };

            \node (clique) [bluestar, above of=dir, xshift=4cm] {\shortstack{Known exact lower bound (\autoref{thm:chen-clique-exact-lb}) and \\ inapproximability result (\autoref{thm:cli_inapprox}) for $k$-\textsc{Clique}} };

            \draw [arrow] (dir) -- (dirint);
            \draw [arrow] (undir) -- (undirint);

            \draw [arrow] (clique) -- (dir);
            \draw [arrow] (clique) -- (undir);

            \node (diredge) [decision, below of=dirint, xshift=-2cm] {\shortstack{{\scriptsize $k$-Directed-\EDSP} \\ {\scriptsize (\autoref{sec:fpt-inapprox-edsp-planar-dags})}} };
            \node (dirvert) [decision, below of=dirint, xshift=2cm] {\shortstack{{\scriptsize $k$-Directed-\VDSP} \\ {\scriptsize (\autoref{sec:fpt-inapprox-vdsp-1-planar-dags})}} };

            \node (undiredge) [decision, below of=undirint, xshift=-2cm] {\shortstack{{\scriptsize $k$-Undirected-\EDSP} \\ {\scriptsize (\autoref{sec:fpt-inapprox-edsp-planar-undir})} } };
            \node (undirvert) [decision, below of=undirint, xshift=2cm] {\shortstack{{\scriptsize $k$-Undirected-\VDSP} \\ {\scriptsize (\autoref{sec:fpt-inapprox-vdsp-undir})}} };

            \draw [arrow] (dirint) -- node[left] {\scriptsize (\autoref{def:splitting-operation-edge})} (diredge);
            \draw [arrow] (dirint) -- node[right] {\scriptsize (\autoref{def:splitting-operation-vertex})} (dirvert);

            \draw [arrow] (undirint) -- node[left] {\scriptsize (\autoref{def:splitting-operation-edge-U})} (undiredge);
            \draw [arrow] (undirint) -- node[right] {\scriptsize (\autoref{def:splitting-operation-vertex-U})} (undirvert);

            \node (diredgethm) [bluesmall, below of=diredge] {\shortstack{{\scriptsize Exact lower bound} \\ {\scriptsize (\autoref{thm:hardness-edge-result})} \\ {\scriptsize Inapproximability} \\ {\scriptsize (\autoref{thm:inapprox-edge-result})}} };
            \node (dirvertthm) [bluesmall, below of=dirvert] {\shortstack{{\scriptsize Exact lower bound} \\ {\scriptsize (\autoref{thm:hardness-vertex-result})} \\ {\scriptsize Inapproximability} \\ {\scriptsize (\autoref{thm:inapprox-vertex-result})}} };

            \node (undiredgethm) [bluesmall, below of=undiredge] {\shortstack{{\scriptsize Exact lower bound} \\ {\scriptsize (\autoref{thm:hardness-edge-result-U})} \\ {\scriptsize Inapproximability} \\ {\scriptsize (\autoref{thm:inapprox-edge-result-U})}} };
            \node (undirvertthm) [bluesmall, below of=undirvert] {\shortstack{{\scriptsize Exact lower bound} \\ {\scriptsize (\autoref{thm:hardness-vertex-result-U})} \\ {\scriptsize Inapproximability} \\ {\scriptsize (\autoref{thm:inapprox-vertex-result-U})}} };

            \draw [arrow] (diredge) -- node[left] {\scriptsize (\autoref{sec:proof-of-main-theorem-edge})} (diredgethm);
            \draw [arrow] (dirvert) -- node[left] {\scriptsize (\autoref{sec:proof-of-main-theorem-vertex})} (dirvertthm);

            \draw [arrow] (undiredge) -- node[left] {\scriptsize (\autoref{sec:proof-of-main-theorem-edge-U})} (undiredgethm);
            \draw [arrow] (undirvert) -- node[left] {\scriptsize (\autoref{sec:proof-of-main-theorem-vertex-U})} (undirvertthm);

        \end{tikzpicture}

    }
    \caption{Our lower bounds originate from one of two known results for \kclique; namely~\autoref{thm:chen-clique-exact-lb} for our exactness results and~\autoref{thm:cli_inapprox} for inapproximability. This flowchart demonstrates the symmetry of the processes for obtaining each result by first defining an intermediate graph and making adjustments for the specific \DSP instance.
    \label{fig:flowchart}
    }
\end{figure}

\subsection{Prior Work on \texorpdfstring{$k$}{k}-\dsp on directed graphs}
\label{sec:prior-directed}

The two versions of $k$-\dsp on directed graphs are defined as follows:\\

\begin{center}
    \noindent\framebox{\begin{minipage}{0.95\linewidth}
                           \dedsp \textbf{(\DEDSP)}\\
                           \emph{\underline{Input}}: An integer $k$, a directed graph $G=(V,E)$ with non-negative edge-lengths, and a set $\mathcal{T}=\big\{ (s_i, t_i)\ : 1\leq i\leq k \big\}\subseteq V\times V$ of $k$ terminal pairs.\\
                           \emph{\underline{Question}}: Does there exist a collection of $k$ paths $P_1, P_2, \ldots, P_k$ in $G$ such that
                           \begin{itemize}
                               \item $P_i$ is a shortest $s_i \leadsto t_i$ path in $G$ for each $1\leq i\leq k$, and
                               \item for each $1\leq i\neq j\leq k$, the paths $P_i$ and $P_j$ are edge-disjoint?
                           \end{itemize}
    \end{minipage}}
\end{center}

\vspace{2mm}

\begin{center}
    \noindent\framebox{\begin{minipage}{0.95\linewidth}
                           \dvdsp \textbf{(\DVDSP)}\\
                           \emph{\underline{Input}}: An integer $k$, a directed graph $G=(V,E)$ with non-negative edge-lengths, and a set $\mathcal{T}=\big\{ (s_i, t_i)\ : 1\leq i\leq k \big\}\subseteq V\times V$ of $k$ terminal pairs.\\
                           \emph{\underline{Question}}: Does there exist a collection of $k$ paths $P_1, P_2, \ldots, P_k$ in $G$ such that
                           \begin{itemize}
                               \item $P_i$ is a shortest $s_i \leadsto t_i$ path in $G$ for each $1\leq i\leq k$, and
                               \item for each $1\leq i\neq j\leq k$, the paths $P_i$ and $P_j$ are vertex-disjoint?
                           \end{itemize}
    \end{minipage}}
\end{center}

\medskip

By setting all edge-lengths to be $0$, the hardness for $k$-\dsp on digraphs follows from that of the $k$-\disjp problem on digraphs. Eilam-Tzoreff~\cite{eilam} showed that both \DVDSP and \DEDSP are NP-hard when $k$ is part of the input, even when the input digraph is planar and all edge-lengths are $1$. Bérczi and Kobayashi~\cite{berczi-kobayashi} designed $n^{O(k)}$-time algorithms for \DVDSP on planar digraphs, and for \DVDSP and \DEDSP on DAGs by modifying an earlier algorithm of Fortune et al.\ for the $k$-Disjoint-Paths problem on DAGs~\cite{DBLP:journals/tcs/FortuneHW80}.
When each edge-length is positive, Bérczi and Kobayashi~\cite{berczi-kobayashi} also showed that Directed-$2$-VDSP and Directed-$2$-EDSP can be solved in $n^{O(1)}$ time.
Amiri and Wargalla~\cite{amiri-arxiv} showed a tight lower bound for \DEDSP on planar DAGs: under the Exponential Time Hypothesis (ETH)\footnote{The Exponential Time Hypothesis (ETH) states that $n$-variable $m$-clause 3-SAT cannot be solved in $2^{o(n)}\cdot (n+m)^{O(1)}$ time~\cite{eth,eth-2}.}, there is no computable function $f$, such that \DEDSP on planar DAGs admits an $f(k)\cdot n^{o(k)}$-time algorithm.
Our results are reached by advancing Chitnis's technique for obtaining an exact lower bound for \edp on planar DAGs~\cite{rajesh-ciac-21}.
Although not explicitly mentioned in their paper, the hardness reduction for \UVDSP on general graphs by Bentert et al. also holds for $1$-planar graphs and for DAGs if one were to orient all edges from either left-to-right or bottom-to-top, and can also be adapted to hold for a bounded max degree of 4~\cite[Proposition~3]{nichterlein}.

\subsection{Prior work on \texorpdfstring{$k$}{k}-\dsp on undirected graphs}
\label{sec:prior-undirected}

The two versions of the $k$-\dsp problem on undirected graphs, being \UEDSP and \UVDSP, are defined analogously to their directed counterparts. Eilam-Tzoreff~\cite{eilam} designed an $O(n^{8})$-time algorithm for Undirected-$2$-VDSP and~Undirected-$2$-EDSP in the case when all edge costs are guaranteed to be positive. Akhmedov~\cite{akhmedov} improved this to $O(n^7)$ when the costs are positive and further to $O(n^6)$ when all costs are $1$. Gottschau et al.~\cite{gottschauUndirectedTwoDisjoint2019} and Kobayashi and Sako~\cite{sako} independently gave $n^{O(1)}$-time algorithms for Undirected-$2$-VDSP and Undirected-$2$-EDSP when edge costs are non-negative.

The complexity of Undirected-$k$-VDSP and Undirected-$k$-EDSP for $k\geq 3$ was a long-standing open problem until Lochet~\cite{lochet} designed an XP algorithm running in $n^{O(k^{5^k})}$ time for general $k$ on \VDSP. Bentert et al. improved the running time of this algorithm to $n^{O(k!k)}$ using some geometric ideas~\cite[Lemma~26]{nichterlein}, and also showed that there is no $f(k)\cdot n^{o(k)}$-time algorithm (for any computable function $f$) under ETH~\cite[Proposition~3]{nichterlein}.
Bérczi and Kobayashi~\cite{berczi-kobayashi} showed that both the \UEDSP and \UVDSP problems on planar undirected graphs can be solved with a $n^{O(k)}$ time algorithm.

\subsection{Known Exact \& Inapproximate Lower Bounds for \kclique}
\label{sec:kclique-results}

There are four versions of the $k$-\dsp problem, depending on whether we require edge-disjointness or vertex-disjointness and if the input graph is directed or undirected. We obtain two lower bounds, one exact and one approximate, for each of these four versions. All eight of our lower bounds are obtained using reductions from $k$-Clique.

\medskip

\begin{center}
    \noindent\framebox{\begin{minipage}{0.95\linewidth}
                           \textbf{\kclique}\\
                           \emph{\underline{Input}}: Integer $k$, and an undirected graph $G=(V,E)$, where $V=\{v_1,v_2,\dots,v_N\}$.\\
                           \emph{\underline{Question}}: Does there exist a set $Z\subseteq V$ of size $k$ such that for all $x\neq y\in Z$ we have $x-y \in E$?
    \end{minipage}}
\end{center}

It is known that the \kclique problem is W[1]-hard~\cite{DBLP:journals/tcs/DowneyF95}. Chen et al.~\cite{chen-hardness} showed the following asymptotically tight lower bound for \kclique:

\begin{theorem}
    \label{thm:chen-clique-exact-lb}
    ~\cite{chen-hardness} Under the Exponential Time Hypothesis (ETH), the $k$-Clique problem on graphs with $N$ vertices cannot be solved in $f(k)\cdot N^{o(k)}$ time for any computable function~$f$.
\end{theorem}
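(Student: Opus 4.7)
The plan is to reduce the problem of deciding satisfiability of a $3$-SAT formula, assumed hard under ETH, to \kclique in such a way that the vertex count $N$ of the resulting graph depends exponentially on $n/k$. Then an $f(k)\cdot N^{o(k)}$-time algorithm for \kclique would yield a $2^{o(n)}$-time algorithm for $3$-SAT, contradicting ETH. As a preprocessing step, I would first invoke the Sparsification Lemma of Impagliazzo--Paturi--Zane to restrict to instances $\varphi$ with $m = O(n)$ clauses, and then apply the standard linear-size reduction from $3$-SAT to $3$-Coloring to obtain a graph $H$ on $n' = O(n)$ vertices whose $3$-colorability still requires $2^{\Omega(n')}$ time under ETH.

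Given the target parameter $k$, the heart of the argument is a \emph{grouping} reduction from $3$-Coloring on $H$ to \kclique. I would partition $V(H)$ into $k$ parts $V_1, V_2, \ldots, V_k$, each of size at most $\lceil n'/k\rceil$. In the output graph $G$, for every part $V_i$ and every proper $3$-coloring $c$ of the induced subgraph $H[V_i]$, I place a vertex $u_{i,c}$; and I join two vertices $u_{i,c}, u_{j,c'}$ with $i \neq j$ by an edge exactly when the combined coloring $c \cup c'$ is proper on $H[V_i \cup V_j]$. Correctness rests on the observation that every edge of $H$ lies in the union of just two parts, so pairwise consistency across all $\binom{k}{2}$ pairs is equivalent to global properness: $H$ is $3$-colorable iff $G$ contains a $k$-clique (one vertex per part). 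The total number of vertices of $G$ satisfies $N \leq k \cdot 3^{\lceil n'/k \rceil}$, so an $f(k) \cdot N^{o(k)}$-time \kclique algorithm would yield a $3$-Coloring algorithm running in time $f(k) \cdot (k \cdot 3^{\lceil n'/k \rceil})^{o(k)} = f(k) \cdot 2^{o(n')}$, contradicting ETH.

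The main subtlety I expect to navigate is the quantifier manipulation of the $o(k)$ term. Formally, the theorem asserts that for every computable $f$ and every function $g(k) = o(k)$, no algorithm runs in $f(k) \cdot N^{g(k)}$ time. Threading $g$ through the substitution $N = k \cdot 3^{n'/k}$ and extracting a bona fide $2^{o(n')}$-time algorithm for $3$-Coloring requires choosing $k$ as a suitably slowly growing function of $n'$ in the contradiction step (so that both $g(k) \log k$ and $g(k) \cdot n'/k$ are genuinely $o(n')$); this is a standard but delicate maneuver in ETH-based \emph{parameter blow-up} lower bounds of this form.
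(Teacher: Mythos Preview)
The paper does not prove this theorem at all: it is stated with a citation to Chen et al.~\cite{chen-hardness} and used as a black-box input to the reductions in the rest of the paper. So there is no ``paper's own proof'' to compare against.

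That said, your sketch is essentially the standard argument behind the cited result: sparsify, pass to a linear-size hard problem, group the variables/vertices into $k$ blocks, and build a $k$-partite graph whose $k$-cliques encode globally consistent partial solutions, yielding $N \approx 2^{O(n/k)}$ so that $N^{o(k)} = 2^{o(n)}$. Your handling of the quantifier issue (choosing $k$ to grow slowly with $n'$) is also the right maneuver. One small correction: in the paper's context this theorem is purely a tool, so a self-contained proof here is unnecessary; a one-line pointer to \cite{chen-hardness} (or to the textbook treatment in Cygan et al.) would suffice.
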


We use~\autoref{thm:chen-clique-exact-lb} to show our lower bounds on the running times of exact algorithms for edge-disjoint version and vertex-disjoint version of $k$-\dsp on undirected and directed graphs.

To show our lower bounds on the running times of approximate algorithms, we need a stronger assumption known as the Gap-ETH
under which hardness of approximating \kclique is known. Formally, we use the following result:

\medskip
\begin{theorem}[Theorem~18,
    \cite{DBLP:journals/siamcomp/ChalermsookCKLM20}]
    Assuming Gap-ETH, there exist constants $\delta, r_0 > 0$ such that, for any
    computable function $g$ and for any positive integers $q \ge r \ge r_0$, there is no
    algorithm that, given a graph $G'$, can distinguish between the following cases
    in $g(q,r)\cdot N^{\delta r}$ time, where $N=|V(G')|$:
    \begin{description}
        \item[Case 1:] $\textsc{Clique}(G') \ge q$; and
        \item[Case 2:] $\textsc{Clique}(G') < r$;
    \end{description}
    where $\textsc{Clique}(G')$ denotes the maximum size of a clique in $G'$.
    \label{thm:cli_inapprox}
\end{theorem}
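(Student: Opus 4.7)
The plan is to establish this parameterized inapproximability result by a gap-preserving reduction from Gap-3-SAT to \kclique, where the parameter $r$ controls the size-vs-gap tradeoff. Gap-ETH supplies a constant $\eta > 0$ such that no $2^{o(n)}$-time algorithm distinguishes satisfiable 3-SAT instances on $n$ variables from those for which every assignment violates at least an $\eta$ fraction of clauses; by sparsification one may assume the number of clauses is $O(n)$.

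The technical engine of the reduction would be a \emph{birthday repetition} construction (Aaronson--Impagliazzo--Moshkovitz). I would partition the $n$ variables into $r$ blocks of size roughly $n/r$ and treat each block's joint assignment as a ``super-label'' from a domain of size $2^{n/r}$. Each 3-SAT clause naturally induces a consistency constraint on the (at most) three super-variables that touch it, yielding a 2-CSP on $r$ variables whose alphabet has size $2^{n/r}$. A birthday-paradox argument shows that a constant fraction of sampled super-variable pairs hits a violated clause in the soundness case, so a constant gap survives the partitioning.

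Next I would apply an FGLSS-style transformation from this 2-CSP to \kclique: create a vertex for every (super-variable, label) pair that satisfies the local (unary) constraints, and join two vertices by an edge exactly when their labels are consistent across the pairwise constraint of the corresponding super-variables. A clique of size $r$ then corresponds to a globally consistent labeling. To reach the required separation between clique sizes $q$ and $r$ in the theorem, I would further amplify the gap via a tensor/product construction, which multiplies clique numbers while polynomially blowing up the vertex count.

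The chief obstacle is parameter calibration. I need the resulting graph to have $N \le 2^{O(n/r)}$ vertices, for then any hypothetical algorithm running in $g(q,r)\cdot N^{\delta r}$ time would decide Gap-3-SAT in $g(q,r) \cdot 2^{O(\delta n)}$ time, violating Gap-ETH once $\delta$ is taken small enough. Carefully proving the soundness of the birthday-repetition gap (that NO instances really do have clique number strictly below $r$, even after amplification) and choosing the constants $\delta$ and $r_0$ so that every step in the chain is quantitatively consistent is the technical heart of the argument; this is also where threshold behaviour forces the lower bound $r \ge r_0$.
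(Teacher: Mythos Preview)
The paper does not prove this theorem at all: it is quoted verbatim as Theorem~18 of Chalermsook et al.\ and used purely as a black box for the downstream reductions. So there is no ``paper's own proof'' to compare against; in the context of this paper the correct response is simply to cite \cite{DBLP:journals/siamcomp/ChalermsookCKLM20}.

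That said, your sketch is a faithful outline of how the cited paper actually establishes the result: Gap-ETH $\to$ gap 2-CSP via birthday repetition on $r$ blocks of size $n/r$, followed by an FGLSS-type clique encoding, with the vertex count $N = 2^{O(n/r)}$ calibrated so that an $N^{\delta r}$-time distinguisher would yield a $2^{o(n)}$-time Gap-3-SAT algorithm. The one point to be careful about is the amplification step you mention for separating $q$ from $r$: naive graph tensoring does not multiply clique numbers cleanly (the clique number of a tensor product is not the product of clique numbers), and the actual argument in Chalermsook et al.\ instead gets the $q$ versus $r$ gap directly from the soundness analysis of the repeated CSP rather than a post-hoc product. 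If you were writing this up for real you would need to replace that step, but for the purposes of the present paper no proof is required.
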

\medskip

\subsection{Notation}
\label{sec:app-notation}

All graphs considered in this paper are simple, i.e., do not have self-loops or multiple edges. We use (mostly) standard graph-theory notation~\cite{diestel-book}. The set $\{1,2,3,\ldots, M\}$ is denoted by $[M]$ for each $M\in \mathbb{N}$.
A directed edge (resp. path) from $s$ to $t$ is denoted by $s\to t$ (resp. $s - t$).
An undirected edge between $s$ and $t$ is denoted by $s-t$: we also use the same notation for an undirected path, but the context is made clear by saying $s-t$ path or edge $s-t$.

We use the \textbf{non-standard} notation (to avoid having to consider different cases in our proofs): $s \leadsto s$ or $s\to s$ \textbf{does not} represent a self-loop but rather is to be viewed as \emph{``just staying put"} at the vertex $s$. A similar notation is used for the undirected case: $s-s$ \textbf{does not} represent a self-loop but rather is to be viewed as \emph{``just staying put"} at the vertex $s$.

If $A,B\subseteq V(G)$ then we say that there is an $A - B$ path (resp. $A\to B$ path in digraphs) if and only if there exists two vertices $a\in A, b\in B$ such that there is an $a - b$ path (resp. $a\leadsto b$ path in digraphs). For $A\subseteq V(G)$ we define $N_{G}^{+}(A) = \big\{ x\notin A\ : \exists\ y\in A\ \text{such that } (y,x)\in E(G) \big\}$ and $N_{G}^{-}(A) = \big\{ x\notin A\ : \exists\ y\in A\ \text{such that } (x,y)\in E(G) \big\}$. For $A\subseteq V(G)$ we define $G[A]$ to be the graph induced on the vertex set $A$, i.e., $G[A]:= (A,E_A)$ where $E_A:=E(G)\cap (A\times A)$.

Given a directed graph $G=(V,E)$ and a set $\mathcal{T}\subseteq V\times V$ of $k$ terminal pairs given by $\big\{(s_i, t_i) : 1\leq i\leq k\big\}$ which form an instance of $k$-\EDSP (resp. $k$-\VDSP), we say that a subset $\mathcal{T}^{\prime} \subseteq \mathcal{T}$ of the terminal pairs can be \emph{satisfied} if and only if there exists a set $\mathcal{P} = \{P_1,P_2,\ldots,P_{|\mathcal{T}^{\prime}|}\}$ of paths such that
\begin{itemize}
    \item $\mathcal{P}$ contains a shortest $s \leadsto t$ path for each $(s,t)\in \mathcal{T}'$
    \item Every pair of paths from $\mathcal{P}$ is pairwise edge-disjoint (resp. vertex-disjoint)
\end{itemize}

\noindent Finally, note that we use a constant edge-length of $1$ in all graphs and reductions throughout this paper. This allows us to measure lengths of paths equivalently either by counting the number of edges or the number of vertices. We choose the latter option as it helps simplify some of the arguments.

\section{Our Results}
\label{sec:our-result}

In this paper, we obtain lower bounds on the running time of exact and approximate\footnote{An $\alpha$-approximation for $k$-\dsp distinguishes between these two cases: either (i) all $k$ pairs can be satisfied; or (ii) the maximum number of pairs that can be satisfied is less than $\alpha\cdot k$.}~algorithms for the edge-disjoint and vertex-disjoint versions of $k$-\dsp on undirected and directed graphs. For the notion of shortest paths, there are two possible choices with allowing either vertex costs or edge costs. This does not matter in our lower bounds since we use a uniform cost of 1 for each vertex (hence paths lengths could also be counted equivalently in number of unit-cost edges). Note that, by considering each vertex to have non-zero cost, we cannot exploit the known hardness results for $k$-\disjp (a special case of $k$-\dsp with all vertex costs~$0$).

Our exact and approximate lower bounds are based on assuming the Exponential Time Hypothesis (ETH) and Gap Exponential Time Hypothesis (Gap-ETH) respectively:
\begin{itemize}
    \item \underline{ETH}: The Exponential Time Hypothesis (ETH) states that $n$-variable $m$-clause 3-SAT cannot be solved in $2^{o(n)}\cdot (n+m)^{O(1)}$ time~\cite{eth,eth-2}.
    \item \underline{Gap-ETH}: The gap version of the ETH~\cite{pasin-gap-eth,irit-gap-eth} states that there exists a constant $\delta>0$ such that there is no $2^{o(n)}$ time algorithm which given instances of $3$-SAT on $n$ variables can distinguish between the case when all clauses are satisfiable versus the case when every assignment to the variables leaves at least $\delta$-fraction of the clauses unsatisfied. We refer the interested reader to~\cite{irit-gap-eth,DBLP:journals/siamcomp/ChalermsookCKLM20} for discussions about the plausibility of Gap-ETH.
\end{itemize}

\subsection{Exact and approximate lower bounds for directed graphs}\label{sec:exact-and-approximate-lower-bounds-for-directed-graphs}

\noindent We now state our exact and approximate lower bounds for the edge-disjoint and vertex-disjoint versions of $k$-\dsp on directed graphs, which all hold even if both the max in-degree and max out-degree of the input digraph are at most $2$. The exact and approximate lower bounds for the edge-disjoint version are:

\begin{restatable}{theorem}{apxdirectededgethm}
    \textbf{(inapproximability)} Assuming Gap-ETH,  for each $0<\varepsilon\le \frac{1}{2}$ there exists a constant~$\zeta > 0$ such that no $f(k)\cdot n^{\zeta k}$ time algorithm can distinguish between the following two cases of \DEDSP
    \begin{itemize}
        \item All $k$ pairs can be satisfied
        \item At most $(\frac{1}{2} + \epsilon) \cdot k$ pairs can be satisfied
    \end{itemize}
    Here $f$ is any computable function, $n$ is the number of vertices and $k$ is the number of terminal pairs. Our lower bound also holds if the input graph is a planar DAG and has both max in-degree and max out-degree at most 2.
    \label{thm:inapprox-edge-result}
\end{restatable}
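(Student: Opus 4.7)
The plan is to reduce from the gap version of \kclique provided by~\autoref{thm:cli_inapprox}. Given an instance $G'$ on $N$ vertices together with gap parameters $q\geq r\geq r_0$, the reduction proceeds in two stages, following the template of~\autoref{fig:flowchart}. First, as in~\autoref{sec:setting-up-the-redn-dir}, I would build the intermediate digraph \Gint using a grid-tiling style encoding: a $q\times q$ array of gadgets whose internal edges encode the adjacencies of $G'$, together with $k=2q$ terminal pairs --- one per row and one per column of the grid --- whose shortest $s_i\leadsto t_i$ paths are forced to traverse exactly one gadget in their respective row or column. Second, apply the edge-splitting operation of~\autoref{def:splitting-operation-edge} to obtain the final instance \Gedge; one must verify that this operation preserves planarity and acyclicity, reduces both max in-degree and max out-degree to $2$, and faithfully preserves the shortest-path structure of the encoding.

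For completeness, a $q$-clique in $G'$ selects, consistently, one gadget per row and per column through which all $2q=k$ terminal pairs can be routed simultaneously as edge-disjoint shortest paths. For soundness, I would argue that whenever more than $(\tfrac{1}{2}+\epsilon)\cdot k = (1+2\epsilon)\,q$ pairs are simultaneously satisfied, at least $2\epsilon q$ row-pairs \emph{and} at least $2\epsilon q$ column-pairs must be satisfied --- otherwise an entire row- or column-class is missing and the total cannot exceed $q=k/2$. Edge-disjointness at the gadget crossings would then force the chosen row- and column-vertices to be pairwise adjacent in $G'$, producing a clique of size $\Omega(\epsilon q)$. Fixing $q=\Theta(r/\epsilon)$ with an appropriate hidden constant, the absence of an $r$-clique in $G'$ rules out satisfying more than $(\tfrac{1}{2}+\epsilon)\cdot k$ pairs.

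The construction produces an instance with $n=\poly(N,q)$ vertices and $k=2q$ terminal pairs. A hypothetical $f(k)\cdot n^{\zeta k}$-time algorithm for \DEDSP would, composed with the reduction, decide the gap version of \kclique in time $g(q,r)\cdot N^{O(\zeta q)}$. Since $q=\Theta(r/\epsilon)$, taking $\zeta$ small enough in terms of $\delta$ and $\epsilon$ (for instance $\zeta<\delta\epsilon/c$ for the hidden constant $c$) makes this run in time $g'(q,r)\cdot N^{o(\delta r)}$ for some computable $g'$, contradicting~\autoref{thm:cli_inapprox}.

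The main obstacle is expected to be the soundness analysis. Two sub-issues stand out: (i)~verifying that the edge-splitting operation introduces no spurious shortest paths that allow simultaneous satisfaction of pairs inconsistent with an edge of $G'$, so that any edge-disjoint collection of shortest paths in \Gedge faithfully decodes to a row/column gadget choice in \Gint; and (ii)~the quantitative counting argument turning $(1+2\epsilon)q$ satisfied pairs into a clique of size at least $r$, where careful bookkeeping is needed to ensure that the satisfied rows and columns intersect pairwise on adjacency-encoding gadgets. A secondary technical obstacle is designing the splitting gadgets so that both max in-degree and max out-degree drop to $2$ while planarity and the DAG property are retained, which should follow by adapting the splitting machinery of Chitnis for $k$-\edp on planar DAGs.
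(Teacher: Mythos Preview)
Your overall reduction template matches the paper's, but the soundness argument has a real gap. From $(1+2\epsilon)q$ satisfied pairs you correctly deduce that at least $2\epsilon q$ row-pairs and at least $2\epsilon q$ column-pairs are satisfied. However, the next step---``edge-disjointness at the gadget crossings would then force the chosen row- and column-vertices to be pairwise adjacent''---does not follow. Two distinct row-paths live in disjoint horizontal strips and never interact; likewise for two column-paths. The only interactions are at crossings $(i,j)$ between column~$i$ and row~$j$, and these yield only a \emph{bipartite} adjacency pattern between the column-values $\{\alpha_i\}$ and the row-values $\{\beta_j\}$, not a clique.

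What the paper does, and what your outline is missing, is a synchronization step via the \emph{diagonal} grids. The encoding sets $S_{i,i}=\{(a,a):a\in[N]\}$ (not an adjacency set), so that whenever index $i$ is \emph{good}---meaning both the $i$-th row pair and the $i$-th column pair are satisfied---the crossing at $(i,i)$ forces $\alpha_i=\beta_i=:\delta_i$. One then counts good indices directly: with $|Y|$ good indices, the total satisfied is at most $2|Y|+(q-|Y|)=q+|Y|$, so $|Y|\ge 2\epsilon q$. Finally, for two good indices $i\neq j$, the crossing at $(i,j)$ now reads $(\delta_i,\delta_j)\in S_{i,j}$, giving $v_{\delta_i}v_{\delta_j}\in E(G')$, and hence $\{v_{\delta_i}:i\in Y\}$ is the desired clique. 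Your sketch neither introduces the diagonal sets $S_{i,i}$ nor the good-index notion, and without them the soundness direction cannot be completed; this is the key idea rather than ``careful bookkeeping.''
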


\begin{restatable}{corollary}{exactdirectededgethm}
    \textbf{(exact lower bound)} The \DEDSP problem on planar DAGs is W[1]-hard parameterized by the number of terminal pairs $k$, even if the max in-degree and max out-degree is at most $2$. Moreover, under the ETH, there is no computable function $f$ which solves this problem in $f(k)\cdot n^{o(k)}$ time.

    \label{thm:hardness-edge-result}
\end{restatable}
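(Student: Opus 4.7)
The plan is to derive the corollary from the very same reduction used to establish \autoref{thm:inapprox-edge-result}, but now starting from the exact hardness of \kclique (\autoref{thm:chen-clique-exact-lb}) rather than from the gap variant (\autoref{thm:cli_inapprox}). Specifically, I would reuse the construction depicted in \autoref{fig:flowchart}: from a \kclique instance $(H,k)$ on $N$ vertices, build the intermediate planar DAG \Gint of \autoref{sec:construction-of-Gint}, ensuring max in-degree and max out-degree at most two, and then apply the edge-splitting operation of \autoref{def:splitting-operation-edge} to obtain a \DEDSP instance on $n = \poly(N)$ vertices with $k' = g(k)$ terminal pairs, for some computable function $g$.

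I would then observe that the correctness arguments underlying \autoref{thm:inapprox-edge-result} already supply both directions for the exact statement: in the completeness direction, a $k$-clique in $H$ yields pairwise edge-disjoint shortest paths satisfying all $k'$ terminal pairs; in the soundness direction, if $H$ has no $k$-clique then strictly fewer than $k'$ terminal pairs can be simultaneously satisfied (the inapproximability analysis gives a much stronger quantitative bound, but the bare non-satisfiability of all pairs is already enough). Consequently the construction is a parameter-preserving Karp reduction from \kclique to \DEDSP whose output is a planar DAG with max in-degree and max out-degree at most two, and W[1]-hardness follows immediately from the W[1]-hardness of \kclique.

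For the tight ETH lower bound, suppose for contradiction there were a computable $f$ and an algorithm solving \DEDSP in $f(k') \cdot n^{o(k')}$ time. Composing with the reduction above would yield an algorithm that decides \kclique in $f(g(k)) \cdot N^{o(k)}$ time, contradicting \autoref{thm:chen-clique-exact-lb}. The main obstacle, and the one point I would carefully verify when writing up the details, is that the parameter blow-up $g$ is \emph{linear} in $k$: if the construction produced $k' = \Theta(k^2)$ terminal pairs, the transferred lower bound would only read $N^{o(k^2)}$, which is not tight for \kclique. Hence, when tracing through \autoref{sec:construction-of-Gint} and \autoref{def:splitting-operation-edge}, I would pay particular attention to how the source-sink pairs arise (typically one per row and one per column of a grid-tiling style gadget) to confirm that their total number grows at most linearly in $k$, thereby sealing the $f(k)\cdot n^{o(k)}$ lower bound.
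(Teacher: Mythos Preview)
Your proposal is correct and matches the paper's approach essentially line for line: the paper reuses the reduction of \autoref{sec:construction-of-Gedge}, invokes completeness (\autoref{sec:clique-to-2kedsp}) and soundness with $\epsilon=0.5$ (\autoref{sec:2kedsp-to-clique}), and appeals to \autoref{thm:chen-clique-exact-lb}. Your one point of uncertainty is resolved exactly as you hoped: the construction yields $k'=2k$ terminal pairs (one $(a_i,b_i)$ and one $(c_i,d_i)$ per $i\in[k]$), so the parameter blow-up is linear and the $n^{o(k)}$ bound transfers.
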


\noindent
The exact and approximate lower bounds for the vertex-disjoint version are:

\medskip

\begin{restatable}{theorem}{apxdirectedvertexthm}
    \textbf{(inapproximability)} Assuming Gap-ETH,  for each $0<\varepsilon\le \frac{1}{2}$ there exists a constant~$\zeta > 0$ such that no $f(k)\cdot n^{\zeta k}$ time algorithm can distinguish between the following two cases of \DVDSP
    \begin{itemize}
        \item All $k$ pairs can be satisfied
        \item At most $(\frac{1}{2} + \epsilon) \cdot k$ pairs can be satisfied
    \end{itemize}
    Here $f$ is any computable function, $n$ is the number of vertices and $k$ is the number of terminal pairs. Our lower bound also holds if the input graph is a $1$-planar DAG and has both max in-degree and max out-degree at most 2.

    \label{thm:inapprox-vertex-result}
\end{restatable}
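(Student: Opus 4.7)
The plan is to follow the unified framework depicted in Figure \ref{fig:flowchart}: reduce from the Gap-ETH hardness of \kclique given by Theorem \ref{thm:cli_inapprox}, route the reduction through the intermediate digraph \Gint built in Section \ref{sec:setting-up-the-redn-dir}, and then apply the vertex-specific splitting operation (Definition \ref{def:splitting-operation-vertex}) to obtain the final \DVDSP instance \Gvert. Because the intermediate digraph \Gint is already the workhorse for the edge-disjoint lower bound in \autoref{thm:inapprox-edge-result}, all the \kclique-specific gadget work (Grid-Tiling--style row/column paths, encoding vertex choices by path offsets, forcing consistency across the grid) is reused; the only task specific to this theorem is to verify that the vertex-splitting operation correctly converts edge-disjointness of the intermediate paths into vertex-disjointness in \Gvert, and that it does so while preserving all the structural side-constraints claimed in the statement.

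First I would set up the parameters: given an instance $G'$ of \kclique with gap parameters $q \ge r$ from Theorem \ref{thm:cli_inapprox}, invoke the construction of \Gint to obtain a digraph with $k = \Theta(q)$ row/column terminal pairs and $n = |V(G')|^{O(1)}$ vertices, and then pass \Gint through Definition \ref{def:splitting-operation-vertex} to obtain \Gvert together with its terminal set. Next I would prove the two-way correspondence. For completeness, starting from a $q$-clique in $G'$, I would take the canonical system of $k$ edge-disjoint shortest paths in \Gint (coming from the selected clique vertices), and lift it through the vertex-splitting operation to a system of $k$ shortest paths in \Gvert; vertex-disjointness follows because shared vertices of the original paths have been replaced by distinct copies, one per path traversing that vertex in a given direction. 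For soundness, I would take any collection of more than $(\tfrac12+\epsilon) k$ vertex-disjoint shortest paths in \Gvert, project them back to \Gint by contracting the split gadgets, and argue that (i) they remain shortest and edge-disjoint and (ii) by the soundness of the \Gint reduction (the edge-disjoint case from \autoref{thm:inapprox-edge-result}) this forces a clique of size at least $r$ in $G'$.

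The final step is to check the structural properties. The DAG property and the in/out-degree bound of $2$ are preserved by Definition \ref{def:splitting-operation-vertex} by construction, since vertex-splitting only replaces a single vertex by a small local gadget that respects the topological ordering and locally bounds degree. The embedding, however, is where planarity is lost: the crossings introduced by the vertex-splitting gadget cannot in general be avoided once one insists on vertex-disjointness on a planar drawing, and the job is to show that each crossing is \emph{local} (bounded by one per edge of the drawing) so that the resulting embedding is $1$-planar. I anticipate that this \emph{$1$-planarity verification} will be the main technical obstacle, since naively splitting every shared vertex can cascade crossings; the key will be to argue that the gadget introduced by Definition \ref{def:splitting-operation-vertex} is small enough, and the original embedding of \Gint is well-structured enough (grid-like), that crossings remain confined to single edges. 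Once that is established, combining the gap-preservation of the reduction with Theorem \ref{thm:cli_inapprox} yields the claimed $f(k)\cdot n^{\zeta k}$ lower bound for $\zeta = \Theta(\delta)$, completing the proof.
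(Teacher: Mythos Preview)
Your overall framework matches the paper's approach, and your treatment of completeness and the structural side-conditions is essentially correct. However, your soundness plan contains a genuine gap.

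You propose to project a collection of vertex-disjoint shortest paths in \Gvert back to \Gint by contracting the split gadgets, observe that the projected paths remain shortest and edge-disjoint, and then invoke ``the soundness of the \Gint reduction (the edge-disjoint case from \autoref{thm:inapprox-edge-result}).'' The problem is that there is no such black-box soundness result for \Gint. In \Gint itself, \emph{any} horizontal canonical path and \emph{any} vertical canonical path meet in exactly one grid vertex and share no edges at all, regardless of whether that vertex corresponds to a pair in $S_{i,j}$. Edge-disjointness in \Gint is therefore trivially satisfied by every choice of row/column indices and encodes no clique constraint whatsoever; the constraint is carried entirely by the splitting operation, which is absent from \Gint. (The edge-disjoint soundness behind \autoref{thm:inapprox-edge-result} is proved for \Gedge, where the \onesplit gadget forces a shared $\w_{\LB}\to\w_{\Mid}\to\w_{\TR}$ sub-path at non-$S_{i,j}$ crossings; this argument does not transfer to \Gint.)

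The fix, which is what the paper does, is to argue directly in \Gvert rather than projecting. If an $a_i\leadsto b_i$ path and a $c_j\leadsto d_j$ path in $\mathcal{P}$ are both present, they are (images of) a vertical and a horizontal canonical path respectively, and both must pass through the gadget replacing $\w_{i,j}^{\alpha,\beta}$ for the relevant $\alpha,\beta$. If that vertex was \notsplit then $\w_{i,j,\Hor}^{\alpha,\beta}=\w_{i,j,\Ver}^{\alpha,\beta}$ lies on both paths, contradicting vertex-disjointness; hence it was \vertsplit, so $(\alpha,\beta)\in S_{i,j}$. Applying this first with $i=j$ forces $\alpha=\beta=:\delta_i$, and then with $i\neq j$ yields $v_{\delta_i}-v_{\delta_j}\in E(G)$, giving the clique. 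On a minor note, the $1$-planarity verification is simpler than you anticipate: each \vertsplit gadget introduces exactly one local crossing (between $\bottomy(\w)\to\w_{\Ver}$ and $\w_{\Hor}\to\righty(\w)$), and these crossings are pairwise separated in the grid drawing of \Gint, so $1$-planarity is immediate rather than the main obstacle.
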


\begin{restatable}{corollary}{exactdirectedvertexthm}
    \textbf{(exact lower bound)} The \DVDSP problem on $1$-planar DAGs is W[1]-hard parameterized by the number of terminal pairs $k$, even if the max in-degree and max out-degree is at most $2$. Moreover, under the ETH, there is no computable function $f$ which solves this problem in $f(k)\cdot n^{o(k)}$ time.

    \label{thm:hardness-vertex-result}
\end{restatable}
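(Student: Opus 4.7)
The plan is to re-use verbatim the reduction that will establish Theorem~\ref{thm:inapprox-vertex-result}, but feed it with the exact hardness of \kclique (Theorem~\ref{thm:chen-clique-exact-lb}) instead of the gap version of Theorem~\ref{thm:cli_inapprox}. Since the reduction already produces \emph{equivalent} instances---all $K$ terminal pairs of the constructed \DVDSP instance are simultaneously satisfiable iff the source $k$-Clique instance is a yes-instance---nothing new is needed for the exact case. In other words, the ``completeness'' direction of the inapproximability result is exactly what is required for a corollary about exact hardness, so the corollary is genuinely a corollary and not an independent theorem.

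Concretely, I would first invoke the construction of the intermediate digraph \Gint from \autoref{sec:setting-up-the-redn-dir}, followed by the vertex-splitting operation of \autoref{sec:fpt-inapprox-vdsp-1-planar-dags} that produces the final $1$-planar DAG whose max in-degree and max out-degree are both at most~$2$. Applied to an $N$-vertex instance of $k$-Clique, this procedure runs in time $(N+k)^{O(1)}$ and outputs a \DVDSP instance on $n = N^{O(1)}$ vertices with $K = c \cdot k$ terminal pairs for some absolute constant $c$ (the constant coming from the structure of \Gint, where each row/column of the clique-grid contributes a bounded number of terminal pairs). W[1]-hardness is then immediate: this is an FPT-reduction from \kclique (W[1]-hard by \cite{DBLP:journals/tcs/DowneyF95}) to \DVDSP parameterized by the number of terminal pairs. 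For the ETH lower bound, suppose for a contradiction that for some computable $f$ there were an $f(K)\cdot n^{o(K)}$-time algorithm solving \DVDSP on $1$-planar DAGs of bounded degree. Chaining this algorithm with the reduction above yields an algorithm of running time $f(ck)\cdot N^{O(1)\cdot o(ck)} = f'(k)\cdot N^{o(k)}$ for \kclique, contradicting Theorem~\ref{thm:chen-clique-exact-lb}.

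The main obstacle, which is entirely absorbed into the proof of Theorem~\ref{thm:inapprox-vertex-result} and hence only has to be quoted here, is twofold: (i) showing that the vertex-split digraph really is a $1$-planar DAG with max in-degree and max out-degree at most $2$, and (ii) proving the strong equivalence on which the gap result relies. For the corollary, only the weaker ``all pairs satisfiable iff $k$-clique exists'' equivalence and the linear parameter blow-up $K = \Theta(k)$ matter, both of which drop out of the same construction. So the entire content of the corollary reduces to observing that the reduction behind Theorem~\ref{thm:inapprox-vertex-result} is parameter-linear and is a correct many-one FPT-reduction from \kclique, at which point Theorem~\ref{thm:chen-clique-exact-lb} and the W[1]-hardness of \kclique give both claimed bounds.
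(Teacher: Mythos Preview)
Your proposal is correct and mirrors the paper's own proof: the paper reuses the same reduction (producing $2k$ terminal pairs and $n=O(N^2k^2)$ vertices), invokes completeness from \autoref{sec:clique-to-2kvdsp} and soundness from \autoref{sec:2kvdsp-to-clique} with $\epsilon=\tfrac{1}{2}$ to get the exact equivalence, and then applies the W[1]-hardness of \kclique together with Theorem~\ref{thm:chen-clique-exact-lb}. The only cosmetic difference is that the paper makes the constants explicit ($K=2k$, $n=O(N^2k^2)$) rather than writing $K=ck$ and $n=N^{O(1)}$.
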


\noindent We note that the W[1]-hardness of \DEDSP on DAGs was also obtained by Amiri et al.~\cite{amiri-arxiv}, and our~\autoref{thm:hardness-edge-result} strengthens this by showing that the hardness holds even if max in-degree and max out-degree is 2. Likewise, by orienting all of the edges in Bentert et al.'s reduction from~\cite{nichterlein} from either left-to-right or bottom-to-top, one appears to obtain a W[1]-hardness of the \DVDSP problem on DAGs. Our~\autoref{thm:hardness-vertex-result} strengthens this by showing that the hardness holds even if the graph is $1$-planar and all vertices have a maximum in and out degree of 2.

\subsection{Exact and approximate lower bounds for undirected graphs}\label{sec:exact-and-approximate-lower-bounds-for-undirected-graphs}

\noindent We now state our exact and approximate lower bounds for the edge-disjoint and vertex-disjoint versions of $k$-\dsp on undirected graphs, which all hold even if the max degree of the input graph is at most $4$.. The exact and approximate lower bounds for the edge-disjoint version are:

\begin{restatable}{theorem}{apxundirectededgethm}
    \textbf{(inapproximability)} Assuming Gap-ETH,  for each $0<\varepsilon\le \frac{1}{2}$ there exists a constant~$\zeta > 0$ such that no $f(k)\cdot n^{\zeta k}$ time algorithm can distinguish between the following two cases of \UEDSP
    \begin{itemize}
        \item All $k$ pairs can be satisfied
        \item At most $(\frac{1}{2} + \epsilon) \cdot k$ pairs can be satisfied
    \end{itemize}
    Here $f$ is any computable function, $n$ is the number of vertices and $k$ is the number of terminal pairs. Our lower bound also holds if the input graph is planar and has max degree at most $4$.

    \label{thm:inapprox-edge-result-U}
\end{restatable}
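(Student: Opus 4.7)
The plan is to obtain Theorem~\ref{thm:inapprox-edge-result-U} by composing three ingredients following the unified framework of Figure~\ref{fig:flowchart}: (i) the Gap-ETH inapproximability of $k$-\textsc{Clique} from Theorem~\ref{thm:cli_inapprox}, (ii) the intermediate undirected graph \GintU built from a $k$-\textsc{Clique} instance in Section~\ref{sec:construction-of-GintU}, and (iii) an edge-splitting operation (Definition~\ref{def:splitting-operation-edge-U}) that turns \GintU into a planar graph \GedgeU of maximum degree $4$ on which the \UEDSP terminal pairs are defined. I would set the parameters so that the reduction starts from an $N$-vertex $k$-\textsc{Clique} instance and outputs a \UEDSP instance on $n = N^{O(1)}$ vertices with $O(k^2)$ terminal pairs; a subconstant-exponent algorithm for \UEDSP would then contradict Theorem~\ref{thm:cli_inapprox} after setting $q,r$ appropriately.

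First, I would recall from Section~\ref{sec:construction-of-GintU} that \GintU is a $k\times k$ grid of ``cell gadgets'' together with row and column ``highways,'' where each cell encodes an ordered pair of vertices of $G'$ and adjacency in $G'$ is enforced by shared internal edges. The terminal set consists of a row-pair $(s_i^{\Row},t_i^{\Row})$ for every row $i\in[k]$ and a column-pair $(s_j^{\Columna},t_j^{\Columna})$ for every column $j\in[k]$, together with a matching set of dummy pairs whose unique shortest-path routes use edges disjoint from all non-dummy routes; these dummies realize the ``trivially satisfiable half'' that the abstract warns about and are what fixes the $\tfrac12$ inapproximability threshold. I would then apply Definition~\ref{def:splitting-operation-edge-U}, which replaces each high-degree vertex of \GintU by a short gadget so that the resulting graph \GedgeU is planar with $\Delta\le 4$ while (a) preserving shortest-path distances between terminals and (b) ensuring that any two routed paths that shared an edge in \GintU must still share an edge in \GedgeU. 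The key point is that edge-splitting differs from the vertex-splitting used for \UVDSP: here we want two shortest paths to be allowed to cross transversally at a degree-$4$ junction without colliding, so the gadget must be designed so that legitimate transversal crossings become edge-disjoint while any ``turn'' or ``double use'' of a junction still forces an edge conflict.

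The correctness argument has the standard two directions. For completeness, given a $k$-clique $\{v_{i_1},\dots,v_{i_k}\}$ in $G'$ I would route, for each row $i$, its row-path through the cells representing the pair $(v_{i_a},v_{i_b})$ in position $(a,b)$ and, for each column $j$, its column-path symmetrically; the clique property guarantees that all chosen cells are mutually ``compatible'' so row- and column-paths use disjoint edges of each cell gadget, yielding all $2k$ non-dummy plus all dummy pairs satisfied. For soundness, I would argue the contrapositive: any collection of edge-disjoint shortest paths satisfying more than a $(\tfrac12+\varepsilon)$-fraction of the pairs must satisfy strictly more than $\varepsilon k$ of both the row- and column-types (after accounting for the trivial dummies); a pigeonhole over rows and columns then yields $\Omega(\varepsilon^2 k)$ coordinates whose paths are forced through compatible cell gadgets, and the combinatorial gadget analysis extracts a clique of the corresponding size in $G'$. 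Setting $q=k$ and $r=\Theta(\varepsilon^2 k)$ in Theorem~\ref{thm:cli_inapprox} and using that our reduction blows up $N$ only polynomially produces the constant $\zeta>0$ of the theorem statement.

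The main obstacle, and where I expect most of the technical work to sit, is engineering the edge-splitting gadget so that planarity and $\Delta\le 4$ are simultaneously maintained while shortest-path lengths and the ``at most one row-path and one column-path through each cell'' invariant are preserved. In the directed case this is relatively clean because orientations forbid many bad routings, but undirected shortest paths can backtrack through a gadget in surprising ways, so I would devote the bulk of Section~\ref{sec:proof-of-main-theorem-edge-U} to an exhaustive case analysis showing that every edge-disjoint shortest-path packing in \GedgeU lifts to a valid path packing in \GintU with the same satisfaction count. Once this lifting lemma is in place, Theorem~\ref{thm:inapprox-edge-result-U} follows immediately by combining it with the gadget soundness and Theorem~\ref{thm:cli_inapprox}.
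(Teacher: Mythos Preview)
Your high-level framework is right---reduce from Gap-\textsc{Clique} via the intermediate graph \GintU and the edge-splitting of Definition~\ref{def:splitting-operation-edge-U}---but several structural details of the construction are mis-described, and the soundness argument you sketch is both weaker and more complicated than what the paper actually does.

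First, there are no dummy pairs. The terminal set is exactly the $2k$ pairs $\mathcal{T}=\{(a_i,b_i):i\in[k]\}\cup\{(c_j,d_j):j\in[k]\}$, not $O(k^2)$ pairs plus padding. The ``trivially satisfiable half'' is simply the observation that one can always route all $k$ vertical pairs \emph{or} all $k$ horizontal pairs edge-disjointly; no dummies are needed to manufacture the $\tfrac12$ threshold.

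Second, and more importantly, your soundness argument loses a factor of $\varepsilon$. You propose to show that more than $\varepsilon k$ row-pairs and more than $\varepsilon k$ column-pairs are satisfied, then pigeonhole to get $\Omega(\varepsilon^2 k)$ compatible cells, and set $r=\Theta(\varepsilon^2 k)$. The paper instead exploits that row~$i$ and column~$i$ share the \emph{same} index: call $i\in[k]$ \emph{good} if both $(a_i,b_i)$ and $(c_i,d_i)$ are satisfied. A one-line counting argument (Lemma~\ref{lem:good_size-U}) gives at least $2\varepsilon k$ good indices directly. For each good $i$, the diagonal cell $U_{i,i}$ (where $S_{i,i}=\{(a,a):a\in[N]\}$) forces the two paths to agree on a single value $\delta_i\in[N]$; then for any two good $i\neq j$, edge-disjointness in cell $U_{i,j}$ forces $v_{\delta_i}v_{\delta_j}\in E(G)$. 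This yields a clique of size $2\varepsilon k$, so the paper sets $r=2\varepsilon k$ and $\zeta=\delta\varepsilon/2$, not $r=\Theta(\varepsilon^2 k)$.

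Finally, the ``exhaustive case analysis'' and ``lifting lemma'' you anticipate are unnecessary. The paper characterizes all shortest $c_j$--$d_j$ (resp.\ $a_i$--$b_i$) paths in \GedgeU precisely: each is an image of a canonical horizontal (resp.\ vertical) path in \GintU (Lemmas~\ref{lem:horizontal-canonical-is-shortest-G-edge-U} and~\ref{lem:vertical-canonical-is-shortest-G-edge-U}), which follows straightforwardly from the corresponding characterization in \GintU because the splitting replaces every black vertex by a three-vertex subpath of fixed cost. Undirected backtracking is ruled out by the vertex-cost argument of Lemma~\ref{lem:horizontal-canonical-is-shortest-G-int-U}, not by a gadget case analysis.
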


\begin{restatable}{corollary}{exactundirectededgethm}
    \textbf{(exact lower bound)} The \UEDSP problem on planar graphs is W[1]-hard parameterized by the number of terminal pairs $k$, even if the max degree is at most $4$. Moreover, under the ETH, there is no computable function $f$ which solves this problem in $f(k)\cdot n^{o(k)}$ time.

    \label{thm:hardness-edge-result-U}
\end{restatable}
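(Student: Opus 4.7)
The plan is to re-use the same parameterized reduction from \kclique that drives \autoref{thm:inapprox-edge-result-U}, but seeded with the exact ETH-based lower bound of \autoref{thm:chen-clique-exact-lb} rather than the Gap-ETH inapproximability of \autoref{thm:cli_inapprox}. Given a \kclique instance $G'$ on $N$ vertices, the construction of \autoref{sec:fpt-inapprox-edsp-planar-undir} (the intermediate graph \GintU from \autoref{sec:setting-up-the-U} followed by an edge-splitting operation) produces a \UEDSP instance $(H,\mathcal{T})$ with $n:=|V(H)|$ polynomial in $N$, with $H$ planar and of maximum degree at most $4$, with $k':=|\mathcal{T}|$ depending only on $k$, and with the property that $G'$ admits a $k$-clique if and only if all $k'$ pairs of $\mathcal{T}$ can be simultaneously satisfied by edge-disjoint shortest paths. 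This Yes/No equivalence is strictly stronger than the gap version needed for \autoref{thm:inapprox-edge-result-U} and is immediate from the completeness/soundness analysis carried out there.

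Granting the reduction, both claims of the corollary follow mechanically. The map $G'\mapsto (H,\mathcal{T})$ is an FPT-reduction from \kclique, so \UEDSP on planar max-degree-$4$ graphs is W[1]-hard. For the running-time bound, suppose toward contradiction that some algorithm decided \UEDSP in time $f(k')\cdot n^{o(k')}$ for a computable $f$. Composing with the reduction would then decide \kclique in time $f(h(k))\cdot N^{o(h(k))}$, where $h$ is the function with $k'=h(k)$ and we have used $n=N^{O(1)}$ so that $n^{o(k')}=N^{o(k')}$. Provided $h(k)$ is \emph{linear} in $k$, this contradicts \autoref{thm:chen-clique-exact-lb}.

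The only substantive point to verify in this derivation is therefore that $k'=\Theta(k)$, rather than, say, $\Theta(k^{2})$. A superlinear blow-up in the parameter would only rule out $f(k)\cdot n^{o(k^{2})}$ algorithms, which is strictly weaker than the target $f(k)\cdot n^{o(k)}$ bound. The plan is to appeal to the explicit parameter accounting in \autoref{sec:setting-up-the-U}, which constructs \GintU so that the number of designated source--sink pairs is proportional to the number of clique-vertex slots (i.e.\ $O(k)$), not to the number of clique edges (which would give $\Theta(k^{2})$), and then to observe that the edge-splitting operation of \autoref{sec:fpt-inapprox-edsp-planar-undir} only polynomially expands the vertex set while leaving the terminal-pair count unchanged. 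All remaining structural properties (planarity, max degree $4$, Yes/No correctness) are inherited verbatim from the proof of \autoref{thm:inapprox-edge-result-U}.
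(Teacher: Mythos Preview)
Your proposal is correct and follows essentially the same approach as the paper: instantiate the soundness argument of \autoref{sec:2kedsp-to-clique-U} with $\epsilon=\tfrac12$ to get the exact Yes/No equivalence, then invoke \autoref{thm:chen-clique-exact-lb} together with the facts that \GedgeU is planar with max degree~$4$, has $n=O(N^2k^2)$ vertices, and carries exactly $k'=2k$ terminal pairs. The paper's write-up is terser but the content is identical, including the implicit reliance on $k'=\Theta(k)$ that you make explicit.
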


\medskip

\noindent
The exact and approximate lower bounds for the vertex-disjoint version are:

\begin{restatable}{theorem}{apxundirectedvertexthm}
    \textbf{(inapproximability)} Assuming Gap-ETH,  for each $0<\varepsilon\le \frac{1}{2}$ there exists a constant~$\zeta > 0$ such that no $f(k)\cdot n^{\zeta k}$ time algorithm can distinguish between the following two cases of \UVDSP
    \begin{itemize}
        \item All $k$ pairs can be satisfied
        \item At most $(\frac{1}{2} + \epsilon) \cdot k$ pairs can be satisfied
    \end{itemize}
    Here $f$ is any computable function, $n$ is the number of vertices and $k$ is the number of terminal pairs. Our lower bound also holds if the input graph is $1$-planar and has max degree at most $4$.

    \label{thm:inapprox-vertex-result-U}
\end{restatable}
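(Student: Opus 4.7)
The plan is to follow exactly the template indicated in the flowchart (\autoref{fig:flowchart}), mirroring the proof of \autoref{thm:inapprox-edge-result-U} but using the vertex-splitting operation from \autoref{def:splitting-operation-vertex-U} applied to the common intermediate undirected graph \GintU. Starting from an instance $G'$ of the gap version of \kclique provided by \autoref{thm:cli_inapprox}, I would first use the construction of \GintU from \autoref{sec:construction-of-GintU} to produce a grid-like undirected graph on $\mathrm{poly}(N)$ vertices together with $k = \Theta(q^2)$ terminal pairs arranged as ``row pairs'' and ``column pairs'' in a $q \times q$ matrix-tiling layout. The forward direction, which I would prove first, turns a size-$q$ clique of $G'$ into a consistent selection of row-paths and column-paths in \GintU that jointly satisfy every terminal pair: each row pair traverses a horizontal corridor corresponding to a chosen clique vertex, each column pair traverses a vertical corridor, and their crossings inside cell gadgets encode the edges of the clique.

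Next I would apply the vertex-split operation of \autoref{def:splitting-operation-vertex-U} to \GintU, replacing each internal grid vertex by a small gadget that (i) forces any row-path and column-path that pass through a cell to use internally disjoint copies of the cell, and (ii) keeps the embedding $1$-planar and the maximum degree bounded by~$4$. This is the step where I expect the main obstacle: unlike in the edge-disjoint case where a planar crossing gadget suffices, vertex-disjointness forces the row and column paths to physically avoid a common vertex, and the natural gadget allowing both to pass through will introduce an unavoidable edge crossing per cell; I would need to verify that restricting each cell to at most one crossing preserves $1$-planarity globally, that the shortest-path length from each terminal to its partner is unchanged, and that the degree bound survives the split at the boundary of the grid. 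Writing out the length arithmetic carefully ensures that any path satisfying a terminal pair is forced through the intended corridor and, inside each cell, through one of a constant number of canonical route choices.

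For the backward direction (the inapproximability), I would argue the contrapositive: from any family of vertex-disjoint shortest paths in the resulting graph $U_{\vertex}$ that satisfies strictly more than $(\tfrac12+\varepsilon)k$ terminal pairs, I would extract a clique in $G'$ of size at least $r$. Since roughly half of the terminal pairs are trivially satisfiable (as flagged in the abstract), the $\tfrac12$ baseline is automatic; the extra $\varepsilon k$ satisfied pairs must come from the ``nontrivial'' row and column pairs, and by a counting/pigeonhole argument over the $q$ rows and $q$ columns of the grid I would identify a subset of at least $r$ rows and $r$ columns whose chosen corridors pairwise meet in valid cell gadgets, each meeting certifying an edge of~$G'$; this yields an $r$-clique. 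Choosing $\varepsilon$ small enough and $q,r$ appropriately relative to the Gap-ETH parameters of \autoref{thm:cli_inapprox} translates the $g(q,r)\cdot N^{\delta r}$ lower bound into an $f(k)\cdot n^{\zeta k}$ lower bound for \UVDSP, where $\zeta>0$ depends on $\delta$, on the ratio $r/q^2$, and on the polynomial blow-up of the split construction.

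Finally, the structural claims in the theorem statement follow from the construction itself: \GintU is planar by design, and the vertex-split gadgets add at most one crossing per cell, keeping $U_{\vertex}$ $1$-planar; the split is performed so that each original grid vertex of degree up to $4$ is replaced by constant-sized pieces of degree at most $4$, proving the max-degree bound. I would conclude by observing that the same reduction, fed with the ETH-based \kclique lower bound of \autoref{thm:chen-clique-exact-lb} instead of the Gap-ETH result, yields the exact lower bound stated in \autoref{thm:hardness-vertex-result-U} as a corollary, with no further work needed.
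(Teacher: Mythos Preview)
Your high-level plan matches the paper's: build \GintU from a \kclique instance, apply the vertex-split of \autoref{def:splitting-operation-vertex-U} to obtain \GvertU, and transfer the gap via \autoref{thm:cli_inapprox}. The structural observations ($1$-planarity from a single crossing per split cell, max degree~$4$) are also right.

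Two points in your sketch diverge from what actually makes the argument go through. First, the parameter relationship: the construction produces exactly $2k$ terminal pairs --- the $k$ vertical pairs $(a_i,b_i)$ and the $k$ horizontal pairs $(c_j,d_j)$ --- where $k$ is the clique parameter itself, not $\Theta(q^2)$ pairs. The blow-up from clique size to number of terminal pairs is linear, and this is what drives the exponent bookkeeping ($\zeta=\delta\varepsilon/2$, $q=k$, $r=2\varepsilon k$) at the end.

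Second, and more substantively, your soundness sketch (``find $r$ rows and $r$ columns whose corridors pairwise meet in valid cells, each meeting certifying an edge'') has a real gap. If you only know that some set $R$ of column-paths and some set $C$ of row-paths are satisfied, the valid crossings in cells $U_{i,j}$ with $i\in R$, $j\in C$, $i\neq j$ only certify edges between the column parameters $\{\alpha_i\}_{i\in R}$ and the row parameters $\{\beta_j\}_{j\in C}$ --- a bipartite pattern in $G$, not a clique. The paper instead calls an index $i\in[k]$ \emph{good} when both $(a_i,b_i)$ and $(c_i,d_i)$ are satisfied; a direct count shows there are at least $2\varepsilon k$ good indices. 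For a good $i$, the diagonal cell $U_{i,i}$ has $S_{i,i}=\{(a,a):a\in[N]\}$, so vertex-disjointness at that crossing forces the two paths to use the \emph{same} parameter $\delta_i:=\alpha_i=\beta_i$. Only after this diagonal synchronization do two good indices $i\neq j$ yield a vertex-split at $(\delta_i,\delta_j)\in S_{i,j}$, hence an edge $v_{\delta_i}v_{\delta_j}$ of $G$, and the good indices form the desired clique. Your pigeonhole over rows and columns separately does not produce this same-index pairing, so as written it would not recover a clique.
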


\begin{restatable}{corollary}{exactundirectedvertexthm}
    \textbf{(exact lower bound)} The \UVDSP problem on $1$-planar graphs is W[1]-hard parameterized by the number of terminal pairs $k$, even if the max degree is at most $4$. Moreover, under the ETH, there is no computable function $f$ which solves this problem in $f(k)\cdot n^{o(k)}$ time.
    \label{thm:hardness-vertex-result-U}
\end{restatable}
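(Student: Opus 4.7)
The plan is to reuse exactly the reduction that establishes Theorem \ref{thm:inapprox-vertex-result-U}, but substitute Chen et al.'s ETH-based exact lower bound for \kclique (Theorem \ref{thm:chen-clique-exact-lb}) in place of the Gap-ETH inapproximability result (Theorem \ref{thm:cli_inapprox}). The reduction is the composition of (i) building the intermediate undirected graph \GintU from an instance $(G',k)$ of \kclique (Section \ref{sec:setting-up-the-U}) and (ii) applying the vertex-splitting operation of Definition \ref{def:splitting-operation-vertex-U} to obtain the final \UVDSP instance. From the structural guarantees established for \GintU together with the local effect of the splitting operation, the output graph is $1$-planar and has maximum degree at most $4$; this is exactly what the companion theorem \ref{thm:inapprox-vertex-result-U} already needs, so no additional work is required here.

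From the correctness half of the reduction (carried out in Section \ref{sec:proof-of-main-theorem-vertex-U}), an $N$-vertex instance of \kclique is mapped to a \UVDSP instance on $n = N^{O(1)}$ vertices with $k' = h(k)$ terminal pairs, for some explicit computable function $h$ depending only on $k$, such that the two instances are equivalent: all $k'$ pairs can be simultaneously satisfied by vertex-disjoint shortest paths if and only if $G'$ contains a $k$-clique. Because $k' = h(k)$ is a function of $k$ alone and because the reduction runs in polynomial time, this is a parameterized reduction from \kclique to \UVDSP parameterized by the number of terminal pairs; hence W[1]-hardness is immediate from the W[1]-hardness of \kclique.

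For the ETH-based running time lower bound, I would argue by contrapositive. Suppose, for contradiction, that there exists a computable function $f$ and an algorithm that solves \UVDSP on $1$-planar, max-degree-$4$ inputs in $f(k')\cdot n^{o(k')}$ time. Composing it with the reduction above gives an algorithm for \kclique with running time
\[
f\!\left(h(k)\right)\cdot \left(N^{O(1)}\right)^{o(h(k))} \;=\; g(k)\cdot N^{o(k)},
\]
where the final estimate uses that $h(k)$ is a function of $k$ only, so the $\poly(N)$ base and the constant factor in the exponent can be absorbed into the $o(\cdot)$ bound and into a new computable function $g$. This contradicts Theorem \ref{thm:chen-clique-exact-lb}, giving the claimed $f(k)\cdot n^{o(k)}$ lower bound.

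The only nontrivial point in this plan is the structural claim that the splitting operation of Definition \ref{def:splitting-operation-vertex-U}, when applied to \GintU, preserves both $1$-planarity and the degree bound of $4$; this is the same obstacle that needs to be overcome for Theorem \ref{thm:inapprox-vertex-result-U}, and once that is in place, the corollary follows from a one-line parameterized reduction argument together with the standard bookkeeping of $h(k)$ and the $\poly(N)$-to-$n$ translation. No further case analysis or calculation is needed beyond the YES/NO equivalence already proved for the inapproximability theorem.
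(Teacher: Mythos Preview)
Your proposal is correct and follows essentially the same approach as the paper: reuse the reduction already built for Theorem~\ref{thm:inapprox-vertex-result-U}, invoke Theorem~\ref{thm:chen-clique-exact-lb} in place of Theorem~\ref{thm:cli_inapprox}, and read off W[1]-hardness plus the $f(k)\cdot n^{o(k)}$ lower bound from the resulting parameterized reduction. The paper is slightly more concrete---it states $k'=2k$ and $n=O(N^2k^2)$ explicitly, and notes that the exact YES/NO equivalence comes from specializing the soundness argument of Section~\ref{sec:2kvdsp-to-clique-U} to $\epsilon=\tfrac{1}{2}$---but your abstraction to $k'=h(k)$ and $n=N^{O(1)}$ captures the same content.
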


\medskip

\noindent
We again note that the W[1]-hardness of \UVDSP was obtained by Bentert et al.~\cite{nichterlein}, and our~\autoref{thm:hardness-vertex-result-U} strengthens this by showing that the hardness holds even if the graph is a $1$-planar DAG and all vertices have a maximum degree of 4.

\begin{table}[ht]
    \label{tab:theorems}
    \centering
    \begin{tabular}{|| c | c | c | c | c ||}
        \hline
        Problem & \thead{Inapproximability \\ Factor} & Hypothesis & Lower Bound & Reference \\ [0.5ex]
        \hline\hline
        \DEDSP & \makecell{$\left(\frac{1}{2} + \epsilon\right)$                                                                       \\ for each $0 < \epsilon \leq \frac{1}{2}$} & Gap-ETH & \makecell{$f(k)\cdot n^{\zeta k}$ \\ for some $\zeta > 0$} & \autoref{thm:inapprox-edge-result} \\
        \hline
        \DEDSP & Exact & ETH & $f(k)\cdot n^{o(k)}$ & \autoref{thm:hardness-edge-result} \\
        \hline
        \DVDSP & \makecell{$\left(\frac{1}{2} + \epsilon\right)$                                                                       \\ for each $0 < \epsilon \leq \frac{1}{2}$} & Gap-ETH & \makecell{$f(k)\cdot n^{\zeta k}$ \\ for some $\zeta > 0$} & \autoref{thm:inapprox-vertex-result} \\
        \hline
        \DVDSP & Exact & ETH & $f(k)\cdot n^{o(k)}$ & \autoref{thm:hardness-vertex-result} \\
        \hline
        \hline
        \UEDSP & \makecell{$\left(\frac{1}{2} + \epsilon\right)$                                                                       \\ for each $0 < \epsilon \leq \frac{1}{2}$} & Gap-ETH & \makecell{$f(k)\cdot n^{\zeta k}$ \\ for some $\zeta > 0$} & \autoref{thm:inapprox-edge-result-U} \\
        \hline
        \UEDSP & Exact & ETH & $f(k)\cdot n^{o(k)}$ & \autoref{thm:hardness-edge-result-U} \\
        \hline
        \UVDSP & \makecell{$\left(\frac{1}{2} + \epsilon\right)$                                                                       \\ for each $0 < \epsilon \leq \frac{1}{2}$} & Gap-ETH & \makecell{$f(k)\cdot n^{\zeta k}$ \\ for some $\zeta > 0$} & \autoref{thm:inapprox-vertex-result-U} \\
        \hline
        \UVDSP & Exact & ETH & $f(k)\cdot n^{o(k)}$ & \autoref{thm:hardness-vertex-result-U} \\
        \hline
    \end{tabular}

    \caption[]{A compendium of results in this paper. Throughout the table, $f$ represents any computable function. Note that all our \EDSP and \VDSP results hold even for planar and $1$-planar\footnotemark~graphs respectively. Furthermore, the directed results hold if the input graph is a DAG and has both max in-degree and max-degree upper bounded by~$2$. The undirected results hold even if the max degree of the input graph is upper bounded by~$4$.}
\end{table}

\footnotetext{A graph is $1$-planar if it can be drawn in the plane with each edge crossed by at most one other edge.}

\paragraph*{Placing our lower bounds in the context of prior work:}

Our inapproximability results are \emph{tight} for our specific reductions because in all of our reductions of \textsc{Disjoint-Shortest-Paths} it is trivially possible to satisfy half the pairs\footnote{For example, in \autoref{fig:main}, selecting a shortest $c_i \leadsto d_i$ for every $1 \leq i \leq k$ or a shortest $a_j \leadsto b_j$ for every $1 \leq j \leq k$ provides $k$ pairs that are necessarily pairwise edge-disjoint.}.
To obtain stronger inapproximability results, one therefore needs ideas quite different from those introduced in this paper such as those given by Bentert et al.~\cite{bentertTightApproximationKernelization2024}.
Their paper provides an $o(k)$-factor inapproximability lower bound in $f(k)\cdot \text{poly}(n)$ time under Gap-ETH for \vdsp and \edsp graphs for which the terminal vertices have a degree of at most 2 and every other vertex has degree at most 3.
Our inapproximability results for \edsp (\autoref{thm:inapprox-edge-result} \autoref{thm:inapprox-edge-result-U}) and \vdsp (\autoref{thm:inapprox-vertex-result} \autoref{thm:inapprox-vertex-result-U}), however, hold even if the input graph is planar or $1$-planar respectively.

The framework presented in this paper provides one reduction technique for achieving the aforementioned inapproximability bounds, in addition to a number of exact hardness results (\autoref{thm:hardness-edge-result} \autoref{thm:hardness-vertex-result}, \autoref{thm:hardness-edge-result-U} and \autoref{thm:hardness-vertex-result-U}).
Chitnis focused on planar graphs in their hardness proof for $k$-\edp, and we take this further in analysing how we can obtain analogous lower bounds on other specific graph classes.
Chitnis,~\cite{rajesh-ciac-21}, used a reduction from \gtleq and we note that \autoref{thm:hardness-edge-result} could also be obtained by reducing from \gt, although we present a reduction from \kclique here. \autoref{thm:hardness-edge-result} was obtained independently by Amiri \& Wargalla~\cite{amiri-arxiv}, although without the added restriction of bounded in-degree and out-degree.
Although we obtain our results by developing Chitnis's technique for obtaining lower bounds for \edp on DAGs~\cite{rajesh-ciac-21}, we note that one can obtain~\autoref{thm:hardness-vertex-result-U} by analysing Bentert et al.'s graph for hardness on general undirected graphs to be $1$-planar and subsequently~\autoref{thm:hardness-vertex-result} by orienting all edges in their graph either left-to-right or bottom-to-top~\cite[Proposition~3]{nichterlein}.
To the best of our knowledge, \autoref{thm:hardness-edge-result-U} is the first result showing a tight lower bound for Bérczi and Kobayashi's~\cite{berczi-kobayashi} $n^{O(k)}$ time algorithm for \UEDSP on planar graphs.

	\section{Setting up the reductions for \texorpdfstring{$k$}{k}-\dsp on directed graphs}
\label{sec:setting-up-the-redn-dir}

This section describes the common part of the reductions from \kclique to \DEDSP and \DVDSP, which corresponds to the top of the left-hand branch in~\autoref{fig:flowchart}. First, in~\autoref{sec:construction-of-Gint} we construct the intermediate directed graph \Gint which is later used to obtain the graphs \Gedge (\autoref{sec:fpt-inapprox-edsp-planar-dags}) and \Gvert (\autoref{sec:fpt-inapprox-vdsp-1-planar-dags}) used to obtain lower bounds for \DEDSP and \DVDSP respectively.
In~\autoref{sec:characterizing-shortest-in-G}, we then characterize shortest paths (between terminal pairs) in this intermediate graph \Gint.

We note that the intermediate graph \Gint graph is essentially the same as the graph that was constructed for the W[1]-hardness reduction of $k$-Directed-\EDP from \gtleq by Chitnis~\cite{rajesh-ciac-21}.

\subsection{Construction of the intermediate graph \Gint}
\label{sec:construction-of-Gint}
Given an instance $G=(V,E)$ of \kclique with $V=\{v_1, v_2, \ldots, v_N\}$, we now build an instance of an intermediate digraph \Gint (\autoref{fig:main}). This graph, \Gint, is later modified to obtain the final graphs \Gedge (\autoref{sec:construction-of-Gedge}) and \Gvert (\autoref{sec:construction-of-Gvertex}) which are used to obtain exact and approximate lower bounds
for the \dedsp and \dvdsp problems, respectively.

\begin{figure}[!p]
\centering
\begin{tikzpicture}[scale=0.55]
\foreach \i in {0,1,2}
    \foreach \j in {0,1,2}
{
\begin{scope}[shift={(6*\i,6*\j)}]

        \foreach \x in {1,2,...,5}
        \foreach \y in {1,2,...,5}
    {
        \draw [black] plot [only marks, mark size=3, mark=*] coordinates {(\x,\y)};
    }

        \foreach \x in {1,2,...,5}
    \foreach \y in {1,2,3,4}
    {
        \path (\x,\y) node(a) {} (\x,\y+1) node(b) {};
        \draw[thick,->] (a) -- (b);
    }

        \foreach \y in {1,2,...,5}
        \foreach \x in {1,2,3,4}
    {
        \path (\x,\y) node(a) {} (\x+1,\y) node(b) {};
        \draw[thick,->] (a) -- (b);
    }

\end{scope}
}
\foreach \i in {0,1,2}
\foreach \j in {1,2}
{
\begin{scope}[shift={(6*\i,6*\j-6)}]

\foreach \x in {1,2,...,5}
{

    \path (\x,5) node(a) {} (\x,7) node(b) {};
        \draw[red,very thick,->] (a) -- (b);

}

\end{scope}
}

\foreach \j in {0,1,2}
\foreach \i in {0,1}
{
\begin{scope}[shift={(6*\i,6*\j)}]

\foreach \y in {1,2,...,5}
{
        \path (5,\y) node(a) {} (7,\y) node(b) {};
        \draw[red,very thick,->] (a) -- (b);
}

\end{scope}
}

\draw [green] plot [only marks, mark size=3, mark=*] coordinates {(-1,3)}
node[label={[xshift=-3mm,yshift=-4mm] $c_{1}$}] {} ;

\draw [green] plot [only marks, mark size=3, mark=*] coordinates {(-1,9)}
node[label={[xshift=-3mm,yshift=-4mm] $c_{2}$}] {} ;

\draw [green] plot [only marks, mark size=3, mark=*] coordinates {(-1,15)}
node[label={[xshift=-3mm,yshift=-4mm] $c_{3}$}] {} ;

\draw [green] plot [only marks, mark size=3, mark=*] coordinates {(19,3)}
node[label={[xshift=3mm,yshift=-4mm] $d_{1}$}] {} ;

\draw [green] plot [only marks, mark size=3, mark=*] coordinates {(19,9)}
node[label={[xshift=3mm,yshift=-4mm] $d_{2}$}] {} ;

\draw [green] plot [only marks, mark size=3, mark=*] coordinates {(19,15)}
node[label={[xshift=3mm,yshift=-4mm] $d_{3}$}] {} ;

\foreach \k in {0,1,2}
{
\begin{scope}[shift={(0,6*\k)}]
\foreach \y in {1,2,...,5}
    {
        \path (-1,3) node(a) {} (1,\y) node(b) {};
        \draw[magenta,very thick,middlearrow={>}] (a) -- (b);

        \path (17,\y) node(a) {} (19,3) node(b) {};
        \draw[magenta,very thick,middlearrow={>}] (a) -- (b);
    }
\end{scope}
}

\draw [green] plot [only marks, mark size=3, mark=*] coordinates {(3,-1)}
node[label={[xshift=0mm,yshift=-7mm] $a_{1}$}] {} ;

\draw [green] plot [only marks, mark size=3, mark=*] coordinates {(9,-1)}
node[label={[xshift=0mm,yshift=-7mm] $a_{2}$}] {} ;

\draw [green] plot [only marks, mark size=3, mark=*] coordinates {(15,-1)}
node[label={[xshift=0mm,yshift=-7mm] $a_{3}$}] {} ;

\draw [green] plot [only marks, mark size=3, mark=*] coordinates {(3,19)}
node[label={[xshift=0mm,yshift=0mm] $b_{1}$}] {} ;

\draw [green] plot [only marks, mark size=3, mark=*] coordinates {(9,19)}
node[label={[xshift=0mm,yshift=0mm] $b_{2}$}] {} ;

\draw [green] plot [only marks, mark size=3, mark=*] coordinates {(15,19)}
node[label={[xshift=0mm,yshift=0mm] $b_{3}$}] {} ;

\foreach \k in {0,1,2}
{
\begin{scope}[shift={(6*\k,0)}]
\foreach \x in {1,2,...,5}
    {
        \path (3,-1) node(a) {} (\x,1) node(b) {};
        \draw[magenta,very thick,middlearrow={>}] (a) -- (b);

        \path (3,19) node(a) {} (\x,17) node(b) {};
        \draw[magenta,very thick,middlearrow={>}] (b) -- (a);
    }
\end{scope}
}

\draw [rotate=45,black] plot [only marks, mark size=0, mark=*] coordinates
{(0,0)}
node[label={[rotate=45,xshift=0mm,yshift=-2mm] Origin}] {} ;

\end{tikzpicture}
\caption{The intermediate directed graph \Gint constructed from an instance $(G,k)$ of \kclique (with $k=3$ and $N=5$) via the construction described in~\autoref{sec:construction-of-Gint}.
\label{fig:main}
}
\end{figure}

Before constructing the graph \Gint, we first define the following sets for a given instance $G$ of \kclique:
\begin{equation}
    \label{eqn:clique-to-gt-reduction-D}
    \begin{aligned}
        \text{For each}\ i\in [k],\ \text{let}\ S_{i,i}: = \{(a,a)\ :\ 1\leq a\leq N\} \\
        \text{For each}\ 1\leq i\neq j\leq k,\ \text{let}\ S_{i,j}:= \{ (a,b)\ :\ v_{a}-v_{b}\in E(G) \}
    \end{aligned}
\end{equation}

\noindent
We construct the digraph \Gint via the following steps (refer to~\autoref{fig:main}):
\begin{enumerate}
    \item \textbf{Origin}: The origin is marked at the bottom left corner of \Gint (\autoref{fig:main}). This is defined just so we can view the naming of the vertices as per the usual $X-Y$ coordinate system: increasing horizontally towards the right, and vertically towards the top.

    \item \textbf{Grid (black) vertices and edges}: For each $1\leq i,j\leq k$, introduce a (directed) $N\times N$ grid $D_{i,j}$ where the column numbers increase from $1$ to $N$ as we go from left to right, and the row numbers increase from $1$ to $N$ as we go from bottom to top. For each $1\leq q,\ell\leq N$ the unique vertex which is the intersection of the $q^{\text{th}}$ column and $\ell^{\text{th}}$ row of $D_{i,j}$ is denoted by $\w_{i,j}^{q,\ell}$. The vertex set and edge set of $D_{i,j}$ is defined formally as:
    \begin{itemize}
        \item $V(D_{i,j})= \big\{ \w_{i,j}^{q,\ell} : 1\leq q,\ell\leq N \big\}$
        \item $E(D_{i,j}) = \left(\bigcup_{(q,\ell)\in [N]\times [N-1]} \w_{i,j}^{q,\ell} \to \w_{i,j}^{q,\ell+1} \right) \cup \left( \bigcup_{(q,\ell)\in [N-1]\times [N]} \w_{i,j}^{q,\ell} \to \w_{i,j}^{q+1,\ell} \right)$
    \end{itemize}

    All vertices and edges of $D_{i,j}$ are shown in~\autoref{fig:main} using black colour. Note that each horizontal edge of the grid $D_{i,j}$ is oriented to the right, and each vertical edge is oriented towards the top. We later (\autoref{def:splitting-operation-edge} and \autoref{def:splitting-operation-vertex}) modify the grid $D_{i,j}$ (in a problem-specific way) to \emph{represent} the set $S_{i,j}$ defined in~\autoref{eqn:clique-to-gt-reduction-D}.

    For each $1\leq i,j\leq k$ we define the set of \emph{boundary} vertices of the grid $D_{i,j}$ as follows:
    \begin{equation}
        \label{eqn:left-right-top-bottom-G1}
        \begin{aligned}
            \Le(D_{i,j}) := \big\{ \w_{i,j}^{1,\ell}\ :\ \ell\in [N]  \big\}\ ;\
            \Ri(D_{i,j}) := \big\{ \w_{i,j}^{N,\ell}\ :\ \ell\in [N]  \big\}\ ; \\
            \To(D_{i,j}) := \big\{ \w_{i,j}^{\ell,N}\ :\ \ell\in [N]  \big\}\ ;\
            \Bo(D_{i,j}) := \big\{ \w_{i,j}^{\ell,1}\ :\ \ell\in [N]  \big\}\ .
        \end{aligned}
    \end{equation}

    \item \textbf{Arranging the $k^2$ different $N\times N$ grids $\{D_{i,j}\}_{1\leq i,j\leq k}$ into a large $k\times k$ grid}: Place the $k^2$ grids $\Big\{ D_{i,j} :\ (i,j)\in [k]\times [k] \Big\}$ into a big $k\times k$ grid of grids left to right according to growing $i$ and from bottom to top according to growing $j$. In particular, the grid $D_{1,1}$ is at bottom left corner of the construction, the grid $D_{k,k}$ at the top right corner, and so on.

    \item \textbf{\red{Red} edges for horizontal connections}: For each $(i,j)\in [k-1]\times [k]$, add a set of $N$ edges which form a directed perfect matching from $\Ri(D_{i,j})$ to $\Le(D_{i+1,j})$ given by $\Matching\left( D_{i,j}, D_{i+1,j}  \right):= \big\{ \w_{i,j}^{N,\ell} \to \w_{i+1,j}^{1,\ell}\ :\ \ell\in [N] \big\}$.

    \item \textbf{\red{Red} edges for vertical connections}: For each $(i,j)\in [k]\times [k-1]$, add a set of $N$ edges which form a directed perfect matching from $\To(D_{i,j})$ to $\Bo(D_{i,j+1})$ given by $\Matching\left( D_{i,j}, D_{i,j+1}  \right):= \big\{ \w_{i,j}^{\ell,N}  \to \w_{i,j+1}^{\ell,1}\ :\ \ell\in [N] \big\}$.

    \item \textbf{\green{Green} (terminal) vertices and magenta edges}: For each $i\in [k]$, add the following four sets of (terminal) vertices (shown in~\autoref{fig:main} using \green{green} colour)
    \begin{equation}
        \label{eqn:A-B-C-D-D}
        \begin{aligned}
            A := \big\{ a_i\ :\ i\in [k] \big\}\quad ;\quad B := \big\{ b_i\ :\ i\in [k] \big\}\quad ; \\
            C := \big\{ c_i\ :\ i\in [k] \big\}\quad ;\quad D := \big\{ d_i\ :\ i\in [k] \big\}\quad .
        \end{aligned}
    \end{equation}
    \noindent
    For each $i\in [k]$ we add the edges (shown in~\autoref{fig:main} using \magenta{magenta} colour)
    \begin{equation}
        \label{eqn:source-sink-A-B}
        \begin{aligned}
            \Source(A) := \big\{ a_i \to \w_{i,1}^{\ell,1}\ :\ \ell\in [N] \big\}\ ;\
            \Sink(B) := \big\{ \w_{i,N}^{\ell,N} \to b_{i}\ :\ \ell\in [N] \big\}
        \end{aligned}
    \end{equation}
    \noindent
    For each $j\in [k]$ we add the edges (shown in~\autoref{fig:main} using \magenta{magenta} colour)
    \begin{equation}
        \label{eqn:source-sink-C-D}
        \begin{aligned}
            \Source(C) := \big\{ c_j \to \w_{1,j}^{1,\ell}\ :\ \ell\in [N] \big\}\ ;\
            \Sink(D) := \big\{ \w_{N,j}^{N,\ell} \to d_{j}\ :\ \ell\in [N] \big\}
        \end{aligned}
    \end{equation}

\end{enumerate}
\medskip

\begin{definition}
    \label{def:lefty-right-topy-bottomy-dir}
    \textbf{(four neighbors of each grid vertex in \Gint)} Consider the drawing of \GintU from \autoref{fig:main}. This gives the natural notion of four neighbors for every black grid vertex: one to the left, right, bottom and top of each. For each (black) grid vertex $\z\in$ \Gint we define these as follows
    \begin{itemize}
        \item $\lefty(\z)$ is the vertex to the left of $\z$ (as seen by the reader) which has an edge incoming into $\z$
        \item $\bottomy(\z)$ is the vertex below $\z$ (as seen by the reader) which has an edge incoming into $\z$
        \item $\righty(\z)$ is the vertex to the right of $\z$ (as seen by the reader) which has an edge outgoing from $\z$
        \item $\topy(\z)$ is the vertex above $\z$ (as seen by the reader) which has an edge outgoing from $\z$
    \end{itemize}
    Note that in the case that $\z$ lies on the edge of the grid in \autoref{fig:main}, up to $2$ of its neighbours are in fact \green{green} terminal vertices.
\end{definition}
\medskip

This completes the construction of the graph \Gint (\autoref{fig:main}). The next two claims analyze the structure and size of this graph:

\medskip

\begin{claim}
    \label{clm:G1-is-planar-and-dag}
    \normalfont
    \Gint is a planar DAG.
\end{claim}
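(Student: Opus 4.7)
The plan is to prove the two assertions separately, giving an explicit topological potential for the DAG part and an explicit planar embedding (essentially the one already drawn in \autoref{fig:main}) for the planarity part.

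For the \textbf{DAG} part, I will define a potential function $\phi: V(\Gint) \to \mathbb{Z}$ that strictly increases along every edge; this immediately rules out directed cycles. For each grid vertex $\w_{i,j}^{q,\ell}$, set $\phi(\w_{i,j}^{q,\ell}) = \bigl((i-1)N + q\bigr) + \bigl((j-1)N + \ell\bigr)$, i.e., the sum of its \emph{global} column and row indices in the $kN \times kN$ super-grid. Then: every horizontal black edge $\w_{i,j}^{q,\ell} \to \w_{i,j}^{q+1,\ell}$ and every vertical black edge $\w_{i,j}^{q,\ell} \to \w_{i,j}^{q,\ell+1}$ increases $\phi$ by exactly $1$; every red horizontal matching edge $\w_{i,j}^{N,\ell} \to \w_{i+1,j}^{1,\ell}$ increases $\phi$ by $1$ (the global column goes from $iN$ to $iN+1$, the global row is unchanged); and symmetrically every red vertical matching edge increases $\phi$ by $1$. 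For the terminals, set $\phi(a_i) = \phi(c_j) = 0$ for all $i,j$, and $\phi(b_i) = \phi(d_j) = 2kN + 1$. The magenta edges in $\Source(A) \cup \Source(C)$ leave $a_i$ or $c_j$ (at potential $0$) and enter vertices of potential at least $2$, while the magenta edges in $\Sink(B) \cup \Sink(D)$ leave vertices of potential at most $2kN$ and enter $b_i$ or $d_j$ (at potential $2kN+1$). Hence $\phi$ strictly increases along every edge and \Gint is acyclic.

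For the \textbf{planarity} part, I will exhibit the planar embedding depicted in \autoref{fig:main} and verify that no two edges cross. Place each grid $D_{i,j}$ as a straight-line $N \times N$ grid inside a disjoint open axis-aligned rectangle $R_{i,j}$, with the rectangles $\{R_{i,j}\}_{(i,j)\in[k]\times[k]}$ arranged as a $k \times k$ meta-grid (so all $R_{i,j}$ are pairwise disjoint). The black grid edges are internal to $R_{i,j}$ and clearly non-crossing. Each red matching $\Matching(D_{i,j}, D_{i+1,j})$ consists of $N$ horizontal edges between matched boundary vertices on the right of $R_{i,j}$ and the left of $R_{i+1,j}$; since the matching pairs rows in the same order, the edges can be drawn as parallel horizontal segments in the gap between the rectangles with no crossings, and symmetrically for the vertical matchings. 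I place each terminal in $A$ below the meta-grid, each in $B$ above it, each in $C$ to its left, and each in $D$ to its right, in the positions shown in \autoref{fig:main}; the $N$ magenta edges from $a_i$ fan into the bottom row of $D_{i,1}$ inside a wedge strictly below the meta-grid, and analogously for $B, C, D$. Because the four fans live in the four disjoint outer regions (south, north, west, east) of the meta-grid, fans associated with different sides do not cross, and fans on the same side associated with different indices live in disjoint vertical or horizontal slabs.

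The main obstacle is the careful verification that the magenta fans are non-crossing and do not cross the red matching edges or each other. This reduces to checking that each fan can be routed inside a cone that lies entirely outside the bounding box of the meta-grid and whose apex (the terminal) is placed strictly outside this bounding box, so it does not meet the red matchings (which lie inside the bounding box). Once this geometric layout is fixed, non-crossing is immediate from the fact that the edges of a fan emanate from a single point to collinear (and appropriately ordered) target vertices. I would present this verification briefly, leaning on \autoref{fig:main} for intuition, which suffices since the drawing itself is a planar embedding.
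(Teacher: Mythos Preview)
Your proposal is correct and takes essentially the same approach as the paper: the paper's proof simply points to \autoref{fig:main} as a planar embedding and says the DAG property ``is easy to verify from the construction.'' Your explicit potential function and your description of the embedding are a rigorous fleshing-out of exactly what the paper leaves implicit, so there is no substantive difference in strategy.
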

\begin{proof}
    ~\autoref{fig:main} gives a planar embedding of \Gint. It is easy to verify from the construction of \Gint described at the start of~\autoref{sec:construction-of-Gint} (see also~\autoref{fig:main}) that \Gint is a DAG.
\end{proof}
\medskip

\begin{claim}
    \normalfont
    The number of vertices in \Gint is $O(N^{2}k^{2})$
    \label{clm:size-of-G-int}
\end{claim}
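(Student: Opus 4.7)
The proof plan is a direct accounting of the vertices introduced during the construction in~\autoref{sec:construction-of-Gint}. I would simply enumerate the vertex-creating steps and sum up.

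First, I would note that step~2 of the construction introduces, for each pair $(i,j) \in [k] \times [k]$, a grid $D_{i,j}$ with vertex set $\{\w_{i,j}^{q,\ell} : 1 \le q,\ell \le N\}$, contributing exactly $N^2$ vertices per grid. Since there are $k^2$ such grids and they have disjoint vertex sets, this contributes $k^2 \cdot N^2$ vertices in total.

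Next, I would observe that step~6 introduces the four terminal sets $A$, $B$, $C$, $D$ defined in~(\ref{eqn:A-B-C-D-D}), each of size exactly $k$, giving $4k$ additional vertices. The remaining steps (3, 4, 5, and the edge definitions within step~6) only introduce edges, not new vertices.

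Summing, the total vertex count is $k^2 N^2 + 4k$, which is $O(N^2 k^2)$ as claimed. The main (and only) subtlety to verify is that the grid vertex sets for distinct $(i,j)$ are disjoint from each other and from the terminal sets, which is immediate from the subscripting convention $\w_{i,j}^{q,\ell}$ versus the terminal names $a_i, b_i, c_j, d_j$. There is no real obstacle here; the claim is a straightforward bookkeeping argument.
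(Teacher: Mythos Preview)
Your proposal is correct and matches the paper's own proof essentially verbatim: the paper counts the $k^2$ grids contributing $N^2 k^2$ black vertices, adds the $4k$ green terminal vertices from $A\cup B\cup C\cup D$, and concludes the total is $N^2 k^2 + 4k = O(N^2 k^2)$.
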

\begin{proof}
    \Gint has $k^2$ different $N\times N$ grids viz.\ $\{D_{i,j}\}_{1\leq i,j\leq k}$. Hence, \Gint has $N^{2}k^2$ black vertices. Adding the $4k$ \green{green} vertices from $A\cup B\cup C\cup D$, it follows that number of vertices in \Gint is $N^{2}k^2+ 4k = O(N^2 k^2)$.
\end{proof}
\medskip

\subsection{Characterizing shortest paths in \Gint}
\label{sec:characterizing-shortest-in-G}

The goal of this section is to characterize the structure of shortest paths between terminal pairs in \Gint. In order to do this, we need to define the set of terminal pairs $\mathcal{T}$ and also assign vertex costs in \Gint.
\begin{equation}
    \label{eqn:definition-of-mathcal-T}
    \text{The set of terminal pairs is}\ \mathcal{T}:= \big\{(a_i, b_i) : i\in [k] \big\}\cup \big\{(c_j, d_j) : j\in [k] \big\}.
\end{equation}

\medskip

\begin{definition}
    \label{def:weights-in-directed} (\textbf{costs of vertices in \Gint}) Each black vertex in \Gint has a cost of 1.
\end{definition}
\medskip

\autoref{def:weights-in-directed} gives a cost to each vertex of \Gint which then naturally leads to the notion of cost of a path as the sum of costs of the vertices on it. With all costs being $1$, we can equivalently quantify paths either by measuring the number of edges or the number of vertices on them. Thus our choice to measure the cost in terms of the number of vertices has no bearing on the results that we obtain. We now define the \emph{canonical paths} within the graph.

\medskip

\begin{definition}
    \normalfont
    \textbf{(row-paths and column-paths in \Gint)} For each $(i,j)\in [k]\times [k]$ and $\ell\in [N]$ we define
    \begin{itemize}
        \item $\Row_{\ell}(D_{i,j})$ to be the $\w_{i,j}^{1,\ell}\leadsto \w_{i,j}^{N,\ell}$ path in $D_{\inter}[D_{i,j}]$ consisting of the following edges (in order): for each $r\in [N-1]$ take the black edge $\w_{i,j}^{r,\ell}\to \w_{i,j}^{r+1,\ell}$.
        \item $\Column_{\ell}(D_{i,j})$ to be the $\w_{i,j}^{\ell,1}\leadsto \w_{i,j}^{\ell,N}$ path in $D_{\inter}[D_{i,j}]$  consisting of the following edges (in order): for each $r\in [N-1]$ take the black edge $\w_{i,j}^{\ell,r}\to \w_{i,j}^{\ell,r+1}$.
    \end{itemize}

    \label{def:Column-Row-Gint}
\end{definition}
\medskip

It is easy to observe that each row-path and each column-path in \Gint contains exactly $N$ (black) vertices. We are now ready to define horizontal canonical paths and vertical canonical paths in \Gint:
\medskip

\begin{definition}
    \textbf{(horizontal canonical paths in \Gint)}
    Fix any $j\in [k]$. For each $r\in [N]$, we define $\CanInter(r\ ;\ c_j \leadsto d_j)$ to be the $c_j\leadsto d_j$ path in \Gint given by the following edges (in order):
    \begin{itemize}
        \item Start with the \magenta{magenta} edge $c_j \to \w_{1,j}^{1,r}$
        \item For each $i\in [k-1]$ use the $\w_{i,j}^{1,r} \leadsto \w_{i+1,j}^{1,r}$ path obtained by concatenating the $\w_{i,j}^{1,r}\leadsto \w_{i,j}^{N,r}$ path $\Row_{r}(D_{i,j})$ from~\autoref{def:Column-Row-Gint} with the \red{red} edge $\w_{i,j}^{N,r} \to \w_{i+1,j}^{1,r}$.
        \item Now, we have reached the vertex $\w_{k,j}^{1,r}$.
        Use the $\w_{k,j}^{1,r} \leadsto \w_{k,j}^{N,r}$ path $\Row_{r}(D_{k,j})$ from~\autoref{def:Column-Row-Gint} to reach the vertex $\w_{k,j}^{N,r}$.
        \item Finally, use the \magenta{magenta} edge $\w_{k,j}^{N,r} \to d_j$ to reach $d_j$.
    \end{itemize}
    \label{def:hori-canonical-G1}
\end{definition}
\medskip

\begin{definition}
    \textbf{(vertical canonical paths in \Gint)}
    Fix any $i\in [k]$. For each $r\in [N]$, we define $\CanInter(r\ ;\ a_i \leadsto b_i)$ to be the $a_i\leadsto b_i$ path in \Gint given by the following edges (in order):
    \begin{itemize}
        \item Start with the \magenta{magenta} edge $a_i \to \w_{i,1}^{r,1}$
        \item For each $j\in [k-1]$ use the $\w_{i,j}^{r,1} \leadsto \w_{i,j+1}^{r,1}$ path obtained by concatenating the $\w_{i,j}^{r,1}\leadsto \w_{i,j}^{r,N}$ path $\Column_{r}(D_{i,j})$ from~\autoref{def:Column-Row-Gint} with the \red{red} edge $\w_{i,j}^{r,N} \to \w_{i,j+1}^{r,1}$.
        \item Now, we have reached the vertex $\w_{i,k}^{r,1}$.
        Use the $\w_{i,k}^{r,1} \leadsto \w_{i,k}^{r,N}$ path $\Column_{r}(D_{i,k})$ from~\autoref{def:Column-Row-Gint} to reach the vertex $\w_{i,k}^{r,N}$.
        \item Finally, use the \magenta{magenta} edge $\w_{j,k}^{r,N} \to b_i$ to reach $b_i$.
    \end{itemize}

    \label{def:vert-canonical-G1}
\end{definition}
\medskip

The following observation measures the length (by counting the number of vertices) of every horizontal canonical path and vertical canonical path in \Gint.
\medskip

\begin{observation}
    \normalfont
    From~\autoref{def:hori-canonical-G1}, every horizontal canonical path in \Gint starts and ends with a \green{green} vertex. In the middle, this horizontal canonical path contains (all) the vertices from $k$ row-paths (\autoref{def:Column-Row-Gint}) which have $N$ (black) vertices each. Hence, each horizontal canonical path in \Gint contains exactly $kN+2$ vertices. A similar argument (using column-paths instead of row-paths) shows that each vertical canonical path in \Gint also contains exactly $kN+2$ vertices.
    \label{obs:Canonical-length}
\end{observation}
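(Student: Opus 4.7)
The plan is to directly verify the vertex count by unpacking the definitions. Fix $j \in [k]$ and $r \in [N]$, and consider the horizontal canonical path $\CanInter(r\ ;\ c_j \leadsto d_j)$ from \autoref{def:hori-canonical-G1}. This path is the concatenation of the \magenta{magenta} edge $c_j \to \w_{1,j}^{1,r}$, then (for each $i \in [k-1]$) the row-path $\Row_{r}(D_{i,j})$ followed by the \red{red} edge $\w_{i,j}^{N,r} \to \w_{i+1,j}^{1,r}$, then the final row-path $\Row_{r}(D_{k,j})$, and finally the \magenta{magenta} edge $\w_{k,j}^{N,r} \to d_j$. So the vertex set of the path equals $\{c_j, d_j\} \cup \bigcup_{i=1}^{k} V(\Row_{r}(D_{i,j}))$; the \red{red} and \magenta{magenta} edges contribute only edges, because each endpoint of a \red{red} edge is already the first or last vertex of some row-path, and each endpoint of a \magenta{magenta} edge is either a \green{green} terminal or the first/last vertex of some row-path.

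Next I would verify that the above union is disjoint, so that no vertex is double-counted. The \green{green} terminals $c_j, d_j$ lie in $C \cup D$ and hence are not black grid vertices by construction. For distinct $i, i' \in [k]$, the grids $D_{i,j}$ and $D_{i',j}$ are vertex-disjoint because step (2) of the construction in \autoref{sec:construction-of-Gint} introduces each $D_{i,j}$ as an independent copy of an $N\times N$ grid on its own $N^2$ black vertices; in particular, $V(\Row_{r}(D_{i,j})) \cap V(\Row_{r}(D_{i',j})) = \emptyset$. Moreover, each individual row-path $\Row_{r}(D_{i,j})$ contains exactly $N$ black vertices, which is immediate from \autoref{def:Column-Row-Gint} since $\Row_{r}(D_{i,j})$ traverses the $N$ distinct vertices $\w_{i,j}^{1,r}, \w_{i,j}^{2,r}, \ldots, \w_{i,j}^{N,r}$.

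Summing the contributions yields $|V(\CanInter(r\ ;\ c_j \leadsto d_j))| = 2 + kN$, as claimed. The vertical case is strictly symmetric: apply \autoref{def:vert-canonical-G1}, use that the grids $D_{i,1}, D_{i,2}, \ldots, D_{i,k}$ for fixed $i$ are pairwise vertex-disjoint, and use that each column-path $\Column_{r}(D_{i,j})$ contains exactly $N$ black vertices. There is no genuine obstacle here: the proof reduces entirely to the disjointness of the $k$ grids that the canonical path passes through and the sizes recorded in \autoref{def:Column-Row-Gint}; it is essentially bookkeeping on the construction in \autoref{sec:construction-of-Gint}.
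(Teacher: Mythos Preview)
Your proposal is correct and follows the same approach as the paper: the paper states this as an observation whose justification is embedded in the statement itself (two green endpoints plus $k$ disjoint row-paths of $N$ black vertices each), and you have simply fleshed out the bookkeeping details more explicitly.
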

\medskip

We now set up notation for some special sets of vertices in \Gint, which helps to streamline some of the subsequent proofs.
\medskip

\begin{definition}
    \label{def:horizontal-vertical-sets-in-G1}
    \textbf{(horizontal \& vertical levels)}\\
    \begin{gather*}
        \text{For each}\ j\in [k],\ \text{set}\ \HorizontalInter(j):= \{ c_j, d_j \} \cup \left( \bigcup_{i=1}^{k} V(D_{i,j})\right)\\
        \text{For each}\ i\in [k],\ \text{set}\ \VerticalInter(i):= \{ a_i, b_i \} \cup \left( \bigcup_{j=1}^{k} V(D_{i,j})\right)\\
    \end{gather*}
    We also define the following ``border cases'':
    \begin{gather*}
        \HorizontalInter(0):=A \quad \text{and}\quad  \HorizontalInter(k+1)=B\\
        \VerticalInter(0):=C \quad \text{and}\quad  \VerticalInter(k+1):=D\\
    \end{gather*}
\end{definition}
\medskip

The next claim about the structure of $c_j\leadsto d_j$ paths in \Gint is used later in the proof of~\autoref{lem:horizontal-canonical-is-shortest-G1}.
\medskip

\begin{claim}
    \normalfont
    If $j\in [k]$, then every $c_j \leadsto d_j$ path in \Gint is contained in  $D_{\inter}\big[ \HorizontalInter(j) \big ]$.
    \label{clm:cj-dj-paths-G1}
\end{claim}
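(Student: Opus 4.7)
The plan is to exploit the layered structure of \Gint together with the fact that it is a DAG. I will assign to every vertex $v$ of \Gint a \emph{row index} $\rho(v)\in [k]$ indicating the horizontal level of the $k\times k$ arrangement of grids in which $v$ resides. Concretely, for each grid vertex $\w_{i,j}^{q,\ell}\in V(D_{i,j})$ set $\rho(\w_{i,j}^{q,\ell}):=j$; for each $j\in[k]$ set $\rho(c_j):=\rho(d_j):=j$; and for each $i\in[k]$ set $\rho(a_i):=1$ and $\rho(b_i):=k$ (the values on $a_i,b_i$ are needed only for completeness, since these vertices are respectively a source and a sink and hence cannot be internal to any $c_j\leadsto d_j$ path).

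The first step will be a short case analysis, based on the edge types introduced in~\autoref{sec:construction-of-Gint}, showing that $\rho$ is weakly monotone along every directed edge of \Gint. Specifically: (i) the black grid edges inside $D_{i,j}$ and the horizontal red matching edges between $\Ri(D_{i,j})$ and $\Le(D_{i+1,j})$ both preserve $\rho$, since their endpoints lie in grids at the same row $j$; (ii) each vertical red matching edge $\w_{i,j}^{\ell,N}\to\w_{i,j+1}^{\ell,1}$ increases $\rho$ by exactly $1$; and (iii) each \magenta{magenta} edge incident to the terminals in $A\cup B\cup C\cup D$ preserves $\rho$ by the choice of $\rho$-values on these terminals.

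The conclusion then follows quickly. Since \Gint is a DAG by~\autoref{clm:G1-is-planar-and-dag} and $\rho$ is non-decreasing along every edge, the sequence of $\rho$-values along any directed path is weakly monotonically increasing. A $c_j\leadsto d_j$ path begins and ends at vertices with $\rho$-value $j$, and hence every intermediate vertex must also have $\rho$-value $j$. By~\autoref{def:horizontal-vertical-sets-in-G1}, the set $\{v\in V(\Gint):\rho(v)=j\}$ coincides exactly with $\HorizontalInter(j)=\{c_j,d_j\}\cup\bigcup_{i=1}^{k}V(D_{i,j})$, so every $c_j\leadsto d_j$ path lies inside $D_{\inter}\bigl[\HorizontalInter(j)\bigr]$, as claimed.

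There is no genuine obstacle to this plan; the only point requiring care is ensuring the case analysis covers every edge type, and in particular noticing that the vertical red matching edges are the \emph{only} edges crossing between consecutive rows and are strictly upward in the DAG ordering---which is precisely what prevents a path from returning to row $j$ once it has left.
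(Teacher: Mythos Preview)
Your argument is correct and is essentially the same as the paper's: both exploit that the horizontal level is monotone along every directed edge, so a path starting and ending at level $j$ can never leave level $j$. The paper phrases this via neighborhood inclusions $N^{+}_{D_{\inter}}(\HorizontalInter(j))\subseteq\HorizontalInter(j+1)$ together with the source/sink property of $A$ and $B$, while you package the same content as a potential function $\rho$.

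One small imprecision worth tightening: your final sentence asserts that $\{v:\rho(v)=j\}$ coincides \emph{exactly} with $\HorizontalInter(j)$, but with your choices $\rho(a_i)=1$ and $\rho(b_i)=k$ this fails for $j=1$ and $j=k$. You already flagged this earlier by noting that the $a_i$ are sources and the $b_i$ are sinks, so none of them can appear anywhere on a $c_j\leadsto d_j$ path (not merely not as internal vertices). It would be cleaner either to set $\rho(a_i)=0$ and $\rho(b_i)=k+1$ in line with the paper's border conventions $\HorizontalInter(0)=A$ and $\HorizontalInter(k+1)=B$, or to state the conclusion as $\{v:\rho(v)=j\}\setminus(A\cup B)=\HorizontalInter(j)$ and invoke the source/sink observation explicitly.
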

\begin{proof}
    The structure of \Gint (\autoref{fig:main}) allows us to make some simple observations about edges in \Gint:
    \begin{itemize}
        \item $N^{+}_{D_{\inter}}(c_j) \subseteq D_{1,j}$ and $N^{-}_{D_{\inter}}(d_j) \subseteq D_{k,j}$
        \item For each $0\leq j\leq k$, we have $N^{+}_{D_{\inter}}\left(\HorizontalInter(j)\right)\subseteq \left(\HorizontalInter(j+1)\right)$
        \item $N^{-}_{D_{\inter}}(A)=\emptyset=N^{+}_{D_{\inter}}(B)$
    \end{itemize}
    These three observations imply that if $j\in [k]$ and any $c_j\leadsto d_j$ path leaves $\HorizontalInter(j)$, then it could never return back to $\HorizontalInter(j)$. Since $c_j, d_j\in \HorizontalInter(j)$, every $c_j\leadsto d_j$ path in \Gint begins and ends at vertices of $\HorizontalInter(j)$. Therefore, we can conclude that every $c_j \leadsto d_j$ path in \Gint is contained in the induced subgraph $D_{\inter}\big[ \HorizontalInter(j) \big ]$.
\end{proof}
\medskip

The next lemma shows that if $j\in [k]$ then any shortest $c_j \leadsto d_j$ path in \Gint must be a horizontal canonical path and vice versa.

\medskip

\begin{lemma}
    \normalfont
    Let $j\in [k]$. The horizontal canonical paths in \Gint satisfy the following two properties:
    \begin{itemize}
        \item[(i)] For each $r\in [N]$, the path $\CanInter(r\ ;\ c_j \leadsto d_j)$ is a shortest $c_j \leadsto d_j$ path in \Gint.
        \item[(ii)] If $P$ is a shortest $c_j\leadsto d_j$ path in \Gint, then $P$ must be $\CanInter(\ell\ ;\ c_j \leadsto d_j)$ for some $\ell\in [N]$.
    \end{itemize}
    \label{lem:horizontal-canonical-is-shortest-G1}
\end{lemma}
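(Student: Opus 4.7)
The plan is to combine the confinement result of \autoref{clm:cj-dj-paths-G1} with a careful vertex-counting argument based on the row-monotone structure of grid edges. For part~(i), I will directly invoke \autoref{obs:Canonical-length}, which already tells us every canonical path has exactly $kN+2$ vertices; the content is then to show that $kN+2$ is a matching lower bound on the length of any $c_j\leadsto d_j$ path.

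To establish this lower bound, first note that by \autoref{clm:cj-dj-paths-G1} every $c_j\leadsto d_j$ path $P$ lies in $D_{\inter}\big[\HorizontalInter(j)\big]$. Inspecting the construction (\autoref{sec:construction-of-Gint}), the only out-edges of $c_j$ enter $\Le(D_{1,j})$, the only in-edges of $d_j$ leave $\Ri(D_{k,j})$, and for each $i\in[k-1]$ the red matching $\Matching(D_{i,j},D_{i+1,j})$ is the only set of edges from $D_{i,j}$ to $D_{i+1,j}$. Consequently $P$ must traverse the grids $D_{1,j},D_{2,j},\ldots,D_{k,j}$ in order, visiting each exactly once, and for some row indices $r_1,r_1',r_2,r_2',\ldots,r_k,r_k'\in[N]$ the sub-path $P\cap D_{i,j}$ enters $D_{i,j}$ at $\w_{i,j}^{1,r_i}$ and exits at $\w_{i,j}^{N,r_i'}$. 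Since every edge inside a grid points either rightward or upward, this sub-path uses at least one vertex per column and satisfies $r_i'\ge r_i$, with exactly $N$ vertices iff $r_i'=r_i$ (in which case the sub-path is precisely $\Row_{r_i}(D_{i,j})$ from \autoref{def:Column-Row-Gint}). Because each red matching edge $\w_{i,j}^{N,\ell}\to\w_{i+1,j}^{1,\ell}$ preserves the row index, we have $r_{i+1}=r_i'$, whence the telescoping identity
\[
\sum_{i=1}^{k}\bigl(r_i'-r_i\bigr)\;=\;r_k'-r_1\;\ge\;0.
\]
Adding the two terminals $c_j,d_j$, the total number of vertices on $P$ is
\[
2+\sum_{i=1}^{k}\bigl(N+r_i'-r_i\bigr)\;=\;kN+2+(r_k'-r_1)\;\ge\;kN+2,
\]
which together with \autoref{obs:Canonical-length} proves (i).

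For part~(ii), suppose $P$ is a shortest $c_j\leadsto d_j$ path. Then equality must hold throughout the displayed inequality, forcing $r_k'=r_1$; combined with $r_i'\ge r_i$ and $r_{i+1}=r_i'$, this yields $r_1=r_1'=r_2=r_2'=\cdots=r_k=r_k'$. Let $\ell$ denote this common row. Then within each grid $D_{i,j}$ the sub-path must be exactly $\Row_{\ell}(D_{i,j})$, and the connections between grids (as well as the edges $c_j\to\w_{1,j}^{1,\ell}$ and $\w_{k,j}^{N,\ell}\to d_j$) are forced. Comparing with \autoref{def:hori-canonical-G1}, this is precisely $\CanInter(\ell\ ;\ c_j\leadsto d_j)$, proving~(ii).

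The only delicate step is keeping the row-index bookkeeping consistent across the three kinds of edges traversed (internal grid edges, red matching edges between grids, and magenta terminal edges); the telescoping identity above is what makes the argument clean, so I will state it explicitly rather than reason grid-by-grid. Everything else reduces to the trivial fact that a directed path in an $N$-column grid whose edges point only rightward or upward must contain at least $N$ vertices, one per column.
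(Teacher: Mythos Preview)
Your proof is correct and follows essentially the same approach as the paper: confine the path to $\HorizontalInter(j)$ via \autoref{clm:cj-dj-paths-G1}, then count one vertex per column ($kN$ black vertices) plus the net upward displacement between the entry row $\alpha=r_1$ and exit row $\beta=r_k'$, yielding length $kN+2+(\beta-\alpha)\ge kN+2$ with equality forcing a single-row canonical path. The paper makes the same argument slightly more tersely by tracking only the global entry/exit rows $\alpha,\beta$ rather than your per-grid indices $r_i,r_i'$, but your telescoping sum collapses to exactly $\beta-\alpha$, so the two arguments are the same.
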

\begin{proof}

    Consider any $c_j \leadsto d_j$ path, say $P$, in \Gint. By~\autoref{clm:cj-dj-paths-G1}, the path $P$ is completely contained in $D_{\inter}\big[ \HorizontalInter(j) \big ]$. Since $N^{+}_{D_{\inter}}(c_j)=\Le(D_{1,j})$ and $N^{-}_{D_{\inter}}(d_j)=\Ri(D_{k,j})$, it follows that the second vertex of $P$ must be from $\Le(D_{1,j})$ and the second-last vertex of $P$ must be from $\Ri(D_{k,j})$. Therefore, let the second and second-last vertices of $P$ be $\w_{1,j}^{1,\alpha}$ and $\w_{k,j}^{N,\beta}$ for some $1\leq \alpha, \beta\leq N$. We now make the following two observations:
    \begin{itemize}
        \item Since each horizontal black/\red{red} edge is oriented towards the right and each vertical black/\red{red} edge is oriented towards the top in $D_{\inter}\big[ \HorizontalInter(j) \big ]$, it follows that $\beta\geq \alpha$.
        \item For each $i\in [k]$ and each $\ell\in [N]$, let $\Columna_{\ell}(D_{i,j}):= \big\{ \w_{i,j}^{\ell,r}\ :\ 1\leq r\leq N \big\}$. From the structure of $D_{\inter}\big[ \HorizontalInter(j) \big ]$ it follows that $P$ contains at least one vertex from $\Columna_{\ell}(D_{i,j})$ for each $i\in [k]$ and each $\ell\in [N]$.
    \end{itemize}
    Therefore, the number of black vertices on $P$ is exactly $kN + (\beta-\alpha)\geq kN$. Remembering to add the first \green{green} vertex $c_j$ and last \green{green} vertex $d_j$, it follows that $P$ contains at least $kN+2$ vertices. The first part of the lemma now follows since each horizontal canonical path contains exactly $kN+2$ vertices (\autoref{obs:Canonical-length}).
    For the second part of the lemma: observe that if $P$ has length exactly equal to the length of a shortest $c_j\leadsto d_j$ path, then we have $kN+2 = 2+ kN + (\beta-\alpha)$ which implies $\beta=\alpha$. From the orientation of the edges within $D_{\inter}\big[ \HorizontalInter(j) \big ]$, it follows that $P$ is the path $\CanInter(\alpha\ ;\ c_j\leadsto d_j)$.
\end{proof}
\medskip

The proof of the next lemma is very similar to that of \autoref{lem:horizontal-canonical-is-shortest-G1}, and we skip repeating the details.

\medskip

\begin{lemma}
    \normalfont
    Let $i\in [k]$. The vertical canonical paths in \Gint satisfy the following two properties:
    \begin{itemize}
        \item For each $r\in [N]$, the path $\CanInter(r\ ;\ a_i \leadsto b_i)$ is a shortest $a_i \leadsto b_i$ path in \Gint.
        \item If $P$ is a shortest $a_i\leadsto b_i$ path in \Gint, then $P$ must be $\CanInter(\ell\ ;\ a_i \leadsto b_i)$ for some $\ell\in [N]$.
    \end{itemize}
    \label{lem:vertical-canonical-is-shortest-G1}
\end{lemma}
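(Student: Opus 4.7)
My plan is to mirror the proof of~\autoref{lem:horizontal-canonical-is-shortest-G1} with the roles of rows and columns (and of the horizontal and vertical coordinates) swapped. The first step is to establish the vertical analogue of~\autoref{clm:cj-dj-paths-G1}: for each $i\in [k]$, every $a_i\leadsto b_i$ path in \Gint is contained in $D_{\inter}\bigl[\VerticalInter(i)\bigr]$. This uses three elementary observations directly from the construction in~\autoref{sec:construction-of-Gint}: (a) $N^{+}_{D_{\inter}}(a_i)\subseteq D_{i,1}$ and $N^{-}_{D_{\inter}}(b_i)\subseteq D_{i,k}$ (by the definitions of $\Source(A)$ and $\Sink(B)$); (b) for each $0\leq i\leq k$, we have $N^{+}_{D_{\inter}}\bigl(\VerticalInter(i)\bigr)\subseteq \VerticalInter(i+1)$, because the only edges leaving a column $\VerticalInter(i)$ via \red{red} matchings are the horizontal ones in $\Matching(D_{i,j},D_{i+1,j})$ whose heads lie in $\VerticalInter(i+1)$; and (c) $N^{-}_{D_{\inter}}(C)=\emptyset=N^{+}_{D_{\inter}}(D)$. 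Together these imply that once such a path leaves $\VerticalInter(i)$ it cannot return, so it never leaves.

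For the length argument, fix a shortest $a_i\leadsto b_i$ path $P$ in \Gint. By the inclusions $N^{+}_{D_{\inter}}(a_i)=\Bo(D_{i,1})$ and $N^{-}_{D_{\inter}}(b_i)=\To(D_{i,k})$, the second vertex of $P$ is some $\w_{i,1}^{\alpha,1}$ and the second-last vertex is some $\w_{i,k}^{\beta,N}$ for $\alpha,\beta\in [N]$. Every horizontal black/\red{red} edge inside $D_{\inter}\bigl[\VerticalInter(i)\bigr]$ is oriented to the right and every vertical one upward, so monotonicity along the horizontal coordinate forces $\beta\geq \alpha$. Next, for each $j\in [k]$ and $\ell\in [N]$ let $\Row_{\ell}(D_{i,j})$ be as in~\autoref{def:Column-Row-Gint}; since $P$ must move from $\Bo(D_{i,j})$ to $\To(D_{i,j})$ within each grid $D_{i,j}$, it contains at least one vertex from every such row-path, giving at least $N$ distinct black vertices per grid $D_{i,j}$. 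Summing over $j\in [k]$ yields at least $kN$ black vertices on $P$, plus the extra $\beta-\alpha$ horizontal steps needed to shift columns, for a total of at least $kN+2+(\beta-\alpha)\geq kN+2$ vertices (including $a_i$ and $b_i$).

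To finish, combine this lower bound with~\autoref{obs:Canonical-length}: each vertical canonical path $\CanInter(r\ ;\ a_i\leadsto b_i)$ already has exactly $kN+2$ vertices, so it is a shortest $a_i\leadsto b_i$ path, proving (i). For (ii), any shortest such path $P$ achieves equality in the bound above, hence $\beta=\alpha$; combined with the monotonicity of edges in $D_{\inter}\bigl[\VerticalInter(i)\bigr]$ and the fact that $P$ cannot skip any of the $k$ grids $D_{i,1},\dots,D_{i,k}$, the unique option is that $P$ proceeds column $\alpha$ straight through each $D_{i,j}$ via $\Column_{\alpha}(D_{i,j})$ glued together by the \red{red} vertical matchings, i.e.\ $P=\CanInter(\alpha\ ;\ a_i\leadsto b_i)$.

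The only mild subtlety, and the step I would write most carefully, is the column-shift accounting $\beta-\alpha$: one has to argue that because horizontal red edges $\Matching(D_{i,j},D_{i+1,j})$ only exist in the horizontal direction, shifting column index within a single $D_{i,j}$ requires an extra \emph{horizontal} black edge whose tail adds a new black vertex not already counted in the $\Row$/$\Column$ enumeration. This is handled identically to the proof of~\autoref{lem:horizontal-canonical-is-shortest-G1}, which is why the authors rightly indicate that the argument can be omitted.
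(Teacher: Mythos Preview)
Your proposal is correct and follows exactly the approach the paper indicates: it mirrors the proof of \autoref{lem:horizontal-canonical-is-shortest-G1} with rows and columns swapped, first establishing the vertical analogue of \autoref{clm:cj-dj-paths-G1} and then carrying out the identical monotonicity and counting argument. The paper itself omits the proof for precisely this reason, so your write-up is faithful to (and more detailed than) what the authors provide.
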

\medskip

\begin{remark}
    \label{rmk:hori-verti-sets-G}
    \normalfont
    \textbf{(reducing the in-degree and out-degree of \Gint)}
    The only vertices in \Gint which have out-degree greater than two are in $A\cup C$ and using Chitnis's technique from~\cite{rajesh-ciac-21}, we can reduce the out-degree of vertices from $C$, as follows: the argument for vertices from $A$ is analogous. Fix $j\in [k]$. The out-degree of $c_j$ is $N$ and $N^{+}_{D_{\inter}} = \big\{\w_{\LB}\ : \w\in \Le(D_{1,j})\big\}$. Replace the directed star, each of whose edges is from $c_j$ to a vertex of $\Le(D_{1,j})$, with a directed binary tree. This tree~$B$, whose root is $c_j$,  has leaves $\Le(D_{1,j})$ and each edge is directed away from the root. All non-leaf vertices of this binary tree are denoted by \green{green} color and all edges have \magenta{magenta} color. For simplicity, we assume that $N=2^{\ell}$ for some $\ell\in \mathbb{N}$. The only change for each terminal pair is that the path that started and ended with a \magenta{magenta} edge (equivalently, a \green{green} vertex) now starts and end with $\log_{2} N = \ell$ \magenta{magenta} edges (equivalently, $\log_{2} N = \ell$ \green{green} vertices). Hence, in the resulting graph, the out-degree and in-degree of every non-leaf vertex of $B$ is at most two, while the in-degree and out-degree of every leaf vertex of $B$ is unchanged (and hence exactly two). A similar argument also shows that we can reduce the in-degree of every vertex from $B\cup D$ to be at most two while preserving the correctness of the reduction from~\autoref{sec:construction-of-Gint}.

    It is easy to see that this editing of \Gint in~\autoref{rmk:hori-verti-sets-G}  adds $O(k\cdot N)$ new vertices and takes $\poly(N)$ time, and therefore it is still true (from~\autoref{clm:size-of-G-int}) that $n=|V\left(D_{\inter}\right)|=O(N^{2}k^{2})$ and \Gint can be constructed in $\poly(N,k)$ time.
\end{remark}
\medskip

	\section{Lower bounds for exact \& approximate \DEDSP on Planar DAGs}
\label{sec:fpt-inapprox-edsp-planar-dags}

The goal of this section is to prove lower bounds on the running time of exact (\autoref{thm:hardness-edge-result}) and approximate (\autoref{thm:inapprox-edge-result}) algorithms for the \DEDSP problem. We have already seen the first part of the reduction (\autoref{sec:construction-of-Gint}) from \kclique resulting in the construction of the intermediate graph \Gint.~\autoref{sec:construction-of-Gedge} describes the next part of the reduction which edits the intermediate \Gint to obtain the final graph \Gedge. This corresponds to the ancestry of the first leaf in~\autoref{fig:flowchart}. The characterization of shortest paths between terminal pairs in \Gedge is given in~\autoref{sec:characterizing-shortest-in-G-edge}. The completeness and soundness of the reduction from \kclique to Directed-$2k$-EDSP are proven in~\autoref{sec:clique-to-2kedsp} and~\autoref{sec:2kedsp-to-clique} respectively. Finally, everything is tied together in~\autoref{sec:proof-of-main-theorem-edge} allowing us to prove~\autoref{thm:hardness-edge-result} and~\autoref{thm:inapprox-edge-result}.

\subsection{Obtaining the graph \Gedge from \Gint\ via the splitting operation}
\label{sec:construction-of-Gedge}

We now define the splitting operation which allows us to obtain the graph \Gedge from the graph \Gint constructed in~\autoref{sec:construction-of-Gint}.
\medskip

\begin{definition}
    \textbf{(splitting operation to obtain \Gedge from \Gint)} For each $i,j\in [k]$ and each $q,\ell\in [N]$
    \begin{itemize}
        \item If $(q,\ell)\notin S_{i,j}$, then we \onesplit (\autoref{fig:split-edge-not}) the vertex $\w_{i,j}^{q,\ell}$ into \textbf{three distinct} vertices $\w_{i,j,\LB}^{q,\ell}, \w_{i,j,\Mid}^{q,\ell}$ and $\w_{i,j,\TR}^{q,\ell}$ and add the path $\w_{i,j,\LB}^{q,\ell}\to \w_{i,j,\Mid}^{q,\ell}\to \w_{i,j,\TR}^{q,\ell}$ (denoted by dotted edges in~\autoref{fig:split-edge-not}).
        \item Otherwise, if $(q,\ell)\in S_{i,j}$ then we \twosplit (\autoref{fig:split-edge-yes}) the vertex $\w_{i,j}^{q,\ell}$ into \textbf{four distinct} vertices $\w_{i,j,\LB}^{q,\ell}, \w_{i,j,\Hor}^{q,\ell}, \w_{i,j,\Ver}^{q,\ell}$ and $\w_{i,j,\TR}^{q,\ell}$ and add the two paths $\w_{i,j,\LB}^{q,\ell}\to \w_{i,j,\Hor}^{q,\ell}\to \w_{i,j,\TR}^{q,\ell}$ and $\w_{i,j,\LB}^{q,\ell}\to \w_{i,j,\Ver}^{q,\ell}\to \w_{i,j,\TR}^{q,\ell}$ (denoted by dotted edges in~\autoref{fig:split-edge-yes}).
    \end{itemize}
    The 4 edges (\autoref{def:lefty-right-topy-bottomy-dir}) incident on $\w_{i,j}^{q,\ell}$ are now changed as follows:
    \begin{itemize}
        \item Replace the edge $\lefty(\w_{i,j}^{q,\ell})\to \w_{i,j}^{q,\ell}$ by the edge $\lefty(\w_{i,j}^{q,\ell})\to \w_{i,j,\LB}^{q,\ell}$
        \item Replace the edge $\bottomy(\w_{i,j}^{q,\ell})\to \w_{i,j}^{q,\ell}$ by the edge $\bottomy(\w_{i,j}^{q,\ell})\to \w_{i,j,\LB}^{q,\ell}$
        \item Replace the edge $\w_{i,j}^{q,\ell}\to \righty(\w_{i,j}^{q,\ell})$ by the edge $\w_{i,j,\TR}^{q,\ell}\to \righty(\w_{i,j}^{q,\ell})$
        \item Replace the edge $\w_{i,j}^{q,\ell}\to \topy(\w_{i,j}^{q,\ell})$ by the edge $\w_{i,j,\TR}^{q,\ell}\to \topy(\w_{i,j}^{q,\ell})$
    \end{itemize}
    \label{def:splitting-operation-edge}
\end{definition}
\medskip

\begin{figure}[hbt!]
\centering
\begin{tikzpicture}[
vertex/.style={circle, draw=black, fill=black, text width=1.5mm, inner sep=0pt},
scale=0.65]
\node[vertex, label=above right:\footnotesize{$\w_{i,j}^{q,\ell}$}] (v) at (0,0) {} ;
\node[vertex, label=above:\footnotesize{$\lefty(\w_{i,j}^{q,\ell})$}] (l) at (-2,0) {};
\node[vertex, label=below:\footnotesize{$\righty(\w_{i,j}^{q,\ell})$}] (r) at (2,0) {};
\node[vertex, label=below:\footnotesize{$\bottomy(\w_{i,j}^{q,\ell})$}] (b) at (0,-2) {};
\node[vertex, label=above:\footnotesize{$\topy(\w_{i,j}^{q,\ell})$}] (t) at (0,2) {};
\draw[ultra thick, middlearrow={>}] (v) -- (t);
\draw[ultra thick, middlearrow={>}] (v) -- (r);
\draw[ultra thick, middlearrow={>}] (l) -- (v);
\draw[ultra thick, middlearrow={>}] (b) -- (v);

\draw[orange,double, ultra thick,->] (3,0) -- node[above=3mm, draw=none, fill=none, rectangle] {\onesplit } (5,0);

\node[vertex, label=below:\footnotesize{$\w_{i,j,\TR}^{q,\ell}$}] (vtr) at (12,0) {} ;
\node[vertex, label=below:\footnotesize{$\w_{i,j,\Mid}^{q,\ell}$}] (vm) at (10,0) {} ;
\node[vertex, label=above:$\w_{i,j,\LB}^{q,\ell}$] (vlb) at (8,-0) {} ;

\node[vertex, label=below:\footnotesize{$\lefty(\w_{i,j}^{q,\ell})$}] (l) at (6,0) {};
\node[vertex, label=above:\footnotesize{$\righty(\w_{i,j}^{q,\ell})$}] (r) at (14,0) {};
\node[vertex, label=below:\footnotesize{$\bottomy(\w_{i,j}^{q,\ell})$}] (b) at (8,-2) {};
\node[vertex, label=above:\footnotesize{$\topy(\w_{i,j}^{q,\ell})$}] (t) at (12,2) {};
\draw[ultra thick, middlearrow={>}] (vtr) -- (t);
\draw[ultra thick, middlearrow={>}] (vtr) -- (r);
\draw[ultra thick, middlearrow={>}] (l) -- (vlb);
\draw[ultra thick, middlearrow={>}] (b) -- (vlb);
\draw[dotted,ultra thick,middlearrow={>}] (vlb) -- (vm);
\draw[dotted,ultra thick,middlearrow={>}] (vm) -- (vtr);
\end{tikzpicture}

\caption{The \onesplit operation for the vertex $\w_{i,j}^{q,\ell}$ when
$(q,\ell)\notin S_{i,j}$. The idea behind this splitting is that the horizontal path $\lefty(w_{i,j}^{q,\ell})\to w_{i,j}^{q,\ell}\to \righty(w_{i,j}^{q,\ell})$ and vertical path $\bottomy(w_{i,j}^{q,\ell})\to w_{i,j}^{q,\ell}\to \topy(w_{i,j}^{q,\ell})$  are no longer edge-disjoint after the \onesplit operation as they must share the path $w_{i,j,\LB}^{q,\ell}\to w_{i,j,\Mid}^{q,\ell}\to w_{i,j,\TR}^{q,\ell}$.}
\label{fig:split-edge-not}
\end{figure}
\begin{figure}[hbt!]
\centering
\begin{tikzpicture}[
vertex/.style={circle, draw=black, fill=black, text width=1.5mm, inner sep=0pt},
scale=0.65]
\node[vertex, label=above right:\footnotesize{$\w_{i,j}^{q,\ell}$}] (v) at (0,0) {} ;
\node[vertex, label=above:\footnotesize{$\lefty(\w_{i,j}^{q,\ell})$}] (l) at (-2,0) {};
\node[vertex, label=below:\footnotesize{$\righty(\w_{i,j}^{q,\ell})$}] (r) at (2,0) {};
\node[vertex, label=below:\footnotesize{$\bottomy(\w_{i,j}^{q,\ell})$}] (b) at (0,-2) {};
\node[vertex, label=above:\footnotesize{$\topy(\w_{i,j}^{q,\ell})$}] (t) at (0,2) {};
\draw[ultra thick, middlearrow={>}] (v) -- (t);
\draw[ultra thick, middlearrow={>}] (v) -- (r);
\draw[ultra thick, middlearrow={>}] (l) -- (v);
\draw[ultra thick, middlearrow={>}] (b) -- (v);

\draw[orange,double, ultra thick,->] (3,0) -- node[above=3mm, draw=none, fill=none, rectangle] {\twosplit } (5,0);

\node[vertex, label=below:\footnotesize{$\w_{i,j,\TR}^{q,\ell}$}] (vtr) at (12,0) {} ;
\node[vertex, label=below:\footnotesize{$\w_{i,j,\Hor}^{q,\ell}$}] (vh) at (10,-1) {} ;
\node[vertex, label=above:\footnotesize{$\w_{i,j,\Ver}^{q,\ell}$}] (vv) at (10,1) {} ;
\node[vertex, label=above:$\w_{i,j,\LB}^{q,\ell}$] (vlb) at (8,-0) {} ;

\node[vertex, label=below:\footnotesize{$\lefty(\w_{i,j}^{q,\ell})$}] (l) at (6,0) {};
\node[vertex, label=above:\footnotesize{$\righty(\w_{i,j}^{q,\ell})$}] (r) at (14,0) {};
\node[vertex, label=below:\footnotesize{$\bottomy(\w_{i,j}^{q,\ell})$}] (b) at (8,-2) {};
\node[vertex, label=above:\footnotesize{$\topy(\w_{i,j}^{q,\ell})$}] (t) at (12,2) {};
\draw[ultra thick, middlearrow={>}] (vtr) -- (t);
\draw[ultra thick, middlearrow={>}] (vtr) -- (r);
\draw[ultra thick, middlearrow={>}] (l) -- (vlb);
\draw[ultra thick, middlearrow={>}] (b) -- (vlb);
\draw[dotted,ultra thick,middlearrow={>}] (vlb) -- (vh);
\draw[dotted,ultra thick,middlearrow={>}] (vh) -- (vtr);

\draw[dotted,ultra thick,middlearrow={>}] (vlb) -- (vv);
\draw[dotted,ultra thick,middlearrow={>}] (vv) -- (vtr);
\end{tikzpicture}

\caption{The \twosplit operation for the vertex $\w_{i,j}^{q,\ell}$ when
$(q,\ell)\in S_{i,j}$. The idea behind this splitting is that the horizontal path $\lefty(w_{i,j}^{q,\ell})\to w_{i,j}^{q,\ell}\to \righty(w_{i,j}^{q,\ell})$ and vertical path $\bottomy(w_{i,j}^{q,\ell})\to w_{i,j}^{q,\ell}\to \topy(w_{i,j}^{q,\ell})$  are still edge-disjoint after the \twosplit operation if we replace them with the paths $\lefty(w_{i,j}^{q,\ell})\to w_{i,j,\LB}^{q,\ell}\to w_{i,j,\Hor}^{q,\ell}\to w_{i,j,\TR}^{q,\ell}\to \righty(w_{i,j}^{q,\ell})$ and $\bottomy(w_{i,j}^{q,\ell})\to w_{i,j,\LB}^{q,\ell}\to w_{i,j,\Ver}^{q,\ell}\to w_{i,j,\TR}^{q,\ell}\to \topy(w_{i,j}^{q,\ell})$ respectively.
}
\label{fig:split-edge-yes}

\end{figure}

Finally, we are now ready to define the instance of Directed-$2k$-EDSP that we have built starting from an instance $G$ of \kclique.
\medskip

\begin{definition}
    \normalfont
    \textbf{(defining the $2k$-\EDSP instance)} The instance $(D_{\edge}, \mathcal{T})$ of Directed-$2k$-EDSP is defined as follows:
    \begin{itemize}
        \item The graph \Gedge is obtained by applying the splitting operation (\autoref{def:splitting-operation-edge}) to each (black) grid vertex of \Gint, i.e., the set of vertices given by $\bigcup_{1\leq i,j\leq k} V(D_{i,j})$.
        \item No \green{green} vertex is split in~\autoref{def:splitting-operation-edge}, and hence the set of terminal pairs remains the same as defined in~\autoref{eqn:definition-of-mathcal-T} and is given by $\mathcal{T}:= \big\{(a_i, b_i) : i\in [k] \big\}\cup \big\{(c_j, d_j) : j\in [k] \big\}$.
        \item We assign a cost of one to visit each of the vertices present after the splitting operation (\autoref{def:splitting-operation-edge}). Since each vertex in \Gint has a cost of one, it follows that each vertex in \Gedge also has a cost of one.

    \end{itemize}
    \label{def:G-edge}.
\end{definition}
\medskip

The next two claims analyze the structure and size of the graph \Gedge.
\medskip

\begin{claim}
    \label{clm:Gedge-is-planar-and-dag}
    \normalfont
    \Gedge is a planar DAG.
\end{claim}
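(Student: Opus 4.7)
The plan is to inherit both properties from the corresponding claim for \Gint (\autoref{clm:G1-is-planar-and-dag}), arguing that the splitting operation (\autoref{def:splitting-operation-edge}) is a local modification that preserves planarity and acyclicity.

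For planarity, I would start from the planar embedding of \Gint given by~\autoref{fig:main} and show how to locally replace each grid vertex $\w_{i,j}^{q,\ell}$ with its split gadget inside a small disk around the original vertex position. The figures~\autoref{fig:split-edge-not} and~\autoref{fig:split-edge-yes} essentially provide this local embedding: place $\w_{i,j,\LB}^{q,\ell}$ at the lower-left of the disk where the incoming edges from $\lefty(\w_{i,j}^{q,\ell})$ and $\bottomy(\w_{i,j}^{q,\ell})$ arrive, place $\w_{i,j,\TR}^{q,\ell}$ at the upper-right of the disk where the outgoing edges to $\righty(\w_{i,j}^{q,\ell})$ and $\topy(\w_{i,j}^{q,\ell})$ depart, and place either the single middle vertex $\w_{i,j,\Mid}^{q,\ell}$ or the two vertices $\w_{i,j,\Hor}^{q,\ell}$ and $\w_{i,j,\Ver}^{q,\ell}$ along a horizontal line between them (with the two parallel paths of the \twosplit embedded as a small lens). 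Since all the newly introduced dotted edges lie inside this disk and the redirected solid edges meet the disk boundary exactly where the original edges of \Gint met the original vertex, the overall drawing remains crossing-free.

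For the DAG property, I would construct an explicit topological ordering of \Gedge from a topological ordering $\sigma$ of \Gint. For each grid vertex $\w_{i,j}^{q,\ell}$ of \Gint, replace its position in $\sigma$ with the block $\bigl(\w_{i,j,\LB}^{q,\ell},\ \w_{i,j,\Mid}^{q,\ell},\ \w_{i,j,\TR}^{q,\ell}\bigr)$ in the \onesplit case, or $\bigl(\w_{i,j,\LB}^{q,\ell},\ \w_{i,j,\Hor}^{q,\ell},\ \w_{i,j,\Ver}^{q,\ell},\ \w_{i,j,\TR}^{q,\ell}\bigr)$ in the \twosplit case (the relative order of $\w_{i,j,\Hor}^{q,\ell}$ and $\w_{i,j,\Ver}^{q,\ell}$ is irrelevant since they are non-adjacent). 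By \autoref{def:splitting-operation-edge} all newly added (dotted) edges go from $\LB$ to the middle vertex/vertices and then to $\TR$, so they respect the block ordering; and the redirected external edges $\lefty/\bottomy \to \LB$ and $\TR \to \righty/\topy$ inherit the order from~$\sigma$.

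Neither step presents a real obstacle — the main thing to verify carefully is that after the splitting, the external edges are hooked only onto $\LB$ (for incoming) and $\TR$ (for outgoing), which is exactly what \autoref{def:splitting-operation-edge} guarantees. Once this is pointed out, both planarity and the DAG property follow essentially by inspection of the gadgets in~\autoref{fig:split-edge-not} and~\autoref{fig:split-edge-yes}, so the entire proof is only a couple of sentences long.
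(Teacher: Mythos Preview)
Your proposal is correct and matches the paper's proof almost exactly: the paper also inherits planarity from \autoref{clm:G1-is-planar-and-dag} by observing that the local gadgets in \autoref{fig:split-edge-not} and \autoref{fig:split-edge-yes} are planar, and obtains a topological order for \Gedge by replacing each grid vertex $\w$ in a topological order of \Gint with the block $(\w_{\LB},\w_{\Mid},\w_{\TR})$ or $(\w_{\LB},\w_{\Hor},\w_{\Ver},\w_{\TR})$. Your description of the local planar embedding is in fact more explicit than the paper's, which simply appeals to the figures.
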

\begin{proof}
    In~\autoref{clm:G1-is-planar-and-dag}, we have shown that \Gint is a planar DAG. The graph \Gedge is obtained from \Gint by applying the splitting operation (\autoref{def:splitting-operation-edge}) on every (black) grid vertex, i.e., every vertex from the set $\bigcup_{1\leq i,j\leq k} V(D_{i,j})$. By~\autoref{def:lefty-right-topy-bottomy-dir}, every vertex of \Gint that is split has exactly two in-neighbors and two out-neighbors in \Gint. Hence, one can observe (\autoref{fig:split-edge-not} and~\autoref{fig:split-edge-yes}) that the splitting operation (\autoref{def:splitting-operation-edge}) does not destroy planarity when we construct \Gedge from \Gint.

    Since \Gint is a DAG, it has a topological order say $\mathcal{X}$. The only changes done when going from \Gint to \Gedge are the addition of new vertices and edges when black grid vertices are split according to~\autoref{def:splitting-operation-edge}. We now explain how to modify $\mathcal{X}$ to obtain a topological order $\mathcal{X}'$ for \Gedge:
    \begin{itemize}
        \item If a black grid vertex $\w$ is \onesplit then we replace $\w$ by the following vertices (in order) $\w_{\LB}, \w_{\Mid}, \w_{\TR}$.
        \item If a black grid vertex $\w$ is \twosplit then we replace $\w$ by the following vertices (in order) $\w_{\LB}, \w_{\Hor}, \w_{\Ver}, \w_{\TR}$.
    \end{itemize}
    It is easy to see from~\autoref{fig:split-edge-not} and~\autoref{fig:split-edge-yes} that $\mathcal{X}'$ is a topological order for \Gedge.
\end{proof}
\medskip

\begin{claim}
    \normalfont
    The number of vertices in \Gedge is $O(N^{2}k^{2})$.
    \label{clm:size-of-G-edge}
\end{claim}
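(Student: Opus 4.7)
The plan is to bound the vertex count of \Gedge by simply tracking how the splitting operation (\autoref{def:splitting-operation-edge}) alters the already known vertex count of \Gint. Since \Gedge is obtained from \Gint solely through this local operation applied to black grid vertices, the count is essentially preserved up to a constant factor.

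First, I would recall from \autoref{clm:size-of-G-int} that $|V(\Gint)| = O(N^2 k^2)$; more specifically, the black grid vertices are exactly $\bigcup_{1\leq i,j \leq k} V(D_{i,j})$, contributing $k^2 \cdot N^2$ vertices, while the \green{green} terminal vertices in $A \cup B \cup C \cup D$ contribute only $4k$. (If one is being careful, the degree-reduction procedure of \autoref{rmk:hori-verti-sets-G} adds a further $O(kN)$ vertices from the binary trees, but this is absorbed into the $O(N^2 k^2)$ bound.)

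Next, I would observe that \autoref{def:splitting-operation-edge} only alters black grid vertices: the \green{green} terminal vertices are never split. For each black grid vertex $\w_{i,j}^{q,\ell}$, the splitting operation replaces it with either three vertices (via \onesplit, when $(q,\ell)\notin S_{i,j}$) or four vertices (via \twosplit, when $(q,\ell)\in S_{i,j}$). In either case, the vertex is replaced by at most four new vertices.

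Therefore, the total number of vertices in \Gedge is at most $4 \cdot k^2 N^2 + 4k = O(N^2 k^2)$, establishing the claim. There is no real obstacle here; the main care is just to note that the multiplication factor from splitting is a constant (at most $4$) and that the green terminal vertices (together with any binary-tree vertices from \autoref{rmk:hori-verti-sets-G}) do not affect the asymptotic bound.
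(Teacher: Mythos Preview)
Your proposal is correct and takes essentially the same approach as the paper: both argue that the splitting operation replaces each black grid vertex by at most a constant number (three or four) of new vertices, so the $O(N^2 k^2)$ bound from \autoref{clm:size-of-G-int} is preserved. Your write-up is slightly more explicit (giving the $4\cdot k^2 N^2 + 4k$ count and mentioning the binary-tree vertices from \autoref{rmk:hori-verti-sets-G}), but the argument is the same.
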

\begin{proof}
    The only change in going from \Gint to \Gedge is the splitting operation (\autoref{def:splitting-operation-edge}). If a black grid vertex $\w$ in \Gint is \onesplit (\autoref{fig:split-edge-not}) then we replace it by \textbf{three} vertices $\w_{\LB}, \w_{\Mid}, \w_{\TR}$ in \Gedge. If a black grid vertex $\w$ in \Gint is \twosplit (\autoref{fig:split-edge-yes}) then we replace it by \textbf{four} vertices $\w_{\LB}, \w_{\Hor}, \w_{\Ver}, \w_{\TR}$ in \Gedge. In both cases, the increase in number of vertices is only by a constant factor. The number of vertices in \Gint is $O(N^2 k^2)$ from~\autoref{clm:size-of-G-int}, and hence it follows that the number of vertices in \Gedge is $O(N^2 k^2)$.
\end{proof}
\medskip

\subsection{Characterizing shortest paths in \Gedge}
\label{sec:characterizing-shortest-in-G-edge}

The goal of this section is to characterize the structure of shortest paths between terminal pairs in \Gedge. Recall (\autoref{def:G-edge}) that the set of terminal pairs is given by $\mathcal{T}:= \big\{(a_i, b_i) : i\in [k] \big\}\cup \big\{(c_j, d_j) : j\in [k] \big\}$. Since each edge of \Gedge has length one (\autoref{def:G-edge}), we measure the length of paths in \Gedge by counting the number of vertices.

We now define canonical paths in \Gedge by adapting the definition of canonical paths (\autoref{def:hori-canonical-G1} and~\autoref{def:vert-canonical-G1}) in \Gint in accordance with the changes in going from \Gint to \Gedge.
\medskip

\begin{definition}
    \textbf{(horizontal canonical paths in \Gedge)}
    Fix any $j\in [k]$. For each $r\in [N]$, we define $\CanEdge(r\ ;\ c_j \leadsto d_j)$ to be the $c_j\leadsto d_j$ path in \Gedge obtained from the path $\CanInter(r\ ;\ c_j \leadsto d_j)$ in \Gint (recall~\autoref{def:hori-canonical-G1}) in the following way:
    \begin{itemize}
        \item The first and last \magenta{magenta} edges are unchanged
        \item If a black grid vertex $\w$ from $\CanInter(r\ ;\ c_j \leadsto d_j)$ is \onesplit (\autoref{fig:split-edge-not}), then
        \begin{itemize}
            \item The unique incoming edge into $\w$ is changed to be incoming into $\w_{\LB}$
            \item The unique outgoing edge from $\w$ is changed to be outgoing from $\w_{\TR}$
            \item The path $\w_{\LB} \to \w_{\Mid} \to \w_{\TR}$ is added
        \end{itemize}
        \item If a black grid vertex $\w$ from $\CanInter(r\ ;\ c_j \leadsto d_j)$ is \twosplit (\autoref{fig:split-edge-yes}), then
        \begin{itemize}
            \item The unique incoming edge into $\w$ is changed to be incoming into $\w_{\LB}$
            \item The unique outgoing edge from $\w$ is changed to be outgoing from $\w_{\TR}$
            \item The path $\w_{\LB} \to \w_{\Hor} \to \w_{\TR}$ is added
        \end{itemize}
    \end{itemize}
    \label{def:hori-canonical-Gedge}
\end{definition}
\medskip

\begin{definition}
    \textbf{(vertical canonical paths in \Gedge)}
    Fix any $i\in [k]$. For each $r\in [N]$, we define $\CanEdge(r\ ;\ a_i \leadsto b_i)$ to be the $a_i\leadsto b_i$ path in \Gedge obtained from the path $\CanInter(r\ ;\ a_i \leadsto b_i)$ in \Gint (recall \autoref{def:vert-canonical-G1}) in the following way.
    \begin{itemize}
        \item The first and last \magenta{magenta} edges are unchanged
        \item If a black grid vertex $\w$ from $\CanInter(r\ ;\ a_i \leadsto b_i)$ is \onesplit (\autoref{fig:split-edge-not}), then
        \begin{itemize}
            \item The unique incoming edge into $\w$ is changed to be incoming into $\w_{\LB}$
            \item The unique outgoing edge from $\w$ is changed to be outgoing from $\w_{\TR}$
            \item The path $\w_{\LB} \to \w_{\Mid} \to \w_{\TR}$ is added
        \end{itemize}
        \item If a black grid vertex $\w$ from $\CanInter(r\ ;\ a_i \leadsto b_i)$ is \twosplit (\autoref{fig:split-edge-yes}), then
        \begin{itemize}
            \item The unique incoming edge into $\w$ is changed to be incoming into $\w_{\LB}$
            \item The unique outgoing edge from $\w$ is changed to be outgoing from $\w_{\TR}$
            \item The path $\w_{\LB} \to \w_{\Ver} \to \w_{\TR}$ is added
        \end{itemize}
    \end{itemize}
    \label{def:verti-canonical-Gedge}
\end{definition}
\medskip

\begin{definition}
    \textbf{(Image of a horizontal canonical path from \Gint in \Gedge)}
    Fix a $j\in [k]$ and $r\in [N]$. For each $\CanInter(r\ ;\ c_j \leadsto d_j)$ path $R$ in \Gint, we define an image of R as follows
    \begin{itemize}
        \item The first and last \magenta{magenta} edges are unchanged.
        \item If a black grid vertex $\w$ from $\CanInter(r\ ;\ c_j \leadsto d_j)$ is \onesplit (\autoref{fig:split-edge-not}), then
        \begin{itemize}
            \item The unique edge $\lefty(\w) \to \w$ is replaced with the edge $\lefty(\w) \to \w_{\LB}$;
            \item The unique edge $\w \to \righty(\w)$ is replaced with the edge $\w_{\TR} \to \righty(\w)$;
            \item The path $\w_{\LB} \to \w_{\Mid} \to \w_{\TR}$ is added.
        \end{itemize}
        \item If a black grid vertex $\w$ from $\CanInter(r\ ;\ c_j \leadsto d_j)$ is \twosplit (\autoref{fig:split-edge-yes}), then
        \begin{itemize}
            \item The unique edge $\lefty(\w) \to \w$ is replaced with the edge $\lefty(\w) \to \w_{\LB}$;
            \item The unique edge $\w \to \righty(\w)$ is replaced with the edge $\w_{\TR} \to \righty(\w)$;
            \item Either the edges $\w_{\LB} \to \w_{\Hor} \to \w_{\TR}$ or $\w_{\LB} \to \w_{\Ver} \to \w_{\TR}$ are added.
        \end{itemize}
    \end{itemize}
    \label{def:hor-image-of-a-path-G-edge}
\end{definition}
\medskip

\begin{definition}
    \textbf{(Image of a vertical canonical path from \Gint in \Gedge)}
    Fix a $i\in [k]$ and $r\in [N]$. For each $\CanInter(r\ ;\ a_i \leadsto b_i)$ path $R$ in \Gint, we define an image of R as follows
    \begin{itemize}
        \item The first and last \magenta{magenta} edges are unchanged.
        \item If a black grid vertex $\w$ from $\CanInter(r\ ;\ a_i \leadsto b_i)$ is \onesplit (\autoref{fig:split-edge-not}), then
        \begin{itemize}
            \item The unique edge $\lefty(\w) \to \w$ is replaced with the edge $\lefty(\w) \to \w_{\LB}$;
            \item The unique edge $\w \to \righty(\w)$ is replaced with the edge $\w_{\TR} \to \righty(\w)$;
            \item The path $\w_{\LB} \to \w_{\Mid} \to \w_{\TR}$ is added.
        \end{itemize}
        \item If a black grid vertex $\w$ from $\CanInter(r\ ;\ a_i \leadsto b_i)$ is \twosplit (\autoref{fig:split-edge-yes}), then
        \begin{itemize}
            \item The unique edge $\lefty(\w) \to \w$ is replaced with the edge $\lefty(\w) \to \w_{\LB}$;
            \item The unique edge $\w \to \righty(\w)$ is replaced with the edge $\w_{\TR} \to \righty(\w)$;
            \item Either the edges $\w_{\LB} \to \w_{\Hor} \to \w_{\TR}$ or $\w_{\LB} \to \w_{\Ver} \to \w_{\TR}$ are added.
        \end{itemize}
    \end{itemize}
    \label{def:vert-image-of-a-path-G-edge}
\end{definition}
\medskip

Note that a single path, $R$, in \Gint can have several images in \Gedge. This is because for every black vertex on $R$ that is \twosplit there are two choices of sub-path to add: either the path $\w_{\LB} - \w_{\Hor} - \w_{\TR}$ or the path $\w_{\LB} - \w_{\Ver} - \w_{\TR}$.

The following two lemmas (\autoref{lem:horizontal-canonical-is-shortest-G-edge} and~\autoref{lem:vertical-canonical-is-shortest-G-edge}) analyze the structure of shortest paths between terminal pairs in \Gedge. First, we define the image of a path from \Gint in the graph \Gedge.
\medskip

\begin{lemma}
    \normalfont
    The shortest paths in \Gedge satisfy the following two properties:
    \begin{itemize}
        \item[(i)] For each $r\in [N]$, the horizontal canonical path $\CanEdge(r\ ;\ c_j \leadsto d_j)$ is a shortest $c_j \leadsto d_j$ path in \Gedge.
        \item[(ii)] If $P$ is a shortest $c_j\leadsto d_j$ path in \Gedge, then $P$ must be an image (\autoref{def:hor-image-of-a-path-G-edge}) of the path $\CanInter(\ell\ ;\ c_j \leadsto d_j)$ for some $\ell\in [N]$.
    \end{itemize}
    \label{lem:horizontal-canonical-is-shortest-G-edge}
\end{lemma}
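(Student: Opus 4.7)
The plan is to deduce the statement directly from \autoref{lem:horizontal-canonical-is-shortest-G1} (the analogous characterization in $D_{\inter}$) by exploiting the fact that the splitting operation of \autoref{def:splitting-operation-edge} preserves path structure up to a projection. I would first compute the length of a horizontal canonical path in \Gedge: by~\autoref{obs:Canonical-length}, $\CanInter(r\ ;\ c_j \leadsto d_j)$ contains exactly $kN+2$ vertices ($kN$ black grid vertices plus the two green endpoints). Going from \Gint to \Gedge, each grid vertex $\w$ on the canonical path is replaced by the length-$3$ sub-path $\w_{\LB}\to\w_{\Mid}\to\w_{\TR}$ (one-split case) or $\w_{\LB}\to\w_{\Hor}\to\w_{\TR}$ (two-split case); the two green endpoints are untouched. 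Hence $\CanEdge(r\ ;\ c_j \leadsto d_j)$ contains exactly $3kN+2$ vertices.

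For part (i), I would next define a natural \emph{projection} $\pi\colon V(\Gedge)\to V(\Gint)\cup\{\text{green}\}$ that maps every vertex produced by splitting $\w_{i,j}^{q,\ell}$ back to $\w_{i,j}^{q,\ell}$, and fixes all green vertices. By construction of the splitting operation, every directed edge of \Gedge either lies inside a split gadget (projecting to a single vertex of \Gint) or is inherited from \Gint with endpoints $\lefty(\w)\to\w_{\LB}$, $\bottomy(\w)\to\w_{\LB}$, $\w_{\TR}\to\righty(\w)$, $\w_{\TR}\to\topy(\w)$, and therefore projects to an edge of \Gint. Thus $\pi$ sends every $c_j\leadsto d_j$ path $P$ in \Gedge to a $c_j\leadsto d_j$ walk in \Gint; since \Gint is a DAG (\autoref{clm:G1-is-planar-and-dag}), this walk is in fact a $c_j\leadsto d_j$ path $\pi(P)$ in \Gint. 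Moreover, every time $\pi(P)$ visits a grid vertex $\w$, the path $P$ must enter through $\w_{\LB}$ (the only vertex with in-edges from outside the gadget) and exit through $\w_{\TR}$ (the only vertex with out-edges from outside the gadget), so $P$ traverses at least $3$ vertices inside the gadget. Consequently $|V(P)|\ge 3\cdot|\{\text{grid vertices on }\pi(P)\}|+2$. By \autoref{lem:horizontal-canonical-is-shortest-G1}, $\pi(P)$ has at least $kN$ grid vertices, so $|V(P)|\ge 3kN+2$. Since $\CanEdge(r\ ;\ c_j \leadsto d_j)$ attains this bound, it is a shortest path, proving (i).

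For part (ii), suppose $P$ is a shortest $c_j\leadsto d_j$ path in \Gedge, so $|V(P)|=3kN+2$. Then both inequalities in the chain above must be equalities: first, $\pi(P)$ traverses exactly $kN$ grid vertices, so by \autoref{lem:horizontal-canonical-is-shortest-G1} it equals $\CanInter(\ell\ ;\ c_j\leadsto d_j)$ for some $\ell\in[N]$; second, at every grid vertex $\w$ on $\pi(P)$, the path $P$ traverses exactly $3$ interior vertices of the gadget, which forces it to be a shortest $\w_{\LB}\leadsto\w_{\TR}$ path inside the gadget — namely $\w_{\LB}\to\w_{\Mid}\to\w_{\TR}$ in the one-split case, or one of $\w_{\LB}\to\w_{\Hor}\to\w_{\TR}$, $\w_{\LB}\to\w_{\Ver}\to\w_{\TR}$ in the two-split case. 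Comparing with \autoref{def:hor-image-of-a-path-G-edge}, this is exactly the description of an image of $\CanInter(\ell\ ;\ c_j\leadsto d_j)$ in \Gedge, completing (ii).

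The main obstacle I anticipate is making the projection argument fully rigorous — specifically, verifying that $\pi$ genuinely sends edges of \Gedge either to vertices or to edges of \Gint (this requires a careful case analysis over the four rewritten edges in \autoref{def:splitting-operation-edge}), and that the counting of interior gadget vertices is tight in exactly the way claimed. Everything else reduces to already-established facts, so once the projection lemma is pinned down, the rest of the argument is bookkeeping.
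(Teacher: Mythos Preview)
Your proposal is correct and follows essentially the same approach as the paper: both reduce to \autoref{lem:horizontal-canonical-is-shortest-G1} via the observation that, under the splitting operation, every black grid vertex of $D_{\inter}$ is replaced by exactly three vertices along any path in $D_{\edge}$, so path lengths transform as $\alpha+\beta \mapsto \alpha+3\beta$ (green plus black). Your projection $\pi$ makes this correspondence explicit and rigorous, whereas the paper states it more informally by speaking of ``the corresponding path'' without spelling out the map; the content is the same.
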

\begin{proof}
    The proof of this lemma can be obtained in the same way as shown for \Gint in \autoref{lem:horizontal-canonical-is-shortest-G1} with some minor observational changes. Note that any path in \Gint contains only \green{green} and black vertices. The splitting operation (\autoref{def:splitting-operation-edge}) applied to each black vertex of \Gint has the following property: if a path $Q$ contains a black vertex $\w$ in \Gint, then in the corresponding path in \Gedge this vertex $\w$ is \textbf{always replaced by three vertices}:
    \begin{itemize}
        \item If $\w$ is \onesplit (\autoref{fig:split-edge-not}), then it is replaced in $Q$ the three vertices $\w_{\LB}, \w_{\Mid}, \w_{\TR}$.
        \item If $\w$ is \twosplit (\autoref{fig:split-edge-yes}), then it is replaced in $Q$ either by the three vertices $\w_{\LB}, \w_{\Hor}, \w_{\TR}$ or the three vertices $\w_{\LB}, \w_{\Ver}, \w_{\TR}$.
    \end{itemize}
    Therefore, if a path $Q$ contains $\alpha$ \green{green} vertices and $\beta$ black vertices in \Gint, then the corresponding path in \Gedge contains $\alpha$ \green{green} vertices and $3\beta$ black vertices. The proof of the first part of the lemma now follows from~\autoref{lem:horizontal-canonical-is-shortest-G1}(i),~\autoref{def:splitting-operation-edge} and~\autoref{def:hori-canonical-Gedge}. The proof of the second part of the lemma follows from~\autoref{lem:horizontal-canonical-is-shortest-G1}(ii),~\autoref{def:splitting-operation-edge} and~\autoref{def:hor-image-of-a-path-G-edge}.
\end{proof}
\medskip

The proof of the next lemma is very similar to that of \autoref{lem:horizontal-canonical-is-shortest-G-edge}, and we skip repeating the details.
\medskip

\begin{lemma}
    \normalfont
    The shortest paths in \Gedge satisfy the following two properties:
    \begin{itemize}
        \item[(i)] For each $r\in [N]$, the vertical canonical path $\CanEdge(r\ ;\ a_i \leadsto b_i)$ is a shortest $a_i \leadsto b_i$ path in \Gedge.
        \item[(ii)] If $P$ is a shortest $a_i\leadsto b_i$ path in \Gedge, then $P$ must be an image (\autoref{def:vert-image-of-a-path-G-edge}) of the path $\CanInter(\ell\ ;\ a_i \leadsto b_i)$ for some $\ell\in [N]$.
    \end{itemize}
    \label{lem:vertical-canonical-is-shortest-G-edge}
\end{lemma}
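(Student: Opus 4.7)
The plan is to mirror the proof of \autoref{lem:horizontal-canonical-is-shortest-G-edge} almost verbatim, merely swapping out the horizontal base result \autoref{lem:horizontal-canonical-is-shortest-G1} for its vertical counterpart \autoref{lem:vertical-canonical-is-shortest-G1}, and swapping the definitions \autoref{def:hori-canonical-Gedge} and \autoref{def:hor-image-of-a-path-G-edge} for \autoref{def:verti-canonical-Gedge} and \autoref{def:vert-image-of-a-path-G-edge}. No structural change to the argument is required, because the splitting operation \autoref{def:splitting-operation-edge} is entirely symmetric with respect to the horizontal (row) and vertical (column) directions of the grids $D_{i,j}$.

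The combinatorial fact that drives the argument is the \emph{three-for-one replacement property} of \autoref{def:splitting-operation-edge}: any path in \Gint that visits a black grid vertex $\w$ corresponds, along the associated path in \Gedge, to exactly three consecutive vertices — the fixed triple $\w_{\LB}, \w_{\Mid}, \w_{\TR}$ when $\w$ is \onesplit, and either $\w_{\LB}, \w_{\Hor}, \w_{\TR}$ or $\w_{\LB}, \w_{\Ver}, \w_{\TR}$ when $\w$ is \twosplit. This is forced because every incoming edge at $\w$ is rerouted into $\w_{\LB}$ and every outgoing edge is rerouted out of $\w_{\TR}$, so any traversal of the gadget must enter at $\w_{\LB}$, take one of the two (or the unique) length-two internal sub-paths, and exit at $\w_{\TR}$. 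Green vertices are untouched by the splitting. Hence a path in \Gint with $\alpha$ green vertices and $\beta$ black vertices lifts to paths in \Gedge with $\alpha$ green vertices and exactly $3\beta$ black vertices, and conversely any $a_i \leadsto b_i$ path in \Gedge contracts (by collapsing each split gadget back to its original black vertex) to a well-defined $a_i \leadsto b_i$ path in \Gint.

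With this in hand, part (i) follows from \autoref{lem:vertical-canonical-is-shortest-G1}(i): the vertical canonical path $\CanInter(r; a_i \leadsto b_i)$ is a shortest $a_i \leadsto b_i$ path in \Gint with exactly $kN+2$ vertices (\autoref{obs:Canonical-length}), so by \autoref{def:verti-canonical-Gedge} the path $\CanEdge(r; a_i \leadsto b_i)$ contains exactly $3kN+2$ vertices, and any other $a_i \leadsto b_i$ path in \Gedge contracts to a path in \Gint of length at least $kN+2$, hence has length at least $3kN+2$ in \Gedge. Part (ii) follows from \autoref{lem:vertical-canonical-is-shortest-G1}(ii): a shortest $a_i \leadsto b_i$ path $P$ in \Gedge must contract to a shortest path in \Gint, which can only be $\CanInter(\ell; a_i \leadsto b_i)$ for some $\ell \in [N]$; reading off which internal sub-path $P$ takes at each \twosplit gadget then exhibits $P$ as an image of $\CanInter(\ell; a_i \leadsto b_i)$ in the sense of \autoref{def:vert-image-of-a-path-G-edge}. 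The only step that requires a brief check — and which also appears implicitly in the horizontal proof — is the well-definedness of the contraction map; this is immediate once one notes that the $\w_{\Mid}, \w_{\Hor}, \w_{\Ver}$ vertices have their only incoming edges from $\w_{\LB}$ and their only outgoing edges to $\w_{\TR}$, so no traversal can short-cut a gadget.
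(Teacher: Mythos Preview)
Your proposal is correct and is precisely the approach the paper intends: the paper explicitly omits the proof of this lemma, stating only that it ``is very similar to that of \autoref{lem:horizontal-canonical-is-shortest-G-edge}, and we skip repeating the details.'' Your write-up of the three-for-one replacement property and the appeal to \autoref{lem:vertical-canonical-is-shortest-G1} in place of \autoref{lem:horizontal-canonical-is-shortest-G1} is exactly the intended symmetric argument, spelled out in more detail than the paper itself provides.
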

\medskip

\subsection{\texorpdfstring{\underline{Completeness}}{Completeness}: \texorpdfstring{$G$}{G} has a \texorpdfstring{$k$}{k}-clique \texorpdfstring{$\Rightarrow$}{->} All pairs in the instance \texorpdfstring{$(D_{\edge}, \mathcal{T})$}{(Dedge, T)} of Directed-\texorpdfstring{$2k$}{2k}-EDSP can be satisfied}
\label{sec:clique-to-2kedsp}

In this section, we show that if the instance $G$ of \kclique has a solution then the instance $(D_{\edge}, \mathcal{T})$ of Directed-$2k$-\EDSP also has a solution.
Suppose the instance $G=(V,E)$ of \kclique has a clique $X=\{v_{\gamma_1}, v_{\gamma_2}, \ldots, v_{\gamma_k} \}$ of size $k$. Let $Y=\{\gamma_1, \gamma_2, \ldots, \gamma_k\}\in [N]$. Now for each $i\in [k]$ we choose the path as follows:
\begin{itemize}
    \item The path $R_i$ to satisfy $a_i\leadsto b_i$ is chosen to be the horizontal canonical path $\CanEdge(\gamma_i \ ;\ a_i \leadsto b_i)$ described in~\autoref{def:hori-canonical-Gedge}.
    \item The path $T_i$ to satisfy $c_i\leadsto d_i$ is chosen to be vertical canonical path $\CanEdge(\gamma_i \ ;\ c_i \leadsto d_i)$ described in~\autoref{def:verti-canonical-Gedge}.
\end{itemize}

Now we show that the collection of paths given by $\mathcal{Q}:=\{R_1, R_2, \ldots, R_k, T_1, T_2, \ldots, T_K\}$ forms a solution for the instance  $(D_{\edge}, \mathcal{T})$ of Directed-$2k$-\EDSP via the following two lemmas which argue being shortest for each terminal pair and pairwise edge-disjointness respectively:
\medskip

\begin{lemma}
    \normalfont
    For each $i \in [k]$, the path $R_i$ (resp. $T_i$) is a shortest $a_i \leadsto b_i$ (resp. $c_i \leadsto d_i$) path in \Gedge.
    \label{lem:completeness-edsp-shortest}
\end{lemma}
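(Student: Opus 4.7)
The plan is to derive this lemma as an immediate corollary of the shortest-path characterisations from \autoref{sec:characterizing-shortest-in-G-edge}. Specifically, \autoref{lem:vertical-canonical-is-shortest-G-edge}(i) states that for every $r\in [N]$ and every $i\in [k]$ the vertical canonical path $\CanEdge(r\,;\,a_i\leadsto b_i)$ is a shortest $a_i\leadsto b_i$ path in \Gedge, and \autoref{lem:horizontal-canonical-is-shortest-G-edge}(i) gives the analogous statement for horizontal canonical paths between the $(c_i,d_i)$ pairs.

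By construction, we have $R_i=\CanEdge(\gamma_i\,;\,a_i\leadsto b_i)$ and $T_i=\CanEdge(\gamma_i\,;\,c_i\leadsto d_i)$, and since $v_{\gamma_i}$ is a vertex of the clique $X\subseteq V(G)$ we have $\gamma_i\in [N]$. Both paths therefore satisfy the hypotheses of the respective characterisation lemmas, yielding that $R_i$ is a shortest $a_i\leadsto b_i$ path and $T_i$ is a shortest $c_i\leadsto d_i$ path in \Gedge, as required.

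I would emphasise that the shortest-path property of a canonical path does not depend on the cliqueness of $X$: every choice of index $r\in [N]$ yields a valid canonical path of the same length, because the splitting operation (\autoref{def:splitting-operation-edge}) uniformly replaces each on-path black grid vertex by exactly three vertices, regardless of whether it is \onesplit or \twosplit (cf.\ \autoref{obs:Canonical-length} combined with \autoref{def:hori-canonical-Gedge} and \autoref{def:verti-canonical-Gedge}). There is thus no substantive obstacle in the present lemma; the hard work was done in the characterisation lemmas, which in turn exploited the monotone left-to-right and bottom-to-top orientations in \Gint. The role of the clique structure of $X$ will only enter the next lemma, where pairwise edge-disjointness of $\mathcal{Q}$ must be verified --- this is where the \twosplit operation at precisely the vertices $(q,\ell)\in S_{i,j}$ becomes essential, allowing a horizontal canonical path through a twice-split vertex to use $\w_{\Hor}$ while a vertical canonical path uses $\w_{\Ver}$.
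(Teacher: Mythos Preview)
Your proof is correct and follows essentially the same approach as the paper: both invoke \autoref{lem:horizontal-canonical-is-shortest-G-edge}(i) and \autoref{lem:vertical-canonical-is-shortest-G-edge}(i) directly to conclude that $T_i$ and $R_i$ are shortest paths. Your additional commentary about the clique structure being irrelevant here (and only mattering for edge-disjointness in the next lemma) is accurate and helpful context, though the paper's own proof is terser and omits it.
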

\begin{proof}
    Fix any $i\in [k]$.~\autoref{lem:horizontal-canonical-is-shortest-G-edge}(i) implies that $T_i$ is shortest $c_i \leadsto d_i$ path in \Gedge.~\autoref{lem:vertical-canonical-is-shortest-G-edge}(i) implies that $R_i$ is shortest $a_i \leadsto b_i$ path in \Gedge.
\end{proof}
\medskip

Before proving~\autoref{lem:completeness-edsp-disjoint}, we first set up notation for some special sets of vertices in \Gedge which helps to streamline some of the subsequent proofs.

\medskip

\begin{definition}
    \label{def:horizontal-vertical-sets-in-Gedge}
    \textbf{(horizontal \& vertical levels in \Gedge)}
    For each $(i,j)\in [k]\times [k]$, let $D_{i,j}^{\edgesplitt}$ to be the graph obtained by applying the splitting operation (\autoref{def:splitting-operation-edge}) to each vertex of $D_{i,j}$. For each $j\in [k]$, we define the following set of vertices:
    \begin{equation}
        \begin{aligned}
            \HorizontalEdge(j)     & = \{ c_j, d_j \} \cup \left( \bigcup_{i=1}^{k} V(D_{i,j}^{\edgesplitt})\right)\ \quad \\
            \quad \VerticalEdge(j) & = \{ a_j, b_j \} \cup \left( \bigcup_{i=1}^{k} V(D_{j,i}^{\edgesplitt})\right)
        \end{aligned}\label{eqn:hori-verti-sets-Gedge}
    \end{equation}
\end{definition}
\medskip

The next lemma shows that any two paths from $\mathcal{Q}$ are edge-disjoint.
\medskip

\begin{lemma}
    \normalfont
    Let $P\neq P'$ be any pair of paths from the collection  $\mathcal{Q}=\{R_1, R_2, \ldots, R_k, T_1, T_2, \ldots, T_K\}$. Then $P$ and $P'$ are edge-disjoint.
    \label{lem:completeness-edsp-disjoint}
\end{lemma}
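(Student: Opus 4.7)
The plan is to split into three kinds of pairs: (a) $R_i, R_{i'}$ with $i \neq i'$; (b) $T_j, T_{j'}$ with $j \neq j'$; and (c) a mixed pair $R_i, T_j$ for any $i, j \in [k]$. For the first two I will argue vertex-disjointness via the ``strip'' sets $\VerticalEdge$ and $\HorizontalEdge$ (\autoref{def:horizontal-vertical-sets-in-Gedge}); the interesting work is in case (c), where the \twosplit operation does all the heavy lifting.

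For case (a), I would first observe, tracing \autoref{def:verti-canonical-Gedge} together with \autoref{def:vert-canonical-G1}, that every vertex on $R_i = \CanEdge(\gamma_i\ ;\ a_i \leadsto b_i)$ lies in $\VerticalEdge(i) = \{a_i, b_i\} \cup \bigcup_{j \in [k]} V(D_{i,j}^{\edgesplitt})$. For $i \neq i'$, the sets $\VerticalEdge(i)$ and $\VerticalEdge(i')$ are disjoint because distinct grids $D_{i,j}^{\edgesplitt}$ and $D_{i',j'}^{\edgesplitt}$ have disjoint vertex sets and the terminals $\{a_i,b_i\}, \{a_{i'},b_{i'}\}$ are disjoint. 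Hence $R_i$ and $R_{i'}$ are even vertex-disjoint, so certainly edge-disjoint. Case (b) is symmetric, using that $T_j = \CanEdge(\gamma_j\ ;\ c_j \leadsto d_j)$ is a horizontal canonical path contained in $\HorizontalEdge(j)$.

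Case (c) is the substantive one. The path $R_i$ is contained in $\VerticalEdge(i)$ and $T_j$ in $\HorizontalEdge(j)$; since $\{a_i,b_i\}\cap\{c_j,d_j\}=\emptyset$, their intersection lies entirely inside the split grid $D_{i,j}^{\edgesplitt}$. Within $D_{i,j}^{\edgesplitt}$, the path $R_i$ only visits splits of cells in column $\gamma_i$, i.e.\ splits of $\{\w_{i,j}^{\gamma_i,r}:r\in[N]\}$, while $T_j$ only visits splits of row $\gamma_j$, i.e.\ splits of $\{\w_{i,j}^{q,\gamma_j}:q\in[N]\}$. These two families share the unique cell $\w := \w_{i,j}^{\gamma_i,\gamma_j}$. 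Now use the clique assumption: for $i \neq j$ the edge $v_{\gamma_i}v_{\gamma_j}\in E(G)$ gives $(\gamma_i,\gamma_j)\in S_{i,j}$; for $i=j$ the pair $(\gamma_i,\gamma_i)\in S_{i,i}$ by \autoref{eqn:clique-to-gt-reduction-D}. Either way $\w$ is \twosplit (\autoref{def:splitting-operation-edge}). By \autoref{def:hori-canonical-Gedge} and \autoref{def:verti-canonical-Gedge}, $T_j$ traverses $\w$ via $\w_{i,j,\LB}^{\gamma_i,\gamma_j}\to\w_{i,j,\Hor}^{\gamma_i,\gamma_j}\to\w_{i,j,\TR}^{\gamma_i,\gamma_j}$ while $R_i$ traverses $\w$ via $\w_{i,j,\LB}^{\gamma_i,\gamma_j}\to\w_{i,j,\Ver}^{\gamma_i,\gamma_j}\to\w_{i,j,\TR}^{\gamma_i,\gamma_j}$: they share the endpoints $\w_{\LB},\w_{\TR}$ but \emph{no edges}. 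Moreover, the four grid/red edges incident to $\{\w_{\LB},\w_{\TR}\}$ split cleanly between the two paths: $T_j$ uses the edge from $\lefty(\w)$ into $\w_{\LB}$ and the edge from $\w_{\TR}$ to $\righty(\w)$, whereas $R_i$ uses the edge from $\bottomy(\w)$ into $\w_{\LB}$ and the edge from $\w_{\TR}$ to $\topy(\w)$.

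The main obstacle is verifying cleanly that in case (c) the two paths truly meet only at the splits of $\w$, so that all edge-sharing is localized to a single \twosplit gadget. This reduces to the column--row intersection argument above, combined with the observation that no red matching edge between consecutive grids is shared (the red edges crossed by $R_i$ go between $D_{i,j}^{\edgesplitt}$ and $D_{i,j+1}^{\edgesplitt}$ at row coordinate $\gamma_i$, while the ones crossed by $T_j$ go between $D_{i,j}^{\edgesplitt}$ and $D_{i+1,j}^{\edgesplitt}$ at column coordinate $\gamma_j$, so they are distinct edges in distinct matchings). Combining cases (a), (b), (c) shows that any two paths in $\mathcal{Q}$ are edge-disjoint.
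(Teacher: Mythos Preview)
Your proposal is correct and follows essentially the same approach as the paper: both split into the three cases (two vertical, two horizontal, mixed), dispatch the first two via the disjoint strips $\VerticalEdge(i)$ and $\HorizontalEdge(j)$, and handle the mixed case by localizing any potential overlap to the single cell $\w_{i,j}^{\gamma_i,\gamma_j}$, showing it is \twosplit via the clique hypothesis, and observing that $R_i$ takes the $\Ver$ branch while $T_j$ takes the $\Hor$ branch. Your write-up is slightly more explicit than the paper's about why the intersection localizes to that one cell (the column--row and red-edge arguments), but the structure and key ideas are identical.
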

\begin{proof}

    By~\autoref{def:horizontal-vertical-sets-in-Gedge}, it follows that every edge of the path $R_i$ has both endpoints in $\VerticalEdge(i)$ for every $i\in [k]$. Since $\VerticalEdge(i) \cap \VerticalEdge(i')=\emptyset$ for every $1\leq i\neq i'\leq k$, it follows that the collection of paths $\{R_1, R_2, \ldots, R_k\}$ are pairwise edge-disjoint.

    By~\autoref{def:horizontal-vertical-sets-in-Gedge}, it follows that every edge of the path $T_j$ has both endpoints in $\HorizontalEdge(j)$ for every $j\in [k]$. Since  $\HorizontalEdge(j) \cap \HorizontalEdge(j')=\emptyset$ for every $1\leq j\neq j'\leq k$, it follows that the collection of paths $\{T_1, T_2, \ldots, T_k\}$ are pairwise edge-disjoint.

    It remains to show that every pair of paths which contains one path from $\{R_1, R_2, \ldots, R_k\}$ and other path from $\{T_1, T_2, \ldots, T_k\}$ are edge-disjoint.
    \begin{claim}
        \label{clm:reduction-edge-disjoint}
        \normalfont
        For each $(i,j)\in [k]\times [k]$, the paths $R_i$ and $T_j$ are edge-disjoint in \Gedge.
    \end{claim}

    \begin{proof}
        Fix any $(i,j)\in [k]\times [k]$. First we argue that the vertex $\w_{i,j}^{\gamma_i, \gamma_j}$ is \twosplit, i.e., $(\gamma_i, \gamma_j)\in S_{i,j}$:
        \begin{itemize}
            \item If $i=j$ then $\gamma_i = \gamma_j$ and hence by~\autoref{eqn:clique-to-gt-reduction-D} we have $(\gamma_i, \gamma_j)\in S_{i,j}$
            \item If $i\neq j$, then $v_{\gamma_i} - v_{\gamma_j}\in E(G)$ since $X$ is a clique. Again, by~\autoref{eqn:clique-to-gt-reduction-D} we have $(\gamma_i, \gamma_j)\in S_{i,j}$.
        \end{itemize}
        Hence, by~\autoref{def:splitting-operation-edge}, it follows that the vertex $\w_{i,j}^{\gamma_i, \gamma_j}$ is \twosplit.

        By the construction of \Gint (\autoref{fig:main}) and definitions of canonical paths (\autoref{def:hori-canonical-G1} and~\autoref{def:vert-canonical-G1}), it is easy to verify that any pair of horizontal canonical path and vertical canonical path in \Gint are edge-disjoint and have only one vertex in common.

        By the splitting operation (\autoref{def:splitting-operation-edge}) and definitions of the paths $R_i$ (\autoref{def:verti-canonical-Gedge}) and $T_j$ (\autoref{def:hori-canonical-Gedge}), it follows that the only common edges between $R_i$ and $T_j$ must be from paths in \Gedge that start at $\w_{i,j,\LB}^{\gamma_i,\gamma_j}$ and end at $\w_{i,j,\TR}^{\gamma_i,\gamma_j}$. Since $\w_{i,j}^{\gamma_i, \gamma_j}$ is \twosplit, we have
        \begin{itemize}
            \item By~\autoref{def:verti-canonical-Gedge}, the unique $\w_{i,j,\LB}^{\gamma_i,\gamma_j}\leadsto \w_{i,j,\TR}^{\gamma_i,\gamma_j}$ sub-path of $R_i$ is $\w_{i,j,\LB}^{\gamma_i,\gamma_j}\to \w_{i,j,\Ver}^{\gamma_i,\gamma_j} \to  \w_{i,j,\TR}^{\gamma_i,\gamma_j}$.
            \item By~\autoref{def:hori-canonical-Gedge}, the unique $\w_{i,j,\LB}^{\gamma_i,\gamma_j}\leadsto \w_{i,j,\TR}^{\gamma_i,\gamma_j}$ sub-path of $T_i$ is $\w_{i,j,\LB}^{\gamma_i,\gamma_j}\to \w_{i,j,\Hor}^{\gamma_i,\gamma_j} \to  \w_{i,j,\TR}^{\gamma_i,\gamma_j}$.
        \end{itemize}
        Hence, it follows that $R_i$ and $T_j$ are edge-disjoint.
    \end{proof}
    This concludes the proof of~\autoref{lem:completeness-edsp-disjoint}.
\end{proof}
\medskip

\noindent From~\autoref{lem:completeness-edsp-shortest} and~\autoref{lem:completeness-edsp-disjoint}, it follows that the collection of paths given by $\mathcal{Q}=\{R_1, R_2, \ldots, R_k,$ $T_1, T_2, \ldots, T_K\}$ forms a solution for the instance $(D_{\edge}, \mathcal{T})$ of Directed-$2k$-\EDSP.

\subsection{\texorpdfstring{\underline{Soundness}}{Soundness}: \texorpdfstring{$(\frac{1}{2} +\epsilon)$}{(1/2 + theta)}-fraction of the pairs in the instance \texorpdfstring{$(D_{\edge}, \mathcal{T})$}{(Dedge,T)} of Directed-\texorpdfstring{$2k$}{k}-EDSP can be satisfied \texorpdfstring{$\Rightarrow$ $G$}{-> G} has a clique of size \texorpdfstring{$\geq 2\epsilon \cdot k$}{>= theta x k}}
\label{sec:2kedsp-to-clique}

In this section we show that if at least $(\frac{1}{2} +\epsilon)$-fraction of the $2k$ pairs from the instance $(D_{\edge}, \mathcal{T})$ of Directed-$2k$-\EDSP can be satisfied then the graph $G$ has a clique of size $2\epsilon \cdot k$.
Let $\mathcal{P}$ be a collection of paths in \Gedge which satisfies at least $(\frac{1}{2} +\epsilon)$-fraction of the $2k$ terminal pairs from the instance $(D_{\edge}, \mathcal{T})$ of Directed-$2k$-\EDSP.
\medskip

\begin{definition}
    An index $i \in [k]$ is called \emph{good} if both the terminal pairs $a_i \leadsto b_i$ and $c_i \leadsto d_i$ are satisfied by $\mathcal{P}$.
    \label{def:good-for-edge}
\end{definition}
\medskip

The next lemma gives a lower bound on the number of good indices.
\medskip

\begin{lemma}
    \label{lem:good_size}
    \normalfont
    Let $Y \subseteq [k]$ be the set of good indices. Then $|Y| \ge 2\epsilon \cdot k$.
\end{lemma}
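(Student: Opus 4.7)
The plan is to use a straightforward double-counting / inclusion-exclusion argument on the indices $i \in [k]$, exploiting the fact that the $2k$ terminal pairs are naturally partitioned into two groups of size $k$ each, both indexed by $[k]$.

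First, I would set up notation: let $H \subseteq [k]$ be the set of indices $j$ such that the ``horizontal'' pair $(c_j, d_j)$ is satisfied by $\mathcal{P}$, and let $V \subseteq [k]$ be the set of indices $i$ such that the ``vertical'' pair $(a_i, b_i)$ is satisfied by $\mathcal{P}$. By the hypothesis on $\mathcal{P}$, the total number of satisfied pairs is at least $(\tfrac{1}{2} + \epsilon) \cdot 2k = (1 + 2\epsilon) k$, so
\[
|H| + |V| \;\geq\; (1 + 2\epsilon)\,k.
\]
By~\autoref{def:good-for-edge}, $Y = H \cap V$.

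Next, I would apply inclusion-exclusion on subsets of $[k]$: $|H \cup V| = |H| + |V| - |H \cap V|$, combined with the trivial bound $|H \cup V| \leq |[k]| = k$. Rearranging gives
\[
|Y| \;=\; |H \cap V| \;=\; |H| + |V| - |H \cup V| \;\geq\; (1 + 2\epsilon)\,k - k \;=\; 2\epsilon \cdot k,
\]
which is exactly the claimed bound.

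This is a purely combinatorial counting step with no real obstacle, since it uses nothing about the structure of \Gedge or about shortest paths; it only uses that the $2k$ terminal pairs split as $k$ horizontal pairs plus $k$ vertical pairs, both indexed by $[k]$, and that any set of satisfied indices in $[k]$ has size at most $k$. The lemma will later be used in combination with the path-structure lemmas from~\autoref{sec:characterizing-shortest-in-G-edge} to extract a $k$-clique (or a clique of appropriate size, depending on how $Y$ is used downstream) from the shortest paths routed through the grids $D_{i,j}^{\edgesplitt}$ at good indices.
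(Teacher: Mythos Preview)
Your proof is correct and essentially matches the paper's approach: the paper phrases the same counting argument as ``good indices contribute $2$ satisfied pairs and the remaining $k-|Y|$ indices contribute at most $1$ each, so the number of satisfied pairs is at most $k+|Y|$,'' which is just the contrapositive of your inclusion-exclusion bound $|H|+|V|-|H\cup V|\ge (1+2\epsilon)k-k$. No gap.
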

\begin{proof}
    If $i\in [k]$ is good then both the pairs $a_i\leadsto b_i$ and $c_i\leadsto d_i$ are satisfied by $\mathcal{P}$. Otherwise, at most one of these pairs $a_i\leadsto b_i$ and $c_i\leadsto d_i$ is satisfied. Hence, the total number of satisfied pairs is at
    most $2\cdot |Y|+1\cdot (k-|Y|)= k+|Y|$. However, we know that $\mathcal{P}$ satisfies at least
    $(\frac{1}{2} +\epsilon)\cdot |\mathcal{T}|=\left(\frac{1}{2}+\epsilon\right)\cdot 2k=k+2\epsilon\cdot k$ pairs. Hence, it follows that~$|Y| \geq
    2\epsilon \cdot k$.
\end{proof}
\medskip

\begin{lemma}
    \label{lem:good-equal-edge}
    \normalfont
    If $i\in [k]$ is good, then there exists $\delta_i \in [N]$ such that the two paths in $\mathcal{P}$ satisfying $a_i\leadsto b_i$ and $c_i\leadsto d_i$ in \Gedge are images of the paths $\CanInter(\delta_i\ ;\ a_i \leadsto b_i)$ and $\CanInter(\delta_i\ ;\ c_i \leadsto d_i)$ from \Gint respectively.
\end{lemma}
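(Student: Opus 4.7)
The plan is to invoke~\autoref{lem:horizontal-canonical-is-shortest-G-edge} and~\autoref{lem:vertical-canonical-is-shortest-G-edge} to pin down the structure of the two satisfying paths, and then exploit the splitting construction (\autoref{def:splitting-operation-edge}), specifically the one-split structure of the diagonal block $D_{i,i}$, to force the two parameters to coincide.

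Since $i$ is good, $\mathcal{P}$ contains a shortest $a_i \leadsto b_i$ path $P^a$ and a shortest $c_i \leadsto d_i$ path $P^c$ in \Gedge. By~\autoref{lem:vertical-canonical-is-shortest-G-edge}(ii) there exists $\alpha \in [N]$ such that $P^a$ is an image (in the sense of~\autoref{def:vert-image-of-a-path-G-edge}) of $\CanInter(\alpha\ ;\ a_i \leadsto b_i)$, and by~\autoref{lem:horizontal-canonical-is-shortest-G-edge}(ii) there exists $\beta \in [N]$ such that $P^c$ is an image (in the sense of~\autoref{def:hor-image-of-a-path-G-edge}) of $\CanInter(\beta\ ;\ c_i \leadsto d_i)$. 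It suffices to show $\alpha = \beta$, after which we set $\delta_i := \alpha$.

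Next, I would isolate the unique grid vertex of \Gint at which these two underlying canonical paths meet. By~\autoref{def:hori-canonical-G1} and~\autoref{def:vert-canonical-G1}, the vertical canonical path visits grid vertices only in $\{D_{i,j}\}_{j\in [k]}$ and traverses $\Column_{\alpha}(D_{i,i})$ inside $D_{i,i}$, while the horizontal canonical path visits grid vertices only in $\{D_{j,i}\}_{j\in [k]}$ and traverses $\Row_{\beta}(D_{i,i})$ inside $D_{i,i}$. The only common grid is therefore $D_{i,i}$, where $\Column_{\alpha}(D_{i,i}) \cap \Row_{\beta}(D_{i,i}) = \{\w_{i,i}^{\alpha,\beta}\}$; this is the unique grid vertex of \Gint shared by the two canonical paths.

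Now suppose for contradiction that $\alpha \neq \beta$. By~\autoref{eqn:clique-to-gt-reduction-D}, $S_{i,i} = \{(a,a) : a\in [N]\}$, so $(\alpha,\beta) \notin S_{i,i}$, and hence by~\autoref{def:splitting-operation-edge} the vertex $\w_{i,i}^{\alpha,\beta}$ is \onesplit (\autoref{fig:split-edge-not}). Inspecting~\autoref{def:hor-image-of-a-path-G-edge} and~\autoref{def:vert-image-of-a-path-G-edge}, any image of a canonical path that passes through a \onesplit vertex $\w$ is \emph{forced} to contain the unique subpath $\w_{\LB} \to \w_{\Mid} \to \w_{\TR}$. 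Thus both $P^a$ and $P^c$ contain the edges $\w_{i,i,\LB}^{\alpha,\beta} \to \w_{i,i,\Mid}^{\alpha,\beta}$ and $\w_{i,i,\Mid}^{\alpha,\beta} \to \w_{i,i,\TR}^{\alpha,\beta}$, contradicting the edge-disjointness of the paths in $\mathcal{P}$. Hence $\alpha = \beta$, and setting $\delta_i := \alpha$ completes the proof. The only subtle point is the uniqueness claim in the second paragraph; it is essential that the two canonical paths share exactly \emph{one} grid vertex of \Gint (so that no shared grid vertex in some off-diagonal block $D_{i,j}$ with $i \neq j$ can accidentally rescue edge-disjointness), and this is precisely what the horizontal/vertical level decomposition underpinning Lemmas~\ref{lem:horizontal-canonical-is-shortest-G1} and~\ref{lem:vertical-canonical-is-shortest-G1} guarantees.
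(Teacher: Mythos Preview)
Your proof is correct and follows essentially the same approach as the paper's: both invoke \autoref{lem:horizontal-canonical-is-shortest-G-edge}(ii) and \autoref{lem:vertical-canonical-is-shortest-G-edge}(ii) to obtain parameters $\alpha,\beta$, then use edge-disjointness at the vertex $\w_{i,i}^{\alpha,\beta}$ together with the diagonal structure of $S_{i,i}$ to force $\alpha=\beta$. The only cosmetic difference is that the paper phrases the key step as the positive claim ``$\w_{i,i}^{\alpha,\beta}$ must be \twosplit'' (hence $(\alpha,\beta)\in S_{i,i}$, hence $\alpha=\beta$), whereas you run the contrapositive by assuming $\alpha\neq\beta$ and deriving a \onesplit contradiction; your explicit paragraph isolating the unique shared grid vertex is a nice clarification that the paper leaves implicit.
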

\begin{proof}
    If $i$ is good, then by~\autoref{def:good-for-edge} both the pairs $a_i \leadsto b_i$ and $c_i \leadsto d_i$ are satisfied by $\mathcal{P}$. Let $P_1, P_2\in \mathcal{P}$ be the paths that satisfy the terminal pairs $(a_i, b_i)$ and $(c_i, d_i)$ respectively.
    Since $P_1$ is a shortest $a_i \leadsto b_i$ path in \Gedge, by~\autoref{lem:vertical-canonical-is-shortest-G-edge}(ii) it follows that $P_1$ is an image of the vertical canonical path $\CanInter(\alpha\ ;\ a_i \leadsto b_i)$ from \Gint for some $\alpha\in [N]$. Since $P_2$ is a shortest $c_i \leadsto d_i$ path in \Gedge, by~\autoref{lem:horizontal-canonical-is-shortest-G-edge}(ii) it follows that $P_2$ is an image of the horizontal canonical path $\CanInter(\beta\ ;\ c_i \leadsto d_i)$ from \Gint for some $\beta\in [N]$.

    Using the fact that $P_1$ and $P_2$ are edge-disjoint in \Gedge, we now claim that $\w_{i,i}^{\alpha,\beta}$ is \twosplit:
    \begin{claim}
        \normalfont
        The vertex $\w_{i,i}^{\alpha,\beta}$ is \twosplit by the splitting operation of~\autoref{def:splitting-operation-edge}.
        \label{clm:must-be-two-split-edge-i-i}
    \end{claim}
    \begin{proof}
        By~\autoref{def:splitting-operation-edge}, every black vertex of \Gint is either \onesplit or \twosplit. If $\w_{i,i}^{\alpha,\beta}$ was \onesplit (\autoref{fig:split-edge-not}), then by~\autoref{def:hor-image-of-a-path-G-edge} and \autoref{def:vert-image-of-a-path-G-edge} the path $\w_{i,i,\LB}^{\alpha,\beta}\to w_{i,i,\Mid}^{\alpha,\beta}\to w_{i,i,\TR}^{\alpha,\beta}$ belongs to both the paths $P_1$ and $P_2$ contradicting the fact that they are edge-disjoint.
    \end{proof}
    By~\autoref{clm:must-be-two-split-edge-i-i}, we know that the vertex $\w_{i,i}^{\alpha,\beta}$ is \twosplit. Hence, from~\autoref{eqn:clique-to-gt-reduction-D} and~\autoref{def:splitting-operation-edge}, it follows that $\alpha=\beta$ which concludes the proof of the lemma.
\end{proof}
\medskip

\begin{lemma}
    \label{lem:good_edges}
    \normalfont
    If both $i,j \in [k]$ are good and $i \neq j$, then $v_{\delta_i}-v_{\delta_j} \in E(G)$.
\end{lemma}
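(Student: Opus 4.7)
The plan is to mimic the argument used in the proof of \autoref{lem:good-equal-edge} (specifically \autoref{clm:must-be-two-split-edge-i-i}), but applied to the ``off-diagonal'' grid $D_{i,j}$ rather than the diagonal grid $D_{i,i}$. Let $P_1 \in \mathcal{P}$ be the path satisfying $(a_i, b_i)$ and let $P_2 \in \mathcal{P}$ be the path satisfying $(c_j, d_j)$. Since $\mathcal{P}$ is a solution to Directed-$2k$-\EDSP, the paths $P_1$ and $P_2$ are edge-disjoint in \Gedge. By \autoref{lem:good-equal-edge} applied to the good indices $i$ and $j$, there exist $\delta_i, \delta_j \in [N]$ such that $P_1$ is an image (\autoref{def:vert-image-of-a-path-G-edge}) of $\CanInter(\delta_i\ ;\ a_i \leadsto b_i)$ and $P_2$ is an image (\autoref{def:hor-image-of-a-path-G-edge}) of $\CanInter(\delta_j\ ;\ c_j \leadsto d_j)$.

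Next I would observe that both paths meet inside the grid $D_{i,j}$ at the vertex obtained by splitting $\w_{i,j}^{\delta_i,\delta_j}$. Indeed, by \autoref{def:vert-canonical-G1} the vertical canonical path $\CanInter(\delta_i\ ;\ a_i \leadsto b_i)$ traverses column $\delta_i$ of every grid $D_{i,j'}$ with $j' \in [k]$, so in particular it passes through $\w_{i,j}^{\delta_i,\delta_j}$; analogously, by \autoref{def:hori-canonical-G1} the horizontal canonical path $\CanInter(\delta_j\ ;\ c_j \leadsto d_j)$ traverses row $\delta_j$ of every grid $D_{i',j}$ with $i' \in [k]$, so it also passes through $\w_{i,j}^{\delta_i,\delta_j}$.

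The crux is then to rule out the possibility that $\w_{i,j}^{\delta_i,\delta_j}$ was \onesplit by the operation of \autoref{def:splitting-operation-edge}. If it were, then by \autoref{def:hor-image-of-a-path-G-edge} and \autoref{def:vert-image-of-a-path-G-edge} the only way for an image of either canonical path to traverse this split vertex is via the subpath $\w_{i,j,\LB}^{\delta_i,\delta_j} \to \w_{i,j,\Mid}^{\delta_i,\delta_j} \to \w_{i,j,\TR}^{\delta_i,\delta_j}$; hence $P_1$ and $P_2$ would share the edge $\w_{i,j,\LB}^{\delta_i,\delta_j} \to \w_{i,j,\Mid}^{\delta_i,\delta_j}$, contradicting edge-disjointness. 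Therefore $\w_{i,j}^{\delta_i,\delta_j}$ is \twosplit, which by \autoref{def:splitting-operation-edge} means $(\delta_i, \delta_j) \in S_{i,j}$. Since $i \neq j$, the definition of $S_{i,j}$ in \autoref{eqn:clique-to-gt-reduction-D} gives $v_{\delta_i} - v_{\delta_j} \in E(G)$, as desired.

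There is no real obstacle: the argument is a direct reuse of the dichotomy \onesplit/\twosplit together with edge-disjointness, essentially a repetition of the logic of \autoref{clm:must-be-two-split-edge-i-i}. The only point requiring care is to identify the correct pair of paths whose intersection forces the \twosplit conclusion at $\w_{i,j}^{\delta_i,\delta_j}$, namely the ``$a_i \leadsto b_i$ path'' and the ``$c_j \leadsto d_j$ path'' rather than two paths indexed by the same good index.
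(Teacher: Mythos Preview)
Your proposal is correct and follows essentially the same approach as the paper's proof: identify the paths $Q_1,Q_2\in\mathcal{P}$ satisfying $(a_i,b_i)$ and $(c_j,d_j)$, invoke \autoref{lem:good-equal-edge} to pin them down as images of $\CanInter(\delta_i;\,a_i\leadsto b_i)$ and $\CanInter(\delta_j;\,c_j\leadsto d_j)$, and then argue (exactly as in \autoref{clm:must-be-two-split-edge-i-i}) that edge-disjointness forces $\w_{i,j}^{\delta_i,\delta_j}$ to be \twosplit, whence $(\delta_i,\delta_j)\in S_{i,j}$ and $v_{\delta_i}-v_{\delta_j}\in E(G)$. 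The only cosmetic slip is calling $\mathcal{P}$ ``a solution to Directed-$2k$-\EDSP'' when it only satisfies a $(\tfrac12+\epsilon)$-fraction of the pairs, but the needed property (pairwise edge-disjointness within $\mathcal{P}$) holds regardless.
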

\begin{proof}
    Since $i$ and $j$ are good, by~\autoref{def:good-for-edge}, there are paths $Q_1, Q_2 \in \mathcal{P}$ satisfying the pairs $(a_i, b_i), (c_j, d_j)$ respectively. By~\autoref{lem:good-equal-edge}, it follows that
    \begin{itemize}
        \item $Q_1$ is an image of the path $\CanInter(\delta_i\ ;\ a_i\leadsto b_i)$ from \Gint.
        \item $Q_2$ is an image of the path $\CanInter(\delta_j\ ;\ c_j\leadsto d_j)$ from \Gint.
    \end{itemize}

    Using the fact that $Q_1$ and $Q_2$ are edge-disjoint in \Gedge, we now claim that $\w_{i,j}^{\delta_i,\delta_j}$ is \twosplit:
    \begin{claim}
        \normalfont
        The vertex $\w_{i,j}^{\delta_i,\delta_j}$ is \twosplit by the splitting operation of~\autoref{def:splitting-operation-edge}.
        \label{clm:must-be-two-split-edge-i-j}
    \end{claim}
    \begin{proof}
        By~\autoref{def:splitting-operation-edge}, every black vertex of \Gint is either \onesplit or \twosplit. If $\w_{i,j}^{\delta_j,\delta_j}$ was \onesplit (\autoref{fig:split-edge-not}), then by~\autoref{def:hor-image-of-a-path-G-edge} and \autoref{def:vert-image-of-a-path-G-edge} the path $\w_{i,j,\LB}^{\delta_i,\delta_j}\to w_{i,j,\Mid}^{\delta_i,\delta_j}\to w_{i,j,\TR}^{\delta_i,\delta_j}$ belongs to both the paths $Q_1$ and $Q_2$ contradicting the fact that they are edge-disjoint.
    \end{proof}
    By~\autoref{clm:must-be-two-split-edge-i-j}, we know that the vertex $\w_{i,j}^{\delta_i,\delta_j}$ is \twosplit. Since $i\neq j$, from~\autoref{eqn:clique-to-gt-reduction-D} and~\autoref{def:splitting-operation-edge}, it follows that $v_{\delta_i}-v_{\delta_j}\in E(G)$ which concludes the proof of the lemma.
\end{proof}
\medskip

From~\autoref{lem:good_size} and~\autoref{lem:good_edges}, it follows that the set $X:=\{v_{\delta_i}\ : i\in Y\}$ is a clique of size $\geq (2\epsilon)k$ in $G$.

\subsection{Proof of \autoref{thm:inapprox-edge-result} and \autoref{thm:hardness-edge-result}}
\label{sec:proof-of-main-theorem-edge}

Finally we are ready to prove \autoref{thm:inapprox-edge-result} and \autoref{thm:hardness-edge-result}, which are restated below.
\medskip

\apxdirectededgethm*
\exactdirectededgethm*
\begin{proof}{\textbf{\autoref{thm:hardness-edge-result}}}

    Given an instance $G$ of \kclique, we can use the construction from \autoref{sec:construction-of-Gedge} to build an instance $(D_{\edge}, \mathcal{T})$ of Directed-$2k$-\EDSP such that \Gedge is a planar DAG (\autoref{clm:Gedge-is-planar-and-dag}). The graph \Gedge has $n=O(N^2 k^2)$ vertices (\autoref{clm:size-of-G-edge}), and it is easy to observe that it can be constructed from $G$ (via first constructing \Gint) in $\poly(N,k)$ time.

    It is known that \kclique is W[1]-hard parameterized by $k$, and under ETH cannot be solved in $f(k)\cdot N^{o(k)}$ time for any computable function $f$~\cite{chen-hardness}. Combining the two directions from~\autoref{sec:2kedsp-to-clique} (with $\epsilon = 0.5$) and~\autoref{sec:clique-to-2kedsp}  we obtain a parameterized reduction from an instance $(G,k)$ of \kclique with $N$ vertices to an instance $(D_{\edge}, \mathcal{T})$ of Directed-$2k$-\EDSP where \Gedge is a planar DAG (\autoref{clm:Gedge-is-planar-and-dag}) and has $O(N^2 k^2)$ vertices (\autoref{clm:size-of-G-edge}). As a result, it follows that Directed-$k$-\EDSP on planar DAGs is W[1]-hard parameterized by number $k$ of terminal pairs, and under ETH cannot be solved in $f(k)\cdot n^{o(k)}$ time where $f$ is any computable function and $n$ is the number of vertices.
\end{proof}
\medskip

\begin{proof}{\textbf{\autoref{thm:inapprox-edge-result}}}

    Let $\delta$ and $r_0$ be the constants from~\autoref{thm:cli_inapprox}. Fix any constant $\epsilon\in (0,1/2]$. Set $\zeta = \dfrac{\delta \epsilon}{2}$ and $k=\max \Big\{\dfrac{1}{2\zeta} , \dfrac{r_0}{2\epsilon}\Big\}$.

    Suppose to the contrary that there exists an algorithm $\mathbb{A}_{\EDSP}$ running in $f(k)\cdot n^{\zeta k}$ time (for some computable function $f$) which given an instance of Directed-$k$-\EDSP with $n$ vertices can distinguish between the following two cases:
    \begin{itemize}
        \item[(1)] All $k$ pairs of the Directed-$k$-\EDSP instance can be satisfied
        \item[(2)] The max number of pairs of the Directed-$k$-\EDSP instance that can be satisfied is less than $(\frac{1}{2}+\epsilon)\cdot k$
    \end{itemize}
    We now design an algorithm $\mathbb{A}_{\Clique}$ that contradicts~\autoref{thm:cli_inapprox} for the values $q=k$ and $r=(2\epsilon)k$. Given an instance of $(G,k)$ of \kclique with $N$ vertices, we apply the reduction from~\autoref{sec:construction-of-Gedge} to construct an instance $(D_{\edge}, \mathcal{T})$ of Directed-$2k$-\EDSP where \Gedge has $n=O(N^2 k^2)$ vertices (\autoref{clm:size-of-G-edge}). It is easy to see that this reduction takes $O(N^2 k^2)$ time as well. We now show that the number of pairs which can be satisfied from the Directed-$2k$-\EDSP instance is related to the size of the max clique in $G$:
    \begin{itemize}
        \item If $G$ has a clique of size $q=k$, then by~\autoref{sec:clique-to-2kedsp} it follows that all $2k$ pairs of the instance $(D_{\edge},\mathcal{T})$ of Directed-$2k$-\EDSP can be satisfied.
        \item If $G$ does not have a clique of size $r=2\epsilon k$, then we claim that the max number of pairs in $\mathcal{T}$ that can be satisfied is less than $(\frac{1}{2}+\epsilon)\cdot 2k$. This is because if at least $(\frac{1}{2}+\epsilon)$-fraction of pairs in $\mathcal{T}$ could be satisfied then by~\autoref{sec:2kedsp-to-clique} the graph $G$ would have a clique of size $\geq (2\epsilon) k=r$.
    \end{itemize}
    Since the algorithm $\mathbb{A}_{\EDSP}$ can distinguish between the two cases of all $2k$-pairs of the instance $(D_{\edge}, \mathcal{T})$ can be satisfied or only less than $(\frac{1}{2}+\epsilon)\cdot2k$ pairs can be satisfied, it follows that $\mathbb{A}_{\Clique}$ can distinguish between the cases $\Clique(G)\geq q$ and $\Clique(G)<r$.

    The running time of the algorithm $\mathbb{A}_{\Clique}$ is the time taken for the reduction from~\autoref{sec:construction-of-Gedge} (which is $O(N^2 k^2)$) plus the running time of the algorithm $\mathbb{A}_{\EDSP}$ which is $f(2k)\cdot n^{\zeta\cdot 2k}$. It remains to show that this can be upper bounded by $g(q,r)\cdot N^{\delta r}$ for some computable function $g$:
    \begin{align*}
        & O(N^2 k^2) + f(2k)\cdot n^{\zeta\cdot 2k}                                                                                                                  \\
        & \leq  c\cdot N^2 k^2 + f(2k)\cdot d^{\zeta\cdot 2k}\cdot (N^2 k^2)^{\zeta\cdot 2k} \tag{for some constants $c,d\geq 1$: this follows since $n=O(N^2 k^2)$} \\
        & \leq c\cdot N^2 k^2 + f'(k)\cdot N^{2\zeta\cdot 2k} \tag{where $f'(k)=f(2k)\cdot d^{\zeta\cdot 2k}\cdot k^{2\zeta\cdot 2k}$}                               \\
        & \leq 2c\cdot f'(k)\cdot N^{2\zeta\cdot 2k} \tag{since $4\zeta k\geq 2$ implies $f'(k)\geq k^2$ and $N^{2\zeta\cdot 2k}\geq N^2$}                           \\
        & = 2c\cdot f'(k) \cdot N^{\delta r} \tag{since $\zeta=\frac{\delta \epsilon}{2}$ and $r=(2\epsilon)k$}
    \end{align*}
    Hence, we obtain a contradiction to~\autoref{thm:cli_inapprox} with $q=k, r=(2\epsilon)k$ and $g(k)=2c\cdot f'(k)=2c\cdot f(2k)\cdot d^{\zeta\cdot 2k}\cdot k^{2\zeta\cdot 2k}$.
\end{proof}
\medskip

\begin{remark}
    \label{rmk:Max-Degree-2-edge}
    \normalfont
    \textbf{(reducing the in-degree and out-degree of \Gedge)}
    By exactly the same process as described in~\autoref{rmk:hori-verti-sets-G}, we can reduce the max in-degree and max out-degree of \Gedge to be at most two whilst maintaining the properties that $n=|V(D_{\edge})|=O(N^{2}k^{2})$ and that \Gedge can be constructed in $\poly(N,k)$ time. The splitting operation (\autoref{def:splitting-operation-edge}) is applied only to black vertices, hence all the proofs from~\autoref{sec:characterizing-shortest-in-G-edge},~\autoref{sec:clique-to-2kedsp} and~\autoref{sec:2kedsp-to-clique} go through with minor modifications.
\end{remark}

\medskip

	\section{Lower bounds for exact \& approximate Directed-\texorpdfstring{$k$}{k}-\VDSP on \texorpdfstring{$1$}{1}-planar DAGs}
\label{sec:fpt-inapprox-vdsp-1-planar-dags}

The goal of this section is to prove lower bounds on the running time of exact (\autoref{thm:hardness-vertex-result}) and approximate (\autoref{thm:inapprox-vertex-result}) algorithms for the Directed-$k$-\VDSP problem. We have already seen the first part of the reduction (\autoref{sec:construction-of-Gint}) from \kclique resulting in the construction of the intermediate graph \Gint.~\autoref{sec:construction-of-Gvertex} describes the next part of the reduction which edits the intermediate \Gint to obtain the final graph \Gvert. This corresponds to the ancestry of the second leaf in~\autoref{fig:flowchart}. The characterization of shortest paths between terminal pairs in \Gvert is given in~\autoref{sec:characterizing-shortest-in-G-vertex}. The completeness and soundness of the reduction from \kclique to Directed-$2k$-\VDSP are proven in~\autoref{sec:clique-to-2kvdsp} and~\autoref{sec:2kvdsp-to-clique} respectively. Finally, everything is tied together in~\autoref{sec:proof-of-main-theorem-vertex} allowing us to prove~\autoref{thm:hardness-vertex-result} and~\autoref{thm:inapprox-vertex-result}.

\subsection{Obtaining the graph \Gvert from \Gint via the splitting operation}
\label{sec:construction-of-Gvertex}

Recall from~\autoref{fig:main} that every black grid vertex in \Gint has in-degree two and out-degree two. These four neighbors are named as per~\autoref{def:lefty-right-topy-bottomy-dir}.
The construction of \Gvert from \Gint differs from the construction of \Gedge from \autoref{sec:construction-of-Gedge} only in its splitting operation. This new splitting operation (analogous to \autoref{def:splitting-operation-edge}) is defined below:

\medskip

\begin{definition}
    \normalfont
    \textbf{(splitting operation to obtain \Gvert from \Gint)} For each $i,j\in [k]$ and each $q,\ell\in [N]$
    \begin{itemize}
        \item If $(q,\ell)\in S_{i,j}$ then we \vertsplit (\autoref{fig:split-vertex-yes}) the vertex $\w_{i,j}^{q,\ell}$ into \textbf{two distinct} vertices $\w_{i,j,\Hor}^{q,\ell}$, and $\w_{i,j,\Ver}^{q,\ell}$.
        \item Otherwise, if $(q,\ell)\notin S_{i,j}$, then the vertex $\w_{i,j}^{q,\ell}$ is \notsplit (\autoref{fig:split-vertex-not}) and we define $\w_{i,j,\Hor}^{q,\ell} = \w_{i,j,\Ver}^{q,\ell}$.
    \end{itemize}
    In both the cases, the $4$ edges (\autoref{def:lefty-right-topy-bottomy-dir}) incident on $\w_{i,j}^{q,\ell}$ are modified as follows:
    \begin{itemize}
        \item Replace the edge $\lefty(\w_{i,j}^{q,\ell})\to \w_{i,j}^{q,\ell}$ by the edge $\lefty(\w_{i,j}^{q,\ell})\to \w_{i,j,\Hor}^{q,\ell}$
        \item Replace the edge $\bottomy(\w_{i,j}^{q,\ell})\to \w_{i,j}^{q,\ell}$ by the edge $\bottomy(\w_{i,j}^{q,\ell})\to \w_{i,j,\Ver}^{q,\ell}$
        \item Replace the edge $\w_{i,j}^{q,\ell}\to \righty(\w_{i,j}^{q,\ell})$ by the edge $\w_{i,j,\Hor}^{q,\ell}\to \righty(\w_{i,j}^{q,\ell})$
        \item Replace the edge $\w_{i,j}^{q,\ell}\to \topy(\w_{i,j}^{q,\ell})$ by the edge $\w_{i,j,\Ver}^{q,\ell}\to \topy(\w_{i,j}^{q,\ell})$
    \end{itemize}
    \label{def:splitting-operation-vertex}
\end{definition}
\medskip

\begin{figure}[hbt!]
\centering
\begin{tikzpicture}[
vertex/.style={circle, draw=black, fill=black, text width=1.5mm, inner sep=0pt},
scale=0.8]
\node[vertex, label=above right:\footnotesize{$\w_{i,j}^{q,\ell}$}] (v) at (0,0) {} ;
\node[vertex, label=above:\footnotesize{$\lefty(\w_{i,j}^{q,\ell})$}] (l) at (-2,0) {};
\node[vertex, label=below:\footnotesize{$\righty(\w_{i,j}^{q,\ell})$}] (r) at (2,0) {};
\node[vertex, label=below:\footnotesize{$\bottomy(\w_{i,j}^{q,\ell})$}] (b) at (0,-2) {};
\node[vertex, label=above:\footnotesize{$\topy(\w_{i,j}^{q,\ell})$}] (t) at (0,2) {};
\draw[ultra thick, middlearrow={>}] (v) -- (t);
\draw[ultra thick, middlearrow={>}] (v) -- (r);
\draw[ultra thick, middlearrow={>}] (l) -- (v);
\draw[ultra thick, middlearrow={>}] (b) -- (v);

\draw[orange,double, ultra thick,->] (3,0) -- node[above=3mm, draw=none, fill=none, rectangle] {\vertsplit} (5,0);

\node[vertex, label=above:\footnotesize{$\w_{i,j,\Hor}^{q,\ell}$}] (vh) at (7.5,0) {} ;
\node[vertex, label=right:\footnotesize{$\w_{i,j,\Ver}^{q,\ell}$}] (vv) at (9.5,1) {} ;

\node[vertex, label=below:\footnotesize{$\lefty(\w_{i,j}^{q,\ell})$}] (l) at (6,0) {};
\node[vertex, label=below:\footnotesize{$\righty(\w_{i,j}^{q,\ell})$}] (r) at (10,0) {};
\node[vertex, label=below:\footnotesize{$\bottomy(\w_{i,j}^{q,\ell})$}] (b) at (8,-2) {};
\node[vertex, label=above:\footnotesize{$\topy(\w_{i,j}^{q,\ell})$}] (t) at (8,2) {};
\draw[dotted, ultra thick, middlearrow={>}] (vv) -- (t);
\draw[dotted, ultra thick, middlearrow={>}] (vh) -- (r);
\draw[dotted, ultra thick, middlearrow={>}] (l) -- (vh);
\draw[dotted, ultra thick, middlearrow={>}] (b) -- (vv);
\end{tikzpicture}

\caption{The \vertsplit operation for the vertex $\w_{i,j}^{q,\ell}$ when
$(q,\ell) \in S_{i,j}$.  The idea behind this is that the horizontal path $\lefty(\w_{i,j}^{q,\ell})\to \w_{i,j}^{q,\ell} \to \righty(\w_{i,j}^{q,\ell})$ and the vertical path $\bottomy(\w_{i,j}^{q,\ell})\to \w_{i,j}^{q,\ell} \to \topy(\w_{i,j}^{q,\ell})$ are now actually vertex-disjoint after the \vertsplit operation (but were not vertex-disjoint before since they shared the vertex $\w_{i,j}^{q,\ell}$)}
\label{fig:split-vertex-yes}
\end{figure}
\begin{figure}[hbt!]
\centering
\begin{tikzpicture}[
vertex/.style={circle, draw=black, fill=black, text width=1.5mm, inner sep=0pt},
scale=0.85]
\node[vertex, label=above right:\footnotesize{$\w_{i,j}^{q,\ell}$}] (v) at (0,0) {} ;
\node[vertex, label=above:\footnotesize{$\lefty(\w_{i,j}^{q,\ell})$}] (l) at (-2,0) {};
\node[vertex, label=below:\footnotesize{$\righty(\w_{i,j}^{q,\ell})$}] (r) at (2,0) {};
\node[vertex, label=below:\footnotesize{$\bottomy(\w_{i,j}^{q,\ell})$}] (b) at (0,-2) {};
\node[vertex, label=above:\footnotesize{$\topy(\w_{i,j}^{q,\ell})$}] (t) at (0,2) {};
\draw[ultra thick, middlearrow={>}] (v) -- (t);
\draw[ultra thick, middlearrow={>}] (v) -- (r);
\draw[ultra thick, middlearrow={>}] (l) -- (v);
\draw[ultra thick, middlearrow={>}] (b) -- (v);

\draw[orange,double, ultra thick,->] (3,0) -- node[above=3mm, draw=none, fill=none, rectangle] {\notsplit} (5,0);

\node[vertex, label={[rotate=35,xshift=14mm,yshift=-4mm] \footnotesize{$\w_{i,j,\Hor}^{q,\ell}=\w_{i,j,\Ver}^{q,\ell}$}}] (mid) at (8,0) {} ;

{(0,0)}

\node[vertex, label=below:\footnotesize{$\lefty(\w_{i,j}^{q,\ell})$}] (l) at (6,0) {};
\node[vertex, label=below:\footnotesize{\quad $\righty(\w_{i,j}^{q,\ell})$}] (r) at (10,0) {};
\node[vertex, label=below:\footnotesize{$\bottomy(\w_{i,j}^{q,\ell})$}] (b) at (8,-2) {};
\node[vertex, label=above:\footnotesize{$\topy(\w_{i,j}^{q,\ell})$}] (t) at (8,2) {};
\draw[dotted, ultra thick, middlearrow={>}] (mid) -- (t);
\draw[dotted, ultra thick, middlearrow={>}] (mid) -- (r);
\draw[dotted, ultra thick, middlearrow={>}] (l) -- (mid);
\draw[dotted, ultra thick, middlearrow={>}] (b) -- (mid);
\end{tikzpicture}

\caption{The \notsplit operation for the vertex $\w_{i,j}^{q,\ell}$ when
$(q,\ell) \notin S_{i,j}$. The idea behind this is that the horizontal path $\lefty(\w_{i,j}^{q,\ell})\to \w_{i,j}^{q,\ell} \to \righty(\w_{i,j}^{q,\ell})$ and the vertical path $\bottomy(\w_{i,j}^{q,\ell})\to \w_{i,j}^{q,\ell} \to \topy(\w_{i,j}^{q,\ell})$ are still not vertex-disjoint after the \notsplit operation since they share the vertex $\w_{i,j,\Hor}^{q,\ell}=\w_{i,j,\Ver}^{q,\ell}$.}
\label{fig:split-vertex-not}
\end{figure}

Finally, we are now ready to define the instance of Directed-$2k$-\VDSP that we have built starting from an instance $G$ of \kclique.
\medskip

\begin{definition}
    \normalfont
    \textbf{(defining the Directed-$2k$-\VDSP instance)} The instance $(D_{\vertex}, \mathcal{T})$ of Directed-$2k$-\VDSP is defined as follows:
    \begin{itemize}
        \item The graph \Gvert is obtained by applying the splitting operation (\autoref{def:splitting-operation-vertex}) to each (black) grid vertex of \Gint, i.e., the set of vertices given by $\bigcup_{1\leq i,j\leq k} V(D_{i,j})$.
        \item No \green{green} vertex is split in~\autoref{def:splitting-operation-vertex}, and hence the set of terminal pairs remains the same as defined in~\autoref{eqn:definition-of-mathcal-T} and is given by $\mathcal{T}:= \big\{(a_i, b_i) : i\in [k] \big\}\cup \big\{(c_j, d_j) : j\in [k] \big\}$.
        \item We assign a cost of one to visit each of the vertices present after the splitting operation (\autoref{def:splitting-operation-vertex}). Since each vertex in \Gint has a cost of one, it follows that each vertex in \Gvert also has a cost of one.
    \end{itemize}
    \label{def:G-vertex}
\end{definition}

\medskip

The next two sec:s analyze the structure and size of the graph \Gvert.
\medskip

\begin{claim}
    \label{clm:Gvertex-is-1-planar-and-dag}
    \normalfont
    \Gvert is a $1$-planar DAG\footnote{A $1$-planar graph is a graph that can be drawn in the Euclidean plane in such a way that each edge has at most one crossing point, where it crosses a single additional edge.}.
\end{claim}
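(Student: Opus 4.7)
The plan is to build a $1$-planar drawing of \Gvert by locally editing the planar drawing of \Gint given in \autoref{fig:main}, and to build a topological order of \Gvert by locally editing one of \Gint. By \autoref{clm:G1-is-planar-and-dag}, \Gint is a planar DAG, so both starting objects exist.

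For $1$-planarity, first choose, around every grid vertex $\w_{i,j}^{q,\ell}$ of \Gint, an open disk $\Delta_{i,j}^{q,\ell}$ so that (a) the disks are pairwise disjoint, and (b) inside each disk the drawing of \Gint consists only of the central grid vertex and the four incident edge stubs pointing in the four cardinal directions ($\lefty,\righty,\bottomy,\topy$). I would then edit the drawing one disk at a time: for \notsplit vertices, leave the disk untouched (\autoref{fig:split-vertex-not}); for \vertsplit vertices, redraw inside the disk exactly as in \autoref{fig:split-vertex-yes}, placing $\w_{i,j,\Hor}^{q,\ell}$ on the horizontal stubs and $\w_{i,j,\Ver}^{q,\ell}$ slightly offset onto the vertical stubs. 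This local redraw is arranged so that precisely one crossing is created inside the disk, namely between the outgoing horizontal edge $\w_{i,j,\Hor}^{q,\ell}\to\righty(\w_{i,j}^{q,\ell})$ and the incoming vertical edge $\bottomy(\w_{i,j}^{q,\ell})\to\w_{i,j,\Ver}^{q,\ell}$.

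The key step is then to argue that no edge of \Gvert takes part in two crossings. Since the disks are pairwise disjoint, every crossing is local to one disk, so an edge can only be involved in a crossing at (the disk of) one of its endpoints. At each disk, only the outgoing horizontal and the incoming vertical edges participate. But a horizontal edge (internal grid, red matching, or terminal magenta) is outgoing at its left endpoint's disk and incoming at its right endpoint's disk, so it can participate in a crossing only at its left endpoint; symmetrically a vertical edge can only be crossed at its top endpoint. Hence every edge is crossed at most once, giving $1$-planarity.

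For the DAG part I would mirror the argument of \autoref{clm:Gedge-is-planar-and-dag}: take any topological order of \Gint and replace every \vertsplit vertex $\w$ in it by the pair $\w_{\Hor},\w_{\Ver}$ in either relative order (there is no edge between them), leaving \notsplit vertices in place. By \autoref{def:splitting-operation-vertex}, every in-neighbor of $\w_{\Hor}$ or $\w_{\Ver}$ in \Gvert is (a descendant of) an in-neighbor of $\w$ in \Gint, and every out-neighbor is likewise an out-neighbor of $\w$, so the resulting order is topological for \Gvert. The main obstacle is the bookkeeping in the $1$-planarity step: one must confirm that red matching edges and magenta terminal edges slot cleanly into the ``horizontal'' or ``vertical'' classification at their grid endpoints in the drawing of \autoref{fig:main}, so that the per-disk crossing always pairs exactly ``outgoing horizontal with incoming vertical'' and never doubles up on the same edge.
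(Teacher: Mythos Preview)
Your proposal is correct and follows essentially the same approach as the paper: identify the single crossing pair $\bottomy(\w)\to\w_{\Ver}$ with $\w_{\Hor}\to\righty(\w)$ at each \vertsplit vertex, and for the DAG part replace each split vertex in a topological order of \Gint by its one or two copies. Your disk-localization argument for why no edge is crossed twice is in fact more explicit than the paper's, which simply asserts that ``these are the only type of edges which can cross'' without spelling out the outgoing-horizontal/incoming-vertical asymmetry you exploit.
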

\begin{proof}
    In~\autoref{clm:G1-is-planar-and-dag}, we have shown that \Gint is a planar DAG. The graph \Gvert is obtained from \Gint by applying the splitting operation (\autoref{def:splitting-operation-vertex}) on every (black) grid vertex, i.e., every vertex from the set $\bigcup_{1\leq i,j\leq k} V(D_{i,j})$. By~\autoref{def:lefty-right-topy-bottomy-dir}, every vertex of \Gint that is split has exactly two in-neighbors and two out-neighbors in \Gint.~\autoref{fig:split-vertex-not} maintains the planarity, but in~\autoref{fig:split-vertex-yes} we have two edges $\bottomy(\w_{i,j}^{q,\ell}) \to \w_{i,j,\Ver}^{q,\ell}$ and $\w_{i,j,\Hor}^{q,\ell} \to \righty(\w_{i,j}^{q,\ell})$ which cross each other: this seems unavoidable while preserving the global structure of the graph. Since these are the only type of edges which can cross, we have drawn \Gvert in the Euclidean plane in such a way that each edge has at most one crossing point, where it crosses a single additional edge. Therefore, \Gvert is $1$-planar.

    Since \Gint is a DAG, it has a topological order say $\mathcal{X}$. The only changes done when going from \Gint to \Gvert are the addition of new vertices and edges when black grid vertices are split according to~\autoref{def:splitting-operation-vertex}. We now explain how to modify $\mathcal{X}$ to obtain a topological order $\mathcal{X}'$ for \Gvert:
    \begin{itemize}
        \item If a black grid vertex $\w$ is \vertsplit, then we replace $\w$ by the two vertices $\w_{\Hor}$ and $\w_{\Ver}$.
        \item If a black grid vertex $\w$ is \notsplit, then we replace $\w$ by the vertex $\w_{\Hor}=\w_{\Ver}$.
    \end{itemize}
    It is easy to see from~\autoref{fig:split-vertex-not} and~\autoref{fig:split-vertex-yes} that $\mathcal{X}'$ is a topological order for \Gvert.
\end{proof}
\medskip

\begin{claim}
    \normalfont
    The number of vertices in \Gvert is $O(N^{2}k^{2})$.
    \label{clm:size-of-G-vertex}
\end{claim}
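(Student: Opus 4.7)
The plan is to mirror the proof of \autoref{clm:size-of-G-edge} almost verbatim, since the only difference between \Gedge and \Gvert is the choice of splitting operation, and the new splitting operation from~\autoref{def:splitting-operation-vertex} is in fact \emph{less} prolific in creating new vertices than the one used for \Gedge.

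First I would recall from~\autoref{clm:size-of-G-int} that \Gint has $O(N^{2}k^{2})$ vertices. Next, I would observe that the transition from \Gint to \Gvert leaves all \green{green} terminal vertices untouched (by~\autoref{def:G-vertex}) and only modifies the black grid vertices $\bigcup_{1\leq i,j\leq k} V(D_{i,j})$ via the splitting operation of~\autoref{def:splitting-operation-vertex}. For each such black vertex $\w_{i,j}^{q,\ell}$, there are exactly two cases: either $(q,\ell)\in S_{i,j}$, in which case $\w_{i,j}^{q,\ell}$ is \vertsplit into the \textbf{two} vertices $\w_{i,j,\Hor}^{q,\ell}$ and $\w_{i,j,\Ver}^{q,\ell}$ (\autoref{fig:split-vertex-yes}); or $(q,\ell)\notin S_{i,j}$, in which case $\w_{i,j}^{q,\ell}$ is \notsplit and is effectively replaced by the \textbf{single} vertex $\w_{i,j,\Hor}^{q,\ell} = \w_{i,j,\Ver}^{q,\ell}$ (\autoref{fig:split-vertex-not}). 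In both cases the number of vertices produced is at most a constant (in fact at most $2$) times the number of original vertices.

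Combining these observations, the total number of vertices in \Gvert is upper bounded by $2$ times the number of black grid vertices in \Gint, plus the $4k$ unchanged \green{green} terminal vertices. Since the number of vertices in \Gint is $O(N^{2}k^{2})$ by~\autoref{clm:size-of-G-int}, this yields a bound of $O(N^{2}k^{2})$ for $|V(D_{\vertex})|$, completing the proof. There is no real obstacle here: the argument is a one-line counting observation, and the only care needed is to verify (by inspection of~\autoref{def:splitting-operation-vertex}) that no splitting case produces more than two vertices from a single original black vertex.
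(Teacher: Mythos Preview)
Your proposal is correct and follows essentially the same approach as the paper's own proof: both argue that each black grid vertex of \Gint is replaced by at most two vertices under the splitting operation of~\autoref{def:splitting-operation-vertex} (one if \notsplit, two if \vertsplit), so the vertex count remains $O(N^{2}k^{2})$ by~\autoref{clm:size-of-G-int}.
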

\begin{proof}
    The only change in going from \Gint to \Gvert is the splitting operation (\autoref{def:splitting-operation-vertex}). If a black grid vertex $\w$ in \Gint is \notsplit (\autoref{fig:split-vertex-not}) then we replace it by \textbf{one} vertex $\w_{\Ver}=\w_{\Hor}$ in \Gvert. If a black grid vertex $\w$ in \Gint is \vertsplit (\autoref{fig:split-vertex-yes}) then we replace it by the \textbf{two} vertices $\w_{\Hor}$ and $\w_{\Ver}$ in \Gvert. In both cases, the increase in number of vertices is only by a constant factor. The number of vertices in \Gint is $O(N^2 k^2)$ from~\autoref{clm:size-of-G-int}, and hence it follows that the number of vertices in \Gvert is $O(N^2 k^2)$.
\end{proof}
\medskip

\subsection{Characterizing shortest paths in \Gvert}
\label{sec:characterizing-shortest-in-G-vertex}

The goal of this section is to characterize the structure of shortest paths between terminal pairs in \Gvert. Recall (\autoref{def:G-vertex}) that the set of terminal pairs is given by $\mathcal{T}:= \big\{(a_i, b_i) : i\in [k] \big\}\cup \big\{(c_j, d_j) : j\in [k] \big\}$. Since each edge of \Gvert has length one (\autoref{def:G-vertex}), we measure the length of paths in \Gedge by counting the number of vertices.

We now define canonical paths in \Gvert by adapting the definition of canonical paths (\autoref{def:hori-canonical-G1} and~\autoref{def:vert-canonical-G1}) in \Gint in accordance with the changes in going from \Gint to \Gvert.

\medskip

\begin{definition}
    \textbf{(horizontal canonical paths in \Gvert)}
    Fix any $j\in [k]$. For each $r\in [N]$, we define $\CanVertex(r\ ;\ c_j \leadsto d_j)$ to be the $c_j\leadsto d_j$ path in \Gvert obtained from the path $\CanInter(r\ ;\ c_j \leadsto d_j)$ in \Gint (recall~\autoref{def:hori-canonical-G1}) in the following way:
    \begin{itemize}
        \item The first and last \magenta{magenta} edges are unchanged
        \item If a black grid vertex $\w$ from $\CanInter(r\ ;\ c_j \leadsto d_j)$ is \notsplit (\autoref{fig:split-vertex-not}), then
        \begin{itemize}
            \item The unique incoming edge into $\w$ is changed to be incoming into $\w_{\Hor}=\w_{\Ver}$
            \item The unique outgoing edge from $\w$ is changed to be outgoing from $\w_{\Hor}=\w_{\Ver}$
        \end{itemize}
        \item If a black grid vertex $\w$ from $\CanInter(r\ ;\ c_j \leadsto d_j)$ is \vertsplit (\autoref{fig:split-vertex-yes}), then
        \begin{itemize}
            \item The unique incoming edge into $\w$ is changed to be incoming into $\w_{\Hor}$
            \item The unique outgoing edge from $\w$ is changed to be outgoing from $\w_{\Hor}$
        \end{itemize}
    \end{itemize}
    \label{def:hori-canonical-Gvertex}
\end{definition}
\medskip

\begin{definition}
    \textbf{(vertical canonical paths in \Gvert)}
    Fix any $j\in [k]$. For each $r\in [N]$, we define $\CanVertex(r\ ;\ a_j \leadsto b_j)$ to be the $a_j\leadsto b_j$ path in \Gvert obtained from the path $\CanInter(r\ ;\ a_j \leadsto b_j)$ in \Gint (recall~\autoref{def:vert-canonical-G1}) in the following way:
    \begin{itemize}
        \item The first and last \magenta{magenta} edges are unchanged
        \item If a black grid vertex $\w$ from $\CanInter(r\ ;\ a_j \leadsto b_j)$ is \notsplit (\autoref{fig:split-vertex-not}), then
        \begin{itemize}
            \item The unique incoming edge into $\w$ is changed to be incoming into $\w_{\Hor}=\w_{\Ver}$
            \item The unique outgoing edge from $\w$ is changed to be outgoing from $\w_{\Hor}=\w_{\Ver}$
        \end{itemize}
        \item If a black grid vertex $\w$ from $\CanInter(r\ ;\ a_j \leadsto b_j)$ is \vertsplit (\autoref{fig:split-vertex-yes}), then
        \begin{itemize}
            \item The unique incoming edge into $\w$ is changed to be incoming into $\w_{\Ver}$
            \item The unique outgoing edge from $\w$ is changed to be outgoing from $\w_{\Ver}$
        \end{itemize}
    \end{itemize}
    \label{def:vert-canonical-Gvertex}
\end{definition}
\medskip

The next lemma shows that if $j\in [k]$ then any shortest $c_j \leadsto d_j$ path in \Gvert must be a horizontal canonical path and vice versa.

\medskip

\begin{definition}
    \textbf{(Image of a horizontal canonical path from \Gint in \Gvert)}
    Fix a $j\in [k]$ and $r\in [N]$. For each $\CanInter(r\ ;\ c_j \leadsto d_j)$ path $R$ in \Gint, we define an image of R as follows
    \begin{itemize}
        \item The first and last \magenta{magenta} edges are unchanged.
        \item If a black grid vertex $\w$ from $\CanInter(r\ ;\ c_j \leadsto d_j)$ is \notsplit (\autoref{fig:split-vertex-not}), then
        \begin{itemize}
            \item The unique edge $\lefty(\w) \to \w$ is replaced with the edge $\lefty(\w) \to \w_{\Hor}=\w_{\Ver}$;
            \item The unique edge $\w \to \righty(\w)$ is replaced with the edge $\w_{\Hor}=\w_{\Ver} \to \righty(\w)$;
        \end{itemize}
        \item If a black grid vertex $\w$ from $\CanInter(r\ ;\ c_j \leadsto d_j)$ is \vertsplit (\autoref{fig:split-vertex-yes}), then
        \begin{itemize}
            \item The series of edges $\lefty(\w) \to \w \to \righty(\w)$ is replaced with either the path  $\lefty(\w) \to \w_{\Ver} \to \righty(\w)$ or $\lefty(\w) \to \w_{\Hor} \to \righty(\w)$;
        \end{itemize}
    \end{itemize}
    \label{def:hor-image-of-a-path-G-vert}
\end{definition}
\medskip

\begin{definition}
    \textbf{(Image of a vertical canonical path from \Gint in \Gvert)}
    Fix a $i\in [k]$ and $r\in [N]$. For each $\CanInter(r\ ;\ a_i \leadsto b_i)$ path $R$ in \Gint, we define an image of R as follows
    \begin{itemize}
        \item The first and last \magenta{magenta} edges are unchanged.
        \item If a black grid vertex $\w$ from $\CanInter(r\ ;\ a_i \leadsto b_i)$ is \notsplit (\autoref{fig:split-vertex-not}), then
        \begin{itemize}
            \item The unique edge $\topy(\w) \to \w$ is replaced with the edge $\topy(\w) \to \w_{\Hor}=\w_{\Ver}$;
            \item The unique edge $\w \to \bottomy(\w)$ is replaced with the edge $\w_{\Hor}=\w_{\Ver} \to \bottomy(\w)$;
        \end{itemize}
        \item If a black grid vertex $\w$ from $\CanInter(r\ ;\ a_i \leadsto b_i)$ is \vertsplit (\autoref{fig:split-vertex-yes}), then
        \begin{itemize}
            \item The series of edges $\topy(\w) \to \w \to \bottomy(\w)$ is replaced with either the path $\topy(\w) \to \w_{\Ver} \to \bottomy(\w)$ or $\topy(\w) \to \w_{\Hor} \to \bottomy(\w)$;
        \end{itemize}
    \end{itemize}
    \label{def:vert-image-of-a-path-G-vert}
\end{definition}
\medskip

Note that a single path, $R$, in \Gint can have several images in \Gvert. This is because for every black vertex on $R$ that is \twosplit there are two choices of sub-path to add: either the path $\w_{\LB} \to \w_{\Hor} \to \w_{\TR}$ or the path $\w_{\LB} \to \w_{\Ver} \to \w_{\TR}$.

The following two lemmas (\autoref{lem:horizontal-canonical-is-shortest-G-vertex} and~\autoref{lem:vertical-canonical-is-shortest-G-vertex}) analyze the structure of shortest paths between terminal pairs in \Gvert. First, we define the image of a path from \Gint in the graph \Gvert.
\medskip

\begin{lemma}
    \normalfont
    Let $j\in [k]$. The horizontal canonical paths in \Gvert satisfy the following two properties:
    \begin{itemize}
        \item[(i)] For each $r\in [N]$, the path $\CanVertex(r\ ;\ c_j \leadsto d_j)$ is a shortest $c_j \leadsto d_j$ path in \Gvert.
        \item[(ii)] If $P$ is a shortest $c_j\leadsto d_j$ path in \Gvert, then $P$ must be an image (\autoref{def:hor-image-of-a-path-G-vert}) of $\CanVertex(\ell\ ;\ c_j \leadsto d_j)$ for some $\ell\in [N]$.
    \end{itemize}
    \label{lem:horizontal-canonical-is-shortest-G-vertex}
\end{lemma}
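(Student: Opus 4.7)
The plan is to mirror the strategy of \autoref{lem:horizontal-canonical-is-shortest-G-edge}, but leverage the structural feature specific to \vertsplit: when a black vertex $\w$ is \vertsplit, the two resulting vertices $\w_{\Hor}$ and $\w_{\Ver}$ are \emph{not} adjacent, and they live on separate channels ($\w_{\Hor}$ receives from $\lefty(\w)$ and sends to $\righty(\w)$; $\w_{\Ver}$ receives from $\bottomy(\w)$ and sends to $\topy(\w)$). Consequently, every traversal ``through the location of $\w$'' in \Gvert uses exactly one of $\w_{\Hor}, \w_{\Ver}$, contributing exactly one vertex — the same count as in \Gint. This is the crucial difference from \Gedge (where each split vertex contributes three vertices to any crossing path).

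First, I would define the projection $\pi: V(D_{\vertex}) \to V(D_{\inter})$ that is the identity on \green{green} terminal vertices and sends both $\w_{\Hor}$ and $\w_{\Ver}$ back to the original $\w$. By construction of the splitting operation, each edge of \Gvert corresponds bijectively to an edge of \Gint, so $\pi$ carries any $c_j \leadsto d_j$ walk in \Gvert to a $c_j \leadsto d_j$ walk in \Gint of the same length (counted in vertices). Next, I would argue that if $P$ is a simple $c_j \leadsto d_j$ path in \Gvert, then $\pi(P)$ is also a simple path in \Gint: since \Gvert is a DAG (\autoref{clm:Gvertex-is-1-planar-and-dag}) with all horizontal and vertical edges oriented right and up, a path cannot use both $\w_{\Hor}$ and $\w_{\Ver}$ for the same $\w$ — doing so would force an edge from a right/top neighbour back to a left/bottom neighbour, contradicting the DAG's consistent orientation. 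Hence $|V(\pi(P))| = |V(P)|$.

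With the projection in hand, part (i) follows immediately: the path $\CanVertex(r\ ;\ c_j \leadsto d_j)$ from \autoref{def:hori-canonical-Gvertex} has the same number of vertices as $\CanInter(r\ ;\ c_j \leadsto d_j)$, which is $kN+2$ by \autoref{obs:Canonical-length}. And \autoref{lem:horizontal-canonical-is-shortest-G1}(i) combined with the projection shows that every $c_j \leadsto d_j$ path in \Gvert has at least $kN+2$ vertices, so $\CanVertex(r\ ;\ c_j \leadsto d_j)$ is shortest. For part (ii) (which I read as asserting that $P$ is an image of $\CanInter(\ell\ ;\ c_j \leadsto d_j)$ in the sense of \autoref{def:hor-image-of-a-path-G-vert}, paralleling \autoref{lem:horizontal-canonical-is-shortest-G-edge}(ii)), let $P$ be a shortest such path, so $|V(P)| = kN+2$. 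Then $\pi(P)$ is a $c_j \leadsto d_j$ path in \Gint with exactly $kN+2$ vertices, hence a shortest one, so by \autoref{lem:horizontal-canonical-is-shortest-G1}(ii) it equals $\CanInter(\ell\ ;\ c_j \leadsto d_j)$ for some $\ell \in [N]$. Unwinding $\pi$, $P$ must choose, at each \vertsplit vertex $\w$ lying on $\CanInter(\ell\ ;\ c_j \leadsto d_j)$, one of $\w_{\Hor}$ or $\w_{\Ver}$; this is exactly the freedom allowed by \autoref{def:hor-image-of-a-path-G-vert}, so $P$ is an image of $\CanInter(\ell\ ;\ c_j \leadsto d_j)$.

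The main obstacle is the simple-path argument of Step~2, i.e., verifying rigorously that no shortest-path candidate in \Gvert can ``re-enter the location of a split vertex $\w$ via the other channel''. This will be a short DAG/orientation argument, but it is the only place where the analysis genuinely differs from the \Gedge case, since there the three-vertex replacement makes the counting cleaner. Once this step is in place, the rest is a direct translation of \autoref{lem:horizontal-canonical-is-shortest-G-edge}'s proof.
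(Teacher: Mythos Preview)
Your proposal is correct and follows essentially the same approach as the paper: both rely on the observation that the \vertsplit/\notsplit operation replaces each black vertex of \Gint by exactly one vertex along any traversing path, so path lengths are preserved and the characterization of shortest $c_j\leadsto d_j$ paths reduces directly to \autoref{lem:horizontal-canonical-is-shortest-G1}. The paper phrases this informally (``the corresponding path in \Gvert\ contains $\alpha$ green vertices and $\beta$ black vertices''), whereas you make the correspondence explicit via the projection $\pi$ and supply the DAG-orientation argument that a simple path in \Gvert cannot visit both $\w_{\Hor}$ and $\w_{\Ver}$; this is exactly the step the paper sweeps under the rug, so your version is a more careful rendering of the same idea rather than a different route.
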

\begin{proof}
    The proof of this lemma can be obtained in the same way as shown for \Gint in \autoref{lem:horizontal-canonical-is-shortest-G1} with some minor observational changes. Note that any path in \Gint contains only \green{green} and black vertices. The splitting operation (\autoref{def:splitting-operation-vertex}) applied to each black vertex of \Gint has the following property: if a path $Q$ contains a black vertex $\w$ in \Gint, then in the corresponding path in \Gvert this vertex $\w$ is \textbf{always replaced by one other vertex}:
    \begin{itemize}
        \item If $\w$ is \notsplit (\autoref{fig:split-vertex-not}), then it is replaced in $Q$ the vertex $\w_{\Hor} = \w_{\Ver}$.
        \item If $\w$ is \vertsplit (\autoref{fig:split-vertex-yes}), then it is replaced in $Q$ either by the vertex $\w_{\Ver}$ or the vertex $\w_{\Hor}$.
    \end{itemize}
    Therefore, if a path $Q$ contains $\alpha$ \green{green} vertices and $\beta$ black vertices in \Gint, then the corresponding path in \Gvert contains $\alpha$ \green{green} vertices and $\beta$ black vertices. The proof of the first part of the lemma now follows from~\autoref{lem:horizontal-canonical-is-shortest-G1}(i),~\autoref{def:splitting-operation-vertex} and~\autoref{def:hori-canonical-Gvertex}. The proof of the second part of the lemma follows from~\autoref{lem:horizontal-canonical-is-shortest-G1}(ii),~\autoref{def:splitting-operation-vertex} and~\autoref{def:hor-image-of-a-path-G-vert}.
\end{proof}
\medskip

The proof of the next lemma is very similar to that of \autoref{lem:horizontal-canonical-is-shortest-G-vertex}, and we skip repeating the details.

\medskip

\begin{lemma}
    \normalfont
    Let $i\in [k]$. The vertical canonical paths in \Gvert satisfy the following two properties:
    \begin{itemize}
        \item[(i)] For each $r\in [N]$, the path $\CanVertex(r\ ;\ a_i \leadsto b_i)$ is a shortest $a_i \leadsto b_i$ path in \Gvert.
        \item[(ii)] If $P$ is a shortest $a_i\leadsto b_i$ path in \Gvert, then $P$ must be and image (\autoref{def:vert-image-of-a-path-G-vert}) of $\CanVertex(\ell\ ;\ a_i \leadsto b_i)$ for some $\ell\in [N]$.
    \end{itemize}
    \label{lem:vertical-canonical-is-shortest-G-vertex}
\end{lemma}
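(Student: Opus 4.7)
The plan is to mirror the argument of~\autoref{lem:horizontal-canonical-is-shortest-G-vertex} by lifting the analogous \Gint result (\autoref{lem:vertical-canonical-is-shortest-G1}) through the \vertsplit/\notsplit operation of~\autoref{def:splitting-operation-vertex}. The key structural observation is that the splitting operation is \emph{length-preserving on paths}: if a path $Q$ in \Gint passes through a black grid vertex $\w$, then the corresponding trajectory in \Gvert passes through exactly one vertex associated with $\w$ (namely the merged vertex $\w_{\Hor}=\w_{\Ver}$ if $\w$ is \notsplit, and either $\w_{\Hor}$ or $\w_{\Ver}$ if $\w$ is \vertsplit). Hence any $a_i\leadsto b_i$ path with $\alpha$ green and $\beta$ black vertices in \Gint corresponds to an $a_i\leadsto b_i$ path with $\alpha$ green and $\beta$ black vertices in \Gvert, and conversely every $a_i\leadsto b_i$ path in \Gvert ``projects'' (by identifying $\w_{\Hor}$ and $\w_{\Ver}$ back to $\w$) to a well-defined $a_i\leadsto b_i$ path in \Gint of the same length.

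For part~(i), I would fix $r\in [N]$ and observe that $\CanVertex(r\ ;\ a_i\leadsto b_i)$ is (by construction in~\autoref{def:vert-canonical-Gvertex}) an image of $\CanInter(r\ ;\ a_i\leadsto b_i)$. By~\autoref{lem:vertical-canonical-is-shortest-G1}(i), the latter is a shortest $a_i\leadsto b_i$ path in \Gint, of length $kN+2$ (\autoref{obs:Canonical-length}). Combined with the length-preservation observation above, no $a_i\leadsto b_i$ path in \Gvert can be shorter than $kN+2$, so $\CanVertex(r\ ;\ a_i\leadsto b_i)$ is shortest in \Gvert.

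For part~(ii), let $P$ be any shortest $a_i\leadsto b_i$ path in \Gvert. Its projection $\widetilde{P}$ to \Gint is an $a_i\leadsto b_i$ path of the same length $kN+2$, and so by~\autoref{lem:vertical-canonical-is-shortest-G1}(ii) it must equal $\CanInter(\ell\ ;\ a_i\leadsto b_i)$ for some $\ell\in [N]$. Since $P$ is obtained from $\widetilde P$ by the splitting operation (with, at each \vertsplit vertex, a choice between $\w_{\Hor}$ and $\w_{\Ver}$), $P$ is by definition an image of $\CanInter(\ell\ ;\ a_i\leadsto b_i)$ in the sense of~\autoref{def:vert-image-of-a-path-G-vert}, as required.

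The only point that needs mild care (and is the most interesting step in the adaptation) is to verify the claimed length-preserving correspondence cleanly: one must check that the edge-rewiring in~\autoref{def:splitting-operation-vertex} guarantees that every walk through the neighborhood of a black vertex $\w$ in \Gvert uses exactly one representative of $\w$, and that its in-/out-edges come from the correct sides (horizontal vs.\ vertical), so that the projection really yields a walk in \Gint. Once this is recorded, both parts are a direct pull-back of the horizontal argument, and all remaining bookkeeping (green endpoints, counting of black vertices) is identical to the proof of~\autoref{lem:horizontal-canonical-is-shortest-G-vertex}.
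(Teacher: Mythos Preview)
Your proposal is correct and follows essentially the same approach as the paper, which simply states that the proof is very similar to that of \autoref{lem:horizontal-canonical-is-shortest-G-vertex} and omits the details. Your explicit articulation of the length-preserving correspondence and the projection back to \Gint is exactly the content implicit in that reference.
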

\medskip

\subsection{
    \texorpdfstring{\underline{Completeness}: $G$ has a $k$-clique $\Rightarrow$ All pairs in the instance $(D_{\vertex}, \mathcal{T})$ of Directed-$2k$-VDSP can be satisfied}
    {Completeness: G has a k-clique -> All pairs in the instance (Dvert, T) of 2k-VDSP can be satisfied}
}
\label{sec:clique-to-2kvdsp}

In this section, we show that if the instance $G$ of \kclique has a solution then the instance $(D_{\vertex}, \mathcal{T})$ of Directed-$2k$-\VDSP also has a solution. The proofs are very similar to those of the corresponding results from~\autoref{sec:clique-to-2kedsp}.
Suppose the instance $G=(V,E)$ of \kclique has a clique $X=\{v_{\gamma_1}, v_{\gamma_2}, \ldots, v_{\gamma_k} \}$ of size $k$. Let $Y=\{\gamma_1, \gamma_2, \ldots, \gamma_k\}\in [N]$. Now for each $i\in [k]$ we choose the path as follows:
\begin{itemize}
    \item The path $R_i$ to satisfy $a_i\leadsto b_i$ is chosen to be the horizontal canonical path  $\CanVertex(\gamma_i \ ;\ a_i \leadsto b_i)$ described in~\autoref{def:hori-canonical-Gvertex}.
    \item The path $T_i$ to satisfy $c_i\leadsto d_i$ is chosen to be vertical canonical path  $\CanVertex(\gamma_i \ ;\ c_i \leadsto d_i)$ described in~\autoref{def:vert-canonical-Gvertex}.
\end{itemize}

Now we show that the collection of paths given by $\mathcal{Q}:=\{R_1, R_2, \ldots, R_k, T_1, T_2, \ldots, T_K\}$ forms a solution for the instance  $(D_{\edge}, \mathcal{T})$ of Directed-$2k$-\VDSP via the following two lemmas which argue being shortest for each terminal pair and pairwise vertex-disjointness respectively:
\medskip

\begin{lemma}
    \normalfont
    For each $i \in [k]$, the path $R_i$ (resp. $T_i$) is a shortest $a_i \leadsto b_i$ (resp. $c_i \leadsto d_i$) path in \Gvert.
    \label{lem:completeness-vdsp-shortest}
\end{lemma}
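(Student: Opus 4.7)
The plan is to invoke the two characterization lemmas for shortest paths in \Gvert and do essentially nothing else. From the construction at the start of \autoref{sec:clique-to-2kvdsp}, for each $i \in [k]$ the path $R_i$ is defined to be the canonical path $\CanVertex(\gamma_i\ ;\ a_i \leadsto b_i)$ and the path $T_i$ is defined to be the canonical path $\CanVertex(\gamma_i\ ;\ c_i \leadsto d_i)$. Both are by construction $a_i \leadsto b_i$ and $c_i \leadsto d_i$ paths respectively in \Gvert, so the only content of the lemma is the assertion that each of these canonical paths is actually a \emph{shortest} path between its endpoints.

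For $R_i$, I would apply \autoref{lem:vertical-canonical-is-shortest-G-vertex}(i), which states that for every $r \in [N]$ the vertical canonical path $\CanVertex(r\ ;\ a_i \leadsto b_i)$ is a shortest $a_i \leadsto b_i$ path in \Gvert. Specializing to $r = \gamma_i \in [N]$ gives that $R_i$ is a shortest $a_i \leadsto b_i$ path in \Gvert. A symmetric appeal to \autoref{lem:horizontal-canonical-is-shortest-G-vertex}(i) with $r = \gamma_i$ yields that $T_i$ is a shortest $c_i \leadsto d_i$ path in \Gvert.

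In short, this statement is a direct two-line corollary of the characterization lemmas already established in \autoref{sec:characterizing-shortest-in-G-vertex}, and there is no real obstacle to overcome; the proof structurally mirrors that of \autoref{lem:completeness-edsp-shortest} for \Gedge, with the edge-splitting characterization lemmas replaced by their vertex-splitting analogs. The bulk of the work in the completeness argument will instead live in the next lemma (vertex-disjointness of the paths in $\mathcal{Q}$), not here.
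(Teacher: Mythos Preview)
Your proposal is correct and matches the paper's proof essentially verbatim: fix $i\in[k]$, apply \autoref{lem:vertical-canonical-is-shortest-G-vertex}(i) to conclude $R_i$ is shortest, and apply \autoref{lem:horizontal-canonical-is-shortest-G-vertex}(i) to conclude $T_i$ is shortest.
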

\begin{proof}
    Fix any $i\in [k]$.~\autoref{lem:horizontal-canonical-is-shortest-G-vertex}(i) implies that $T_i$ is shortest $c_i \leadsto d_i$ path in  \Gvert.~\autoref{lem:vertical-canonical-is-shortest-G-vertex}(i) implies that $R_i$ is shortest $a_i \leadsto b_i$ path in \Gvert.
\end{proof}
\medskip

Before proving~\autoref{lem:completeness-vdsp-disjoint}, we first set up notation for some special sets of vertices in \Gedge which helps to streamline some of the subsequent proofs.

\medskip

\begin{definition}
    \label{def:horizontal-vertical-sets-in-Gvertex}
    \textbf{(horizontal \& vertical levels in \Gvert)}
    For each $(i,j)\in [k]\times [k]$, let $D_{i,j}^{\vertsplitt}$ to be the graph obtained by applying the splitting operation (\autoref{def:splitting-operation-edge}) to each vertex of $D_{i,j}$. For each $j\in [k]$, we define the following set of vertices:
    \begin{equation}
        \begin{aligned}
            \HorizontalVertex(j)     & = \{ c_j, d_j \} \cup \left( \bigcup_{i=1}^{k} V(D_{i,j}^{\vertsplitt})\right)\ \quad \\
            \quad \VerticalVertex(j) & = \{ a_j, b_j \} \cup \left( \bigcup_{i=1}^{k} V(D_{j,i}^{\vertsplitt})\right)
        \end{aligned}\label{eqn:hori-verti-sets-Gvert}
    \end{equation}
\end{definition}
\medskip

The next lemma shows that any two paths from $\mathcal{Q}$ are vertex-disjoint.
\medskip

\begin{lemma}
    \normalfont
    Let $P\neq P'$ be any pair of paths from the collection  $\mathcal{Q}=\{R_1, R_2, \ldots, R_k, T_1, T_2, \ldots, T_K\}$. Then $P$ and $P'$ are vertex-disjoint.
    \label{lem:completeness-vdsp-disjoint}
\end{lemma}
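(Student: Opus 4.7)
The plan is to mirror the structure of the edge-disjoint argument in \autoref{lem:completeness-edsp-disjoint}, adapted to the \vertsplit operation. The three cases are $(P, P') \in \{R_i, R_{i'}\}$ with $i \neq i'$, $\{T_j, T_{j'}\}$ with $j \neq j'$, and the mixed case $\{R_i, T_j\}$. For the first two cases, I would appeal to the separation given by \autoref{def:horizontal-vertical-sets-in-Gvertex}: every vertex of $R_i$ (other than its terminals $a_i, b_i$) lies in $\VerticalVertex(i)$, and these sets are pairwise disjoint for distinct values of $i$; an analogous observation holds for the $T_j$'s and $\HorizontalVertex(j)$. Since the terminal vertices $\{a_i\}, \{b_i\}, \{c_j\}, \{d_j\}$ are themselves all distinct, pairwise vertex-disjointness for these two cases follows immediately.

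The meat of the proof is the mixed case, handled via a claim analogous to \autoref{clm:reduction-edge-disjoint}: for each $(i,j) \in [k] \times [k]$, the paths $R_i$ and $T_j$ are vertex-disjoint in \Gvert. The key observation is that the only grid in \Gint that $\CanInter(\gamma_i; a_i \leadsto b_i)$ and $\CanInter(\gamma_j; c_j \leadsto d_j)$ both traverse is $D_{i,j}$, and they meet in that grid at the single black vertex $\w_{i,j}^{\gamma_i, \gamma_j}$. So in \Gvert, potential shared vertices can only come from the split-replacement of $\w_{i,j}^{\gamma_i, \gamma_j}$. I then argue that $(\gamma_i, \gamma_j) \in S_{i,j}$: if $i = j$ this is immediate from $S_{i,i} = \{(a,a) : a \in [N]\}$, while if $i \neq j$, the fact that $X$ is a $k$-clique gives $v_{\gamma_i} - v_{\gamma_j} \in E(G)$, hence $(\gamma_i, \gamma_j) \in S_{i,j}$ by~\autoref{eqn:clique-to-gt-reduction-D}. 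Therefore $\w_{i,j}^{\gamma_i, \gamma_j}$ is \vertsplit into two distinct vertices $\w_{i,j,\Hor}^{\gamma_i, \gamma_j}$ and $\w_{i,j,\Ver}^{\gamma_i, \gamma_j}$.

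To finish, I would invoke the definitions of the horizontal and vertical canonical paths (\autoref{def:hori-canonical-Gvertex} and \autoref{def:vert-canonical-Gvertex}): since $T_j$ enters and leaves $\w_{i,j}^{\gamma_i, \gamma_j}$ through horizontal (left/right) edges, it uses only the $\Hor$-copy; since $R_i$ enters and leaves through vertical (bottom/top) edges, it uses only the $\Ver$-copy. Because these are distinct vertices of \Gvert and no other vertex of \Gvert arises from $\w_{i,j}^{\gamma_i, \gamma_j}$, the paths $R_i$ and $T_j$ share no vertex inside $D_{i,j}^{\vertsplitt}$, and by the disjointness of the remaining grids $R_i$ visits only vertices in $\VerticalVertex(i)$ with the intersection $\VerticalVertex(i) \cap \HorizontalVertex(j)$ contained in $V(D_{i,j}^{\vertsplitt})$. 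Hence $R_i$ and $T_j$ are fully vertex-disjoint.

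The main subtlety — and the only place where the proof differs nontrivially from the edge-disjoint version — is arguing that no \emph{unintended} sharing happens in the mixed case outside the grid $D_{i,j}$. This reduces to the fact (already implicit in the construction) that the matching edges between grids preserve the column/row index, so $R_i$ stays inside column $\gamma_i$ of each grid in its vertical strip and $T_j$ stays inside row $\gamma_j$ of each grid in its horizontal strip; these two loci intersect in one grid and one vertex only, and there the \vertsplit resolves the collision. I expect writing this routing carefully to be the main (though still mild) obstacle, since one must also check the \green{green} terminal vertices are distinct across the four families $A, B, C, D$, which is immediate from~\autoref{eqn:A-B-C-D-D}.
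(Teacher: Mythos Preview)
Your proposal is correct and follows essentially the same approach as the paper: the same three-case split via \autoref{def:horizontal-vertical-sets-in-Gvertex}, the same argument that $(\gamma_i,\gamma_j)\in S_{i,j}$ forces $\w_{i,j}^{\gamma_i,\gamma_j}$ to be \vertsplit, and the same conclusion that $R_i$ picks up only $\w_{i,j,\Ver}^{\gamma_i,\gamma_j}$ while $T_j$ picks up only $\w_{i,j,\Hor}^{\gamma_i,\gamma_j}$. The extra care you propose about ruling out unintended sharing outside $D_{i,j}$ is just making explicit what the paper compresses into the single remark that the two canonical paths in \Gint meet in exactly one vertex.
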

\begin{proof}

    By~\autoref{def:horizontal-vertical-sets-in-Gvertex}, it follows that every edge of the path $R_i$ has both endpoints in $\VerticalVertex(i)$ for every $i\in [k]$. Since $\VerticalVertex(i) \cap \VerticalVertex(i')=\emptyset$ for every $1\leq i\neq i'\leq k$, it follows that the collection of paths $\{R_1, R_2, \ldots, R_k\}$ are pairwise vertex-disjoint.

    By~\autoref{def:horizontal-vertical-sets-in-Gvertex}, it follows that every edge of the path $T_j$ has both endpoints in $\HorizontalVertex(j)$ for every $j\in [k]$. Since $\HorizontalVertex(j) \cap \HorizontalVertex(j')=\emptyset$ for every $1\leq j\neq j'\leq k$, it follows that the collection of paths $\{T_1, T_2, \ldots, T_k\}$ are pairwise vertex-disjoint.

    It remains to show that every pair of paths which contains one path from $\{R_1, R_2, \ldots, R_k\}$ and other path from $\{T_1, T_2, \ldots, T_k\}$ are vertex-disjoint.
    \begin{claim}
        \label{clm:reduction-vertex-disjoint}
        \normalfont
        For each $(i,j)\in [k]\times [k]$, the paths $R_i$ and $T_j$ are vertex-disjoint in \Gvert.
    \end{claim}

    \begin{proof}
        Fix any $(i,j)\in [k]\times [k]$. First we argue that the vertex $\w_{i,j}^{\gamma_i, \gamma_j}$ is \vertsplit, i.e., $(\gamma_i, \gamma_j)\in S_{i,j}$:
        \begin{itemize}
            \item If $i=j$ then $\gamma_i = \gamma_j$ and hence by~\autoref{eqn:clique-to-gt-reduction-D} we have $(\gamma_i, \gamma_j)\in S_{i,j}$
            \item If $i\neq j$, then $v_{\gamma_i} - v_{\gamma_j}\in E(G)$ since $X$ is a clique. Again, by~\autoref{eqn:clique-to-gt-reduction-D} we have $(\gamma_i, \gamma_j)\in S_{i,j}$.
        \end{itemize}
        Hence, by~\autoref{def:splitting-operation-vertex}, it follows that the vertex $\w_{i,j}^{\gamma_i, \gamma_j}$ is \vertsplit, i.e., $\w_{i,j,\Hor}^{\gamma_i, \gamma_j}\neq \w_{i,j,\Ver}^{\gamma_i, \gamma_j}$.

        By the construction of \Gint (\autoref{fig:main}) and definitions of canonical paths (\autoref{def:hori-canonical-G1} and~\autoref{def:vert-canonical-G1}), it is easy to verify that any pair of horizontal canonical path and vertical canonical path in \Gint have only one vertex in common.

        By the splitting operation (\autoref{def:splitting-operation-vertex}) and definitions of the paths $R_i$ (\autoref{def:vert-canonical-Gvertex}) and $T_j$ (\autoref{def:hori-canonical-Gvertex}), it follows that
        \begin{itemize}
            \item $R_i$ contains $\w_{i,j,\Ver}^{\gamma_i, \gamma_j}$ but does not contain $\w_{i,j,\Hor}^{\gamma_i, \gamma_j}$
            \item $T_j$ contains $\w_{i,j,\Hor}^{\gamma_i, \gamma_j}$ but does not contain $\w_{i,j,\Ver}^{\gamma_i, \gamma_j}$
        \end{itemize}
        Hence, it follows that $R_i$ and $T_j$ are vertex-disjoint.
    \end{proof}
    This concludes the proof of~\autoref{lem:completeness-vdsp-disjoint}.
\end{proof}
\medskip

\noindent From~\autoref{lem:completeness-vdsp-shortest} and~\autoref{lem:completeness-vdsp-disjoint}, it follows that the collection of paths given by $\mathcal{Q}=\{R_1, R_2, \ldots, R_k,$ $T_1, T_2, \ldots, T_K\}$ forms a solution for the instance $(D_{\vertex}, \mathcal{T})$ of Directed-$2k$-\VDSP.

\subsection{
    \texorpdfstring{\underline{Soundness}: $(\frac{1}{2} +\epsilon)$-fraction of the pairs in the instance $(D_{\vertex}, \mathcal{T})$ of Directed-$2k$-VDSP can be satisfied $\Rightarrow$ $G$ has a clique of size $\geq 2\epsilon \cdot k$}
    {Soundness: (1/2 + theta)-fraction of the pairs in the instance (Dvert,T) of Directed-2k-VDSP can be satisfied -> G has a clique of size >= 2 x theta x k}}
\label{sec:2kvdsp-to-clique}

In this section we show that if at least $(\frac{1}{2} +\epsilon)$-fraction of the $2k$ pairs from the instance $(D_{\vertex}, \mathcal{T})$ of $2k$-\VDSP can be satisfied then the graph $G$ has a clique of size $2\epsilon \cdot k$.
Let $\mathcal{P}$ be a collection of paths in \Gvert which satisfies at least $(\frac{1}{2} +\epsilon)$-fraction of the $2k$ terminal pairs from the instance $(D_{\vertex}, \mathcal{T})$ of $2k$-\VDSP.
\medskip

\begin{definition}
    An index $i \in [k]$ is called \emph{good} if both the terminal pairs $a_i \leadsto b_i$ and $c_i \leadsto d_i$ are satisfied by $\mathcal{P}$.
    \label{def:good-for-vertex}
\end{definition}
\medskip

The proof of the next lemma, which gives a lower bound on the number of good indices, is exactly the same as that of~\autoref{lem:good_size} and we do not repeat it here.
\medskip

\begin{lemma}
    \label{lem:good_size-vertex}
    \normalfont
    Let $Y \subseteq [k]$ be the set of good indices. Then $|Y| \ge 2\epsilon \cdot k$.
\end{lemma}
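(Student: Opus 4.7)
The plan is to carry out the identical double-counting argument that proves Lemma~\ref{lem:good_size}, since that argument only uses the definition of ``good index'' and the cardinality $|\mathcal{T}| = 2k$, and is insensitive to whether the disjointness constraint is on edges or vertices.

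Concretely, I would first partition $[k]$ into $Y$ and $[k]\setminus Y$, and upper bound the number of terminal pairs satisfied by $\mathcal{P}$. By Definition~\ref{def:good-for-vertex}, every $i \in Y$ contributes exactly two satisfied pairs (namely $(a_i, b_i)$ and $(c_i, d_i)$), while every $i \in [k]\setminus Y$ contributes at most one such pair. Summing over all indices, the total number of satisfied pairs is at most $2|Y| + (k - |Y|) = k + |Y|$.

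Next, I would invoke the standing hypothesis on $\mathcal{P}$: it satisfies at least a $(\tfrac{1}{2} + \epsilon)$-fraction of the $|\mathcal{T}| = 2k$ terminal pairs, i.e., at least $(\tfrac{1}{2} + \epsilon) \cdot 2k = k + 2\epsilon k$ pairs. Chaining this lower bound with the upper bound $k + |Y|$ from the previous step gives $k + 2\epsilon k \leq k + |Y|$, which rearranges to $|Y| \geq 2\epsilon \cdot k$, as required.

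I do not anticipate any obstacle, as the argument is a one-line pigeonhole: nothing in the proof of Lemma~\ref{lem:good_size} relied on the edge-disjointness of the paths, the structure of \Gedge, or the \twosplit/\onesplit operations, so transferring it to the vertex-disjoint setting with \Gvert requires only swapping in Definition~\ref{def:good-for-vertex} in place of Definition~\ref{def:good-for-edge}.
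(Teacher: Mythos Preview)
Your proposal is correct and matches the paper's approach exactly: the paper explicitly states that the proof of Lemma~\ref{lem:good_size-vertex} is identical to that of Lemma~\ref{lem:good_size} and omits the details, and you have faithfully reproduced that counting argument.
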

\medskip

\begin{lemma}
    \label{lem:good-equal-vertex}
    \normalfont
    If $i\in [k]$ is good, then there exists $\delta_i \in [N]$ such that the two paths in $\mathcal{P}$ satisfying $a_i\leadsto b_i$ and $c_i\leadsto d_i$ in \Gedge are the vertical canonical path  $\CanVertex(\delta_i\ ;\ a_i \leadsto b_i)$ and the horizontal canonical path $\CanVertex(\delta_i\ ;\ c_i \leadsto d_i)$ respectively.
\end{lemma}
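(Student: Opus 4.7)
The plan is to mirror the structure of \autoref{lem:good-equal-edge} from the edge-disjoint case, with the crossing-point argument restated for vertex-disjointness. First, I would fix a good index $i \in [k]$ and let $P_1, P_2 \in \mathcal{P}$ be the paths satisfying the terminal pairs $(a_i, b_i)$ and $(c_i, d_i)$ respectively. Since $P_1$ is a shortest $a_i \leadsto b_i$ path in \Gvert, \autoref{lem:vertical-canonical-is-shortest-G-vertex}(ii) shows that $P_1$ is an image of $\CanInter(\alpha ;\ a_i \leadsto b_i)$ for some $\alpha \in [N]$; symmetrically, \autoref{lem:horizontal-canonical-is-shortest-G-vertex}(ii) gives that $P_2$ is an image of $\CanInter(\beta ;\ c_i \leadsto d_i)$ for some $\beta \in [N]$. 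A small observation to dispatch first: by \autoref{def:splitting-operation-vertex}, only the $\w_{\Hor}$ copy is incident to horizontal edges and only the $\w_{\Ver}$ copy is incident to vertical edges at a \vertsplit vertex, so an image of a horizontal canonical path is \emph{forced} to traverse $\w_{\Hor}$ and an image of a vertical canonical path is forced to traverse $\w_{\Ver}$. Consequently $P_1 = \CanVertex(\alpha ;\ a_i \leadsto b_i)$ and $P_2 = \CanVertex(\beta ;\ c_i \leadsto d_i)$ exactly, not merely up to image choice.

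Next, I would zoom in on the grid $D_{i,i}$, where $P_1$ travels through ``column'' $\alpha$ and $P_2$ through ``row'' $\beta$, so both paths pass through the location of the vertex $\w_{i,i}^{\alpha,\beta}$ of \Gint. The key claim, analogous to \autoref{clm:must-be-two-split-edge-i-i}, is that $\w_{i,i}^{\alpha,\beta}$ must be \vertsplit. If it were \notsplit instead, then by \autoref{def:splitting-operation-vertex} we would have $\w_{i,i,\Hor}^{\alpha,\beta} = \w_{i,i,\Ver}^{\alpha,\beta}$, and the definitions \autoref{def:hori-canonical-Gvertex} and \autoref{def:vert-canonical-Gvertex} place this common vertex on both $P_1$ and $P_2$, contradicting the vertex-disjointness guaranteed by $\mathcal{P}$.

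Finally, since $\w_{i,i}^{\alpha,\beta}$ is \vertsplit, \autoref{def:splitting-operation-vertex} gives $(\alpha,\beta) \in S_{i,i}$, and by \autoref{eqn:clique-to-gt-reduction-D} we have $S_{i,i} = \{(a,a) : 1 \le a \le N\}$, forcing $\alpha = \beta$. Setting $\delta_i := \alpha = \beta$ completes the proof. I do not anticipate a real obstacle; the only subtlety worth being careful about is the first observation, namely that the directed vertex-split gadget makes the notion of ``image'' in \autoref{def:hor-image-of-a-path-G-vert} and \autoref{def:vert-image-of-a-path-G-vert} collapse to the canonical path itself, which is precisely why the lemma can assert equality with $\CanVertex(\delta_i ;\ \cdot)$ rather than merely ``image of $\CanInter$'' as in the edge-disjoint analogue.
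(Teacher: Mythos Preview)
Your proposal is correct and follows essentially the same approach as the paper's proof: invoke the shortest-path characterization lemmas to pin down $\alpha,\beta$, argue that $\w_{i,i}^{\alpha,\beta}$ must be \vertsplit (else the single \notsplit vertex lies on both paths, violating vertex-disjointness), and conclude $\alpha=\beta$ from $S_{i,i}$. Your explicit observation that the directed \vertsplit gadget forces each image to coincide with the canonical path itself is a clarification the paper glosses over, but otherwise the arguments are identical.
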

\begin{proof}
    If $i$ is good, then by~\autoref{def:good-for-vertex} both the pairs $a_i \leadsto b_i$ and $c_i \leadsto d_i$ are satisfied by $\mathcal{P}$. Let $P_1, P_2\in \mathcal{P}$ be the paths that satisfy the terminal pairs $(a_i, b_i)$ and $(c_i, d_i)$ respectively.
    Since $P_1$ is a shortest $a_i \leadsto b_i$ path in \Gvert, by~\autoref{lem:vertical-canonical-is-shortest-G-vertex}(ii) it follows that $P_1$ is the vertical canonical path $\CanVertex(\alpha\ ;\ a_i \leadsto b_i)$ for some $\alpha\in [N]$. Since $P_2$ is a shortest $c_i \leadsto d_i$ path in \Gvert, by~\autoref{lem:horizontal-canonical-is-shortest-G-vertex}(ii) it follows that $P_2$ is the horizontal canonical path $\CanVertex(\beta\ ;\ c_i \leadsto d_i)$ for some $\beta\in [N]$.

    Using the fact that $P_1$ and $P_2$ are vertex-disjoint in \Gvert, we now claim that $\w_{i,i}^{\alpha,\beta}$ is \vertsplit:
    \begin{claim}
        \normalfont
        The vertex $\w_{i,i}^{\alpha,\beta}$ is \vertsplit by the splitting operation of~\autoref{def:splitting-operation-vertex}.
        \label{clm:must-be-vert-split-vertex-i-i}
    \end{claim}
    \begin{proof}
        By~\autoref{def:splitting-operation-vertex}, every black vertex of \Gint is either \vertsplit or \notsplit. If $\w_{i,i}^{\alpha,\beta}$ was \notsplit (\autoref{fig:split-vertex-not}), then by~\autoref{def:hori-canonical-Gvertex} and~\autoref{def:vert-canonical-Gvertex}, the vertex $\w_{i,i,\Hor}^{\alpha,\beta}=\w_{i,i,\Ver}^{\alpha,\beta}$ belongs to both $P_1$ and $P_2$ contradicting the fact that they are vertex-disjoint.
    \end{proof}
    By~\autoref{clm:must-be-vert-split-vertex-i-i}, we know that the vertex $\w_{i,i}^{\alpha,\beta}$ is \vertsplit. Hence, from~\autoref{eqn:clique-to-gt-reduction-D} and~\autoref{def:splitting-operation-vertex}, it follows that $\alpha=\beta$ which concludes the proof of the lemma.
\end{proof}
\medskip

\begin{lemma}
    \label{lem:good_vertices}
    \normalfont
    If both $i,j \in [k]$ are good and $i \neq j$, then $v_{\delta_i}-v_{\delta_j} \in E(G)$.
\end{lemma}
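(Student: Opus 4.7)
The plan is to mirror the argument of \autoref{lem:good_edges} from the edge-disjoint case, simply replacing \emph{edge-disjointness} with \emph{vertex-disjointness} and \twosplit with \vertsplit. Since $i$ and $j$ are both good, \autoref{def:good-for-vertex} gives paths $Q_1, Q_2 \in \mathcal{P}$ satisfying $(a_i, b_i)$ and $(c_j, d_j)$ respectively, and these two paths are vertex-disjoint. Applying \autoref{lem:good-equal-vertex} to $i$ and $j$, I would conclude that $Q_1$ is (an image of) the vertical canonical path $\CanVertex(\delta_i \ ;\ a_i \leadsto b_i)$ and $Q_2$ is (an image of) the horizontal canonical path $\CanVertex(\delta_j \ ;\ c_j \leadsto d_j)$ for some $\delta_i, \delta_j \in [N]$.

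Next I would zoom in on the grid cell $D_{i,j}$ and the vertex $\w_{i,j}^{\delta_i, \delta_j}$. By the construction of canonical paths in \Gint (\autoref{def:hori-canonical-G1} and \autoref{def:vert-canonical-G1}), the horizontal canonical path indexed by $\delta_j$ traverses row $\delta_j$ of $D_{i,j}$ and the vertical canonical path indexed by $\delta_i$ traverses column $\delta_i$, so they meet precisely at $\w_{i,j}^{\delta_i, \delta_j}$. Tracing the splitting recipe in \autoref{def:hori-canonical-Gvertex} and \autoref{def:vert-canonical-Gvertex} (and the corresponding image definitions), $Q_2$ enters the cell via $\w_{i,j,\Hor}^{\delta_i,\delta_j}$ while $Q_1$ enters via $\w_{i,j,\Ver}^{\delta_i,\delta_j}$.

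The crucial step, exactly analogous to \autoref{clm:must-be-two-split-edge-i-j}, is to argue that $\w_{i,j}^{\delta_i, \delta_j}$ must be \vertsplit. If it were \notsplit, then by \autoref{def:splitting-operation-vertex} we would have $\w_{i,j,\Hor}^{\delta_i,\delta_j} = \w_{i,j,\Ver}^{\delta_i,\delta_j}$, so $Q_1$ and $Q_2$ would share this common vertex, contradicting vertex-disjointness. Hence $(\delta_i, \delta_j) \in S_{i,j}$ by \autoref{def:splitting-operation-vertex}, and since $i \neq j$, \autoref{eqn:clique-to-gt-reduction-D} then yields $v_{\delta_i} - v_{\delta_j} \in E(G)$, as required. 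No real obstacle arises here; the only care needed is to invoke the correct half of the \vertsplit (\Hor versus \Ver) when showing the two canonical paths pass through distinct copies, and this is already pre-packaged into the definitions of the canonical paths in \Gvert.
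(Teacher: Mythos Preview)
Your proposal is correct and follows essentially the same approach as the paper: obtain $Q_1, Q_2$ from goodness, invoke \autoref{lem:good-equal-vertex} to pin down their indices $\delta_i, \delta_j$, then argue by contradiction that $\w_{i,j}^{\delta_i,\delta_j}$ must be \vertsplit (else the shared vertex $\w_{i,j,\Hor}^{\delta_i,\delta_j}=\w_{i,j,\Ver}^{\delta_i,\delta_j}$ would violate vertex-disjointness), and conclude via \autoref{eqn:clique-to-gt-reduction-D}. The paper's proof is structured identically, with the key step packaged as \autoref{clm:must-be-vert-split-edge-i-j}.
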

\begin{proof}
    Since $i$ and $j$ are good, by~\autoref{def:good-for-vertex}, there are paths $Q_1, Q_2 \in \mathcal{P}$ satisfying the pairs $(a_i, b_i), (c_j, d_j)$ respectively. By~\autoref{lem:good-equal-vertex}, it follows that
    \begin{itemize}
        \item $Q_1$ is the vertical canonical path $\CanVertex(\delta_i\ ;\ a_i\leadsto b_i)$.
        \item $Q_2$ is the horizontal canonical path $\CanVertex(\delta_j\ ;\ c_j\leadsto d_j)$.
    \end{itemize}

    Using the fact that $Q_1$ and $Q_2$ are vertex-disjoint in \Gvert, we now claim that $\w_{i,j}^{\delta_i,\delta_j}$ is \vertsplit:
    \begin{claim}
        \normalfont
        The vertex $\w_{i,j}^{\delta_i,\delta_j}$ is \vertsplit by the splitting operation of~\autoref{def:splitting-operation-vertex}.
        \label{clm:must-be-vert-split-edge-i-j}
    \end{claim}
    \begin{proof}
        By~\autoref{def:splitting-operation-vertex}, every black vertex of \Gint is either \vertsplit or \notsplit. If $\w_{i,j}^{\delta_j,\delta_j}$ was \notsplit (\autoref{fig:split-vertex-not}), then by~\autoref{def:hori-canonical-Gvertex} and~\autoref{def:vert-canonical-Gvertex}, the vertex $\w_{i,j,\Hor}^{\delta_i,\delta_j}=\w_{i,j,\Ver}^{\delta_i,\delta_j}$ belongs to both $Q_1$ and $Q_2$ contradicting the fact that they are vertex-disjoint
    \end{proof}
    By~\autoref{clm:must-be-vert-split-edge-i-j}, we know that the vertex $\w_{i,j}^{\delta_i,\delta_j}$ is \vertsplit. Since $i\neq j$, from~\autoref{eqn:clique-to-gt-reduction-D} and~\autoref{def:splitting-operation-vertex}, it follows that $v_{\delta_i}-v_{\delta_j}\in E(G)$ which concludes the proof of the lemma.
\end{proof}
\medskip

From~\autoref{lem:good_size-vertex} and~\autoref{lem:good_vertices}, it follows that the set $X:=\{v_{\delta_i}\ : i\in Y\}$ is a clique of size $\geq (2\epsilon)k$ in $G$.

\subsection{Proof of \autoref{thm:inapprox-vertex-result} and \autoref{thm:hardness-vertex-result}}
\label{sec:proof-of-main-theorem-vertex}

Finally we are ready to prove \autoref{thm:inapprox-vertex-result} and \autoref{thm:hardness-vertex-result}, which are restated below.
\medskip

\apxdirectedvertexthm*
\exactdirectedvertexthm*

\begin{proof}{\textbf{\autoref{thm:hardness-vertex-result}}}

    Given an instance $G$ of \kclique, we can use the construction from \autoref{sec:construction-of-Gvertex} to build an instance $(D_{\vertex}, \mathcal{T})$ of Directed-$2k$-\VDSP such that \Gvert is a $1$-planar DAG (\autoref{clm:Gvertex-is-1-planar-and-dag}). The graph \Gvert has $n=O(N^2 k^2)$ vertices (\autoref{clm:size-of-G-vertex}), and it is easy to observe that it can be constructed from $G$ (via first constructing \Gint) in $\poly(N,k)$ time.

    It is known that \kclique is W[1]-hard parameterized by $k$, and under ETH cannot be solved in $f(k)\cdot N^{o(k)}$ time for any computable function $f$~\cite{chen-hardness}. Combining the two directions from~\autoref{sec:2kvdsp-to-clique} (with $\epsilon = 0.5$) and~\autoref{sec:clique-to-2kvdsp}  we obtain a parameterized reduction from an instance $(G,k)$ of \kclique with $N$ vertices to an instance $(D_{\vertex}, \mathcal{T})$ of Directed-$2k$-\VDSP where \Gvert is a $1$-planar DAG (\autoref{clm:Gvertex-is-1-planar-and-dag}) and has $O(N^2 k^2)$ vertices (\autoref{clm:size-of-G-vertex}). As a result, it follows that $k$-\VDSP on $1$-planar DAGs is W[1]-hard parameterized by number $k$ of terminal pairs, and under ETH cannot be solved in $f(k)\cdot n^{o(k)}$ time where $f$ is any computable function and $n$ is the number of vertices.
\end{proof}
\medskip

The proof of~\autoref{thm:inapprox-vertex-result} is very similar to that of~\autoref{thm:inapprox-edge-result}, but we repeat the arguments here given the importance of~\autoref{thm:inapprox-vertex-result} in the paper.

\medskip

\begin{proof}{\textbf{\autoref{thm:inapprox-vertex-result}}}

    Let $\delta$ and $r_0$ be the constants from~\autoref{thm:cli_inapprox}. Fix any constant $\epsilon\in (0,1/2]$. Set $\zeta = \dfrac{\delta \epsilon}{2}$ and $k=\max \Big\{\dfrac{1}{2\zeta} , \dfrac{r_0}{2\epsilon}\Big\}$.

    Suppose to the contrary that there exists an algorithm $\mathbb{A}_{\VDSP}$ running in $f(k)\cdot n^{\zeta k}$ time (for some computable function $f$) which given an instance of Directed-$k$-\VDSP with $n$ vertices can distinguish between the following two cases:
    \begin{itemize}
        \item[(1)] All $k$ pairs of the Directed-$k$-\VDSP instance can be satisfied
        \item[(2)] The max number of pairs of the Directed-$k$-\VDSP instance that can be satisfied is less than $(\frac{1}{2}+\epsilon)\cdot k$
    \end{itemize}
    We now design an algorithm $\mathbb{A}_{\Clique}$ that contradicts~\autoref{thm:cli_inapprox} for the values $q=k$ and $r=(2\epsilon)k$. Given an instance of $(G,k)$ of \kclique with $N$ vertices, we apply the reduction from~\autoref{sec:construction-of-Gvertex} to construct an instance $(D_{\vertex}, \mathcal{T})$ of Directed-$2k$-\VDSP where \Gvert has $n=O(N^2 k^2)$ vertices (\autoref{clm:size-of-G-vertex}). It is easy to see that this reduction takes $O(N^2 k^2)$ time as well. We now show that the number of pairs which can be satisfied from the Directed-$2k$-\VDSP instance is related to the size of the max clique in $G$:
    \begin{itemize}
        \item If $G$ has a clique of size $q=k$, then by~\autoref{sec:clique-to-2kvdsp} it follows that all $2k$ pairs of the instance $(D_{\vertex},\mathcal{T})$ of Directed-$2k$-\VDSP can be satisfied.
        \item If $G$ does not have a clique of size $r=2\epsilon k$, then we claim that the max number of pairs in $\mathcal{T}$ that can be satisfied is less than $(\frac{1}{2}+\epsilon)\cdot 2k$. This is because if at least $(\frac{1}{2}+\epsilon)$-fraction of pairs in $\mathcal{T}$ could be satisfied then by~\autoref{sec:2kvdsp-to-clique} the graph $G$ would have a clique of size $\geq (2\epsilon) k=r$.
    \end{itemize}
    Since the algorithm $\mathbb{A}_{\VDSP}$ can distinguish between the two cases of all $2k$-pairs of the instance $(D_{\vertex}, \mathcal{T})$ can be satisfied or only less than $(\frac{1}{2}+\epsilon)\cdot2k$ pairs can be satisfied, it follows that $\mathbb{A}_{\Clique}$ can distinguish between the cases $\Clique(G)\geq q$ and $\Clique(G)<r$.

    The running time of the algorithm $\mathbb{A}_{\Clique}$ is the time taken for the reduction from~\autoref{sec:construction-of-Gvertex} (which is $O(N^2 k^2)$) plus the running time of the algorithm $\mathbb{A}_{\VDSP}$ which is $f(2k)\cdot n^{\zeta\cdot 2k}$. It remains to show that this can be upper bounded by $g(q,r)\cdot N^{\delta r}$ for some computable function $g$:
    \begin{align*}
        & O(N^2 k^2) + f(2k)\cdot n^{\zeta\cdot 2k}                                                                                                                  \\
        & \leq  c\cdot N^2 k^2 + f(2k)\cdot d^{\zeta\cdot 2k}\cdot (N^2 k^2)^{\zeta\cdot 2k} \tag{for some constants $c,d\geq 1$: this follows since $n=O(N^2 k^2)$} \\
        & \leq c\cdot N^2 k^2 + f'(k)\cdot N^{2\zeta\cdot 2k} \tag{where $f'(k)=f(2k)\cdot d^{\zeta\cdot 2k}\cdot k^{2\zeta\cdot 2k}$}                               \\
        & \leq 2c\cdot f'(k)\cdot N^{2\zeta\cdot 2k} \tag{since $4\zeta k\geq 2$ implies $f'(k)\geq k^2$ and $N^{2\zeta\cdot 2k}\geq N^2$}                           \\
        & = 2c\cdot f'(k) \cdot N^{\delta r} \tag{since $\zeta=\frac{\delta \epsilon}{2}$ and $r=(2\epsilon)k$}
    \end{align*}
    Hence, we obtain a contradiction to~\autoref{thm:cli_inapprox} with $q=k, r=(2\epsilon)k$ and $g(k)=2c\cdot f'(k)=2c\cdot f(2k)\cdot d^{\zeta\cdot 2k}\cdot k^{2\zeta\cdot 2k}$.
\end{proof}
\medskip

\begin{remark}
    \label{rmk:Max-Degree-2-vertex}
    \normalfont
    \textbf{(reducing the in-degree and out-degree of \Gvert)}
    By exactly the same process as described in~\autoref{rmk:hori-verti-sets-G}, we can reduce the max in-degree and max out-degree of \Gvert to be at most two whilst maintaining the properties that $n=|V(D_{\vertex})|=O(N^{2}k^{2})$ and that \Gvert can be constructed in $\poly(N,k)$ time. The splitting operation (\autoref{def:splitting-operation-vertex}) is applied only to black vertices, hence all the proofs from~\autoref{sec:characterizing-shortest-in-G-vertex},~\autoref{sec:clique-to-2kvdsp} and~\autoref{sec:2kvdsp-to-clique} go through with minor modifications.
\end{remark}
\medskip

	\section{Setting up the reductions for \texorpdfstring{$k$}{k}-\dsp on undirected graphs}
\label{sec:setting-up-the-U}

This section describes the common part of the reductions from \kclique to \UEDSP and \UVDSP, which corresponds to the top of the right-hand branch in~\autoref{fig:flowchart}. First, in~\autoref{sec:construction-of-GintU} we construct the intermediate directed graph \GintU which is later used to obtain the graphs \GedgeU (\autoref{sec:fpt-inapprox-edsp-planar-undir}) and \GvertU (\autoref{sec:fpt-inapprox-vdsp-undir}) used to obtain lower bounds for \UEDSP and \UVDSP respectively.
In~\autoref{sec:characterizing-shortest-in-G-int-U}, we then characterize shortest paths (between terminal pairs) in this intermediate graph \GintU.

We note that the intermediate graph \GintU graph is (essentially) the undirected version of the graph that was constructed for the W[1]-hardness reduction of $k$-Directed-\EDP from \gtleq by~\cite{rajesh-ciac-21}.

\subsection{Construction of the intermediate graph \GintU}
\label{sec:construction-of-GintU}

Given an instance $G=(V,E)$ of \kclique with $V=\{v_1, v_2, \ldots, v_N\}$, we now build an instance of an intermediate graph \GintU (\autoref{fig:mainUndir}). This graph \GintU is later modified to obtain the final graphs $U_{\edge}$ (\autoref{sec:construction-of-Gedge-U}) and $U_{\vertex}$, from which we obtain lower bounds for the Undirected-$k$-\EDSP and Undirected-$k$-\VDSP problems, respectively.

Before constructing the graph \GintU, we first define the following sets:
\begin{equation}
    \label{eqn:clique-to-gt-reduction-U}
    \begin{aligned}
        \text{For each}\ i\in [k],\ \text{let}\ S_{i,i}: = \{(a,a)\ :\ 1\leq a\leq N\} \\
        \text{For each pair}\ 1\leq i\neq j\leq k,\ \text{let}\ S_{i,j}:= \{ (a,b)\ :\ v_{a}-v_{b} \in E  \}
    \end{aligned}
\end{equation}

\noindent
We now construct the undirected graph \GintU via the following steps (refer to~\autoref{fig:mainUndir}):

\begin{figure}[!p]
\centering
\begin{tikzpicture}[scale=0.55]

\foreach \i in {0,1,2}
    \foreach \j in {0,1,2}
{
\begin{scope}[shift={(6*\i,6*\j)}]

        \foreach \x in {1,2,...,5}
        \foreach \y in {1,2,...,5}
    {
        \draw [black] plot [only marks, mark size=3, mark=*] coordinates {(\x,\y)};
    }

        \foreach \x in {1,2,...,5}
    \foreach \y in {1,2,3,4}
    {
        \path (\x,\y) node(a) {} (\x,\y+1) node(b) {};
        \draw[thick] (a) -- (b);
    }

        \foreach \y in {1,2,...,5}
        \foreach \x in {1,2,3,4}
    {
        \path (\x,\y) node(a) {} (\x+1,\y) node(b) {};
        \draw[thick] (a) -- (b);
    }

\end{scope}
}

\foreach \i in {0,1,2}
\foreach \j in {1,2}
{
\begin{scope}[shift={(6*\i,6*\j-6)}]

\foreach \x in {1,2,...,5}
{

    \path (\x,5) node(a) {} (\x,7) node(b) {};
        \draw[red,very thick] (a) -- (b);

}

\end{scope}
}

\foreach \j in {0,1,2}
\foreach \i in {0,1}
{
\begin{scope}[shift={(6*\i,6*\j)}]

\foreach \y in {1,2,...,5}
{
        \path (5,\y) node(a) {} (7,\y) node(b) {};
        \draw[red,very thick] (a) -- (b);
}

\end{scope}
}

\draw [green] plot [only marks, mark size=3, mark=*] coordinates {(-1,3)}
node[label={[xshift=-3mm,yshift=-4mm] $c_{1}$}] {} ;

\draw [green] plot [only marks, mark size=3, mark=*] coordinates {(-1,9)}
node[label={[xshift=-3mm,yshift=-4mm] $c_{2}$}] {} ;

\draw [green] plot [only marks, mark size=3, mark=*] coordinates {(-1,15)}
node[label={[xshift=-3mm,yshift=-4mm] $c_{3}$}] {} ;

\draw [green] plot [only marks, mark size=3, mark=*] coordinates {(19,3)}
node[label={[xshift=3mm,yshift=-4mm] $d_{1}$}] {} ;

\draw [green] plot [only marks, mark size=3, mark=*] coordinates {(19,9)}
node[label={[xshift=3mm,yshift=-4mm] $d_{2}$}] {} ;

\draw [green] plot [only marks, mark size=3, mark=*] coordinates {(19,15)}
node[label={[xshift=3mm,yshift=-4mm] $d_{3}$}] {} ;

\foreach \k in {0,1,2}
{
\begin{scope}[shift={(0,6*\k)}]
\foreach \y in {1,2,...,5}
    {
        \path (-1,3) node(a) {} (1,\y) node(b) {};
        \draw[magenta,very thick] (a) -- (b);

        \path (17,\y) node(a) {} (19,3) node(b) {};
        \draw[magenta,very thick] (a) -- (b);
    }
\end{scope}
}

\draw [green] plot [only marks, mark size=3, mark=*] coordinates {(3,-1)}
node[label={[xshift=0mm,yshift=-7mm] $a_{1}$}] {} ;

\draw [green] plot [only marks, mark size=3, mark=*] coordinates {(9,-1)}
node[label={[xshift=0mm,yshift=-7mm] $a_{2}$}] {} ;

\draw [green] plot [only marks, mark size=3, mark=*] coordinates {(15,-1)}
node[label={[xshift=0mm,yshift=-7mm] $a_{3}$}] {} ;

\draw [green] plot [only marks, mark size=3, mark=*] coordinates {(3,19)}
node[label={[xshift=0mm,yshift=0mm] $b_{1}$}] {} ;

\draw [green] plot [only marks, mark size=3, mark=*] coordinates {(9,19)}
node[label={[xshift=0mm,yshift=0mm] $b_{2}$}] {} ;

\draw [green] plot [only marks, mark size=3, mark=*] coordinates {(15,19)}
node[label={[xshift=0mm,yshift=0mm] $b_{3}$}] {} ;

\foreach \k in {0,1,2}
{
\begin{scope}[shift={(6*\k,0)}]
\foreach \x in {1,2,...,5}
    {
        \path (3,-1) node(a) {} (\x,1) node(b) {};
        \draw[magenta,very thick] (a) -- (b);

        \path (3,19) node(a) {} (\x,17) node(b) {};
        \draw[magenta,very thick] (b) -- (a);
    }
\end{scope}
}

\draw [rotate=45,black] plot [only marks, mark size=0, mark=*] coordinates
{(0,0)}
node[label={[rotate=45,xshift=0mm,yshift=-2mm] Origin}] {} ;

\end{tikzpicture}
\caption{The intermediate undirected graph \GintU constructed from an instance $(G,k)$ of \kclique (with $k=3$ and $N=5$) via the construction described in~\autoref{sec:construction-of-GintU}.
\label{fig:mainUndir}
}
\end{figure}
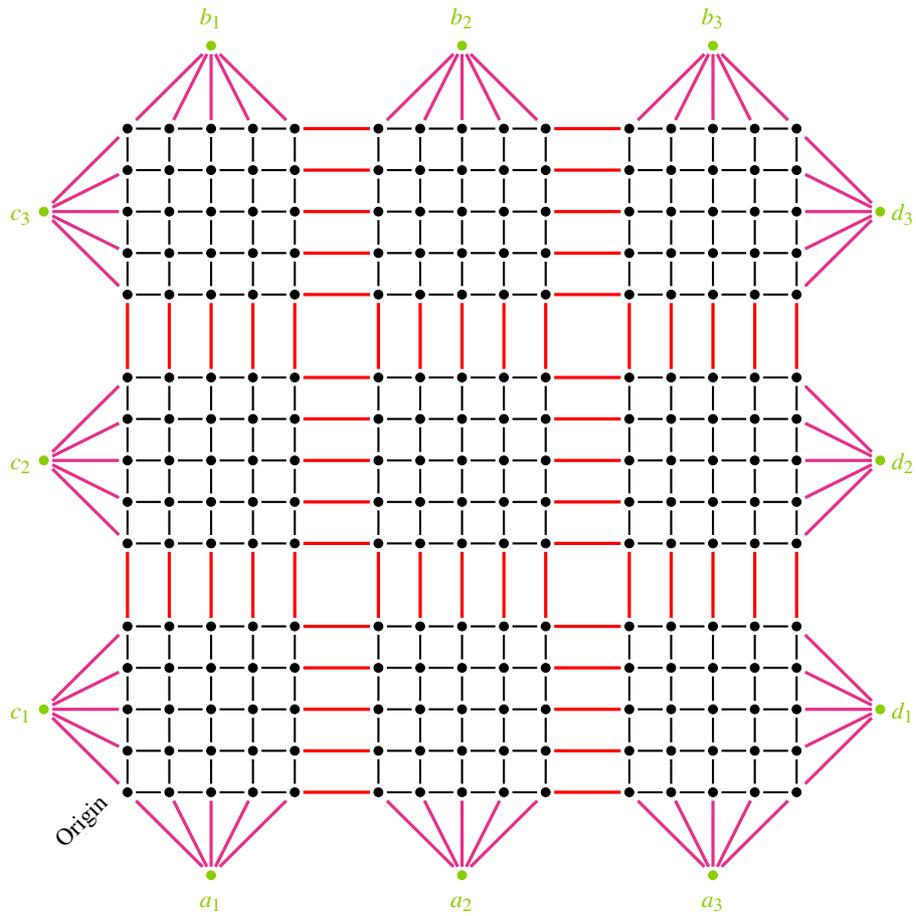

\begin{enumerate}
    \item \textbf{Origin}: The origin (vertex) is marked at the bottom left corner of \GintU (\autoref{fig:mainUndir}). This is defined just so we can view the naming of the vertices as per the usual $X-Y$ coordinate system: increasing horizontally towards the right, and vertically towards the top.

    \item \textbf{Grid (black) vertices and edges}: For each pair $1\leq i,j\leq k$, we introduce an undirected $N\times N$ grid $U_{i,j}$ where the column numbers increase from $1$ to $N$ from left to right, and the row numbers increase from $1$ to $N$ from bottom to top. For each $1\leq q,\ell\leq N$ the vertex in $q^{\text{th}}$ column and $\ell^{\text{th}}$ row of $U_{i,j}$ is denoted by $\w_{i,j}^{q,\ell}$. The vertex set and edge set of $U_{i,j}$ are defined formally as:
    \begin{itemize}

        \item $V(U_{i,j})= \big\{ \w_{i,j}^{q,\ell} : 1\leq q,\ell\leq N \big\}$

        \item $E(U_{i,j}) = \left(\bigcup_{(q,\ell)\in [N]\times [N-1]} \w_{i,j}^{q,\ell} - \w_{i,j}^{q,\ell+1} \right) \cup \left( \bigcup_{(q,\ell)\in [N-1]\times [N]} \w_{i,j}^{q,\ell} - \w_{i,j}^{q+1,\ell} \right)$
    \end{itemize}

    All black vertices have a \emph{cost} of~$1$. All vertices and edges of $U_{i,j}$ are shown in~\autoref{fig:mainUndir} in \black{black}. We later modify the grid $U_{i,j}$ in a problem-specific way (\autoref{def:splitting-operation-edge-U} and~\autoref{def:splitting-operation-vertex-U}) to \emph{represent} the set $S_{i,j}$ defined in~\autoref{eqn:clique-to-gt-reduction-U}.

    For each $1\leq i,j\leq k$ we define the set of \emph{boundary} vertices of the grid $U_{i,j}$ as follows:
    \begin{equation}
        \label{eqn:left-right-top-bottom-G-int-undir}
        \begin{aligned}
            \Le(U_{i,j}) := \big\{ \w_{i,j}^{1,\ell}\ :\ \ell\in [N]  \big\}\ ;\
            \Ri(U_{i,j}) := \big\{ \w_{i,j}^{N,\ell}\ :\ \ell\in [N]  \big\} \,.\\
            \To(U_{i,j}) := \big\{ \w_{i,j}^{\ell,N}\ :\ \ell\in [N]  \big\}\ ;\
            \Bo(U_{i,j}) := \big\{ \w_{i,j}^{\ell,1}\ :\ \ell\in [N]  \big\}
        \end{aligned}
    \end{equation}

    \item \textbf{Arranging the $N\times N$ grids}: As in the directed case, we place the $k^2$ undirected grids $\Big\{ U_{i,j} :\ (i,j)\in [k]\times [k] \Big\}$ into a big $k\times k$ \emph{grid of grids} left to right with increasing $i$ and from bottom to top with increasing $j$. In particular, the grid $U_{1,1}$ is at bottom left corner of the construction, the grid $U_{k,k}$ at the top right corner, and so on.

    \item \textbf{\red{Red} edges for horizontal connections}: For each $(i,j)\in [k-1]\times [k]$, add a set of $N$ edges that form a perfect matching between $\Ri(U_{i,j})$ and $\Le(U_{i+1,j})$ given by $\Matching\left( U_{i,j}, U_{i+1,j}  \right):= \big\{ \w_{i,j}^{N,\ell} - \w_{i+1,j}^{1,\ell}\ :\ \ell\in [N] \big\}$. Note that we can draw these perfect matchings without crossing in the plane (\autoref{fig:mainUndir}).

    \item \textbf{\red{Red} edges for vertical connections}: For each $(i,j)\in [k]\times [k-1]$, add a set of $N$ edges that form a perfect matching between $\To(U_{i,j})$ and $\Bo(U_{i,j+1})$ given by $\Matching\left( U_{i,j}, U_{i,j+1}  \right):= \big\{ \w_{i,j}^{\ell,N} - \w_{i,j+1}^{\ell,1}\ :\ \ell\in [N] \big\}$. Note that we can draw these perfect matchings without crossing in the plane (\autoref{fig:mainUndir}).

    \item \textbf{\green{Green} (terminal) vertices and \magenta{magenta} edges}: For each $i\in [k]$, we define the four sets of terminal below:
    \begin{equation*}
        \label{eqn:A-B-C-D-D-U}
        \begin{aligned}
            A := \big\{ a_i\ :\ i\in [k] \big\}\  \text{forming a bottom row}\ ;\ B := \big\{ b_i\ :\ i\in [k] \big\}\ \text{a top row}\\
            C := \big\{ c_i\ :\ i\in [k] \big\}\  \text{a left column}\ ;\ D := \big\{ d_i\ :\ i\in [k]\big\} \ \text{a right column}
        \end{aligned}
    \end{equation*}

    For each $i\in [k]$, we add the edges (shown in~\autoref{fig:mainUndir} in \magenta{magenta})
    \begin{equation}
        \label{eqn:source-sink-A-B-undir}
        \begin{aligned}
            \Source(A) := \big\{ a_i - \w_{i,1}^{\ell,1}\ :\ \ell\in [N] \big\}\ ;\
            \Sink(B) := \big\{ \w_{i,k}^{\ell,N} - b_{i}\ :\ \ell\in [N] \big\}
        \end{aligned}
    \end{equation}
    For each $j\in [k]$, we add the edges (shown in~\autoref{fig:mainUndir} in \magenta{magenta})

    \begin{equation}
        \label{eqn:source-sink-C-D-undir}
        \begin{aligned}
            \Source(C) := \big\{ c_j - \w_{1,j}^{1,\ell}\ :\ \ell\in [N] \big\}\ ;\
            \Sink(D) := \big\{ \w_{k,j}^{N,\ell} - d_{j}\ :\ \ell\in [N] \big\}
        \end{aligned}
    \end{equation}

\end{enumerate}

\noindent
\medskip
\begin{definition}
    \label{def:lefty-right-topy-bottomy-U}
    \textbf{(four neighbors of each grid vertex in \GintU)} Consider the drawing of \GintU from \autoref{fig:mainUndir}. This gives the natural notion of four neighbors for every black grid vertex: one to the left, right, bottom and top of each. For each (black) grid vertex $\z\in$ \GintU we define these as follows
    \begin{itemize}
        \item $\lefty(\z)$ is the vertex to the left of $\z$ which shares an edge with $\z$
        \item $\bottomy(\z)$ is the vertex below $\z$ which shares an edge with $\z$
        \item $\righty(\z)$ is the vertex to the right of $\z$ which shares an edge with $\z$
        \item $\topy(\z)$ is the vertex above $\z$ which shares an edge with $\z$
    \end{itemize}
    Note that in the case that $\z$ lies on the edge of the grid in \autoref{fig:mainUndir}, up to $2$ of its neighbours are in fact \green{green} terminal vertices.
\end{definition}
\medskip
This completes the construction of the undirected graph \GintU (\autoref{fig:mainUndir}). The next two claims analyze the structure and size of this graph:

\medskip
\begin{claim}
    \label{clm:G-int-U-is-planar-and-dag}
    \normalfont
    \GintU is planar.
\end{claim}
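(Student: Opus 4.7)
The plan is to exhibit an explicit planar embedding, which is precisely what~\autoref{fig:mainUndir} already provides. Since \GintU is the undirected analogue of \Gint, and since~\autoref{clm:G1-is-planar-and-dag} established planarity of \Gint using essentially the same drawing, the argument carries over verbatim once we observe that removing edge orientations cannot introduce crossings.

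More concretely, I would verify planarity in three layers, matching the construction steps from~\autoref{sec:construction-of-GintU}. First, each individual $N\times N$ grid $U_{i,j}$ is planar (it is drawn as a standard grid). Second, the $k^2$ grids are placed into a $k\times k$ arrangement, and the \red{red} horizontal (resp.\ vertical) matchings $\Matching(U_{i,j},U_{i+1,j})$ (resp.\ $\Matching(U_{i,j},U_{i,j+1})$) connect $\Ri(U_{i,j})$ to $\Le(U_{i+1,j})$ (resp.\ $\To(U_{i,j})$ to $\Bo(U_{i,j+1})$) in a row-preserving (resp.\ column-preserving) manner; drawing these as short straight segments in the gap between adjacent grids keeps them pairwise non-crossing and confined to a face of the current embedding. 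Third, the terminal vertices $A,B,C,D$ are placed outside the grid-of-grids (bottom, top, left, right respectively), and the \magenta{magenta} edges from~\autoref{eqn:source-sink-A-B-undir} and~\autoref{eqn:source-sink-C-D-undir} fan from a single terminal vertex into the adjacent boundary row/column of a single grid; since this fan lies entirely in the outer face adjacent to that boundary, no two such fans can cross each other or cross any grid or matching edge.

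There is no real obstacle: the whole argument is ``look at the picture''. The only thing to be careful about is that \autoref{fig:mainUndir} is a bona fide planar embedding, which is immediate from its construction as a superposition of planar pieces glued along disjoint portions of their outer faces. Hence the proof is a one-liner.

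\begin{proof}
\autoref{fig:mainUndir} gives a planar embedding of \GintU, constructed by drawing each grid $U_{i,j}$ in the standard way, placing the $k^2$ grids in a $k\times k$ arrangement, drawing each \red{red} matching edge as a short segment in the gap between the two grids it connects, and drawing each \magenta{magenta} fan from a terminal vertex in the outer face adjacent to the corresponding boundary of a single grid. No two edges cross in this drawing, so \GintU is planar.
\end{proof}
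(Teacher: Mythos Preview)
Your proposal is correct and takes essentially the same approach as the paper: the paper's proof is literally the one-liner ``\autoref{fig:mainUndir} gives a planar embedding of \GintU.'' Your additional layer-by-layer verification is more detailed than what the paper provides, but the core argument is identical.
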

\begin{proof}
    ~\autoref{fig:mainUndir} gives a planar embedding of \GintU.

\end{proof}
\medskip
\begin{claim}
    \normalfont
    The number of vertices in \GintU is $O(N^{2}k^{2})$.
    \label{clm:size-of-G-int-U}
\end{claim}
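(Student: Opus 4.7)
The plan is to simply tally the vertices added during the six construction steps of \autoref{sec:construction-of-GintU}, mirroring the counting argument used for \Gint in \autoref{clm:size-of-G-int}. The only vertices in \GintU are the black grid vertices introduced in Step~2 and the green terminal vertices introduced in Step~6; Steps 3--5 only add edges, not vertices.

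First I would count the black vertices: for each $(i,j) \in [k] \times [k]$ the grid $U_{i,j}$ contains exactly $N^2$ vertices of the form $\w_{i,j}^{q,\ell}$ with $(q,\ell) \in [N] \times [N]$, and since the $k^2$ grids $\{U_{i,j}\}$ are vertex-disjoint, this contributes $k^2 \cdot N^2$ vertices in total. Next I would count the green vertices: the four terminal sets $A, B, C, D$ each contain exactly $k$ vertices, contributing $4k$ vertices in total.

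Adding these together gives $|V(U_{\inter})| = N^2 k^2 + 4k = O(N^2 k^2)$, which completes the proof. There is no real obstacle here — the claim is a direct consequence of the construction, and the only thing to verify is that Steps 3--5 (arrangement into a big grid, red horizontal matching edges, red vertical matching edges) introduce no new vertices, which is immediate from their descriptions.
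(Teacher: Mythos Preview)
Your proposal is correct and follows essentially the same approach as the paper's own proof: count the $N^2 k^2$ black grid vertices from the $k^2$ disjoint $N \times N$ grids, add the $4k$ green terminal vertices, and conclude $N^2 k^2 + 4k = O(N^2 k^2)$. Your additional remark that Steps~3--5 introduce no new vertices is a harmless elaboration of what the paper leaves implicit.
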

\begin{proof}
    \GintU has $k^2$ different $N\times N$ grids viz. $\{U_{i,j}\}_{1\leq i,j\leq k}$. Hence, \GintU has $N^{2}k^2$ black vertices. Adding the $4k$ \green{green} vertices from $A\cup B\cup C\cup D$ it follows that number of vertices in \GintU is $N^{2}k^2+ 4k = O(N^2 k^2)$.
\end{proof}
\medskip

\subsection{Characterizing shortest paths in \GintU}
\label{sec:characterizing-shortest-in-G-int-U}

The goal of this section is to characterize the structure of shortest paths between terminal pairs in \GintU.
In order to do this, we need to define the set of terminal pairs $\mathcal{T}$ and assign a cost to vertices of \GintU.

\begin{equation}
    \label{eqn:definition-of-mathcal-T-undir}
    \text{The set of terminal pairs is}\ \mathcal{T}:= \big\{(a_i, b_i) : i\in [k] \big\}\cup \big\{(c_j, d_j) : j\in [k] \big\}\ ;
\end{equation}

\medskip
\begin{definition}
    \label{def:weights-in-undirected} (\textbf{costs of vertices in \GintU}) Each black vertex in \GintU has a cost of 1 and each green vertex has a cost of $2kN$.

    \normalfont
    \autoref{def:weights-in-undirected} gives a cost to each vertex of \GintU which then naturally leads to the notion of cost of a path as the sum of costs of the vertices on it. We show in \autoref{rmk:vert-cost-1-undir}, how we can adapt our graph to an equivalent one with all vertex costs of $1$ and hence we could equivalently measure the cost of a given path by counting either the number of edges or the number of vertices. Thus our choice to measure the cost in terms of the number of vertices has no bearing on the results that we obtain.

\end{definition}

\medskip
We now define row-paths and column-paths which are the building blocks of what we later term as \emph{canonical paths}.

\medskip
\begin{definition}
    \normalfont
    \textbf{(row-paths and column-paths in \GintU)} For each $(i,j)\in [k]\times [k]$ and $\ell\in [N]$ we define
    \begin{itemize}
        \item $\Row_{\ell}(U_{i,j})$ to be the $\w_{i,j}^{1,\ell}- \w_{i,j}^{N,\ell}$ path in $U_{\inter}[U_{i,j}]$ consisting of the following edges (in order): for each $r\in [N-1]$ take the black edge $\w_{i,j}^{r,\ell} - \w_{i,j}^{r+1,\ell}$.

        \item $\Column_{\ell}(U_{i,j})$ to be the $\w_{i,j}^{\ell,1}- \w_{i,j}^{\ell,N}$ path in $U_{\inter}[U_{i,j}]$  consisting of the following edges (in order): for each $r\in [N-1]$ take the black edge $\w_{i,j}^{\ell,r} - \w_{i,j}^{\ell,r+1}$.
    \end{itemize}

    \label{def:Column-Row-G-int-U}
\end{definition}
\medskip

Each row-path and each column-path in \GintU contains exactly $N$ (black) vertices: hence, by~\autoref{def:weights-in-undirected}, the cost of any row-path or column-path in \GintU is $N$. We are now ready to define horizontal canonical paths and vertical canonical paths in \GintU:

\medskip
\begin{definition}
    \textbf{(horizontal canonical paths in \GintU)}
    Fix any $j\in [k]$. For each $r\in [N]$, we define $\CanInterU(r\ ;\ c_j - d_j)$ to be the $c_j - d_j$ path in \GintU given by the following edges (in order):
    \begin{itemize}
        \item Start with the \magenta{magenta} edge $c_j - \w_{1,j}^{1,r}$
        \item For each $i\in [k-1]$, the path $\w_{i,j}^{1,r} - \w_{i+1,j}^{1,r}$  obtained by concatenating the $\w_{i,j}^{1,r} - \w_{i,j}^{N,r}$ path $\Row_{r}(U_{i,j})$ from~\autoref{def:Column-Row-G-int-U} with the \red{red} edge $\w_{i,j}^{N,r} - \w_{i+1,j}^{1,r}$.
        \item
        Then use the $\w_{k,j}^{1,r} - \w_{k,j}^{N,r}$ path $\Row_{r}(U_{k,j})$ from~\autoref{def:Column-Row-G-int-U} to reach the vertex $\w_{k,j}^{N,r}$.
        \item Finally, use the \magenta{magenta} edge $\w_{k,j}^{N,r} - d_j$ to reach $d_j$.
    \end{itemize}

    \label{def:hori-canonical-G-int-U}
\end{definition}
\medskip

\begin{definition}
    \textbf{(vertical canonical paths in \GintU)}
    Fix any $i\in [k]$. For each $r\in [N]$, we define $\CanInterU(r\ ;\ a_i - b_i)$ to be the $a_i -b_i$ path in \GintU given by the following edges (in order):
    \begin{itemize}
        \item Start with the \magenta{magenta} edge $a_i - \w_{i,1}^{r,1}$
        \item For each $j\in [k-1]$ use the $\w_{i,j}^{r,1} - \w_{i,j+1}^{r,1}$ path obtained by concatenating the $\w_{i,j}^{r,1} - \w_{i,j}^{r,N}$ path $\Column_{r}(U_{i,j})$ from~\autoref{def:Column-Row-G-int-U} with the \red{red} edge $\w_{i,j}^{r,N} - \w_{i,j+1}^{r,1}$.
        \item Then use the $\w_{i,k}^{r,1} - \w_{i,k}^{r,N}$ path $\Column_{r}(U_{i,k})$ from~\autoref{def:Column-Row-G-int-U} to reach the vertex $\w_{i,k}^{r,N}$.
        \item Finally, use the \magenta{magenta} edge $\w_{j,k}^{r,N} - b_i$ to reach $b_i$.
    \end{itemize}

    \label{def:vert-canonical-G-int-U}
\end{definition}
\medskip

We now calculate the cost of horizontal canonical and vertical canonical paths:
\medskip
\begin{observation}
    \normalfont
    From~\autoref{def:hori-canonical-G-int-U}, every horizontal canonical path in \GintU starts and ends with a \green{green} vertex, and has~$kN$ black vertices between ($k$ different row-paths each of which has $N$ black vertices). From~\autoref{def:weights-in-undirected}, it follows that
    each horizontal canonical path in \GintU has a cost of exactly $5kN$. Similarly, it is easy to see that each vertical canonical path in \GintU has a cost of exactly $5kN$.
    \label{obs:Canonical-length-U}
\end{observation}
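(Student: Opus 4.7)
My plan is to unfold Definition~\ref{def:hori-canonical-G-int-U} carefully and perform a direct tally of vertices by colour, then apply the cost function from Definition~\ref{def:weights-in-undirected}. Concretely, fix $j\in[k]$ and $r\in[N]$, and examine the path $\CanInterU(r\ ;\ c_j - d_j)$ step by step: the first step contributes the \green{green} vertex $c_j$ and crosses a \magenta{magenta} edge; the middle portion is a concatenation of $k$ row-paths $\Row_{r}(U_{1,j}),\Row_{r}(U_{2,j}),\ldots,\Row_{r}(U_{k,j})$, glued by $k-1$ \red{red} matching edges that do not introduce any new vertex beyond the row-path endpoints; the final step crosses a \magenta{magenta} edge into the \green{green} vertex $d_j$. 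Since each $\Row_{r}(U_{i,j})$ contains exactly $N$ distinct black vertices (by Definition~\ref{def:Column-Row-G-int-U}) and consecutive row-paths share no black vertex (their endpoints in different grids are distinct by the $\Matching$ construction), the total vertex count on the path is $kN$ black vertices together with the $2$ terminal green vertices.

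Having established this inventory, the cost calculation is immediate: Definition~\ref{def:weights-in-undirected} assigns each black vertex cost $1$ and each green vertex cost $2kN$, so the total cost of $\CanInterU(r\ ;\ c_j - d_j)$ is
\[
    2\cdot(2kN)+(kN)\cdot 1 \;=\; 4kN+kN \;=\; 5kN,
\]
as claimed. Note that this number is independent of $r$, $j$, and the specific choice of row-paths, so every horizontal canonical path in \GintU has this exact cost.

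For the vertical case, I would simply transpose the argument: replacing Definition~\ref{def:hori-canonical-G-int-U} with Definition~\ref{def:vert-canonical-G-int-U} and row-paths with column-paths yields a completely symmetric tally — one column-path $\Column_{r}(U_{i,j})$ per $j\in[k]$, each contributing $N$ black vertices, bracketed by the two \green{green} terminals $a_i$ and $b_i$. Applying the same cost weights gives $5kN$ again. Since the two sub-claims rely only on elementary counting once the definitions are unpacked, there is no genuine obstacle; the only thing to be careful about is confirming that the row-paths (resp.\ column-paths) of consecutive grids are vertex-disjoint so that no black vertex is double-counted, which is immediate from the fact that distinct grids $U_{i,j}\neq U_{i',j'}$ share no vertices in \GintU.
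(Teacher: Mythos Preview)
Your proposal is correct and follows exactly the same approach as the paper: the observation itself is self-contained, simply noting that a horizontal canonical path has two \green{green} endpoints (each of cost $2kN$) and $kN$ black vertices (each of cost $1$), giving $4kN+kN=5kN$, with the vertical case symmetric. Your write-up is more explicit about why the $k$ row-paths are vertex-disjoint, but the underlying computation is identical.
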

\medskip

\begin{remark}
    \label{rmk:vert-cost-1-undir}
    \normalfont
    \textbf{(Reducing the cost of vertices in \GintU)}
    The only vertices in \GintU which have a cost greater than $1$ are $A \cup B \cup C \cup D$. We show how to reduce the cost of vertices from $A$ whilst preserving the structure of vertical canonical paths (\autoref{def:vert-canonical-G-int-U}). The argument for vertices from $B \cup C \cup D$ is analogous. Fix  $i\in [k]$. The cost of visiting $a_i$ is $2kN$ in \GintU. For every $q \in N$, replace the edge $a_i - \w_{i,1}^{q, 1}$ with a path $a_i - \w_{i,1}^{q, 1}$ that visits exactly $2kN-1$ new (black) vertices along the way. Each of these new vertices have a cost of $1$, and the cost of $a_i$ is then also set to $1$. All edges created have a \magenta{magenta} colour and $a_i$ maintains its \green{green} colour. For each of these new routes $a_i - \w_{i,1}^{q, 1}$, any path that previously took an edge $a_i - \w_{i,1}^{q, 1}$ now visit either none or all of the $2kN -1$ new (black) vertices along it.

    In applying this reduction we must redefine the initial step of the vertical canonical paths such that they all start by taking the $2kN-1$ edges along the path $a_i - \w_{i,1}^{q, 1}$ for any $i \in [k]$. This increases the cost of every canonical path by a constant amount $(2kN)$ and thus our claims about the properties of the canonical paths still hold after the reduction.

    It is easy to see that this editing to \GintU adds $O(k\cdot N)$ new vertices and takes $\poly(N)$ time, and therefore it is still true (from~\autoref{clm:size-of-G-int-U}) that $n=|V(U_{\edge})|=O(N^{2}k^{2})$ and \GintU can be constructed in $\poly(N,k)$ time.

    Observe, also, that this process ensures that every vertex in \GintU has maximum degree of $4$.
\end{remark}
\medskip

The next two lemmas give a characterization of the shortest paths between terminal pairs.

\medskip
\begin{lemma}
    \normalfont
    Let $j\in [k]$. The horizontal canonical paths in \GintU satisfy the following two properties:
    \begin{itemize}
        \item[(i)] For each $r\in [N]$, the path $\CanInterU(r\ ;\ c_j - d_j)$ is a shortest $c_j - d_j$ path in \GintU.
        \item[(ii)] If $P$ is a shortest $c_j - d_j$ path in \GintU, then $P$ must be $\CanInterU(\ell\ ;\ c_j - d_j)$ for some $\ell\in [N]$.
    \end{itemize}
    \label{lem:horizontal-canonical-is-shortest-G-int-U}
\end{lemma}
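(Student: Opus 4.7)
The proof mirrors the directed version (\autoref{lem:horizontal-canonical-is-shortest-G1}), but because \GintU has no edge orientations, the role previously played by the acyclic structure of \Gint must instead be played by the large cost $2kN$ assigned to \green{green} vertices in \autoref{def:weights-in-undirected}. The plan is to first show that any shortest $c_j - d_j$ path uses only $c_j$ and $d_j$ among the \green{green} vertices, and then to set up a discrete monovariant---a ``global column index'' on black vertices---which both yields the right lower bound on the length and pins down shortest paths to the canonical ones.

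For the first step, recall from \autoref{obs:Canonical-length-U} that the cost of $\CanInterU(r\ ;\ c_j - d_j)$ is $5kN = 2(2kN) + kN$. Any $c_j - d_j$ path visiting a third \green{green} vertex would contribute at least $3(2kN) = 6kN > 5kN$ in \green{green} cost alone, so it cannot be shortest. Thus every shortest $c_j - d_j$ path begins with a \magenta{magenta} edge from $c_j$ into $\Le(U_{1,j})$, ends with a \magenta{magenta} edge from $\Ri(U_{k,j})$ into $d_j$, and has all internal vertices black. For the second step, define $\mathrm{col}(\w_{i,j'}^{q,\ell}) := (i-1)N + q \in [kN]$. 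A direct case check on the four edge types shows that each horizontal edge (black within a grid, or \red{red} matching between $U_{i,j'}$ and $U_{i+1,j'}$) shifts $\mathrm{col}$ by exactly $1$, while each vertical edge (black within a grid, or \red{red} matching between $U_{i,j'}$ and $U_{i,j'+1}$) leaves $\mathrm{col}$ unchanged. Since the first black vertex of the path has $\mathrm{col}=1$, the last has $\mathrm{col}=kN$, and the path is simple, a discrete intermediate value argument yields at least $kN$ black vertices, with equality iff $\mathrm{col}$ is strictly increasing from $1$ to $kN$ along them.

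This proves (i): every $c_j - d_j$ path has cost at least $2(2kN) + kN = 5kN$, a value realized by $\CanInterU(r\ ;\ c_j - d_j)$. For (ii), a shortest path $P$ has exactly $kN$ black vertices with strictly increasing $\mathrm{col}$, so $P$ uses no vertical edge of either colour; this forces $P$ to stay in row $j$ throughout (no \red{red} vertical matching edges) and to use a single row $r \in [N]$ of every grid $U_{1,j}, U_{2,j}, \ldots, U_{k,j}$ (no vertical black edges within any grid), making $P = \CanInterU(r\ ;\ c_j - d_j)$. The main subtlety to verify carefully is the monovariant for all four edge types---in particular, that the \red{red} vertical matching edges are also $\mathrm{col}$-preserving, which is precisely what prevents ``free'' detours through neighbouring horizontal strips from shortening the path.
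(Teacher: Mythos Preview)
Your proof is correct and follows essentially the same two-step strategy as the paper: first use the large \green{green} cost $2kN$ to rule out any third terminal vertex, then exploit the column structure to show at least $kN$ black vertices are needed. The paper phrases the second step as a separator argument---it defines $kN$ pairwise-disjoint vertex sets $\text{Column}(i,q)$ (exactly the level sets $\{\mathrm{col}=(i-1)N+q\}$ in your notation), observes that each one disconnects $c_j$ from $d_j$ once $A\cup B$ is deleted, and concludes that any path meets every column at least once---whereas you phrase it as a discrete intermediate-value argument on the potential $\mathrm{col}$. These are two views of the same underlying fact. Where your write-up gains a little is in part~(ii): the paper finishes with ``Looking at the structure of \GintU, it follows that $X$ must be the same as $\CanInterU(r;\,c_j-d_j)$'', while your equality case (exactly $kN$ black vertices forces $\mathrm{col}$ strictly increasing, hence no vertical edge of either colour, hence a single row $r$ throughout) makes that step fully explicit.
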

\begin{proof}
    Towards proving the lemma, we first show a preliminary claim which lower bounds the cost of any $c_j - d_j$ path in \GintU:
    \begin{claim}
        \label{clm:each-hori-at-least-5kN}
        Any $c_j - d_j$ path has cost $\geq 5kN$.
    \end{claim}
    \begin{proof}
        Let $Q$ be any $c_j - d_j$ path in \GintU. If $Q$ contains any green vertex besides $c_j$ or $d_j$, then the cost of $Q$ is $\geq 3\cdot 2kN = 6kN$ since each green vertex has cost $2kN$ (\autoref{def:weights-in-undirected}).

        Hence, it remains to consider $c_j - d_j$ paths which contain only two green vertices viz. $c_j$ and $d_j$. Let $U_{\inter}^{*}$ be the graph obtained from \GintU by deleting the vertices from $A\cup B$. The paths we need to consider in this case now are contained in the graph $U_{\inter}^{*}$. For each $1\leq i\leq k$ and each $1\leq q\leq N$, define the following set of vertices
        \[ \text{Column}(i,q) :=  \bigcup_{1\leq s\leq k; 1\leq \ell, N} \w_{i,s}^{q,\ell}\]
        It is easy to see that $c_j$ and $d_j$ belong to different connected components of $U^{*}_{\inter}$ if we delete all the vertices of $\text{Column}(i,q)$ for any $1\leq i\leq k$ and $1\leq q\leq N$. Moreover, if $(i,q)\neq (i',q')$ then $\text{Column}(i,q) \cap \Column(i',q')=\emptyset$. Hence, it follows that $Q$ contains at least one (black) vertex from $\text{Column}(i,q)$ for each $1\leq i\leq k$ and $1\leq q\leq N$. Since all these vertices are black, the weight of internal (black) vertices of $Q$ is at least $kN$. Therefore, including the two \green{green} endpoints $c_j$ and $d_j$, the weight of any $c_j - d_j$ path is at least $5kN$.
    \end{proof}

    The proof of the first part of the lemma now follows from~\autoref{clm:each-hori-at-least-5kN} and~\autoref{obs:Canonical-length-U}.

    Now we prove the second part of the lemma. Let $X$ be any shortest $c_j - d_j$ path in \GintU. By~\autoref{clm:each-hori-at-least-5kN} and~\autoref{obs:Canonical-length-U}, it follows that the weight of $X$ is exactly $5kN$. The two \green{green} endpoints $c_j$ and $d_j$ incur a total cost of $2kN + 2kN = 4kN$. This leaves a budget of $kN$ available for other vertices of $X$. In particular, $X$ cannot contain any other green vertex besides $c_j$ and $d_j$. Hence, following the proof of~\autoref{clm:each-hori-at-least-5kN}, it follows that $X$ contains at least one vertex from $\text{Column}(i,q)$ for each $1\leq i\leq k$ and $1\leq q\leq N$. This takes up a budget of at least $kN$ which is all that was available. Hence, $X$ contains exactly one vertex from $\text{Column}(i,q)$ for each $1\leq i\leq k$ and $1\leq q\leq N$. Looking at the structure of \GintU, it follows that $X$ must be the same as $\CanInterU(r\ ;\ c_j - d_j)$ for some $r\in [N]$ (where $r$ is the first black vertex from $\Le(U_{1,j})$ that occurs on $X$ after it leaves $c_j$).
    This concludes the proof of~\autoref{lem:horizontal-canonical-is-shortest-G-int-U}.
\end{proof}
\medskip

The proof of the next lemma is very similar to that of \autoref{lem:horizontal-canonical-is-shortest-G-int-U}, and we skip repeating the details.
\medskip

\begin{lemma}
    \normalfont
    Let $i\in [k]$. The vertical canonical paths in \GintU satisfy the following two properties:
    \begin{itemize}
        \item For each $r\in [N]$, the path $\CanInterU(r\ ;\ a_i - b_i)$ is a shortest $a_i - b_i$ path in \GintU.
        \item If $P$ is a shortest $a_i - b_i$ path in \GintU, then $P$ must be $\CanInterU(\ell\ ;\ a_i - b_i)$ for some $\ell\in [N]$.
    \end{itemize}
    \label{lem:vertical-canonical-is-shortest-G-int-U}
\end{lemma}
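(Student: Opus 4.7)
The plan is to follow verbatim the structure of the proof of \autoref{lem:horizontal-canonical-is-shortest-G-int-U}, interchanging the roles of rows and columns and of the terminal pairs $(c_j,d_j)$ with $(a_i,b_i)$. The keystone is an analogue of \autoref{clm:each-hori-at-least-5kN}: any $a_i - b_i$ path in \GintU has cost at least $5kN$. Once this lower bound is established, both parts of the lemma drop out by comparison with the exact cost $5kN$ of each vertical canonical path given by \autoref{obs:Canonical-length-U}.

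To prove the lower bound, I split into two cases depending on how many \green{green} vertices the path $Q$ contains. If $Q$ visits any \green{green} vertex other than $a_i$ and $b_i$, then by \autoref{def:weights-in-undirected} the cost is at least $3\cdot 2kN = 6kN \ge 5kN$, and we are done. Otherwise, working in the subgraph $U_{\inter}^{\dagger}$ obtained from \GintU by deleting $C\cup D$, I define horizontal ``barrier'' sets analogous to the $\text{Column}(i,q)$ of the horizontal proof: for each $j\in[k]$ and $\ell\in[N]$, set
\[
\text{Row}(j,\ell) \;:=\; \bigcup_{1\le s\le k,\;1\le q\le N} \w_{s,j}^{q,\ell}\,.
\]
By inspecting \autoref{fig:mainUndir}, removing all vertices of $\text{Row}(j,\ell)$ disconnects $a_i$ from $b_i$ in $U_{\inter}^{\dagger}$ (the horizontal strip at row $\ell$ of the $j$-th horizontal band of grids completely separates the bottom of the picture from the top), and these $kN$ sets are pairwise disjoint. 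Hence $Q$ contains at least one black vertex from each $\text{Row}(j,\ell)$, contributing cost at least $kN$, and together with the two \green{green} endpoints $a_i,b_i$ of total cost $4kN$, the path $Q$ has cost at least $5kN$.

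Part (i) now follows immediately: by \autoref{obs:Canonical-length-U}, each vertical canonical path $\CanInterU(r\ ;\ a_i - b_i)$ has cost exactly $5kN$ and so attains the lower bound. For part (ii), if $P$ is any shortest $a_i - b_i$ path, then its cost is exactly $5kN$, the \green{green} endpoints already consume $4kN$ of the budget, so $P$ contains no other \green{green} vertex and exactly one black vertex from each $\text{Row}(j,\ell)$. Tracing through the grid-of-grids structure of \GintU (where vertical \red{red} matching edges preserve the column index of the currently-visited grid vertex as one moves from $U_{i,j}$ to $U_{i,j+1}$), this forces $P$ to be $\CanInterU(\ell\ ;\ a_i - b_i)$ for the unique value $\ell\in[N]$ such that the first black vertex of $P$ is $\w_{i,1}^{\ell,1}\in\Bo(U_{i,1})$. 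The only step requiring care is verifying the separator property of $\text{Row}(j,\ell)$ in $U_{\inter}^{\dagger}$, which is routine once one notes that all \red{red} vertical matching edges between grids $U_{i,j}$ and $U_{i,j+1}$ pass through exactly the vertices of $\To(U_{i,j})\cup\Bo(U_{i,j+1})$, both of which lie inside some $\text{Row}(j,\cdot)$ set.
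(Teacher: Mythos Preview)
Your proposal is correct and follows exactly the approach the paper intends: the paper omits this proof entirely, stating that it is ``very similar to that of \autoref{lem:horizontal-canonical-is-shortest-G-int-U},'' and your write-up is precisely that symmetric argument, with the $\text{Row}(j,\ell)$ separators playing the role of the $\text{Column}(i,q)$ sets and $U_{\inter}^{\dagger}=U_{\inter}\setminus(C\cup D)$ replacing $U_{\inter}^{*}=U_{\inter}\setminus(A\cup B)$.
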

\medskip

	\section{Lower bounds for Undirected-\texorpdfstring{$k$}{k}-\EDSP on planar graphs}
\label{sec:fpt-inapprox-edsp-planar-undir}

The goal of this section is to prove lower bounds on the running time of exact (\autoref{thm:hardness-edge-result-U}) and approximate (\autoref{thm:inapprox-edge-result-U}) algorithms for the Undirected-$k$-\EDSP problem. We have already seen the first part of the reduction (\autoref{sec:construction-of-GintU}) from \kclique resulting in the construction of the intermediate graph \GintU.~\autoref{sec:construction-of-Gedge-U} describes the next part of the reduction which edits the intermediate graph \GintU to obtain the final graph \GedgeU. This corresponds to the ancestry of the third leaf in~\autoref{fig:flowchart}. The characterization of shortest paths between terminal pairs in \GedgeU is given in~\autoref{sec:characterizing-shortest-in-G-edge-U}. The completeness and soundness of the reduction from \kclique to Undirected-$2k$-\EDSP are proven
in~\autoref{sec:clique-to-2kedsp-U} and~\autoref{sec:2kedsp-to-clique-U}, respectively. Finally, we state our final results in~\autoref{sec:proof-of-main-theorem-edge-U} allowing us to prove~\autoref{thm:hardness-edge-result-U} and~\autoref{thm:inapprox-edge-result-U}.

\subsection{Obtaining the graph \GedgeU from \GintU via the splitting operation}
\label{sec:construction-of-Gedge-U}

Observe in \autoref{fig:mainUndir} that every black grid vertex in \GintU has degree exactly four, and these four neighbors are named as per~\autoref{def:lefty-right-topy-bottomy-U}.
We now define the splitting operation which allows us to obtain the graph \GedgeU from the graph \GintU constructed in~\autoref{sec:construction-of-GintU}.

\medskip
\begin{definition}
    \textbf{(splitting operation to obtain \GedgeU from \GintU)} For each $i,j\in [k]$ and each $q,\ell\in [N]$
    \begin{itemize}
        \item If $(q,\ell)\notin S_{i,j}$, then we \onesplit (\autoref{fig:split-edge-not-U}) the vertex $\w_{i,j}^{q,\ell}$ into \textbf{three distinct} vertices $\w_{i,j,\LB}^{q,\ell}, \w_{i,j,\Mid}^{q,\ell}$ and $\w_{i,j,\TR}^{q,\ell}$ and add the path $\w_{i,j,\LB}^{q,\ell} - \w_{i,j,\Mid}^{q,\ell} - \w_{i,j,\TR}^{q,\ell}$ (denoted by dotted edges in~\autoref{fig:split-edge-not-U}).

        \item Otherwise, if $(q,\ell)\in S_{i,j}$ then we \twosplit (\autoref{fig:split-edge-yes-U}) the vertex $\w_{i,j}^{q,\ell}$ into \textbf{four distinct} vertices $\w_{i,j,\LB}^{q,\ell}, \w_{i,j,\Hor}^{q,\ell}, \w_{i,j,\Ver}^{q,\ell}$ and $\w_{i,j,\TR}^{q,\ell}$ and add the two paths $\w_{i,j,\LB}^{q,\ell} - \w_{i,j,\Hor}^{q,\ell} - \w_{i,j,\TR}^{q,\ell}$ and $\w_{i,j,\LB}^{q,\ell} - \w_{i,j,\Ver}^{q,\ell} - \w_{i,j,\TR}^{q,\ell}$ (denoted by dotted edges in~\autoref{fig:split-edge-yes-U}).

    \end{itemize}
    The 4 edges (\autoref{def:lefty-right-topy-bottomy-U-edge}) incident on $\w_{i,j}^{q,\ell}$ are now changed as follows:
    \begin{itemize}
        \item Replace the edge $\lefty(\w_{i,j}^{q,\ell}) - \w_{i,j}^{q,\ell}$ by the edge $\lefty(\w_{i,j}^{q,\ell}) - \w_{i,j,\LB}^{q,\ell}$
        \item Replace the edge $\bottomy(\w_{i,j}^{q,\ell}) - \w_{i,j}^{q,\ell}$ by the edge $\bottomy(\w_{i,j}^{q,\ell}) - \w_{i,j,\LB}^{q,\ell}$
        \item Replace the edge $\w_{i,j}^{q,\ell} - \righty(\w_{i,j}^{q,\ell})$ by the edge $\w_{i,j,\TR}^{q,\ell} - \righty(\w_{i,j}^{q,\ell})$
        \item Replace the edge $\w_{i,j}^{q,\ell} - \topy(\w_{i,j}^{q,\ell})$ by the edge $\w_{i,j,\TR}^{q,\ell} - \topy(\w_{i,j}^{q,\ell})$
    \end{itemize}
    \label{def:splitting-operation-edge-U}
\end{definition}
\medskip

\begin{figure}[hbt!]
\centering
\begin{tikzpicture}[
vertex/.style={circle, draw=black, fill=black, text width=1.5mm, inner sep=0pt},
scale=0.65]
\node[vertex, label=above right:\footnotesize{$\w_{i,j}^{q,\ell}$}] (v) at (0,0) {} ;
\node[vertex, label=above:\footnotesize{$\lefty(\w_{i,j}^{q,\ell})$}] (l) at (-2,0) {};
\node[vertex, label=below:\footnotesize{$\righty(\w_{i,j}^{q,\ell})$}] (r) at (2,0) {};
\node[vertex, label=below:\footnotesize{$\bottomy(\w_{i,j}^{q,\ell})$}] (b) at (0,-2) {};
\node[vertex, label=above:\footnotesize{$\topy(\w_{i,j}^{q,\ell})$}] (t) at (0,2) {};
\draw[ultra thick] (v) -- (t);
\draw[ultra thick] (v) -- (r);
\draw[ultra thick] (l) -- (v);
\draw[ultra thick] (b) -- (v);

\draw[orange,double, ultra thick,->] (3,0) -- node[above=3mm, draw=none, fill=none, rectangle] {\onesplit } (5,0);

\node[vertex, label=below:\footnotesize{$\w_{i,j,\TR}^{q,\ell}$}] (vtr) at (12,0) {} ;
\node[vertex, label=below:\footnotesize{$\w_{i,j,\Mid}^{q,\ell}$}] (vm) at (10,0) {} ;
\node[vertex, label=above:$\w_{i,j,\LB}^{q,\ell}$] (vlb) at (8,-0) {} ;

\node[vertex, label=below:\footnotesize{$\lefty(\w_{i,j}^{q,\ell})$}] (l) at (6,0) {};
\node[vertex, label=above:\footnotesize{$\righty(\w_{i,j}^{q,\ell})$}] (r) at (14,0) {};
\node[vertex, label=below:\footnotesize{$\bottomy(\w_{i,j}^{q,\ell})$}] (b) at (8,-2) {};
\node[vertex, label=above:\footnotesize{$\topy(\w_{i,j}^{q,\ell})$}] (t) at (12,2) {};
\draw[ultra thick] (vtr) -- (t);
\draw[ultra thick] (vtr) -- (r);
\draw[ultra thick] (l) -- (vlb);
\draw[ultra thick] (b) -- (vlb);
\draw[dotted,ultra thick] (vlb) -- (vm);
\draw[dotted,ultra thick] (vm) -- (vtr);
\end{tikzpicture}

\caption{The \onesplit operation for the vertex $\w_{i,j}^{q,\ell}$ when
$(q,\ell)\notin S_{i,j}$. The idea behind this splitting is that the horizontal path $\lefty(w_{i,j}^{q,\ell}) - w_{i,j}^{q,\ell} - \righty(w_{i,j}^{q,\ell})$ and vertical path $\bottomy(w_{i,j}^{q,\ell}) - w_{i,j}^{q,\ell} - \topy(w_{i,j}^{q,\ell})$  are no longer edge-disjoint after the \onesplit operation as they must share the path $w_{i,j,\LB}^{q,\ell} - w_{i,j,\Mid}^{q,\ell} - w_{i,j,\TR}^{q,\ell}$.}

\label{fig:split-edge-not-U}
\end{figure}
\begin{figure}[hbt!]
\centering
\begin{tikzpicture}[
vertex/.style={circle, draw=black, fill=black, text width=1.5mm, inner sep=0pt},
scale=0.65]
\node[vertex, label=above right:\footnotesize{$\w_{i,j}^{q,\ell}$}] (v) at (0,0) {} ;
\node[vertex, label=above:\footnotesize{$\lefty(\w_{i,j}^{q,\ell})$}] (l) at (-2,0) {};
\node[vertex, label=below:\footnotesize{$\righty(\w_{i,j}^{q,\ell})$}] (r) at (2,0) {};
\node[vertex, label=below:\footnotesize{$\bottomy(\w_{i,j}^{q,\ell})$}] (b) at (0,-2) {};
\node[vertex, label=above:\footnotesize{$\topy(\w_{i,j}^{q,\ell})$}] (t) at (0,2) {};
\draw[ultra thick] (v) -- (t);
\draw[ultra thick] (v) -- (r);
\draw[ultra thick] (l) -- (v);
\draw[ultra thick] (b) -- (v);

\draw[orange,double, ultra thick,->] (3,0) -- node[above=3mm, draw=none, fill=none, rectangle] {\twosplit } (5,0);

\node[vertex, label=below:\footnotesize{$\w_{i,j,\TR}^{q,\ell}$}] (vtr) at (12,0) {} ;
\node[vertex, label=below:\footnotesize{$\w_{i,j,\Hor}^{q,\ell}$}] (vh) at (10,-1) {} ;
\node[vertex, label=above:\footnotesize{$\w_{i,j,\Ver}^{q,\ell}$}] (vv) at (10,1) {} ;
\node[vertex, label=above:$\w_{i,j,\LB}^{q,\ell}$] (vlb) at (8,-0) {} ;

\node[vertex, label=below:\footnotesize{$\lefty(\w_{i,j}^{q,\ell})$}] (l) at (6,0) {};
\node[vertex, label=above:\footnotesize{$\righty(\w_{i,j}^{q,\ell})$}] (r) at (14,0) {};
\node[vertex, label=below:\footnotesize{$\bottomy(\w_{i,j}^{q,\ell})$}] (b) at (8,-2) {};
\node[vertex, label=above:\footnotesize{$\topy(\w_{i,j}^{q,\ell})$}] (t) at (12,2) {};
\draw[ultra thick] (vtr) -- (t);
\draw[ultra thick] (vtr) -- (r);
\draw[ultra thick] (l) -- (vlb);
\draw[ultra thick] (b) -- (vlb);
\draw[dotted,ultra thick] (vlb) -- (vh);
\draw[dotted,ultra thick] (vh) -- (vtr);

\draw[dotted,ultra thick] (vlb) -- (vv);
\draw[dotted,ultra thick] (vv) -- (vtr);
\end{tikzpicture}

\caption{The \twosplit operation for the vertex $\w_{i,j}^{q,\ell}$ when
$(q,\ell)\in S_{i,j}$. The idea behind this splitting is that the horizontal path $\lefty(w_{i,j}^{q,\ell}) - w_{i,j}^{q,\ell} - \righty(w_{i,j}^{q,\ell})$ and vertical path $\bottomy(w_{i,j}^{q,\ell}) - w_{i,j}^{q,\ell} - \topy(w_{i,j}^{q,\ell})$  are still edge-disjoint after the \twosplit operation if we replace them with the paths $\lefty(w_{i,j}^{q,\ell}) - w_{i,j,\LB}^{q,\ell} - w_{i,j,\Hor}^{q,\ell} - w_{i,j,\TR}^{q,\ell} - \righty(w_{i,j}^{q,\ell})$ and $\bottomy(w_{i,j}^{q,\ell}) - w_{i,j,\LB}^{q,\ell} - w_{i,j,\Ver}^{q,\ell} - w_{i,j,\TR}^{q,\ell} - \topy(w_{i,j}^{q,\ell})$ respectively.
}
\label{fig:split-edge-yes-U}

\end{figure}

Finally, we are now ready to define the instance of Undirected-$2k$-\EDSP that we have built starting from an instance $G$ of \kclique.

\medskip
\begin{definition}
    \normalfont
    \textbf{(defining the Undirected-$2k$-\EDSP instance)} The instance $(U_{\edge}, \mathcal{T})$ of Undirected-$2k$-\EDSP is defined as follows:
    \begin{itemize}
        \item The graph \GedgeU is obtained by applying the splitting operation (\autoref{def:splitting-operation-edge-U}) to each (black) grid vertex of \GintU, i.e., the set of vertices given by $\bigcup_{1\leq i,j\leq k} V(U_{i,j})$.
        \item No \green{green} vertex is split in~\autoref{def:splitting-operation-edge-U}, and hence the set of terminal pairs remains the same as defined in~\autoref{eqn:definition-of-mathcal-T-undir} and is given by $\mathcal{T}:= \big\{(a_i, b_i) : i\in [k] \big\}\cup \big\{(c_j, d_j) : j\in [k] \big\}$.
        \item We assign a cost of $1$ to each new vertex created during the splitting operation (\autoref{def:splitting-operation-edge-U}). Since each vertex of \GintU has a cost of $1$, it follows that each vertex of \GedgeU also has a visit cost of $1$.
    \end{itemize}
    \label{def:G-edge-U}
\end{definition}
\medskip

\begin{claim}
    \label{clm:GedgeU-is-planar}
    \normalfont
    $U_{\edge}$ is planar.
\end{claim}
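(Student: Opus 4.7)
The plan is to leverage the known planarity of \GintU from~\autoref{clm:G-int-U-is-planar-and-dag} and show that the splitting operation of~\autoref{def:splitting-operation-edge-U} is a \emph{local} planarity-preserving replacement. Since \GintU is planar with the explicit embedding shown in~\autoref{fig:mainUndir}, around every black grid vertex $\w_{i,j}^{q,\ell}$ one can choose a small open disk $D_{\w}$ such that the disks are pairwise disjoint and each $D_{\w}$ meets exactly the four edges from $\w$ to $\lefty(\w), \righty(\w), \bottomy(\w), \topy(\w)$, in this cyclic order around the boundary of $D_{\w}$. To build \GedgeU from the embedding of \GintU, I would delete the interior of each $D_{\w}$ and insert the appropriate gadget inside.

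The heart of the argument is then verifying that each gadget admits a planar drawing inside a disk with the four ``ports'' (where the truncated external edges enter) in the prescribed cyclic order $(\lefty, \bottomy, \righty, \topy)$. For the \onesplit gadget this is immediate: draw the path $\w_{\LB} - \w_{\Mid} - \w_{\TR}$ inside $D_{\w}$, place $\w_{\LB}$ near the lower-left arc of $\partial D_{\w}$ and connect it to the $\lefty$ and $\bottomy$ ports, and place $\w_{\TR}$ near the upper-right arc connected to the $\righty$ and $\topy$ ports; see~\autoref{fig:split-edge-not-U}. For the \twosplit gadget, the subgraph induced on $\{\w_{\LB}, \w_{\Hor}, \w_{\Ver}, \w_{\TR}\}$ is a $4$-cycle, which is planar, and it can be embedded inside $D_{\w}$ with $\w_{\LB}$ adjacent to the $\lefty$ and $\bottomy$ ports and $\w_{\TR}$ adjacent to the $\righty$ and $\topy$ ports, with $\w_{\Hor}$ drawn on the $\lefty$-$\righty$ side of the disk and $\w_{\Ver}$ on the $\bottomy$-$\topy$ side; this is exactly the drawing shown in~\autoref{fig:split-edge-yes-U}.

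Since the replacements are performed independently inside pairwise disjoint disks and each replacement preserves the cyclic order in which the external edges meet the boundary of its disk, stitching these local drawings into the global planar embedding of \GintU yields a planar embedding of \GedgeU. The only subtlety, which I would explicitly address, is confirming that the port order $(\lefty, \bottomy, \righty, \topy)$ around each disk (induced by~\autoref{fig:mainUndir} together with~\autoref{def:lefty-right-topy-bottomy-U}) is indeed the one the gadget drawings of~\autoref{fig:split-edge-not-U} and~\autoref{fig:split-edge-yes-U} are built to accommodate; both gadgets are designed so that $\w_{\LB}$ absorbs the two ports coming from the lower-left side and $\w_{\TR}$ absorbs the two ports coming from the upper-right side, which matches the standard orientation of the embedding. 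Hence \GedgeU is planar.
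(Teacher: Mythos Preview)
Your proof is correct and follows essentially the same approach as the paper's own proof: start from the planarity of \GintU (\autoref{clm:G-int-U-is-planar-and-dag}) and argue that the splitting operation of \autoref{def:splitting-operation-edge-U} is a local, planarity-preserving replacement, as witnessed by the drawings in \autoref{fig:split-edge-not-U} and \autoref{fig:split-edge-yes-U}. The paper's proof is considerably terser---it simply appeals to the figures---whereas you have spelled out the disk-replacement argument and the port-order verification explicitly, which is a welcome level of rigor but not a different idea.
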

\begin{proof}
    In~\autoref{clm:G-int-U-is-planar-and-dag}, we have shown that \GintU is planar. The graph \GedgeU is obtained from \GintU by applying the splitting operation (\autoref{def:splitting-operation-edge-U}) on every (black) grid vertex, i.e., every vertex from the set $\bigcup_{1\leq i,j\leq k} V(U_{i,j})$. By~\autoref{def:lefty-right-topy-bottomy-U-edge}, every vertex of \GintU that is split has four neighbors in \GedgeU. Hence, one can observe (\autoref{fig:split-edge-not-U} and~\autoref{fig:split-edge-yes-U}) that the splitting operation (\autoref{def:splitting-operation-edge-U}) preserves planarity when we construct \GedgeU from \GintU.
\end{proof}

\medskip

\begin{claim}
    \normalfont
    The number of vertices in \GedgeU is $O(N^{2}k^{2})$.
    \label{clm:size-of-G-edge-U}
\end{claim}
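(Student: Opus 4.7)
The plan is to mirror the counting argument used for \Gedge in \autoref{clm:size-of-G-edge}, simply adapted to the undirected setting. The graph \GedgeU is obtained from \GintU by applying the splitting operation of~\autoref{def:splitting-operation-edge-U} to every black grid vertex, while every \green{green} terminal vertex is left untouched (see~\autoref{def:G-edge-U}). So the only source of new vertices is the splitting operation, and we just need to observe that it increases the vertex count by at most a constant factor per split.

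Concretely, I would argue as follows. By~\autoref{clm:size-of-G-int-U} we have $|V(U_{\inter})| = O(N^2 k^2)$, with exactly $N^2 k^2$ black grid vertices (one per cell of each $N\times N$ subgrid $U_{i,j}$) and $4k$ green terminal vertices. For each black vertex $\w_{i,j}^{q,\ell}$, the splitting operation replaces it with either three vertices ($\w_{i,j,\LB}^{q,\ell}$, $\w_{i,j,\Mid}^{q,\ell}$, $\w_{i,j,\TR}^{q,\ell}$ in the \onesplit case, when $(q,\ell)\notin S_{i,j}$; see~\autoref{fig:split-edge-not-U}) or four vertices ($\w_{i,j,\LB}^{q,\ell}$, $\w_{i,j,\Hor}^{q,\ell}$, $\w_{i,j,\Ver}^{q,\ell}$, $\w_{i,j,\TR}^{q,\ell}$ in the \twosplit case, when $(q,\ell)\in S_{i,j}$; see~\autoref{fig:split-edge-yes-U}). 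In either case the replacement blows up the vertex count at that location by a factor of at most $4$. Hence, after splitting, the number of (formerly) black vertices is at most $4 N^2 k^2$, and adding back the $4k$ untouched green vertices gives $|V(U_{\edge})| \leq 4 N^2 k^2 + 4k = O(N^2 k^2)$.

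There is no real obstacle here: the argument is a direct copy of the one for \Gedge, using the undirected splitting operation in place of the directed one. The only thing to be mildly careful about is that the degree-reduction trick of~\autoref{rmk:vert-cost-1-undir} (used to bring vertex costs down to $1$ and to cap the max degree at $4$) adds $O(kN)$ vertices, which is still absorbed into $O(N^2 k^2)$; so even after all of the preprocessing the asymptotic bound is preserved.
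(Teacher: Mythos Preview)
Your proof is correct and follows essentially the same approach as the paper: both argue that \GintU has $O(N^2 k^2)$ vertices and that the splitting operation blows up each black vertex by only a constant factor. Your write-up is in fact slightly more detailed than the paper's, which simply notes that splitting adds at most three extra vertices per black vertex and concludes directly.
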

\begin{proof}
    By~\autoref{clm:size-of-G-int-U}, the graph \GintU has $O(N^2 k^2)$ vertices. The only change when obtaining \GedgeU from \GintU is the splitting operation (\autoref{def:splitting-operation-edge-U}) adds at most three extra vertices for each black vertex of \GintU. Hence, the number of vertices of \GedgeU is $O(N^{2}k^{2})$.
\end{proof}
\medskip

\begin{definition}
    \label{def:lefty-right-topy-bottomy-U-edge}
    Recall \autoref{def:lefty-right-topy-bottomy-U}, where we defined the four neighbours of any grid vertex in \GintU. We maintain these definitions of the neighbours for each (black) grid vertex here in \GedgeU.
\end{definition}
\medskip

\subsection{Characterizing shortest paths in \GedgeU}
\label{sec:characterizing-shortest-in-G-edge-U}

The goal of this section is to characterize the structure of shortest paths between terminal pairs in \GedgeU. Recall (\autoref{def:G-edge-U}) that the set of terminal pairs is given by $\mathcal{T}:= \big\{(a_i, b_i) : i\in [k] \big\}\cup \big\{(c_j, d_j) : j\in [k] \big\}$.
As in \autoref{sec:setting-up-the-U}, the length of a path is the sum of the vertex costs.

We now define canonical paths in \GedgeU by adapting the definition of canonical paths (\autoref{def:hori-canonical-G-int-U} and~\autoref{def:vert-canonical-G-int-U}) in \GintU in accordance with the changes in going from \GintU to \GedgeU.

\medskip
\begin{definition}
    \textbf{(horizontal canonical paths in \GedgeU)}
    Fix a $j\in [k]$. For each $r\in [N]$, we define $\CanEdgeU(r\ ;\ c_j - d_j)$ to be the $c_j - d_j$ path in \GedgeU obtained from the path $\CanInterU(r\ ;\ c_j - d_j)$ in \GintU (recall~\autoref{def:hori-canonical-G-int-U}) in the following way:
    \begin{itemize}
        \item The first and last \magenta{magenta} edges are unchanged.
        \item If a black grid vertex $\w$ from $\CanInterU(r\ ;\ c_j - d_j)$ is \onesplit (\autoref{fig:split-edge-not-U}), then
        \begin{itemize}
            \item The unique edge $\lefty(\w) - \w$ is replaced with the edge $\lefty(\w) - \w_{\LB}$;
            \item The unique edge $\w - \righty(\w)$ is replaced with the edge $\w_{\TR} - \righty(\w)$;
            \item The path $\w_{\LB} - \w_{\Mid} - \w_{\TR}$ is added.
        \end{itemize}
        \item If a black grid vertex $\w$ from $\CanInterU(r\ ;\ c_j - d_j)$ is \twosplit (\autoref{fig:split-edge-yes-U}), then
        \begin{itemize}
            \item The unique edge $\lefty(\w) - \w$ is replaced with the edge $\lefty(\w) - \w_{\LB}$;
            \item The unique edge $\w - \righty(\w)$ is replaced with the edge $\w_{\TR} - \righty(\w)$;
            \item The path $\w_{\LB} - \w_{\Hor} - \w_{\TR}$ is added.
        \end{itemize}
    \end{itemize}

    \label{def:hori-canonical-Gedge-U}
\end{definition}
\medskip

\begin{definition}
    \textbf{(vertical canonical paths in \GedgeU)}
    Fix a $i\in [k]$. For each $r\in [N]$, we define $\CanEdgeU(r\ ;\ a_i - b_i)$ to be the $a_i - b_i$ path in \GedgeU obtained from the path $\CanInterU(r\ ;\ a_i - b_i)$ in \GintU (recall \autoref{def:vert-canonical-G-int-U}) in the following way.
    \begin{itemize}
        \item The first and last \magenta{magenta} edges are unchanged.
        \item If a black grid vertex $\w$ from $\CanInterU(r\ ;\ a_i - b_i)$ is \onesplit (\autoref{fig:split-edge-not-U}), then
        \begin{itemize}
            \item The unique edge $\topy(\w) - \w$ is replaced with the edge $\topy(\w) - \w_{\LB}$;
            \item The unique edge $\w - \bottomy(\w)$ is replaced with the edge $\w_{\TR} - \bottomy(\w)$;
            \item The path $\w_{\LB} - \w_{\Mid} - \w_{\TR}$ is added.
        \end{itemize}
        \item If a black grid vertex $\w$ from $\CanInter(r\ ;\ a_i - b_i)$ is \twosplit (\autoref{fig:split-edge-yes-U}), then
        \begin{itemize}
            \item The unique edge $\topy(\w) - \w$ is replaced with the edge $\topy(\w) - \w_{\LB}$;
            \item The unique edge $\w - \bottomy(\w)$ is replaced with the edge $\w_{\TR} - \bottomy(\w)$;
            \item The path $\w_{\LB} - \w_{\Ver} - \w_{\TR}$ is added.
        \end{itemize}
    \end{itemize}

    \label{def:verti-canonical-Gedge-U}
\end{definition}
\medskip

\begin{definition}
    \textbf{(Image of a horizontal canonical path from \GintU in \GedgeU)}
    Fix a $j\in [k]$ and $r\in [N]$. For each $\CanInterU(r\ ;\ c_j - d_j)$ path $R$ in \GintU, we define an image of R as follows
    \begin{itemize}
        \item The first and last \magenta{magenta} edges are unchanged.
        \item If a black grid vertex $\w$ from $\CanInterU(r\ ;\ c_j - d_j)$ is \onesplit (\autoref{fig:split-edge-not-U}), then
        \begin{itemize}
            \item The unique edge $\lefty(\w) - \w$ is replaced with the edge $\lefty(\w) - \w_{\LB}$;
            \item The unique edge $\w - \righty(\w)$ is replaced with the edge $\w_{\TR} - \righty(\w)$;
            \item The path $\w_{\LB} - \w_{\Mid} - \w_{\TR}$ is added.
        \end{itemize}
        \item If a black grid vertex $\w$ from $\CanInterU(r\ ;\ c_j - d_j)$ is \twosplit (\autoref{fig:split-edge-yes-U}), then
        \begin{itemize}
            \item The unique edge $\lefty(\w) - \w$ is replaced with the edge $\lefty(\w) - \w_{\LB}$;
            \item The unique edge $\w - \righty(\w)$ is replaced with the edge $\w_{\TR} - \righty(\w)$;
            \item Either the edges $\w_{\LB} - \w_{\Hor} - \w_{\TR}$ or $\w_{\LB} - \w_{\Ver} - \w_{\TR}$ are added.
        \end{itemize}
    \end{itemize}
    \label{def:hor-image-of-a-path-G-edge-U}
\end{definition}
\medskip

\begin{definition}
    \textbf{(Image of a vertical canonical path from \GintU in \GedgeU)}
    Fix a $i\in [k]$ and $r\in [N]$. For each $\CanInterU(r\ ;\ a_i - b_i)$ path $R$ in \GintU, we define an image of R as follows
    \begin{itemize}
        \item The first and last \magenta{magenta} edges are unchanged.
        \item If a black grid vertex $\w$ from $\CanInterU(r\ ;\ a_i - b_i)$ is \onesplit (\autoref{fig:split-edge-not-U}), then
        \begin{itemize}
            \item The unique edge $\lefty(\w) - \w$ is replaced with the edge $\lefty(\w) - \w_{\LB}$;
            \item The unique edge $\w - \righty(\w)$ is replaced with the edge $\w_{\TR} - \righty(\w)$;
            \item The path $\w_{\LB} - \w_{\Mid} - \w_{\TR}$ is added.
        \end{itemize}
        \item If a black grid vertex $\w$ from $\CanInterU(r\ ;\ a_i - b_i)$ is \twosplit (\autoref{fig:split-edge-yes-U}), then
        \begin{itemize}
            \item The unique edge $\lefty(\w) - \w$ is replaced with the edge $\lefty(\w) - \w_{\LB}$;
            \item The unique edge $\w - \righty(\w)$ is replaced with the edge $\w_{\TR} - \righty(\w)$;
            \item Either the edges $\w_{\LB} - \w_{\Hor} - \w_{\TR}$ or $\w_{\LB} - \w_{\Ver} - \w_{\TR}$ are added.
        \end{itemize}
    \end{itemize}
    \label{def:vert-image-of-a-path-G-edge-U}
\end{definition}
\medskip

Note that a single path, $R$, in \GintU can have several images in \GedgeU. This is because for every black vertex on $R$ that is \twosplit there are two choices of sub-path to add: either the path $\w_{\LB} - \w_{\Hor} - \w_{\TR}$ or the path $\w_{\LB} - \w_{\Ver} - \w_{\TR}$.

\medskip
\begin{remark}
    \label{rmk:vert-cost-1-edge-undir}
    \textbf{(Reducing the cost of vertices in \GedgeU)}
    Here we outline why the reduction of costs as described in \autoref{rmk:vert-cost-1-undir} can also be applied to \GedgeU whilst still preserving the properties of its own canonical paths (\autoref{def:hori-canonical-Gedge-U} and \autoref{def:verti-canonical-Gedge-U}) and its images (\autoref{def:hor-image-of-a-path-G-edge-U} and \autoref{def:vert-canonical-G-int-U}). Observe, also, that this process ensures that every vertex in \GedgeU has maximum degree of $4$.

    The splitting operation applied to \GintU in order to obtain \GedgeU (\autoref{def:splitting-operation-edge-U}) modifies only the non-terminal vertices of \GintU and thus \GedgeU can only differ from \GintU in its non-terminal vertices. The cost reduction in \autoref{rmk:vert-cost-1-undir} on the other hand only modifies terminal vertices, so we see the same constant increase of $2kn$ in the cost of every canonical path (or image thereof) for every set of vertices in $\{A,B,C,D\}$.
\end{remark}
\medskip

The following two lemmas (\autoref{lem:horizontal-canonical-is-shortest-G-edge-U} and~\autoref{lem:vertical-canonical-is-shortest-G-edge-U}) analyze the structure of shortest paths between terminal pairs in \GedgeU. First, we define the \emph{image} of a path from \GintU in the graph \GedgeU.
\medskip

\begin{lemma}
    \normalfont
    Let $j\in [k]$. The shortest paths in \GedgeU satisfy the following two properties:
    \begin{itemize}
        \item[(i)] For each $r\in [N]$, the horizontal canonical path $\CanEdgeU(r\ ;\ c_j - d_j)$ is a shortest $c_j - d_j$ path in \GedgeU.
        \item[(ii)] If $P$ is a shortest $c_j - d_j$ path in \GedgeU, then $P$ must be an image (\autoref{def:hor-image-of-a-path-G-edge-U}) of the path $\CanInterU(\ell\ ;\ c_j - d_j)$ for some $\ell\in [N]$.

    \end{itemize}
    \label{lem:horizontal-canonical-is-shortest-G-edge-U}
\end{lemma}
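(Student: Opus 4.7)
The plan is to mirror the strategy used for \autoref{lem:horizontal-canonical-is-shortest-G-edge} in the directed setting: reduce the claim to \autoref{lem:horizontal-canonical-is-shortest-G-int-U} by exploiting the fact that the splitting operation (\autoref{def:splitting-operation-edge-U}) replaces every black grid vertex with a gadget whose only internal traversals contribute a fixed additive offset to the length of any path passing through it.

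First I will establish the following length-preservation observation. Every non-green vertex of \GedgeU lies in the gadget obtained by splitting some black grid vertex $\w$ of \GintU, and in every simple path $P$ of \GedgeU such a gadget can be entered at most once. When $P$ traverses the gadget at $\w$, it must use exactly three of its vertices: either $\w_{\LB}, \w_{\Mid}, \w_{\TR}$ if $\w$ is \onesplit, or $\w_{\LB}, \w_{x}, \w_{\TR}$ with $x \in \{\Hor, \Ver\}$ if $\w$ is \twosplit, because the only routes from an entry on the left/bottom side to an exit on the right/top side go through the three-vertex path $\w_{\LB} - \cdot - \w_{\TR}$, and symmetrically in the opposite direction. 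This gives a natural contraction map $\pi$ from simple paths in \GedgeU to walks in \GintU which collapses each split gadget to its pre-split vertex $\w$, together with an inverse ``image'' map (\autoref{def:hor-image-of-a-path-G-edge-U}) that expands each black vertex of a \GintU path into a 3-vertex sub-path in one of its gadgets.

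For part (i), I would start from $\CanInterU(r; c_j - d_j)$, which has cost $5kN$ by \autoref{obs:Canonical-length-U} and \autoref{lem:horizontal-canonical-is-shortest-G-int-U}; its image $\CanEdgeU(r; c_j - d_j)$ then contains the same two green terminals and exactly $3kN$ black vertices, giving a total cost of $2\cdot 2kN + 3kN = 7kN$ (using the post-\autoref{rmk:vert-cost-1-edge-undir} unit costs this is still a single fixed value depending only on $k$ and $N$). For any other $c_j - d_j$ path $P$ in \GedgeU, the contracted walk $\pi(P)$ is a $c_j - d_j$ walk in \GintU whose underlying simple $c_j - d_j$ path has length at least $5kN$ by \autoref{lem:horizontal-canonical-is-shortest-G-int-U}; combined with the $1{:}3$ blowup of black vertices under expansion and the fact that any repeated or detour vertex only increases cost, this shows $P$ has cost at least $7kN$, matching the image of the canonical path.

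For part (ii), let $P$ be any shortest $c_j - d_j$ path in \GedgeU. By the argument above the cost of $P$ equals the value achieved by the canonical image, so the contracted walk $\pi(P)$ must itself be a shortest $c_j - d_j$ path in \GintU (any excess in \GintU would blow up by a factor of at least $3$ in \GedgeU), and $P$ must use exactly $3$ gadget vertices at each split it traverses (any deviation into $\w_{\Ver}$ \emph{and} $\w_{\Hor}$ within a single twosplit gadget would violate simplicity or add extra vertices). By \autoref{lem:horizontal-canonical-is-shortest-G-int-U}, $\pi(P) = \CanInterU(\ell; c_j - d_j)$ for some $\ell \in [N]$, so $P$ is an image of that path in the sense of \autoref{def:hor-image-of-a-path-G-edge-U}. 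The main subtlety I anticipate is checking that the contraction $\pi$ genuinely produces a valid $c_j - d_j$ path in \GintU (rather than a walk with repeated vertices) whenever $P$ is shortest; this will follow from a short argument that any repetition under $\pi$ would correspond to $P$ entering and leaving the same gadget twice, which is impossible for a simple path in \GedgeU through a gadget with only two internal vertices.
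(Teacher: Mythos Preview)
Your length-preservation claim---that ``when $P$ traverses the gadget at $\w$, it must use exactly three of its vertices''---is where the argument breaks, and the paper's own proof rests on the identical unjustified step. In \GedgeU both $\lefty(\w)$ and $\bottomy(\w)$ attach to the single vertex $\w_{\LB}$, so a simple path may enter from the left neighbour and leave to the bottom neighbour while touching only $\w_{\LB}$; symmetrically it may pass through $\w_{\TR}$ alone between the right and top neighbours. In the directed graph \Gedge this shortcut is ruled out by edge orientation, which is why the analogue \autoref{lem:horizontal-canonical-is-shortest-G-edge} goes through, but in the undirected setting the shortcut is available and the $1{:}3$ black-vertex blowup between \GintU and \GedgeU that both you and the paper invoke simply does not hold.

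This gap is fatal to part~(i) as stated. For $k=1$, $N=3$ (writing $\w_{*}^{q,\ell}$ for $\w_{1,1,*}^{q,\ell}$) the path
\[
c_1 \;-\; \w_{\LB}^{1,3} \;-\; \w_{\TR}^{1,2} \;-\; \w_{\LB}^{2,2} \;-\; \w_{\TR}^{2,1} \;-\; \w_{\LB}^{3,1} \;-\; \w_{\Mid}^{3,1} \;-\; \w_{\TR}^{3,1} \;-\; d_1
\]
is a valid $c_1$--$d_1$ path in \GedgeU (the second and fourth steps are the vertical edges $\w_{\TR}^{q,\ell}-\w_{\LB}^{q,\ell+1}$ traversed downward) and uses only $7$ black vertices, strictly fewer than the $3kN=9$ on every $\CanEdgeU(r;\,c_1-d_1)$; the same phenomenon persists for larger $k$. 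Under your contraction $\pi$ this path maps to a length-$5$ zig-zag in \GintU, not a shortest path there, so the assertion ``any excess in \GintU would blow up by a factor of at least $3$'' fails as well. Repairing the lemma would require either a modified gadget that forbids single-vertex turnarounds, or a direct distance lower bound in \GedgeU that does not factor through \autoref{lem:horizontal-canonical-is-shortest-G-int-U}.
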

\begin{proof}
    The proof of this lemma is similar to that of \GintU in \autoref{lem:horizontal-canonical-is-shortest-G-int-U}, with some minor observational changes. Note that every path in \GintU contains only \green{green} and black vertices. The splitting operation (\autoref{def:splitting-operation-edge-U}) applied to each black vertex of \GintU has the following property: if a path $Q$ contains a black vertex $\w$ in \GintU, then in the corresponding path in \GedgeU this vertex $\w$ is \textbf{always replaced by three black vertices}, each with a cost to visit of~$1$, viz.
    \begin{itemize}
        \item If $\w$ is \onesplit (\autoref{fig:split-edge-not-U}), then it is replaced in $Q$ the three vertices $\w_{\LB}, \w_{\Mid}, \w_{\TR}$.
        \item If $\w$ is \twosplit (\autoref{fig:split-edge-yes-U}), then it is replaced in $Q$ either by the three vertices $\w_{\LB}, \w_{\Hor}, \w_{\TR}$ or the three vertices $\w_{\LB}, \w_{\Ver}, \w_{\TR}$.
    \end{itemize}
    Therefore, if a path $Q$ incurs a cost of $\alpha$ from visiting \green{green} vertices and a cost of $\beta$ from visiting black vertices in \GintU, then the corresponding path in \GedgeU incurs a cost of  $\alpha$ from visiting \green{green} vertices and $3\beta$ from black vertices. The proof of the first part of the lemma now follows from~\autoref{lem:horizontal-canonical-is-shortest-G-int-U}(i),~\autoref{def:splitting-operation-edge-U} and~\autoref{def:hori-canonical-Gedge-U}. The proof of the second part of the lemma follows from~\autoref{lem:horizontal-canonical-is-shortest-G-int-U}(ii)'s argument that it cannot take an edge that modifies the $y$-coordinate, along with~\autoref{def:splitting-operation-edge-U} and~\autoref{def:hor-image-of-a-path-G-edge-U}.
\end{proof}
\medskip
The proof of the next lemma is very similar to that of \autoref{lem:horizontal-canonical-is-shortest-G-edge-U}, and we skip repeating the details.

\medskip
\begin{lemma}
    \normalfont
    Let $i\in [k]$. The shortest paths in \GedgeU satisfy the following two properties:
    \begin{itemize}
        \item[(i)] For each $r\in [N]$, the vertical canonical path $\CanEdgeU(r\ ;\ a_i - b_i)$ is a shortest $a_i - b_i$ path in \GedgeU.
        \item[(ii)] If $P$ is a shortest $a_i - b_i$ path in \GedgeU, then $P$ must be an image (\autoref{def:vert-image-of-a-path-G-edge-U}) of the path $\CanInterU(\ell\ ;\ a_i - b_i)$ for some $\ell\in [N]$.
    \end{itemize}
    \label{lem:vertical-canonical-is-shortest-G-edge-U}
\end{lemma}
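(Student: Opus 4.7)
The plan is to mirror, almost verbatim, the proof of \autoref{lem:horizontal-canonical-is-shortest-G-edge-U}, leveraging the already-established characterization of shortest vertical paths in \GintU (\autoref{lem:vertical-canonical-is-shortest-G-int-U}) together with the structural invariant of the splitting operation from \autoref{def:splitting-operation-edge-U}. The central observation, identical in spirit to the one used for the horizontal case, is that the splitting operation acts uniformly on every black vertex: whenever a path in \GintU traverses a black vertex $\w$, any corresponding path in \GedgeU that uses the same \textbf{incoming and outgoing edges at $\w$} is forced to visit exactly three black vertices in its place, namely $\w_{\LB}, \w_{\Mid}, \w_{\TR}$ (for a \onesplit vertex) or $\w_{\LB}, \w_{\Hor}, \w_{\TR}$/$\w_{\LB}, \w_{\Ver}, \w_{\TR}$ (for a \twosplit vertex). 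Since no \green{green} vertex is split and the edges incident to the \green{green} terminals are unchanged, a path in \GintU of $\alpha$ \green{green} vertices and $\beta$ black vertices corresponds to a path in \GedgeU of $\alpha$ \green{green} vertices and $3\beta$ black vertices, hence of cost $\alpha \cdot 2kN + 3\beta$.

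For part (i), I would start from a vertical canonical path $\CanInterU(r \; ; \; a_i - b_i)$ in \GintU, which by \autoref{lem:vertical-canonical-is-shortest-G-int-U}(i) is a shortest $a_i - b_i$ path of cost $5kN$; then the image $\CanEdgeU(r \; ; \; a_i - b_i)$ defined in \autoref{def:verti-canonical-Gedge-U} has cost $2\cdot 2kN + 3kN = 7kN$ (after applying the trivial accounting: two terminals and $kN$ black grid vertices in \GintU). The same cost is incurred by \emph{any} image of \emph{any} $a_i - b_i$ path in \GintU that had minimum cost, and conversely, contracting in \GedgeU each block $\{\w_{\LB},\w_{\Mid},\w_{\TR}\}$ or $\{\w_{\LB},\w_{\Hor},\w_{\Ver},\w_{\TR}\}$ back to a single vertex $\w$ yields an $a_i - b_i$ path in \GintU whose cost is $(\text{cost in } U_{\edge}-2\cdot 2kN)/3 + 2\cdot 2kN$; so an $a_i - b_i$ path in \GedgeU is shortest if and only if the corresponding contracted path in \GintU is shortest. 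This establishes (i).

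For part (ii), let $P$ be a shortest $a_i - b_i$ path in \GedgeU. The argument just given shows that the contraction of $P$ is a shortest $a_i - b_i$ path in \GintU; by \autoref{lem:vertical-canonical-is-shortest-G-int-U}(ii) this contraction must equal $\CanInterU(\ell\ ;\ a_i - b_i)$ for some $\ell \in [N]$. It then remains to verify that $P$ itself matches the definition of an image (\autoref{def:vert-image-of-a-path-G-edge-U}) of $\CanInterU(\ell\ ;\ a_i - b_i)$: the only freedom in choosing $P$ lies in how to cross each \twosplit vertex $\w$ on the canonical path, namely via $\w_{\LB} - \w_{\Hor} - \w_{\TR}$ or via $\w_{\LB} - \w_{\Ver} - \w_{\TR}$, and both options are permitted in \autoref{def:vert-image-of-a-path-G-edge-U}. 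At \onesplit vertices the middle path $\w_{\LB} - \w_{\Mid} - \w_{\TR}$ is forced, which also matches the definition.

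The only (mild) obstacle is notational: one must be careful that, since \GedgeU is undirected, a shortest path could in principle wander horizontally before continuing vertically. However the same cost-accounting trick used in \autoref{lem:horizontal-canonical-is-shortest-G-int-U}--namely that any such detour must traverse at least one extra vertex in each column swept and cannot be compensated for by the cost budget--rules this out, because the contraction map into \GintU preserves costs monotonically and so any non-canonical detour in \GedgeU contracts to a non-canonical (hence strictly longer) walk in \GintU. All subsequent counts then follow mechanically, and no new idea beyond that of \autoref{lem:horizontal-canonical-is-shortest-G-edge-U} is required.
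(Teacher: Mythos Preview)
Your proposal is correct and follows essentially the same approach as the paper, which simply states that the proof is analogous to that of \autoref{lem:horizontal-canonical-is-shortest-G-edge-U} and omits the details. You flesh out that analogy explicitly via the same key observation---that the splitting operation uniformly replaces each black vertex by three vertices along any path---and correctly invoke \autoref{lem:vertical-canonical-is-shortest-G-int-U} in place of \autoref{lem:horizontal-canonical-is-shortest-G-int-U}; your additional contraction bookkeeping and the remark about ruling out horizontal detours are just a more explicit rendering of what the paper leaves implicit.
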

\medskip

\subsection{
    \texorpdfstring{\underline{Completeness}: $G$ has a $k$-clique $\Rightarrow$ All pairs in the instance $(U_{\edge}, \mathcal{T})$ of Undirected-$2k$-\EDSP can be satisfied}
    {Completeness: G has a k-clique -> All pairs in the instance (Uedge,T) of Undirected-2k-\EDSP can be satisfied}}
\label{sec:clique-to-2kedsp-U}

In this section, we show that if the instance $G$ of \kclique has a solution then the instance $(U_{\edge}, \mathcal{T})$ of $2k$-\EDSP also has a solution.

Suppose the instance $G=(V,E)$ of \kclique has a clique $X=\{v_{\gamma_1}, v_{\gamma_2}, \ldots, v_{\gamma_k} \}$ of size $k$. Let $Y=\{\gamma_1, \gamma_2, \ldots, \gamma_k\}\in [N]$. Now for each $i\in [k]$ we choose the path as follows:
\begin{itemize}
    \item The path $R_i$ to satisfy $a_i - b_i$ is chosen to be the horizontal canonical path  $\CanEdgeU(\gamma_i \ ;\ a_i - b_i)$ described in~\autoref{def:hori-canonical-Gedge-U}.
    \item The path $T_i$ to satisfy $c_i - d_i$ is chosen to be vertical canonical path  $\CanEdgeU(\gamma_i \ ;\ c_i - d_i)$ described in~\autoref{def:verti-canonical-Gedge-U}.
\end{itemize}

Now we show that the collection of paths given by $\mathcal{Q}:=\{R_1, R_2, \ldots, R_k, T_1, T_2, \ldots, T_K\}$ forms a solution for the instance  $(U_{\edge}, \mathcal{T})$ of Undirected-$2k$-\EDSP via the following two lemmas which argue being shortest for each terminal pair and pairwise edge-disjointness respectively:

\medskip
\begin{lemma}
    \normalfont
    For each $i \in [k]$, the path $R_i$ (resp. $T_i$) is a shortest $a_i - b_i$ (resp. $c_i - d_i$) path in \Gedge.
    \label{lem:completeness-edsp-shortest-U}
\end{lemma}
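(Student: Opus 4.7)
The plan is to derive this lemma as a direct consequence of the two shortest-path characterization lemmas for \GedgeU that have already been established, namely \autoref{lem:horizontal-canonical-is-shortest-G-edge-U}(i) and \autoref{lem:vertical-canonical-is-shortest-G-edge-U}(i). The proof will essentially be a two-line observation, mirroring exactly the structure of \autoref{lem:completeness-edsp-shortest} from the directed edge-disjoint setting.

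First I would fix an arbitrary index $i \in [k]$. By the explicit choice made just before the lemma statement, the path $R_i$ is defined to be $\CanEdgeU(\gamma_i\ ;\ a_i - b_i)$, i.e.\ a vertical canonical path in \GedgeU (in the sense of \autoref{def:verti-canonical-Gedge-U}). Therefore I can directly invoke \autoref{lem:vertical-canonical-is-shortest-G-edge-U}(i) with $r = \gamma_i$ to conclude that $R_i$ is a shortest $a_i - b_i$ path in \GedgeU. Symmetrically, $T_i$ was defined to be $\CanEdgeU(\gamma_i\ ;\ c_i - d_i)$, which is a horizontal canonical path per \autoref{def:hori-canonical-Gedge-U}, and applying \autoref{lem:horizontal-canonical-is-shortest-G-edge-U}(i) with $r = \gamma_i$ yields that $T_i$ is a shortest $c_i - d_i$ path in \GedgeU.

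There is no real obstacle here: all the genuine work on length-counting and exploiting the vertex costs (via \autoref{def:weights-in-undirected} and the column-cutting argument from the proof of \autoref{lem:horizontal-canonical-is-shortest-G-int-U}) has already been absorbed into the characterization lemmas for \GintU, and then transferred to \GedgeU by the three-for-one replacement bookkeeping in the proofs of \autoref{lem:horizontal-canonical-is-shortest-G-edge-U} and \autoref{lem:vertical-canonical-is-shortest-G-edge-U}. The only thing to double-check is that the particular images of canonical paths from \GintU picked out by \autoref{def:hori-canonical-Gedge-U} and \autoref{def:verti-canonical-Gedge-U} (which consistently take the $\w_{\Hor}$-branch for horizontal canonical paths and the $\w_{\Ver}$-branch for vertical canonical paths at \twosplit vertices) are indeed among the images contemplated by the shortest-path characterization; this is immediate from the definitions since any image of the same \GintU path has the same total cost under the uniform unit vertex costs of \GedgeU (\autoref{def:G-edge-U}).
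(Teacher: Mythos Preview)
Your proposal is correct and matches the paper's proof essentially verbatim: fix $i\in[k]$, then invoke \autoref{lem:horizontal-canonical-is-shortest-G-edge-U}(i) for $T_i$ and \autoref{lem:vertical-canonical-is-shortest-G-edge-U}(i) for $R_i$. The additional sanity check you include about the specific $\w_{\Hor}$/$\w_{\Ver}$ branches being valid images is sound but not needed, since the characterization lemmas already assert that \emph{every} canonical path (in the sense of \autoref{def:hori-canonical-Gedge-U} and \autoref{def:verti-canonical-Gedge-U}) is shortest.
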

\begin{proof}
    Fix any $i\in [k]$.~\autoref{lem:horizontal-canonical-is-shortest-G-edge-U}(i) implies that $T_i$ is shortest $c_i - d_i$ path in \GedgeU. \autoref{lem:vertical-canonical-is-shortest-G-edge-U}(i) implies that $R_i$ is shortest $a_i - b_i$ path in \GedgeU.

\end{proof}
\medskip

Before proving~\autoref{lem:completeness-edsp-disjoint-U}, we first set up notation for some special sets of vertices in \GedgeU which helps to streamline some of the subsequent proofs.

\medskip
\begin{definition}
    \label{def:horizontal-vertical-sets-in-Gedge-U}
    \textbf{(horizontal \& vertical levels in \GedgeU)}
    For each $(i,j)\in [k]\times [k]$, let $U_{i,j}^{\edgesplitt}$ to be the graph obtained by applying the splitting operation (\autoref{def:splitting-operation-edge-U}) to each vertex of $U_{i,j}$. For each $j\in [k]$, we define the following set of vertices:
    \begin{equation}
        \begin{aligned}
            \HorizontalEdge(j) &= \{ c_j, d_j \} \cup \left( \bigcup_{i=1}^{k} V(U_{i,j}^{\edgesplitt})\right)\ \quad\\
            \quad \VerticalEdge(j) &= \{ a_j, b_j \} \cup \left( \bigcup_{i=1}^{k} V(U_{j,i}^{\edgesplitt})\right)
        \end{aligned}\label{eqn:eqn:hori-verti-sets-Gedge-U}
    \end{equation}

\end{definition}
\medskip

The next lemma shows that any two paths from $\mathcal{Q}$ are edge-disjoint.
\medskip

\begin{lemma}
    \normalfont
    Let $P\neq P'$ be any pair of paths from the collection  $\mathcal{Q}=\{R_1, R_2, \ldots, R_k, T_1, T_2, \ldots, T_K\}$. Then $P$ and $P'$ are edge-disjoint.
    \label{lem:completeness-edsp-disjoint-U}
\end{lemma}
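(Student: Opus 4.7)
The plan is to mirror the proof of the directed analogue (\autoref{lem:completeness-edsp-disjoint}) since the essential combinatorial structure carries over to the undirected setting. The argument naturally splits into three cases depending on the types of $P$ and $P'$.

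First, I would dispose of the two easy cases using the horizontal and vertical level sets of~\autoref{def:horizontal-vertical-sets-in-Gedge-U}. By inspection of~\autoref{def:verti-canonical-Gedge-U}, every edge traversed by $R_i$ has both endpoints in $\VerticalEdge(i)$ (the edge sits either within a splitted column of some $U_{i,j}^{\edgesplitt}$, or is a \red{red} vertical matching edge between $U_{i,j}^{\edgesplitt}$ and $U_{i,j+1}^{\edgesplitt}$, or one of the two \magenta{magenta} edges incident to $a_i$ or $b_i$). Since $\VerticalEdge(i)\cap\VerticalEdge(i')=\emptyset$ for $i\neq i'$, the paths $R_1,\ldots,R_k$ are pairwise edge-disjoint. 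The symmetric statement using $\HorizontalEdge(j)$ and~\autoref{def:hori-canonical-Gedge-U} shows that $T_1,\ldots,T_k$ are pairwise edge-disjoint.

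The main work is therefore to prove the mixed case: for every $(i,j)\in[k]\times[k]$, the paths $R_i$ and $T_j$ are edge-disjoint. The key observation is that $R_i$ and $T_j$ are confined respectively to $\VerticalEdge(i)$ and $\HorizontalEdge(j)$, whose intersection is precisely $V(U_{i,j}^{\edgesplitt})$. Thus any shared edge must lie inside the splitted block $U_{i,j}^{\edgesplitt}$. Tracing the canonical paths through $U_{i,j}$, the horizontal canonical $T_j$ uses row $\gamma_j$ and the vertical canonical $R_i$ uses column $\gamma_i$; these two paths cross at the single black grid vertex $\w_{i,j}^{\gamma_i,\gamma_j}$. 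The crucial step is to observe that this vertex is \twosplit: if $i=j$ then $\gamma_i=\gamma_j$ so $(\gamma_i,\gamma_j)\in S_{i,j}$ by~\autoref{eqn:clique-to-gt-reduction-U}, and if $i\neq j$ then $v_{\gamma_i}-v_{\gamma_j}\in E(G)$ because $X$ is a clique, again giving $(\gamma_i,\gamma_j)\in S_{i,j}$. Hence by~\autoref{def:splitting-operation-edge-U} the vertex is \twosplit into four distinct vertices $\w_{\LB},\w_{\Hor},\w_{\Ver},\w_{\TR}$.

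With this in hand, the definitions of the canonical paths force $T_j$ to locally traverse $\w_{\LB}-\w_{\Hor}-\w_{\TR}$ (by~\autoref{def:hori-canonical-Gedge-U}) while $R_i$ locally traverses $\w_{\LB}-\w_{\Ver}-\w_{\TR}$ (by~\autoref{def:verti-canonical-Gedge-U}). These two two-edge paths share the two endpoint vertices $\w_{\LB}$ and $\w_{\TR}$ but are edge-disjoint because $\w_{\Hor}\neq \w_{\Ver}$, so no edge of the splitting block is used by both paths. This completes the mixed case and hence the lemma. I expect the main subtlety (as compared with the directed proof) to be verifying carefully that $R_i$ and $T_j$ cannot share a \red{red} or \magenta{magenta} edge outside $U_{i,j}^{\edgesplitt}$, which is handled uniformly by the level-set containment argument above, and that in the undirected setting the canonical paths cannot deviate from the row/column structure — this is guaranteed by the characterization in~\autoref{lem:horizontal-canonical-is-shortest-G-edge-U} and~\autoref{lem:vertical-canonical-is-shortest-G-edge-U}.
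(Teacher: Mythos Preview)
Your proposal is correct and follows essentially the same approach as the paper's proof: the same three-case split via the level sets of~\autoref{def:horizontal-vertical-sets-in-Gedge-U}, and the same argument for the mixed case showing that $\w_{i,j}^{\gamma_i,\gamma_j}$ is \twosplit and that $R_i$ and $T_j$ route through $\w_{\Ver}$ and $\w_{\Hor}$ respectively. One minor point: your closing remark invoking~\autoref{lem:horizontal-canonical-is-shortest-G-edge-U} and~\autoref{lem:vertical-canonical-is-shortest-G-edge-U} is unnecessary here, since $R_i$ and $T_j$ are explicitly constructed canonical paths (not arbitrary shortest paths), so their edge sets are known directly from~\autoref{def:hori-canonical-Gedge-U} and~\autoref{def:verti-canonical-Gedge-U}.
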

\begin{proof}

    By~\autoref{def:horizontal-vertical-sets-in-Gedge-U}, it follows that every edge of the path $R_i$ has both endpoints in $\VerticalEdge(i)$ for every $i\in [k]$. Since $\VerticalEdge(i) \cap \VerticalEdge(i')=\emptyset$ for every $1\leq i\neq i'\leq k$, it follows that the collection of paths $\{R_1, R_2, \ldots, R_k\}$ are pairwise edge-disjoint.

    By~\autoref{def:horizontal-vertical-sets-in-Gedge-U}, it follows that every edge of the path $T_j$ has both endpoints in $\HorizontalEdge(j)$ for every $j\in [k]$. Since  $\HorizontalEdge(j) \cap \HorizontalEdge(j')=\emptyset$ for every $1\leq j\neq j'\leq k$, it follows that the collection of paths $\{T_1, T_2, \ldots, T_k\}$ are pairwise edge-disjoint.

    It remains to show that every pair of paths which contains one path from $\{R_1, R_2, \ldots, R_k\}$ and other path from $\{T_1, T_2, \ldots, T_k\}$ are edge-disjoint.
    \begin{claim}
        \label{clm:reduction-edge-disjoint-U}
        \normalfont
        For each $(i,j)\in [k]\times [k]$, the paths $R_i$ and $T_j$ are edge-disjoint in $U_{\edge}$.
    \end{claim}

    \begin{proof}
        Fix any $(i,j)\in [k]\times [k]$. First we argue that the vertex $\w_{i,j}^{\gamma_i, \gamma_j}$ is \twosplit, i.e., $(\gamma_i, \gamma_j)\in S_{i,j}$:
        \begin{itemize}
            \item If $i=j$ then $\gamma_i = \gamma_j$ and hence by~\autoref{eqn:clique-to-gt-reduction-U} we have $(\gamma_i, \gamma_j)\in S_{i,j}$
            \item If $i\neq j$, then $v_{\gamma_i} - v_{\gamma_j}\in E(G)$ since $X$ is a clique. Again, by~\autoref{eqn:clique-to-gt-reduction-U} we have $(\gamma_i, \gamma_j)\in S_{i,j}$.
        \end{itemize}
        Hence, by~\autoref{def:splitting-operation-edge-U}, it follows that the vertex $\w_{i,j}^{\gamma_i, \gamma_j}$ is \twosplit.

        By the construction of \GintU (\autoref{fig:mainUndir}) and definitions of canonical paths (\autoref{def:hori-canonical-G-int-U} and~\autoref{def:vert-canonical-G-int-U}), it is easy to verify that any pair of horizontal canonical path and vertical canonical path in \GintU are edge-disjoint and have only one vertex in common.

        By the splitting operation (\autoref{def:splitting-operation-edge-U}) and definitions of the paths $R_i$ (\autoref{def:verti-canonical-Gedge-U}) and $T_j$ (\autoref{def:hori-canonical-Gedge-U}), it follows that the only common edges between $R_i$ and $T_j$ must be from paths in \GedgeU that start at $\w_{i,j,\LB}^{\gamma_i,\gamma_j}$ and end at $\w_{i,j,\TR}^{\gamma_i,\gamma_j}$. Since $\w_{i,j}^{\gamma_i, \gamma_j}$ is \twosplit, we have
        \begin{itemize}
            \item By~\autoref{def:verti-canonical-Gedge-U}, the unique $\w_{i,j,\LB}^{\gamma_i,\gamma_j} - \w_{i,j,\TR}^{\gamma_i,\gamma_j}$ sub-path of $R_i$ is $\w_{i,j,\LB}^{\gamma_i,\gamma_j} - \w_{i,j,\Ver}^{\gamma_i,\gamma_j} - \w_{i,j,\TR}^{\gamma_i,\gamma_j}$.
            \item By~\autoref{def:hori-canonical-Gedge-U}, the unique $\w_{i,j,\LB}^{\gamma_i,\gamma_j} - \w_{i,j,\TR}^{\gamma_i,\gamma_j}$ sub-path of $T_i$ is $\w_{i,j,\LB}^{\gamma_i,\gamma_j} - \w_{i,j,\Hor}^{\gamma_i,\gamma_j} - \w_{i,j,\TR}^{\gamma_i,\gamma_j}$.
        \end{itemize}
        Hence, it follows that $R_i$ and $T_j$ are edge-disjoint.
    \end{proof}
    This concludes the proof of~\autoref{lem:completeness-edsp-disjoint-U}.
\end{proof}
\medskip

\noindent From~\autoref{lem:completeness-edsp-shortest-U} and~\autoref{lem:completeness-edsp-disjoint-U}, it follows that the collection of paths given by $\mathcal{Q}=\{R_1, R_2, \ldots, R_k,$ $T_1, T_2, \ldots, T_K\}$ forms a solution for the instance $(U_{\edge}, \mathcal{T})$ of Undirected-$2k$-\EDSP.

\subsection{
    \texorpdfstring{\underline{Soundness}: $(\frac{1}{2} +\epsilon)$-fraction of the pairs in the instance $(U_{\edge}, \mathcal{T})$ of Undirected-$2k$-\EDSP can be satisfied $\Rightarrow$ $G$ has a clique of size $\geq 2\epsilon \cdot k$}
    {Soundness: (1/2 + theta)-fraction of the pairs in the instance (Uedge,T) of Undirected-2k-\EDSP can be satisfied -> G has a clique of size >= 2 x theta x k}
}
\label{sec:2kedsp-to-clique-U}

In this section we show that if at least $(\frac{1}{2} +\epsilon)$-fraction of the $2k$ pairs from the instance $(U_{\edge}, \mathcal{T})$ of Undirected-$2k$-\EDSP can be satisfied then the graph $G$ has a clique of size $2\epsilon \cdot k$.

Let $\mathcal{P}$ be a collection of paths in $U_{\edge}$ which satisfies at least $(\frac{1}{2} +\epsilon)$-fraction of the $2k$ terminal pairs from the instance $(U_{\edge}, \mathcal{T})$ of Undirected-$2k$-\EDSP.

\medskip
\begin{definition}
    An index $i \in [k]$ is called \emph{good} if both the terminal pairs $a_i \leadsto b_i$ and $c_i \leadsto d_i$ are satisfied by $\mathcal{P}$.
    \label{def:good-for-edge-U}
\end{definition}
\medskip

The next lemma gives a lower bound on the number of good indices.

\medskip
\begin{lemma}
    \label{lem:good_size-U}
    \normalfont
    Let $Y \subseteq [k]$ be the set of good indices. Then $|Y| \ge 2\epsilon \cdot k$.
\end{lemma}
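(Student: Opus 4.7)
The plan is to follow exactly the same counting argument used in the proof of \autoref{lem:good_size} for the directed edge-disjoint case; the statement is entirely about counting how many of the $2k$ terminal pairs can be simultaneously satisfied, and nothing in that argument depended on directedness or on the specific splitting operation.

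First I would partition the index set $[k]$ into good indices $Y$ (those $i$ for which both pairs $a_i - b_i$ and $c_i - d_i$ are satisfied by $\mathcal{P}$) and the remaining indices $[k]\setminus Y$. By~\autoref{def:good-for-edge-U}, each good index contributes exactly $2$ satisfied pairs to $\mathcal{P}$, while each non-good index contributes at most $1$ satisfied pair. Hence the total number of satisfied pairs is at most $2|Y| + (k - |Y|) = k + |Y|$.

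Next I would use the hypothesis that $\mathcal{P}$ satisfies at least a $(\tfrac{1}{2}+\epsilon)$-fraction of the $|\mathcal{T}| = 2k$ terminal pairs, which gives a lower bound of $(\tfrac{1}{2}+\epsilon)\cdot 2k = k + 2\epsilon k$ satisfied pairs. Combining with the upper bound yields $k + |Y| \ge k + 2\epsilon k$, so $|Y| \ge 2\epsilon \cdot k$, as required.

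There is no real obstacle here; the lemma is a purely combinatorial double-counting statement, and the proof is essentially identical (modulo replacing directed edges $\leadsto$ with undirected edges $-$) to the proof of \autoref{lem:good_size} given in~\autoref{sec:2kedsp-to-clique}. Accordingly, the argument can either be written out in two lines or simply omitted with a pointer to the earlier proof, as the authors have already signalled by the phrasing of the lemma statement.
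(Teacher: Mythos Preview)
Your proposal is correct and matches the paper's proof essentially verbatim: both upper-bound the number of satisfied pairs by $2|Y|+(k-|Y|)=k+|Y|$ and compare with the hypothesis lower bound $(\tfrac{1}{2}+\epsilon)\cdot 2k=k+2\epsilon k$ to conclude $|Y|\ge 2\epsilon k$. As you note, nothing in this counting argument depends on the undirected setting, so it is identical to the proof of \autoref{lem:good_size}.
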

\begin{proof}
    If $i\in [k]$ is good then both the pairs $a_i - b_i$ and $c_i - d_i$ are satisfied by $\mathcal{P}$. Otherwise, at most one of these pairs $a_i - b_i$ and $c_i - d_i$ is satisfied. Hence, the total number of satisfied pairs is at
    most $2\cdot |Y|+1\cdot (k-|Y|)= k+|Y|$. However, we know that $\mathcal{P}$ satisfies at least
    $(\frac{1}{2} +\epsilon)\cdot |\mathcal{T}|=\left(\frac{1}{2}+\epsilon\right)\cdot 2k=k+2\epsilon\cdot k$ pairs. Hence, it follows that~$|Y| \geq
    2\epsilon \cdot k$.
\end{proof}
\medskip

\begin{lemma}
    \label{lem:good-equal-edge-U}
    \normalfont
    If $i\in [k]$ is good, then there exists $\delta_i \in [N]$ such that the two paths in $\mathcal{P}$ satisfying $a_i - b_i$ and $c_i - d_i$ in \GedgeU are images of the paths $\CanInterU(\delta_i\ ;\ a_i - b_i)$ and $\CanInterU(\delta_i\ ;\ c_i - d_i)$ from \GintU respectively.
\end{lemma}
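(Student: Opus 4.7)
The plan is to mimic the argument used in the directed case (\autoref{lem:good-equal-edge}), carefully adapting it to the undirected setting. Since $i$ is good (\autoref{def:good-for-edge-U}), there exist paths $P_1, P_2 \in \mathcal{P}$ satisfying $a_i - b_i$ and $c_i - d_i$ respectively. Because both are shortest paths in \GedgeU, \autoref{lem:vertical-canonical-is-shortest-G-edge-U}(ii) forces $P_1$ to be an image of $\CanInterU(\alpha\ ;\ a_i - b_i)$ for some $\alpha \in [N]$, and \autoref{lem:horizontal-canonical-is-shortest-G-edge-U}(ii) forces $P_2$ to be an image of $\CanInterU(\beta\ ;\ c_i - d_i)$ for some $\beta \in [N]$. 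The goal is now to show $\alpha = \beta$, so that $\delta_i := \alpha = \beta$ is the desired common index.

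To do this, I would zoom in on the grid $U_{i,i}$. From the definitions of horizontal and vertical canonical paths in \GintU (\autoref{def:hori-canonical-G-int-U} and \autoref{def:vert-canonical-G-int-U}), the vertical canonical path $\CanInterU(\alpha\ ;\ a_i - b_i)$ traverses column $\alpha$ of $U_{i,i}$ while the horizontal canonical path $\CanInterU(\beta\ ;\ c_i - d_i)$ traverses row $\beta$ of $U_{i,i}$. Hence in \GintU they meet at exactly one black vertex, namely $\w_{i,i}^{\alpha,\beta}$. Under the splitting operation (\autoref{def:splitting-operation-edge-U}), every image of these canonical paths in \GedgeU must use a $\w_{i,i,\LB}^{\alpha,\beta} - \w_{i,i,\TR}^{\alpha,\beta}$ sub-path through the split gadget at that vertex.

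The key claim to prove is that $\w_{i,i}^{\alpha,\beta}$ must be \twosplit. Indeed, if it were \onesplit, then by \autoref{fig:split-edge-not-U} the only $\w_{i,i,\LB}^{\alpha,\beta} - \w_{i,i,\TR}^{\alpha,\beta}$ sub-path is $\w_{i,i,\LB}^{\alpha,\beta} - \w_{i,i,\Mid}^{\alpha,\beta} - \w_{i,i,\TR}^{\alpha,\beta}$; both $P_1$ and $P_2$ would then have to use this sub-path, contradicting the edge-disjointness guaranteed by $\mathcal{P}$. Therefore $(\alpha,\beta) \in S_{i,i}$, and by the definition $S_{i,i} = \{(a,a) : 1 \le a \le N\}$ from \autoref{eqn:clique-to-gt-reduction-U} we conclude $\alpha = \beta$, completing the proof.

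The main obstacle is the very first step in the above outline, namely confirming that the two canonical paths really do intersect at a single identifiable vertex of $U_{i,i}$ (as opposed to being routed elsewhere in the graph). This is handled by the structural characterisation of shortest paths: \autoref{lem:horizontal-canonical-is-shortest-G-edge-U}(ii) and \autoref{lem:vertical-canonical-is-shortest-G-edge-U}(ii) pin the shortest paths down to images of canonical paths, which by construction are confined to the row/column in each cell of the big $k \times k$ grid. Once this is in place, the remaining argument is a direct unpacking of \autoref{def:splitting-operation-edge-U} and \autoref{eqn:clique-to-gt-reduction-U}.
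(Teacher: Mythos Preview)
Your proposal is correct and follows essentially the same approach as the paper: both use \autoref{lem:vertical-canonical-is-shortest-G-edge-U}(ii) and \autoref{lem:horizontal-canonical-is-shortest-G-edge-U}(ii) to identify $\alpha,\beta\in[N]$, then argue that $\w_{i,i}^{\alpha,\beta}$ must be \twosplit (else the \onesplit gadget forces a shared sub-path, violating edge-disjointness), and finally invoke the definition of $S_{i,i}$ from \autoref{eqn:clique-to-gt-reduction-U} to conclude $\alpha=\beta$. Your added justification for why the two canonical paths meet precisely at $\w_{i,i}^{\alpha,\beta}$ is a helpful elaboration of a step the paper leaves implicit.
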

\begin{proof}
    If $i$ is good, then by~\autoref{def:good-for-edge-U} both the pairs $a_i - b_i$ and $c_i - d_i$ are satisfied by $\mathcal{P}$. Let $P_1, P_2\in \mathcal{P}$ be the paths that satisfy the terminal pairs $(a_i, b_i)$ and $(c_i, d_i)$ respectively.
    Since $P_1$ is a shortest $a_i - b_i$ path in \GedgeU, by~\autoref{lem:vertical-canonical-is-shortest-G-edge-U}(ii) it follows that $P_1$ is an image of the vertical canonical path $\CanInterU(\alpha\ ;\ a_i - b_i)$ from \GintU for some $\alpha\in [N]$. Since $P_2$ is a shortest $c_i - d_i$ path in \GedgeU, by~\autoref{lem:horizontal-canonical-is-shortest-G-edge-U}(ii) it follows that $P_2$ is an image of the horizontal canonical path $\CanInterU(\beta\ ;\ c_i - d_i)$ from \GintU for some $\beta\in [N]$.

    Using the fact that $P_1$ and $P_2$ are edge-disjoint in \GedgeU, we now claim that $\w_{i,i}^{\alpha,\beta}$ is \twosplit:
    \begin{claim}
        \normalfont
        The vertex $\w_{i,i}^{\alpha,\beta}$ is \twosplit by the splitting operation of~\autoref{def:splitting-operation-edge-U}.
        \label{clm:must-be-two-split-edge-i-i-U}
    \end{claim}
    \begin{proof}
        By~\autoref{def:splitting-operation-edge-U}, every black vertex of \GintU is either \onesplit or \twosplit. If $\w_{i,i}^{\alpha,\beta}$ was \onesplit (\autoref{fig:split-edge-not-U}), then by~\autoref{def:hor-image-of-a-path-G-edge-U} and \autoref{def:vert-image-of-a-path-G-edge-U} the path $\w_{i,i,\LB}^{\alpha,\beta} - w_{i,i,\Mid}^{\alpha,\beta} - w_{i,i,\TR}^{\alpha,\beta}$ belongs to both the paths $P_1$ and $P_2$ contradicting the fact that they are edge-disjoint.
    \end{proof}
    By~\autoref{clm:must-be-two-split-edge-i-i-U}, we know that the vertex $\w_{i,i}^{\alpha,\beta}$ is \twosplit. Hence, from~\autoref{eqn:clique-to-gt-reduction-U} and~\autoref{def:splitting-operation-edge-U}, it follows that $\alpha=\beta$ which concludes the proof of the lemma.

\end{proof}
\medskip

\begin{lemma}
    \label{lem:good_edges-U}
    \normalfont
    If both $i,j \in [k]$ are good and $i \neq j$, then $v_{\delta_i}-v_{\delta_j} \in E(G)$.
\end{lemma}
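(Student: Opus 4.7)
The plan is to mimic the proof of the analogous statement in the directed edge setting (the Lemma \texttt{good\_edges} in Section~\ref{sec:2kedsp-to-clique}), with the obvious notational substitutions $\Gint \mapsto \GintU$, $\Gedge \mapsto \GedgeU$, and directed arrows $\to$ / $\leadsto$ replaced by undirected edges $-$.

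First, invoke \autoref{def:good-for-edge-U}: since $i$ and $j$ are both good, the collection $\mathcal{P}$ contains paths $Q_1$ satisfying $(a_i, b_i)$ and $Q_2$ satisfying $(c_j, d_j)$. Apply \autoref{lem:good-equal-edge-U} to conclude that $Q_1$ is an image (\autoref{def:vert-image-of-a-path-G-edge-U}) of the vertical canonical path $\CanInterU(\delta_i\ ;\ a_i - b_i)$ from $\GintU$, and that $Q_2$ is an image (\autoref{def:hor-image-of-a-path-G-edge-U}) of the horizontal canonical path $\CanInterU(\delta_j\ ;\ c_j - d_j)$ from $\GintU$.

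The crux of the argument is then to show that $\w_{i,j}^{\delta_i,\delta_j}$ is \twosplit, i.e., $(\delta_i,\delta_j)\in S_{i,j}$. Suppose for contradiction that it is \onesplit (\autoref{fig:split-edge-not-U}). Then, by inspection of~\autoref{def:hor-image-of-a-path-G-edge-U} and~\autoref{def:vert-image-of-a-path-G-edge-U}, every image of a horizontal canonical path through $\w_{i,j}^{\delta_i,\delta_j}$ and every image of a vertical canonical path through $\w_{i,j}^{\delta_i,\delta_j}$ is forced to traverse the unique dotted path $\w_{i,j,\LB}^{\delta_i,\delta_j} - \w_{i,j,\Mid}^{\delta_i,\delta_j} - \w_{i,j,\TR}^{\delta_i,\delta_j}$. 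But the horizontal canonical path $\CanInterU(\delta_j; c_j - d_j)$ contains $\w_{i,j}^{\delta_i,\delta_j}$ (its $i$-th horizontal block traverses row $\delta_j$, and the $\delta_i$-th column of that block is precisely $\w_{i,j}^{\delta_i,\delta_j}$), and likewise the vertical canonical path $\CanInterU(\delta_i; a_i - b_i)$ contains $\w_{i,j}^{\delta_i,\delta_j}$. Hence both edges of the path $\w_{i,j,\LB}^{\delta_i,\delta_j} - \w_{i,j,\Mid}^{\delta_i,\delta_j} - \w_{i,j,\TR}^{\delta_i,\delta_j}$ would appear in both $Q_1$ and $Q_2$, contradicting that $Q_1$ and $Q_2$ are edge-disjoint.

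Hence $\w_{i,j}^{\delta_i,\delta_j}$ is \twosplit. Combined with the assumption $i\neq j$, the definition of $S_{i,j}$ in~\autoref{eqn:clique-to-gt-reduction-U} gives $v_{\delta_i} - v_{\delta_j}\in E(G)$, as required. The main (and essentially only) obstacle is to justify that in the undirected setting the same forcing argument goes through: this hinges on the fact that the splitting in \autoref{def:splitting-operation-edge-U} is structurally identical to that in \autoref{def:splitting-operation-edge}, so a \onesplit vertex still admits a unique $\w_{\LB} - \w_{\TR}$ internal connection, which is what drives the edge-disjointness contradiction.
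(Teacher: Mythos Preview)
Your proposal is correct and follows essentially the same approach as the paper: invoke \autoref{lem:good-equal-edge-U} to identify $Q_1$ and $Q_2$ as images of the canonical paths through column $\delta_i$ and row $\delta_j$, argue by contradiction that $\w_{i,j}^{\delta_i,\delta_j}$ must be \twosplit (since a \onesplit forces both images to share the path $\w_{i,j,\LB}^{\delta_i,\delta_j} - \w_{i,j,\Mid}^{\delta_i,\delta_j} - \w_{i,j,\TR}^{\delta_i,\delta_j}$), and conclude via~\autoref{eqn:clique-to-gt-reduction-U}. The paper packages the \twosplit step as a separate claim but the content is identical.
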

\begin{proof}
    Since $i$ and $j$ are good, by~\autoref{def:good-for-edge-U}, there are paths $Q_1, Q_2 \in \mathcal{P}$ satisfying the pairs $(a_i, b_i), (c_j, d_j)$ respectively. By~\autoref{lem:good-equal-edge-U}, it follows that
    \begin{itemize}
        \item $Q_1$ is an image of the path $\CanInterU(\delta_i\ ;\ a_i - b_i)$ from \GintU.
        \item $Q_2$ is an image of the path $\CanInterU(\delta_j\ ;\ c_j - d_j)$ from \GintU.
    \end{itemize}

    Using the fact that $Q_1$ and $Q_2$ are edge-disjoint in \GedgeU, we now claim that $\w_{i,j}^{\delta_i,\delta_j}$ is \twosplit:
    \begin{claim}
        \normalfont
        The vertex $\w_{i,j}^{\delta_i,\delta_j}$ is \twosplit by the splitting operation of~\autoref{def:splitting-operation-edge-U}.
        \label{clm:must-be-two-split-edge-i-j-U}
    \end{claim}
    \begin{proof}
        By~\autoref{def:splitting-operation-edge-U}, every black vertex of \GintU is either \onesplit or \twosplit. If $\w_{i,j}^{\delta_j,\delta_j}$ was \onesplit (\autoref{fig:split-edge-not-U}), then by~\autoref{def:hor-image-of-a-path-G-edge-U} and \autoref{def:vert-image-of-a-path-G-edge-U} the path $\w_{i,j,\LB}^{\delta_i,\delta_j} - w_{i,j,\Mid}^{\delta_i,\delta_j} - w_{i,j,\TR}^{\delta_i,\delta_j}$ belongs to both the paths $Q_1$ and $Q_2$ contradicting the fact that they are edge-disjoint.
    \end{proof}
    By~\autoref{clm:must-be-two-split-edge-i-j-U}, we know that the vertex $\w_{i,j}^{\delta_i,\delta_j}$ is \twosplit. Since $i\neq j$, from~\autoref{eqn:clique-to-gt-reduction-U} and~\autoref{def:splitting-operation-edge-U}, it follows that $v_{\delta_i}-v_{\delta_j}\in E(G)$ which concludes the proof of the lemma.

\end{proof}
\medskip

From~\autoref{lem:good_size-U} and~\autoref{lem:good_edges-U}, it follows that the set $X:=\{v_{\delta_i}\ : i\in Y\}$ is a clique of size $\geq (2\epsilon)k$ in $G$.

\subsection{Proof of \autoref{thm:inapprox-edge-result-U} and \autoref{thm:hardness-edge-result-U}}
\label{sec:proof-of-main-theorem-edge-U}

Finally we are ready to prove \autoref{thm:inapprox-edge-result-U} and \autoref{thm:hardness-edge-result-U}, which are restated below.
\medskip

\apxundirectededgethm*
\exactundirectededgethm*
\begin{proof}{\textbf{\autoref{thm:hardness-edge-result-U}}}

    Given an instance $G$ of \kclique, we can use the construction from \autoref{sec:construction-of-Gedge-U} to build an instance $(U_{\edge}, \mathcal{T})$ of Undirected-$2k$-\EDSP such that $U_{\edge}$ is planar (\autoref{clm:GedgeU-is-planar}). The graph \GedgeU has $n=O(N^2 k^2)$ vertices (\autoref{clm:size-of-G-edge-U}), and it is easy to observe that it can be constructed from $G$ (via first constructing \GintU) in $\poly(N,k)$ time.

    It is known that \kclique is W[1]-hard parameterized by $k$, and under ETH cannot be solved in $f(k)\cdot N^{o(k)}$ time for any computable function $f$~\cite{chen-hardness}. Combining the two directions from~\autoref{sec:2kedsp-to-clique-U} (with $\epsilon = 0.5$) and~\autoref{sec:clique-to-2kedsp-U}  we obtain a parameterized reduction from an instance $(G,k)$ of \kclique with $N$ vertices to an instance $(U_{\edge}, \mathcal{T})$ of Undirected-$2k$-\EDSP where \GedgeU is a planar DAG (\autoref{clm:GedgeU-is-planar}) and has $O(N^2 k^2)$ vertices (\autoref{clm:size-of-G-edge-U}). As a result, it follows that Undirected-$k$-\EDSP on planar graphs is W[1]-hard parameterized by number $k$ of terminal pairs, and under ETH cannot be solved in $f(k)\cdot n^{o(k)}$ time where $f$ is any computable function and $n$ is the number of vertices.
\end{proof}

\begin{proof}{\textbf{\autoref{thm:inapprox-edge-result-U}}}

    Let $\delta$ and $r_0$ be the constants from~\autoref{thm:cli_inapprox}. Fix any constant $\epsilon\in (0,1/2]$. Set $\zeta = \dfrac{\delta \epsilon}{2}$ and $k=\max \Big\{\dfrac{1}{2\zeta} , \dfrac{r_0}{2\epsilon}\Big\}$.

    Suppose to the contrary that there exists an algorithm $\mathbb{A}_{\EDSP}$ running in $f(k)\cdot n^{\zeta k}$ time (for some computable function $f$) which given an instance of Undirected-$k$-\EDSP with $n$ vertices can distinguish between the following two cases:
    \begin{itemize}
        \item[(1)] All $k$ pairs of the Undirected-$k$-\EDSP instance can be satisfied
        \item[(2)] The max number of pairs of the Undirected-$k$-\EDSP instance that can be satisfied is less than $(\frac{1}{2}+\epsilon)\cdot k$
    \end{itemize}
    We now design an algorithm $\mathbb{A}_{\Clique}$ that contradicts~\autoref{thm:cli_inapprox} for the values $q=k$ and $r=(2\epsilon)k$. Given an instance of $(G,k)$ of \kclique with $N$ vertices, we apply the reduction from~\autoref{sec:construction-of-Gedge-U} to construct an instance $(U_{\edge}, \mathcal{T})$ of Undirected-$2k$-\EDSP where \GedgeU has $n=O(N^2 k^2)$ vertices (\autoref{clm:size-of-G-edge-U}). It is easy to see that this reduction takes $O(N^2 k^2)$ time as well. We now show that the number of pairs which can be satisfied from the Undirected-$2k$-\EDSP instance is related to the size of the max clique in $G$:
    \begin{itemize}
        \item If $G$ has a clique of size $q=k$, then by~\autoref{sec:clique-to-2kedsp-U} it follows that all $2k$ pairs of the instance $(U_{\edge},\mathcal{T})$ of Undirected-$2k$-\EDSP can be satisfied.
        \item If $G$ does not have a clique of size $r=2\epsilon k$, then we claim that the max number of pairs in $\mathcal{T}$ that can be satisfied is less than $(\frac{1}{2}+\epsilon)\cdot 2k$. This is because if at least $(\frac{1}{2}+\epsilon)$-fraction of pairs in $\mathcal{T}$ could be satisfied then by~\autoref{sec:2kedsp-to-clique-U} the graph $G$ would have a clique of size $\geq (2\epsilon) k=r$.
    \end{itemize}
    Since the algorithm $\mathbb{A}_{\EDSP}$ can distinguish between the two cases of all $2k$-pairs of the instance $(U_{\edge}, \mathcal{T})$ can be satisfied or only less than $(\frac{1}{2}+\epsilon)\cdot2k$ pairs can be satisfied, it follows that $\mathbb{A}_{\Clique}$ can distinguish between the cases $\Clique(G)\geq q$ and $\Clique(G)<r$.

    The running time of the algorithm $\mathbb{A}_{\Clique}$ is the time taken for the reduction from~\autoref{sec:construction-of-Gedge-U} (which is $O(N^2 k^2)$) plus the running time of the algorithm $\mathbb{A}_{\EDSP}$ which is $f(2k)\cdot n^{\zeta\cdot 2k}$. It remains to show that this can be upper bounded by $g(q,r)\cdot N^{\delta r}$ for some computable function $g$:
    \begin{align*}
        & O(N^2 k^2) + f(2k)\cdot n^{\zeta\cdot 2k}\\
        &\leq  c\cdot N^2 k^2 + f(2k)\cdot d^{\zeta\cdot 2k}\cdot (N^2 k^2)^{\zeta\cdot 2k} \tag{for some constants $c,d\geq 1$: this follows since $n=O(N^2 k^2)$}\\
        &\leq c\cdot N^2 k^2 + f'(k)\cdot N^{2\zeta\cdot 2k} \tag{where $f'(k)=f(2k)\cdot d^{\zeta\cdot 2k}\cdot k^{2\zeta\cdot 2k}$}\\
        &\leq 2c\cdot f'(k)\cdot N^{2\zeta\cdot 2k} \tag{since $4\zeta k\geq 2$ implies $f'(k)\geq k^2$ and $N^{2\zeta\cdot 2k}\geq N^2$}\\
        &= 2c\cdot f'(k) \cdot N^{\delta r} \tag{since $\zeta=\frac{\delta \epsilon}{2}$ and $r=(2\epsilon)k$}
    \end{align*}
    Hence, we obtain a contradiction to~\autoref{thm:cli_inapprox} with $q=k, r=(2\epsilon)k$ and $g(k)=2c\cdot f'(k)=2c\cdot f(2k)\cdot d^{\zeta\cdot 2k}\cdot k^{2\zeta\cdot 2k}$.
\end{proof}

	\section{Lower bounds for  \texorpdfstring{$k$}{k}-\VDSP on \texorpdfstring{$1$}{1}-planar graphs}
\label{sec:fpt-inapprox-vdsp-undir}

The goal of this section is to prove lower bounds on the running time of exact (\autoref{thm:hardness-vertex-result-U}) and approximate (\autoref{thm:inapprox-vertex-result-U}) algorithms for the Undirected-$k$-\VDSP problem. We have already seen the first part of the reduction (\autoref{sec:construction-of-GintU}) from \kclique resulting in the construction of the intermediate graph \GintU.~\autoref{sec:construction-of-Gvert-U} describes the next part of the reduction which edits the intermediate \GintU to obtain the final graph \GvertU. This corresponds to the ancestry of the fourth leaf in~\autoref{fig:flowchart}. The characterization of shortest paths between terminal pairs in \GvertU is given in~\autoref{sec:characterizing-shortest-in-G-vertex-U}. The completeness and soundness of the reduction from \kclique to Undirected-$2k$-\VDSP are proven
in~\autoref{sec:clique-to-2kvdsp-U} and~\autoref{sec:2kvdsp-to-clique-U}, respectively. Finally, we state our final results in~\autoref{sec:proof-of-main-theorem-vertex-U} allowing us to prove~\autoref{thm:hardness-vertex-result-U} and~\autoref{thm:inapprox-vertex-result-U}.

\subsection{Obtaining the graph \GvertU from \GintU via the splitting operation}
\label{sec:construction-of-Gvert-U}

Observe in \autoref{fig:mainUndir} that every black grid vertex in \GintU has degree exactly four, and these four neighbors are named as per~\autoref{def:lefty-right-topy-bottomy-U}.
We now define the splitting operation which allows us to obtain the graph \GvertU from the graph \GintU constructed in~\autoref{sec:construction-of-GintU}.

\medskip
\begin{definition}
    \normalfont
    \textbf{(splitting operation to obtain \GvertU from \GintU)} For each $i,j\in [k]$ and each $q,\ell\in [N]$
    \begin{itemize}
        \item If $(q,\ell)\in S_{i,j}$ then we \vertsplit (\autoref{fig:split-vertex-yes-U}) the vertex $\w_{i,j}^{q,\ell}$ into \textbf{two distinct} vertices $\w_{i,j,\Hor}^{q,\ell}$, and $\w_{i,j,\Ver}^{q,\ell}$.

        \item Otherwise, if $(q,\ell)\notin S_{i,j}$, then the vertex $\w_{i,j}^{q,\ell}$ is \notsplit (\autoref{fig:split-vertex-not-U}) and we define $\w_{i,j,\Hor}^{q,\ell} = \w_{i,j,\Ver}^{q,\ell}$.

    \end{itemize}
    In either case, the four edges (\autoref{def:lefty-right-topy-bottomy-U-vert}) incident on $\w_{i,j}^{q,\ell}$ are modified as follows:
    \begin{itemize}
        \item Replace the edge $\lefty(\w_{i,j}^{q,\ell})- \w_{i,j}^{q,\ell}$ by the edge $\lefty(\w_{i,j}^{q,\ell})- \w_{i,j,\Hor}^{q,\ell}$
        \item Replace the edge $\bottomy(\w_{i,j}^{q,\ell})- \w_{i,j}^{q,\ell}$ by the edge $\bottomy(\w_{i,j}^{q,\ell})- \w_{i,j,\Ver}^{q,\ell}$
        \item Replace the edge $\w_{i,j}^{q,\ell}- \righty(\w_{i,j}^{q,\ell})$ by the edge $\w_{i,j,\Hor}^{q,\ell}- \righty(\w_{i,j}^{q,\ell})$
        \item Replace the edge $\w_{i,j}^{q,\ell}- \topy(\w_{i,j}^{q,\ell})$ by the edge $\w_{i,j,\Ver}^{q,\ell}- \topy(\w_{i,j}^{q,\ell})$
    \end{itemize}
    \label{def:splitting-operation-vertex-U}
\end{definition}
\medskip

\begin{figure}[hbt!]
\centering
\begin{tikzpicture}[
vertex/.style={circle, draw=black, fill=black, text width=1.5mm, inner sep=0pt},
scale=0.8]
\node[vertex, label=above right:\footnotesize{$\w_{i,j}^{q,\ell}$}] (v) at (0,0) {} ;
\node[vertex, label=above:\footnotesize{$\lefty(\w_{i,j}^{q,\ell})$}] (l) at (-2,0) {};
\node[vertex, label=below:\footnotesize{$\righty(\w_{i,j}^{q,\ell})$}] (r) at (2,0) {};
\node[vertex, label=below:\footnotesize{$\bottomy(\w_{i,j}^{q,\ell})$}] (b) at (0,-2) {};
\node[vertex, label=above:\footnotesize{$\topy(\w_{i,j}^{q,\ell})$}] (t) at (0,2) {};
\draw[ultra thick] (v) -- (t);
\draw[ultra thick] (v) -- (r);
\draw[ultra thick] (l) -- (v);
\draw[ultra thick] (b) -- (v);

\draw[orange,double, ultra thick,->] (3,0) -- node[above=3mm, draw=none, fill=none, rectangle] {\vertsplit} (5,0);

\node[vertex, label=above:\footnotesize{$\w_{i,j,\Hor}^{q,\ell}$}] (vh) at (7.5,0) {} ;
\node[vertex, label=right:\footnotesize{$\w_{i,j,\Ver}^{q,\ell}$}] (vv) at (9.5,1) {} ;

\node[vertex, label=below:\footnotesize{$\lefty(\w_{i,j}^{q,\ell})$}] (l) at (6,0) {};
\node[vertex, label=below:\footnotesize{$\righty(\w_{i,j}^{q,\ell})$}] (r) at (10,0) {};
\node[vertex, label=below:\footnotesize{$\bottomy(\w_{i,j}^{q,\ell})$}] (b) at (8,-2) {};
\node[vertex, label=above:\footnotesize{$\topy(\w_{i,j}^{q,\ell})$}] (t) at (8,2) {};
\draw[dotted, ultra thick] (vv) -- (t);
\draw[dotted, ultra thick] (vh) -- (r);
\draw[dotted, ultra thick] (l) -- (vh);
\draw[dotted, ultra thick] (b) -- (vv);
\end{tikzpicture}

\caption{The \vertsplit operation for the vertex $\w_{i,j}^{q,\ell}$ when
$(q,\ell) \in S_{i,j}$.  The intent is that the horizontal path $\lefty(\w_{i,j}^{q,\ell})- \w_{i,j}^{q,\ell} - \righty(\w_{i,j}^{q,\ell})$ and the vertical path $\bottomy(\w_{i,j}^{q,\ell})- \w_{i,j}^{q,\ell} - \topy(\w_{i,j}^{q,\ell})$ are now actually vertex-disjoint after the \vertsplit operation (but were not vertex-disjoint before since they shared the vertex $\w_{i,j}^{q,\ell}$)}
\label{fig:split-vertex-yes-U}
\end{figure}
\begin{figure}[hbt!]
\centering
\begin{tikzpicture}[
vertex/.style={circle, draw=black, fill=black, text width=1.5mm, inner sep=0pt},
scale=0.85]
\node[vertex, label=above right:\footnotesize{$\w_{i,j}^{q,\ell}$}] (v) at (0,0) {} ;
\node[vertex, label=above:\footnotesize{$\lefty(\w_{i,j}^{q,\ell})$}] (l) at (-2,0) {};
\node[vertex, label=below:\footnotesize{$\righty(\w_{i,j}^{q,\ell})$}] (r) at (2,0) {};
\node[vertex, label=below:\footnotesize{$\bottomy(\w_{i,j}^{q,\ell})$}] (b) at (0,-2) {};
\node[vertex, label=above:\footnotesize{$\topy(\w_{i,j}^{q,\ell})$}] (t) at (0,2) {};
\draw[ultra thick] (v) -- (t);
\draw[ultra thick] (v) -- (r);
\draw[ultra thick] (l) -- (v);
\draw[ultra thick] (b) -- (v);

\draw[orange,double, ultra thick,->] (3,0) -- node[above=3mm, draw=none, fill=none, rectangle] {\notsplit} (5,0);

\node[vertex, label={[rotate=35,xshift=14mm,yshift=-4mm] \footnotesize{$\w_{i,j,\Hor}^{q,\ell}=\w_{i,j,\Ver}^{q,\ell}$}}] (mid) at (8,0) {} ;

{(0,0)}

\node[vertex, label=below:\footnotesize{$\lefty(\w_{i,j}^{q,\ell})$}] (l) at (6,0) {};
\node[vertex, label=below:\footnotesize{\quad $\righty(\w_{i,j}^{q,\ell})$}] (r) at (10,0) {};
\node[vertex, label=below:\footnotesize{$\bottomy(\w_{i,j}^{q,\ell})$}] (b) at (8,-2) {};
\node[vertex, label=above:\footnotesize{$\topy(\w_{i,j}^{q,\ell})$}] (t) at (8,2) {};
\draw[dotted, ultra thick] (mid) -- (t);
\draw[dotted, ultra thick] (mid) -- (r);
\draw[dotted, ultra thick] (l) -- (mid);
\draw[dotted, ultra thick] (b) -- (mid);
\end{tikzpicture}

\caption{The \notsplit operation for the vertex $\w_{i,j}^{q,\ell}$ when
$(q,\ell) \notin S_{i,j}$. The intent is that the horizontal path $\lefty(\w_{i,j}^{q,\ell})- \w_{i,j}^{q,\ell} - \righty(\w_{i,j}^{q,\ell})$ and the vertical path $\bottomy(\w_{i,j}^{q,\ell})- \w_{i,j}^{q,\ell} - \topy(\w_{i,j}^{q,\ell})$ are still not vertex-disjoint after the \notsplit operation since they share the vertex $\w_{i,j,\Hor}^{q,\ell}=\w_{i,j,\Ver}^{q,\ell}$.}
\label{fig:split-vertex-not-U}
\end{figure}

\medskip
\begin{definition}
    \normalfont
    \textbf{(defining the Undirected-$2k$-\VDSP instance)} The instance $(U_{\vertex}, \mathcal{T})$~of $2k$-\VDSP is defined as follows:
    \begin{itemize}
        \item The graph \GvertU is obtained by applying the splitting operation (\autoref{def:splitting-operation-vertex-U}) to each (black) grid vertex of $U_{\inter}$, i.e., the set of vertices given by $\bigcup_{1\leq i,j\leq k} V(U_{i,j})$.
        \item No \green{green} vertex is split in~\autoref{def:splitting-operation-vertex-U}, and hence the set of terminal pairs remains the same as defined in~\autoref{eqn:definition-of-mathcal-T-undir} and is given by $\mathcal{T}:= \big\{(a_i, b_i) : i\in [k] \big\}\cup \big\{(c_j, d_j) : j\in [k] \big\}$.
        \item We assign a cost of $1$ to each vertex present after the splitting operation (\autoref{def:splitting-operation-vertex-U}). Since each vertex of \GintU has a cost of $1$, it follows that each vertex of \GvertU also has a visit cost of $1$.
    \end{itemize}
    \label{def:G-vertex-U}
\end{definition}
\medskip

Note that the construction of \GvertU from \GintU differs from the construction of \GedgeU from \autoref{sec:construction-of-Gedge-U} only in its splitting operation.

\medskip
\begin{claim}
    \label{clm:GvertexU-is-1-planar}
    \normalfont
    \GvertU is $1$-planar\footnote{A $1$-planar graph is a graph that can be drawn in the Euclidean plane in such a way that each edge has at most one crossing point, where it crosses a single additional edge.}.
\end{claim}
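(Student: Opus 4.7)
The plan is to mimic the proof of \autoref{clm:Gvertex-is-1-planar-and-dag} (the directed analog) by taking the planar embedding of \GintU from \autoref{fig:mainUndir} and locally modifying it at each black grid vertex according to the splitting operation of \autoref{def:splitting-operation-vertex-U}. By \autoref{clm:G-int-U-is-planar-and-dag}, \GintU is planar, and by \autoref{def:lefty-right-topy-bottomy-U} every (black) grid vertex has exactly four neighbors arranged in the four compass directions in this embedding.

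Next, I will analyze the local effect of the splitting operation on a small disc around each black grid vertex $\w_{i,j}^{q,\ell}$. If $(q,\ell) \notin S_{i,j}$, the \notsplit operation (\autoref{fig:split-vertex-not-U}) simply renames the vertex and redirects the four incident edges to the same point, so planarity is preserved locally. If $(q,\ell) \in S_{i,j}$, the \vertsplit operation (\autoref{fig:split-vertex-yes-U}) replaces $\w_{i,j}^{q,\ell}$ with two distinct vertices $\w_{i,j,\Hor}^{q,\ell}$ and $\w_{i,j,\Ver}^{q,\ell}$; since the horizontal edges must enter/leave $\w_{i,j,\Hor}^{q,\ell}$ while the vertical edges must enter/leave $\w_{i,j,\Ver}^{q,\ell}$, the only way to draw this inside the original disc is to have the edge $\bottomy(\w_{i,j}^{q,\ell}) - \w_{i,j,\Ver}^{q,\ell}$ cross the edge $\w_{i,j,\Hor}^{q,\ell} - \righty(\w_{i,j}^{q,\ell})$ exactly once, as depicted in \autoref{fig:split-vertex-yes-U}. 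Crucially, each of these two edges is crossed only by the other (within that small disc), and no other edge of \GvertU enters the disc in between.

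Finally, I will argue that local crossings coming from different split vertices do not interact. Since the discs used for the local modifications are pairwise disjoint (they can be chosen small enough around each black grid vertex in the embedding of \autoref{fig:mainUndir}), the only crossings in the whole drawing of \GvertU are the at-most-one local crossings introduced inside each disc corresponding to a \vertsplit vertex. Therefore every edge of \GvertU is crossed by at most one other edge, which is precisely the definition of $1$-planarity. The main thing to be careful about is that the \vertsplit diagram in \autoref{fig:split-vertex-yes-U} genuinely requires a crossing (it cannot be redrawn planarly while keeping horizontal/vertical endpoints on the correct side), but this is immediate from the fact that the cyclic order of neighbors around the split pair is forced by the surrounding grid embedding; no global reasoning is required.
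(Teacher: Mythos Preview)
Your proposal is correct and follows essentially the same approach as the paper: start from the planar embedding of \GintU, observe that the \notsplit operation preserves planarity locally while the \vertsplit operation introduces exactly one crossing between $\bottomy(\w_{i,j}^{q,\ell}) - \w_{i,j,\Ver}^{q,\ell}$ and $\w_{i,j,\Hor}^{q,\ell} - \righty(\w_{i,j}^{q,\ell})$, and these local modifications do not interfere with one another. Your explicit use of small pairwise-disjoint discs makes the locality argument slightly more precise than the paper's version, but note that your final remark about the crossing being \emph{unavoidable} is unnecessary for establishing $1$-planarity (which only requires exhibiting some drawing with at most one crossing per edge).
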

\begin{proof}
    In~\autoref{clm:G-int-U-is-planar-and-dag}, we have shown that \GintU is planar. The graph \GvertU is obtained from \GintU by applying the splitting operation (\autoref{def:splitting-operation-vertex-U}) on every (black) grid vertex, i.e., every vertex from the set $\bigcup_{1\leq i,j\leq k} V(U_{i,j})$. By~\autoref{def:lefty-right-topy-bottomy-U-vert}, every vertex of \GintU that is split has at most 4 neighbours in \GintU.~\autoref{fig:split-vertex-not-U} maintains the planarity, but in~\autoref{fig:split-vertex-yes-U} we have two edges $\bottomy(\w_{i,j}^{q,\ell}) - \w_{i,j,\Ver}^{q,\ell}$ and $\w_{i,j,\Hor}^{q,\ell} - \righty(\w_{i,j}^{q,\ell})$ that cross each other at exactly one point. Since these are the only type of edges that can cross, we can draw \GvertU in the Euclidean plane in such a way that each edge has at most one crossing point, where it crosses a single additional edge. Therefore, the entire \GvertU is $1$-planar.

\end{proof}
\medskip

\begin{claim}
    \normalfont
    The number of vertices in \GvertU is $O(N^{2}k^{2})$.
    \label{clm:size-of-G-vertex-U}
\end{claim}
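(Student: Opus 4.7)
The plan is to mirror the argument used for the directed analogue in \autoref{clm:size-of-G-vertex} and the undirected edge variant in \autoref{clm:size-of-G-edge-U}, since the splitting operation of \autoref{def:splitting-operation-vertex-U} modifies each black grid vertex only by a bounded amount. First I would invoke \autoref{clm:size-of-G-int-U} to recall that the intermediate graph \GintU has $O(N^2 k^2)$ vertices, coming from the $k^2$ grids $\{U_{i,j}\}_{1\leq i,j\leq k}$ (each contributing $N^2$ black vertices) plus the $4k$ \green{green} terminal vertices in $A\cup B\cup C\cup D$.

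Next I would observe that in going from \GintU to \GvertU, the only vertex-count changes arise from applying \autoref{def:splitting-operation-vertex-U} to the black grid vertices (the \green{green} terminals are never touched). A case analysis then splits into exactly two possibilities: if $(q,\ell)\notin S_{i,j}$, the vertex $\w_{i,j}^{q,\ell}$ is \notsplit and is replaced by the single vertex $\w_{i,j,\Hor}^{q,\ell}=\w_{i,j,\Ver}^{q,\ell}$ (\autoref{fig:split-vertex-not-U}); if $(q,\ell)\in S_{i,j}$, the vertex is \vertsplit into the two distinct vertices $\w_{i,j,\Hor}^{q,\ell}$ and $\w_{i,j,\Ver}^{q,\ell}$ (\autoref{fig:split-vertex-yes-U}). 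In either case, the number of vertices produced from a single black grid vertex of \GintU is at most $2$, so the total black vertex count is multiplied by at most a factor of~$2$.

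Combining these observations yields an upper bound of $2\cdot N^2 k^2 + 4k = O(N^2 k^2)$ vertices for \GvertU, matching the stated bound. There is no genuine obstacle here: the entire argument is a routine constant-factor counting step, and the only thing worth verifying carefully is that the operation of \autoref{def:splitting-operation-vertex-U} is indeed applied only to black grid vertices (which it is, by \autoref{def:G-vertex-U}), so that the terminal vertex count remains unchanged at $4k$.
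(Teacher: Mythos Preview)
Your proposal is correct and follows essentially the same approach as the paper: both argue that the splitting operation of \autoref{def:splitting-operation-vertex-U} replaces each black grid vertex by at most two vertices (one if \notsplit, two if \vertsplit), so the $O(N^2k^2)$ bound from \autoref{clm:size-of-G-int-U} carries over up to a constant factor.
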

\begin{proof}
    The only change in going from \GintU to \GvertU is the splitting operation (\autoref{def:splitting-operation-vertex-U}). If a black grid vertex $\w$ in \GintU is \notsplit (\autoref{fig:split-vertex-not-U}) then we replace it by \textbf{one} vertex $\w_{\Ver}=\w_{\Hor}$ in \GvertU. If a black grid vertex $\w$ in \GintU is \vertsplit (\autoref{fig:split-vertex-yes-U}) then we replace it by the \textbf{two} vertices $\w_{\Hor}$ and $\w_{\Ver}$ in \GvertU. In both cases, the increase in number of vertices is only by a constant factor. The number of vertices in \GintU is $O(N^2 k^2)$ from~\autoref{clm:size-of-G-int-U}, and hence it follows that the number of vertices in \GvertU is $O(N^2 k^2)$.

\end{proof}
\medskip

\begin{definition}
    \label{def:lefty-right-topy-bottomy-U-vert}
    Recall \autoref{def:lefty-right-topy-bottomy-U}, where we defined the four neighbours of any grid vertex in \GintU. We maintain these definitions of the neighbours for each (black) grid vertex here in \GvertU.
\end{definition}
\medskip

\subsection{Characterizing shortest paths in \GvertU}
\label{sec:characterizing-shortest-in-G-vertex-U}

The goal of this section is to characterize the structure of shortest paths between terminal pairs in \GvertU. Recall (\autoref{def:splitting-operation-vertex-U}) that the set of terminal pairs is given by $\mathcal{T}:= \big\{(a_i, b_i) : i\in [k] \big\}\cup \big\{(c_j, d_j) : j\in [k] \big\}$. As in \autoref{sec:setting-up-the-U}, the length of a path is the sum of the vertex costs.

We now define canonical paths in \GvertU by adapting the definition of canonical paths (\autoref{def:hori-canonical-G-int-U} and~\autoref{def:vert-canonical-G-int-U}) in \GintU in accordance with the changes in going from \GintU to \GvertU.

\medskip
\begin{definition}
    \textbf{(horizontal canonical paths in \GvertU}
    Fix some $j\in [k]$. For each $r\in [N]$, we define $\CanVertexU(r\ ;\ c_j - d_j)$ to be the $c_j - d_j$ path in \GvertU obtained from the path $\CanInterU(r\ ;\ c_j - d_j)$ in \GintU (recall~\autoref{def:hori-canonical-G-int-U}) in the following way:
    \begin{itemize}
        \item The first and last \magenta{magenta} edges are unchanged;
        \item If a black grid vertex $\w$ from $\CanInterU(r\ ;\ c_j - d_j)$ is \notsplit (\autoref{fig:split-vertex-not-U}), then
        \begin{itemize}
            \item The unique edge $\lefty(\w) - \w$ is replaced with the edge $\lefty(\w) - \w_{\Hor}=\w_{\Ver}$;
            \item The unique edge $\w - \righty(\w)$ is replaced with the edge $\w_{\Hor}=\w_{\Ver} - \righty(\w)$;
        \end{itemize}
        \item If a black grid vertex $\w$ from $\CanInterU(r\ ;\ c_j - d_j)$ is \vertsplit (\autoref{fig:split-vertex-yes-U}), then
        \begin{itemize}
            \item The unique edge $\lefty(\w) - \w$ is replaced with the edge $\lefty(\w) - \w_{\Hor}$;
            \item The unique edge $\w - \righty(\w)$ is replaced with the edge $\w_{\Hor} - \righty(\w)$;
        \end{itemize}
    \end{itemize}

    \label{def:hori-canonical-Gvertex-U}
\end{definition}
\medskip

\begin{definition}
    \textbf{(vertical canonical paths in \GvertU)}
    Fix a $j\in [k]$. For each $r\in [N]$, we define $\CanVertexU(r\ ;\ a_j - b_j)$ to be the $a_j - b_j$ path in \GvertU obtained from the path $\CanInterU(r\ ;\ a_j - b_j)$ in \GintU (recall~\autoref{def:vert-canonical-G-int-U}) in the following way:
    \begin{itemize}
        \item The first and last \magenta{magenta} edges are unchanged.
        \item If a black grid vertex $\w$ from $\CanInterU(r\ ;\ a_j - b_j)$ is \notsplit (\autoref{fig:split-vertex-not-U}), then
        \begin{itemize}
            \item The unique edge $\topy(\w) - \w$ is replaced with the edge $\topy(\w) - \w_{\Hor}=\w_{\Ver}$;
            \item The unique edge $\w - \bottomy(\w)$ is replaced with the edge $\w_{\Hor}=\w_{\Ver} - \bottomy(\w)$;
        \end{itemize}
        \item If a black grid vertex $\w$ from $\CanInterU(r\ ;\ a_j - b_j)$ is \vertsplit (\autoref{fig:split-vertex-yes-U}), then
        \begin{itemize}
            \item The unique edge $\topy(\w) - \w$ is replaced with the edge $\topy(\w) - \w_{\Ver}$;
            \item The unique edge $\w - \bottomy(\w)$ is replaced with the edge $\w_{\Ver} - \bottomy(\w)$;
        \end{itemize}
    \end{itemize}

    \label{def:vert-canonical-Gvertex-U}
\end{definition}
\medskip

\begin{definition}
    \textbf{(Image of a horizontal canonical path from \GintU in \GvertU)}
    Fix a $j\in [k]$ and $r\in [N]$. For each $\CanInterU(r\ ;\ c_j - d_j)$ path $R$ in \GintU, we define an image of R as follows
    \begin{itemize}
        \item The first and last \magenta{magenta} edges are unchanged.
        \item If a black grid vertex $\w$ from $\CanInterU(r\ ;\ c_j - d_j)$ is \notsplit (\autoref{fig:split-vertex-not-U}), then
        \begin{itemize}
            \item The unique edge $\lefty(\w) - \w$ is replaced with the edge $\lefty(\w) - \w_{\Hor}=\w_{\Ver}$;
            \item The unique edge $\w - \righty(\w)$ is replaced with the edge $\w_{\Hor}=\w_{\Ver} - \righty(\w)$;
        \end{itemize}
        \item If a black grid vertex $\w$ from $\CanInterU(r\ ;\ c_j - d_j)$ is \vertsplit (\autoref{fig:split-vertex-yes-U}), then
        \begin{itemize}
            \item The series of edges $\lefty(\w) - \w - \righty(\w)$ is replaced with either the path $\lefty(\w) - \w_{\Ver} - \righty(\w)$ or $\lefty(\w) - \w_{\Hor} - \righty(\w)$;
        \end{itemize}
    \end{itemize}
    \label{def:hor-image-of-a-path-G-vert-U}
\end{definition}
\medskip

\begin{definition}
    \textbf{(Image of a vertical canonical path from \GintU in \GvertU)}
    Fix a $i\in [k]$ and $r\in [N]$. For each $\CanInterU(r\ ;\ a_i - b_i)$ path $R$ in \GintU, we define an image of R as follows
    \begin{itemize}
        \item The first and last \magenta{magenta} edges are unchanged.
        \item If a black grid vertex $\w$ from $\CanInterU(r\ ;\ a_i - b_i)$ is \notsplit (\autoref{fig:split-vertex-not-U}), then
        \begin{itemize}
            \item The unique edge $\topy(\w) - \w$ is replaced with the edge $\topy(\w) - \w_{\Hor}=\w_{\Ver}$;
            \item The unique edge $\w - \bottomy(\w)$ is replaced with the edge $\w_{\Hor}=\w_{\Ver} - \bottomy(\w)$;
        \end{itemize}
        \item If a black grid vertex $\w$ from $\CanInterU(r\ ;\ a_i - b_i)$ is \vertsplit (\autoref{fig:split-vertex-yes-U}), then
        \begin{itemize}
            \item The series of edges $\topy(\w) - \w - \bottomy(\w)$ is replaced with either the path  $\topy(\w) - \w_{\Ver} - \bottomy(\w)$ or $\topy(\w) - \w_{\Hor} - \bottomy(\w)$;
        \end{itemize}
    \end{itemize}
    \label{def:vert-image-of-a-path-G-vert-U}
\end{definition}
\medskip

Note that a single path, $R$, in \GintU can have several images in \GvertU. This is because for every black vertex on $R$ that is \twosplit there are two choices of sub-path to add: either the path $\w_{\LB} - \w_{\Hor} - \w_{\TR}$ or the path $\w_{\LB} - \w_{\Ver} - \w_{\TR}$.

\medskip

\begin{remark}
    \label{rmk:vert-cost-1-vert-undir}
    \textbf{(Reducing the cost of vertices in \GvertU)}
    Here we outline why the reduction of costs as described in \autoref{rmk:vert-cost-1-undir} can also be applied to \GvertU whilst still preserving the properties of its own canonical paths (\autoref{def:hori-canonical-Gvertex-U} and \autoref{def:vert-canonical-Gvertex-U}) and its images (\autoref{def:hor-image-of-a-path-G-vert-U} and \autoref{def:vert-image-of-a-path-G-vert-U}). Observe, also, that this process ensures that every vertex in \GvertU has maximum degree of $4$.

    The splitting operation applied to \GintU in order to obtain \GvertU (\autoref{def:splitting-operation-vertex-U}) modifies only the non-terminal vertices of \GintU and thus \GvertU can only differ from \GintU in its non-terminal vertices. The cost reduction in \autoref{rmk:vert-cost-1-undir} on the other hand only modifies terminal vertices, so we see the same constant increase of $2kn$ in the cost of every canonical path (or image thereof) for every set of vertices in $\{A,B,C,D\}$.
\end{remark}

The following two lemmas (\autoref{lem:horizontal-canonical-is-shortest-G-vertex-U} and~\autoref{lem:vertical-canonical-is-shortest-G-vertex-U}) analyze the structure of shortest paths between terminal pairs in \GedgeU. First, we define the \emph{image} of a path from \GintU in the graph \GedgeU.
\medskip

\begin{lemma}
    \normalfont
    Let $j\in [k]$. The shortest paths in \GvertU satisfy the following two properties:
    \begin{itemize}
        \item[(i)] For each $r\in [N]$, the path $\CanVertexU(r\ ;\ c_j - d_j)$ is a shortest $c_j - d_j$ path in \GvertU.
        \item[(ii)] If $P$ is a shortest $c_j - d_j$ path in \GvertU, then $P$ must be an image (\autoref{def:hor-image-of-a-path-G-vert-U}) of the path $\CanInterU(\ell\ ;\ c_j - d_j)$ for some $\ell\in [N]$.
    \end{itemize}
    \label{lem:horizontal-canonical-is-shortest-G-vertex-U}
\end{lemma}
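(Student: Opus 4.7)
The plan is to reduce this statement to its already-established counterpart in \GintU, namely \autoref{lem:horizontal-canonical-is-shortest-G-int-U}, by exploiting the fact that the splitting operation of \autoref{def:splitting-operation-vertex-U} is \emph{cost-preserving} along any individual path. The crucial observation, analogous to the one used in the proof of \autoref{lem:horizontal-canonical-is-shortest-G-edge-U}, is that a single black grid vertex $\w$ in \GintU is always represented by exactly one vertex along any path in \GvertU: if $\w$ is \notsplit then $\w_{\Hor}=\w_{\Ver}$ is that single vertex, and if $\w$ is \vertsplit then any path entering the gadget uses exactly one of $\w_{\Hor}$ or $\w_{\Ver}$ (this is immediate from \autoref{def:splitting-operation-vertex-U}, since the four external edges of the gadget are partitioned between $\w_{\Hor}$ and $\w_{\Ver}$). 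Since each new vertex has cost $1$ (\autoref{def:G-vertex-U}), a path $Q$ in \GintU containing $\alpha$ green and $\beta$ black vertices corresponds to paths of cost $\alpha\cdot 2kN + \beta$ in both graphs.

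For part~(i), first I would verify directly from \autoref{def:hori-canonical-Gvertex-U} that $\CanVertexU(r\ ;\ c_j - d_j)$ is obtained from $\CanInterU(r\ ;\ c_j - d_j)$ by the one-to-one substitution described above, and hence has cost exactly $5kN$ by \autoref{obs:Canonical-length-U}. Then for any $c_j - d_j$ path $P$ in \GvertU, I would argue that $P$ naturally projects to a $c_j - d_j$ walk in \GintU of the same cost, by mapping every split copy $\w_{\Hor}$ or $\w_{\Ver}$ back to $\w$; this projection is a valid walk because each edge of \GvertU comes from an edge of \GintU (the splitting operation only redistributes the endpoints). By \autoref{lem:horizontal-canonical-is-shortest-G-int-U}(i), any such projected walk has cost $\ge 5kN$, so $P$ itself has cost $\ge 5kN$, proving $\CanVertexU(r\ ;\ c_j - d_j)$ is shortest.

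For part~(ii), let $P$ be a shortest $c_j - d_j$ path in \GvertU; it has cost exactly $5kN$. Its projection to \GintU is a $c_j - d_j$ walk of cost exactly $5kN$, hence a shortest $c_j-d_j$ path in \GintU (walks of length equal to the shortest path length must actually be simple paths here, since any repeated vertex would only increase cost). By \autoref{lem:horizontal-canonical-is-shortest-G-int-U}(ii), this projection is some $\CanInterU(\ell\ ;\ c_j - d_j)$. It then remains to check that $P$ must be one of the valid images of this path as defined in \autoref{def:hor-image-of-a-path-G-vert-U}: this is a direct case analysis on each black vertex $\w$ traversed by $P$, since a \notsplit $\w$ forces the unique choice $\w_{\Hor}=\w_{\Ver}$, while for a \vertsplit $\w$ the pair of entry/exit edges from $\CanInterU(\ell\ ;\ c_j - d_j)$ (one horizontal in, one horizontal out) can be routed through either $\w_{\Hor}$ or $\w_{\Ver}$, matching the two permissible choices in \autoref{def:hor-image-of-a-path-G-vert-U}.

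The only mild subtlety, and the step that will require the most care, is formally justifying the projection map from \GvertU-walks to \GintU-walks and ensuring it is cost-preserving even when \vertsplit gadgets appear; everything else is essentially bookkeeping. Because the whole argument mirrors that of \autoref{lem:horizontal-canonical-is-shortest-G-edge-U} (with ``replace by three vertices'' swapped for ``replace by one vertex''), I expect the proof to be short and to close with a pointer back to \autoref{lem:horizontal-canonical-is-shortest-G-int-U}.
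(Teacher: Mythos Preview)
Your proposal is correct and follows essentially the same approach as the paper: both arguments rest on the observation that the splitting operation of \autoref{def:splitting-operation-vertex-U} replaces each black vertex by exactly one cost-$1$ vertex along any path, so costs are preserved and the result reduces directly to \autoref{lem:horizontal-canonical-is-shortest-G-int-U}. Your write-up is in fact a bit more explicit than the paper's (which is quite terse) in articulating the projection from \GvertU-paths back to \GintU-walks and checking that a shortest path must land on some $\CanInterU(\ell;c_j-d_j)$.
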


\begin{proof}
    The proof of this lemma is similar to that of \GintU in \autoref{lem:horizontal-canonical-is-shortest-G-int-U}, with some minor observational changes. Note that every path in \GintU contains only \green{green} and black vertices. The splitting operation (\autoref{def:splitting-operation-vertex-U}) applied to each black vertex of \GintU has the following property: if a path $Q$ contains a black vertex $\w$ in \GvertU, then in the corresponding path in \GvertU this vertex $\w$ is \textbf{always replaced by one other vertex} with a cost to visit of~$1$:
    \begin{itemize}
        \item If $\w$ is \notsplit (\autoref{fig:split-vertex-not-U}), then it is replaced in $Q$ the vertex $\w_{\Hor} = \w_{\Ver}$.
        \item If $\w$ is \vertsplit (\autoref{fig:split-vertex-yes-U}), then it is replaced in $Q$ either by the vertex $\w_{\Ver}$ or the vertex $\w_{\Hor}$.
    \end{itemize}
    Therefore, if a path $Q$ incurs a cost of $\alpha$ from visiting \green{green} vertices and a cost of $\beta$ from visiting black vertices in \GintU, then the corresponding path in \GvertU incurs a cost of  $\alpha$ from visiting \green{green} vertices and $\beta$ from black vertices. The proof of the first part of the lemma now follows from~\autoref{lem:horizontal-canonical-is-shortest-G-int-U}(i),~\autoref{def:splitting-operation-vertex-U} and~\autoref{def:hori-canonical-Gvertex-U}. The proof of the second part of the lemma follows from~\autoref{lem:horizontal-canonical-is-shortest-G-int-U}(ii)'s argument that it cannot take an edge that modifies the $y$-coordinate, along with~\autoref{def:splitting-operation-vertex-U} and~\autoref{def:hor-image-of-a-path-G-vert-U}.
\end{proof}
\medskip

The proof of the next lemma is very similar to that of \autoref{lem:horizontal-canonical-is-shortest-G-vertex-U}, and omit the details.
\medskip

\begin{lemma}
    \normalfont
    Let $i\in [k]$. The shortest paths in \GvertU satisfy the following two properties:
    \begin{itemize}
        \item[(i)] For each $r\in [N]$, the path $\CanVertex(r\ ;\ a_i - b_i)$ is a shortest $a_i - b_i$ path in $U_{\vertex}$.
        \item[(ii)] If $P$ is a shortest $a_i - b_i$ path in \GvertU, then $P$ must be an image (\autoref{def:vert-image-of-a-path-G-vert-U}) of the path $\CanInterU(\ell\ ;\ a_i - b_i)$ for some $\ell\in [N]$.
    \end{itemize}
    \label{lem:vertical-canonical-is-shortest-G-vertex-U}
\end{lemma}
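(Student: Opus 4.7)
The plan is to mirror exactly the argument used for \autoref{lem:horizontal-canonical-is-shortest-G-vertex-U}, swapping the roles of rows and columns, and appealing to the already-proved vertical shortest-path characterization in \GintU (\autoref{lem:vertical-canonical-is-shortest-G-int-U}). The key structural observation driving both parts is the cost-preservation property of the \vertsplit/\notsplit operation of~\autoref{def:splitting-operation-vertex-U}: unlike the edge-splitting operation, which triples every black vertex's contribution, the vertex-splitting operation replaces each traversed black vertex $\w$ by \emph{exactly one} of $\w_{\Hor}, \w_{\Ver}$ (or by $\w_{\Hor}=\w_{\Ver}$ when \notsplit), each of cost one. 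Hence, for any $a_i-b_i$ walk in \GintU of cost $C$, every image of that walk in \GvertU has the same cost $C$; conversely, contracting every pair $\{\w_{\Hor},\w_{\Ver}\}$ back to $\w$ turns any $a_i-b_i$ walk in \GvertU into an $a_i-b_i$ walk in \GintU of the same cost.

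For part (i), I would start from $\CanInterU(r\ ;\ a_i - b_i)$, which by \autoref{lem:vertical-canonical-is-shortest-G-int-U}(i) is a shortest $a_i - b_i$ path in \GintU (cost $5kN$ by the vertical analogue of \autoref{obs:Canonical-length-U}). By \autoref{def:vert-canonical-Gvertex-U}, the path $\CanVertexU(r\ ;\ a_i - b_i)$ is obtained from $\CanInterU(r\ ;\ a_i - b_i)$ by replacing each traversed black vertex $\w$ by exactly one of its split copies, so it has cost $5kN$ in \GvertU. Any $a_i - b_i$ path in \GvertU has cost at least the cost of its contraction back into \GintU, which by \autoref{lem:vertical-canonical-is-shortest-G-int-U}(i) is at least $5kN$. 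Thus $\CanVertexU(r\ ;\ a_i - b_i)$ is shortest.

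For part (ii), let $P$ be any shortest $a_i - b_i$ path in \GvertU, so $P$ has cost $5kN$. Let $\widehat{P}$ denote the walk in \GintU obtained by collapsing $\{\w_{\Hor},\w_{\Ver}\}$ back to $\w$ whenever $\w$ is \vertsplit; this is a walk of cost $5kN$ from $a_i$ to $b_i$ in \GintU. The tightness-argument used inside \autoref{lem:vertical-canonical-is-shortest-G-int-U}(ii) (the vertical analogue of \autoref{clm:each-hori-at-least-5kN}) forces $\widehat{P}$ to be a simple path and, in fact, to coincide with $\CanInterU(\ell\ ;\ a_i - b_i)$ for a unique $\ell\in[N]$ (since the budget of $kN$ non-green vertices must be spent by visiting exactly one black vertex in each vertical ``row slice''). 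Then, per \autoref{def:vert-image-of-a-path-G-vert-U}, $P$ is precisely an image of $\CanInterU(\ell\ ;\ a_i - b_i)$: for each \vertsplit vertex $\w$ on the canonical path, $P$ chose either $\w_{\Hor}$ or $\w_{\Ver}$, matching exactly the degree of freedom captured in the image definition.

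The only mildly delicate step will be verifying that the collapse $P\mapsto\widehat{P}$ yields a legitimate walk in \GintU whose length equals that of $P$ and whose underlying structure falls under the hypotheses of \autoref{lem:vertical-canonical-is-shortest-G-int-U}(ii); this reduces to checking that each edge of \GvertU projects to an edge (or self-loop-as-stay-put, in the \notsplit case) of \GintU, which is immediate from the edge-replacement rules in~\autoref{def:splitting-operation-vertex-U}. With that in hand, both parts follow directly by transporting the \GintU characterization through the cost-preserving projection, exactly as in the horizontal case.
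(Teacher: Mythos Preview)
Your proposal is correct and takes essentially the same approach as the paper, which simply omits the proof and says it is very similar to that of \autoref{lem:horizontal-canonical-is-shortest-G-vertex-U}. Your cost-preservation observation (each traversed black vertex is replaced by exactly one cost-one vertex under the \vertsplit/\notsplit operation) and the appeal to \autoref{lem:vertical-canonical-is-shortest-G-int-U} are precisely the ingredients the paper uses in the horizontal analogue; your explicit contraction map $P\mapsto\widehat{P}$ makes the backward direction for part~(ii) slightly more transparent than the paper's own sketch.
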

\medskip

\subsection{
    \texorpdfstring{\underline{Completeness}: $G$ has a $k$-clique $\Rightarrow$ All pairs in the instance $(U_{\vertex}, \mathcal{T})$ of Undirected-$2k$-VDSP can be satisfied}
    {Completeness: G has a k-clique -> All pairs in the instance (Uvert,T) of Undirected-2k-\VDSP can be satisfied}
}
\label{sec:clique-to-2kvdsp-U}

In this section, we show that if the instance $G$ of \kclique has a solution then the instance $(U_{\vertex}, \mathcal{T})$ of Undirected-$2k$-\VDSP also has a solution. The proofs are very similar to those of
Suppose the instance $G=(V,E)$ of \kclique has a clique $X=\{v_{\gamma_1}, v_{\gamma_2}, \ldots, v_{\gamma_k} \}$ of size $k$. Let $Y=\{\gamma_1, \gamma_2, \ldots, \gamma_k\}\in [N]$. Now for each $i\in [k]$ we choose the path as follows:
\begin{itemize}
    \item The path $R_i$ to satisfy $a_i - b_i$ is chosen to be the horizontal canonical path $\CanVertexU(\gamma_i \ ;\ a_i - b_i)$ described in~\autoref{def:vert-canonical-Gvertex-U}.
    \item The path $T_i$ to satisfy $c_i - d_i$ is chosen to be vertical canonical path  $\CanVertexU(\gamma_i \ ;\ c_i - d_i)$ described in~\autoref{def:hori-canonical-Gvertex-U}.
\end{itemize}

Now we show that the collection of paths given by $\mathcal{Q}:=\{R_1, R_2, \ldots, R_k, T_1, T_2, \ldots, T_K\}$ forms a solution for the instance  $(U_{\vertex}, \mathcal{T})$ of Undirected-$2k$-\VDSP via the following two lemmas which argue being shortest for each terminal pair and pairwise vertex-disjointness respectively:

\medskip
\begin{lemma}
    \normalfont
    For each $i \in [k]$, the path $R_i$ (resp. $T_i$) is a shortest $a_i - b_i$ (resp. $c_i - d_i$) path in $U_{\vertex}$.
    \label{lem:completeness-vdsp-shortest-U}
\end{lemma}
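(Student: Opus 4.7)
The plan is to observe that this lemma follows immediately from the characterizations of shortest paths between terminal pairs in \GvertU that we already established (\autoref{lem:horizontal-canonical-is-shortest-G-vertex-U} and \autoref{lem:vertical-canonical-is-shortest-G-vertex-U}), together with the fact that $\gamma_i \in [N]$ for every $i \in [k]$. In that sense the proof is essentially a one-line citation, mirroring the argument used in the analogous directed-graph statement \autoref{lem:completeness-vdsp-shortest}.

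More concretely, I would first fix an arbitrary $i \in [k]$ and recall that $\gamma_i \in Y \subseteq [N]$ by the choice of the clique $X = \{v_{\gamma_1}, \ldots, v_{\gamma_k}\}$ in $G$. Next, by the construction of $R_i$ and $T_i$ just above the lemma, $R_i$ is the vertical canonical path $\CanVertexU(\gamma_i\ ;\ a_i - b_i)$ (\autoref{def:vert-canonical-Gvertex-U}) and $T_i$ is the horizontal canonical path $\CanVertexU(\gamma_i\ ;\ c_i - d_i)$ (\autoref{def:hori-canonical-Gvertex-U}). I would then apply \autoref{lem:vertical-canonical-is-shortest-G-vertex-U}(i) with $r = \gamma_i$ to conclude that $R_i$ is a shortest $a_i - b_i$ path in \GvertU, and similarly apply \autoref{lem:horizontal-canonical-is-shortest-G-vertex-U}(i) with $r = \gamma_i$ to conclude that $T_i$ is a shortest $c_i - d_i$ path in \GvertU.

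There is no real obstacle here; the heavy lifting (bounding the cost of every terminal-pair path in \GvertU via the column-cut argument inherited from \GintU, and verifying that canonical paths achieve this bound) was already carried out inside \autoref{lem:horizontal-canonical-is-shortest-G-vertex-U} and \autoref{lem:vertical-canonical-is-shortest-G-vertex-U}. The only thing to double-check is that the splitting operation (\autoref{def:splitting-operation-vertex-U}) preserves the cost of a canonical path (each black vertex is replaced by exactly one vertex of cost $1$, regardless of whether it is \notsplit or \vertsplit), which is exactly the observation used in the proof of \autoref{lem:horizontal-canonical-is-shortest-G-vertex-U}. Hence the proof reduces to quoting those two lemmas at $r = \gamma_i$.
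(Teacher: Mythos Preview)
Your proposal is correct and matches the paper's proof essentially line for line: fix $i\in[k]$, then invoke \autoref{lem:vertical-canonical-is-shortest-G-vertex-U}(i) for $R_i$ and \autoref{lem:horizontal-canonical-is-shortest-G-vertex-U}(i) for $T_i$. The extra remarks you make (that $\gamma_i\in[N]$ and that the splitting operation preserves path cost) are implicit in the paper's one-line citation and are fine to include.
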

\begin{proof}
    Fix any $i\in [k]$.~\autoref{lem:horizontal-canonical-is-shortest-G-vertex-U}(i) implies that $T_i$ is shortest $c_i - d_i$ path in \Gvert.~\autoref{lem:vertical-canonical-is-shortest-G-vertex-U}(i) implies that $R_i$ is shortest $a_i - b_i$ path in \GvertU.
\end{proof}
\medskip

Before proving~\autoref{lem:completeness-vdsp-disjoint-U}, we first set up notation for some special sets of vertices in \GvertU which helps to streamline some of the subsequent proofs.

\medskip

\begin{definition}
    \label{def:horizontal-vertical-sets-in-Gvertex-U}
    \textbf{(horizontal \& vertical levels in \GvertU)}
    For each $(i,j)\in [k]\times [k]$, let $U_{i,j}^{\vertsplitt}$ to be the graph obtained by applying the splitting operation (\autoref{def:splitting-operation-vertex-U}) to each vertex of $U_{i,j}$. For each $j\in [k]$, we define the following set of vertices:
    \begin{equation}
        \begin{aligned}
            \HorizontalVertex(j) &= \{ c_j, d_j \} \cup \left( \bigcup_{i=1}^{k} V(U_{i,j}^{\vertsplitt})\right)\ \quad\\
            \quad \VerticalVertex(j) &= \{ a_j, b_j \} \cup \left( \bigcup_{i=1}^{k} V(U_{j,i}^{\vertsplitt})\right)
        \end{aligned}\label{eqn:hori-verti-sets-Gvert-u}
    \end{equation}
\end{definition}
\medskip

The next lemma shows that any two paths from $\mathcal{Q}$ are vertex-disjoint.
\medskip

\begin{lemma}
    \normalfont
    Let $P\neq P'$ be any pair of paths from the collection  $\mathcal{Q}=\{R_1, R_2, \ldots, R_k, T_1, T_2, \ldots, T_K\}$. Then $P$ and $P'$ are vertex-disjoint.
    \label{lem:completeness-vdsp-disjoint-U}
\end{lemma}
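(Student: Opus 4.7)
The plan is to mirror the proof of~\autoref{lem:completeness-vdsp-disjoint} from the directed setting almost verbatim, replacing edge orientations by their undirected versions and invoking the undirected analogues of the definitions and lemmas (\autoref{def:splitting-operation-vertex-U},~\autoref{def:hori-canonical-Gvertex-U},~\autoref{def:vert-canonical-Gvertex-U}, and~\autoref{def:horizontal-vertical-sets-in-Gvertex-U}). The argument naturally splits into three cases: two canonical paths from the same family ($R_i, R_{i'}$ or $T_j, T_{j'}$), and one path from each family ($R_i, T_j$).

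First I would handle the two ``same-family'' cases using the horizontal/vertical level decomposition of~\autoref{def:horizontal-vertical-sets-in-Gvertex-U}. By construction, every vertex visited by $R_i$ lies in $\VerticalVertex(i)$, since $R_i$ is obtained by image-ing the vertical canonical path $\CanInterU(\gamma_i\ ;\ a_i - b_i)$, whose vertices all sit in the vertical strip $\bigcup_{j\in [k]} V(U_{i,j})$ together with the terminals $a_i, b_i$. Because $\VerticalVertex(i) \cap \VerticalVertex(i') = \emptyset$ for $i\neq i'$, this immediately yields pairwise vertex-disjointness of $R_1,\ldots,R_k$; symmetrically for $T_1,\ldots,T_k$ via $\HorizontalVertex(j)$.

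The only non-trivial case is showing $R_i$ and $T_j$ are vertex-disjoint for any $(i,j) \in [k]\times[k]$, which I would isolate as an internal claim. The only candidate shared vertices sit in $U_{i,j}^{\vertsplitt}$, since $\VerticalVertex(i) \cap \HorizontalVertex(j) \subseteq V(U_{i,j}^{\vertsplitt})$. In the underlying grid $U_{i,j}$ of \GintU, the vertical canonical path through row/column $\gamma_i$ and the horizontal canonical path through row $\gamma_j$ meet only at $\w_{i,j}^{\gamma_i,\gamma_j}$. Thus in \GvertU the only possible shared vertex is one of the (at most two) vertices created from $\w_{i,j}^{\gamma_i,\gamma_j}$ by the splitting operation. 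Now observe that $(\gamma_i,\gamma_j) \in S_{i,j}$: if $i=j$ this follows from $\gamma_i = \gamma_j$ and~\autoref{eqn:clique-to-gt-reduction-U}, and if $i\neq j$ it follows from $X$ being a clique, so $v_{\gamma_i} - v_{\gamma_j} \in E(G)$. Hence by~\autoref{def:splitting-operation-vertex-U}, $\w_{i,j}^{\gamma_i,\gamma_j}$ is \vertsplit into two \emph{distinct} vertices $\w_{i,j,\Hor}^{\gamma_i,\gamma_j}\neq \w_{i,j,\Ver}^{\gamma_i,\gamma_j}$.

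To finish, I would inspect exactly which of the two split-vertices each path uses: by~\autoref{def:vert-canonical-Gvertex-U}, the path $R_i$ enters and leaves this split gadget through $\topy$ and $\bottomy$ neighbors, so it passes through $\w_{i,j,\Ver}^{\gamma_i,\gamma_j}$ and \emph{not} through $\w_{i,j,\Hor}^{\gamma_i,\gamma_j}$; dually, by~\autoref{def:hori-canonical-Gvertex-U}, the path $T_j$ enters and leaves via $\lefty$ and $\righty$, so it passes through $\w_{i,j,\Hor}^{\gamma_i,\gamma_j}$ and not through $\w_{i,j,\Ver}^{\gamma_i,\gamma_j}$. Since these two split vertices are distinct, $R_i$ and $T_j$ share no vertex, completing the claim and hence the lemma. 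I do not foresee any genuine obstacle: the only subtlety is the bookkeeping that ensures the potential collision point $\w_{i,j}^{\gamma_i,\gamma_j}$ is always \vertsplit (handled uniformly by the two cases $i=j$ and $i\neq j$ above), which is precisely where the \kclique hypothesis is used.
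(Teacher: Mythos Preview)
Your proposal is correct and follows essentially the same approach as the paper: the same three-case split via the vertical/horizontal level decomposition of~\autoref{def:horizontal-vertical-sets-in-Gvertex-U}, and the same internal claim that $\w_{i,j}^{\gamma_i,\gamma_j}$ is \vertsplit (using the two cases $i=j$ and $i\neq j$ from~\autoref{eqn:clique-to-gt-reduction-U}) so that $R_i$ uses only $\w_{i,j,\Ver}^{\gamma_i,\gamma_j}$ while $T_j$ uses only $\w_{i,j,\Hor}^{\gamma_i,\gamma_j}$.
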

\begin{proof}

    By~\autoref{def:horizontal-vertical-sets-in-Gvertex-U}, it follows that every edge of the path $R_i$ has both endpoints in $\VerticalVertex(i)$ for every $i\in [k]$. Since $\VerticalVertex(i) \cap \VerticalVertex(i')=\emptyset$ for every $1\leq i\neq i'\leq k$, it follows that the collection of paths $\{R_1, R_2, \ldots, R_k\}$ are pairwise vertex-disjoint.

    By~\autoref{def:horizontal-vertical-sets-in-Gvertex-U}, it follows that every edge of the path $T_j$ has both endpoints in $\HorizontalVertex(j)$ for every $j\in [k]$. Since  $\HorizontalVertex(j) \cap \HorizontalVertex(j')=\emptyset$ for every $1\leq j\neq j'\leq k$, it follows that the collection of paths $\{T_1, T_2, \ldots, T_k\}$ are pairwise vertex-disjoint.

    It remains to show that every pair of paths which contains one path from $\{R_1, R_2, \ldots, R_k\}$ and other path from $\{T_1, T_2, \ldots, T_k\}$ are vertex-disjoint.
    \begin{claim}
        \label{clm:reduction-vertex-disjoint-U}
        \normalfont
        For each $(i,j)\in [k]\times [k]$, the paths $R_i$ and $T_j$ are vertex-disjoint in $U_{\vertex}$.
    \end{claim}

    \begin{proof}
        Fix any $(i,j)\in [k]\times [k]$. First we argue that the vertex $\w_{i,j}^{\gamma_i, \gamma_j}$ is \vertsplit, i.e., $(\gamma_i, \gamma_j)\in S_{i,j}$:
        \begin{itemize}
            \item If $i=j$ then $\gamma_i = \gamma_j$ and hence by~\autoref{eqn:clique-to-gt-reduction-U} we have $(\gamma_i, \gamma_j)\in S_{i,j}$
            \item If $i\neq j$, then $v_{\gamma_i} - v_{\gamma_j}\in E(G)$ since $X$ is a clique. Again, by~\autoref{eqn:clique-to-gt-reduction-U} we have $(\gamma_i, \gamma_j)\in S_{i,j}$.
        \end{itemize}
        Hence, by~\autoref{def:splitting-operation-vertex-U}, it follows that the vertex $\w_{i,j}^{\gamma_i, \gamma_j}$ is \vertsplit, i.e., $\w_{i,j,\Hor}^{\gamma_i, \gamma_j}\neq \w_{i,j,\Ver}^{\gamma_i, \gamma_j}$.

        By the construction of \GintU (\autoref{fig:mainUndir}) and definitions of canonical paths (\autoref{def:hori-canonical-G-int-U} and~\autoref{def:vert-canonical-G-int-U}), it is easy to verify that any pair of horizontal canonical path and vertical canonical path in \GintU have only one vertex in common.

        By the splitting operation (\autoref{def:splitting-operation-vertex-U}) and definitions of the paths $R_i$ (\autoref{def:vert-canonical-Gvertex-U}) and $T_j$ (\autoref{def:hori-canonical-Gvertex-U}), it follows that
        \begin{itemize}
            \item $R_i$ contains $\w_{i,j,\Ver}^{\gamma_i, \gamma_j}$ but does not contain $\w_{i,j,\Hor}^{\gamma_i, \gamma_j}$
            \item $T_j$ contains $\w_{i,j,\Hor}^{\gamma_i, \gamma_j}$ but does not contain $\w_{i,j,\Ver}^{\gamma_i, \gamma_j}$
        \end{itemize}

    \end{proof}
    This concludes the proof of~\autoref{lem:completeness-vdsp-disjoint-U}.
\end{proof}
\medskip

\noindent From~\autoref{lem:completeness-vdsp-shortest-U} and~\autoref{lem:completeness-vdsp-disjoint-U}, it follows that the collection of paths given by $\mathcal{Q}=\{R_1, R_2, \ldots, R_k,$ $T_1, T_2, \ldots, T_K\}$ forms a solution for the instance $(U_{\vertex}, \mathcal{T})$ of Undirected-$2k$-\VDSP.

\subsection{
    \texorpdfstring{\underline{Soundness}: $(\frac{1}{2} +\epsilon)$-fraction of the pairs in the instance $(U_{\vertex}, \mathcal{T})$ of $2k$-VDSP can be satisfied $\Rightarrow$ $G$ has a clique of size $\geq 2\epsilon \cdot k$}
    {Soundness: (1/2 + theta)-fraction of the pairs in the instance (Uvert,T) of 2k-\VDSP can be satisfied -> G has a clique of size >= 2 x theta x k}}
\label{sec:2kvdsp-to-clique-U}

In this section we show that if at least $(\frac{1}{2} +\epsilon)$-fraction of the $2k$ pairs from the instance $(U_{\vertex}, \mathcal{T})$ of $2k$-\VDSP can be satisfied then the graph $G$ has a clique of size $2\epsilon \cdot k$.

Let $\mathcal{P}$ be a collection of paths in $U_{\vertex}$ which satisfies at least $(\frac{1}{2} +\epsilon)$-fraction of the $2k$ terminal pairs from the instance $(U_{\vertex}, \mathcal{T})$ of Undirected-$2k$-\VDSP.

\medskip
\begin{definition}
    An index $i \in [k]$ is called \emph{good} if both the terminal pairs $a_i - b_i$ and $c_i - d_i$ are satisfied by $\mathcal{P}$.
    \label{def:good-for-vertex-U}
\end{definition}
\medskip

The proof of the next lemma, which gives a lower bound on the number of good indices, is exactly the same as that of~\autoref{lem:good_size-U} and we do not repeat it here.

\medskip
\begin{lemma}
    \label{lem:good_size-vertex-U}
    Let $Y \subseteq [k]$ be the set of good indices. Then $|Y| \ge 2\epsilon \cdot k$.
\end{lemma}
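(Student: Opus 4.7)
The plan is to reuse verbatim the counting argument from \autoref{lem:good_size-U} (which itself mirrors \autoref{lem:good_size} and \autoref{lem:good_size-vertex}); the setup here is identical since \autoref{def:good-for-vertex-U} defines \good indices via the same ``both pairs satisfied'' condition on the same set $\mathcal{T}$ of $2k$ terminal pairs, and the hypothesis on $\mathcal{P}$ is the same fractional one.

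Concretely, I would proceed as follows. Partition $[k]$ into $Y$ (good indices) and $[k] \setminus Y$ (not good). For each $i \in Y$, both terminal pairs $(a_i, b_i)$ and $(c_i, d_i)$ contribute to the count of satisfied pairs, giving $2$ each. For each $i \notin Y$, by \autoref{def:good-for-vertex-U} at most one of $(a_i, b_i), (c_i, d_i)$ is satisfied by $\mathcal{P}$, contributing at most $1$ each. Summing gives an upper bound of $2|Y| + (k - |Y|) = k + |Y|$ on the total number of satisfied terminal pairs.

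On the other hand, the assumption of this section is that $\mathcal{P}$ satisfies at least a $(\tfrac{1}{2} + \epsilon)$-fraction of the $|\mathcal{T}| = 2k$ terminal pairs, i.e., at least $(\tfrac{1}{2} + \epsilon) \cdot 2k = k + 2\epsilon k$ pairs. Combining the two inequalities yields $k + |Y| \geq k + 2\epsilon k$, hence $|Y| \geq 2\epsilon \cdot k$, which is exactly the claim. There is no real obstacle: the argument is purely combinatorial/double-counting and does not depend on any property of \GvertU, the splitting operation \autoref{def:splitting-operation-vertex-U}, or shortest-path structure; it only uses the definition of ``good'' and the assumed fraction of satisfied pairs. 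This is precisely why the authors elect to omit the proof and refer back to \autoref{lem:good_size-U}.
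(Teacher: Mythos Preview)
Your proposal is correct and reproduces exactly the counting argument of \autoref{lem:good_size-U}, which is precisely what the paper invokes here (the paper omits the proof and refers back to \autoref{lem:good_size-U}). There is nothing to add.
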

\medskip

\begin{lemma}
    \label{lem:good-equal-vertex-U}
    If $i\in [k]$ is good, then there exists $\delta_i \in [N]$ such that the two paths in $\mathcal{P}$ satisfying $a_i - b_i$ and $c_i - d_i$ in \GedgeU are the vertical canonical path $\CanVertexU(\delta_i\ ;\ a_i - b_i)$ and the horizontal canonical path $\CanVertexU(\delta_i\ ;\ c_i - d_i)$ respectively.
\end{lemma}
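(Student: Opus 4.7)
The plan is to mirror the template used for the directed analogue \autoref{lem:good-equal-vertex}, with one small observation to strengthen "image of a canonical path" to "canonical path" in the undirected vertex-split setting.

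First, by \autoref{def:good-for-vertex-U}, both terminal pairs $(a_i, b_i)$ and $(c_i, d_i)$ are satisfied by $\mathcal{P}$; call the satisfying paths $P_1, P_2 \in \mathcal{P}$ respectively. Apply \autoref{lem:vertical-canonical-is-shortest-G-vertex-U}(ii) and \autoref{lem:horizontal-canonical-is-shortest-G-vertex-U}(ii) to conclude $P_1$ is an image of $\CanInterU(\alpha\ ;\ a_i - b_i)$ and $P_2$ is an image of $\CanInterU(\beta\ ;\ c_i - d_i)$, for some $\alpha, \beta \in [N]$. I then want to upgrade "image of" to "equals" the corresponding \GvertU canonical paths. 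The key observation is that after a \vertsplit operation (\autoref{fig:split-vertex-yes-U}), the vertex $\w_{\Hor}$ is adjacent only to $\lefty(\w), \righty(\w)$ and $\w_{\Ver}$ only to $\bottomy(\w), \topy(\w)$, with no edge between $\w_{\Hor}$ and $\w_{\Ver}$. Consequently, the only way for a horizontal segment $\lefty(\w) - \cdot - \righty(\w)$ of an image to traverse a \vertsplit vertex is through $\w_{\Hor}$, and likewise any vertical segment must use $\w_{\Ver}$. Hence the images collapse to the canonical paths themselves: $P_1 = \CanVertexU(\alpha\ ;\ a_i - b_i)$ and $P_2 = \CanVertexU(\beta\ ;\ c_i - d_i)$.

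Next I need to show $\alpha = \beta$, which will then serve as $\delta_i$. Focus on the grid $U_{i,i}$, which is traversed by both paths: $P_1$ uses the column-path $\Column_\alpha(U_{i,i})$ and $P_2$ uses the row-path $\Row_\beta(U_{i,i})$, whose only common vertex in \GintU is $\w_{i,i}^{\alpha,\beta}$. Suppose for contradiction that $\w_{i,i}^{\alpha,\beta}$ is \notsplit (\autoref{fig:split-vertex-not-U}). Then $\w_{i,i,\Hor}^{\alpha,\beta} = \w_{i,i,\Ver}^{\alpha,\beta}$, and by \autoref{def:hori-canonical-Gvertex-U}, \autoref{def:vert-canonical-Gvertex-U} this common vertex lies on both $P_1$ and $P_2$, contradicting their vertex-disjointness in \GvertU. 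This parallels \autoref{clm:must-be-vert-split-vertex-i-i} in the directed case and is the only real content of the argument; I would state it as a small internal claim.

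Therefore $\w_{i,i}^{\alpha,\beta}$ is \vertsplit, which by \autoref{def:splitting-operation-vertex-U} means $(\alpha, \beta) \in S_{i,i}$. By \autoref{eqn:clique-to-gt-reduction-U} we have $S_{i,i} = \{(a,a) : a \in [N]\}$, so $\alpha = \beta$. Setting $\delta_i := \alpha = \beta$ completes the proof. The main (and essentially only) subtlety is the image-to-canonical collapse in the first paragraph; after that, the vertex-disjointness argument is a direct undirected analogue of the directed vertex case.
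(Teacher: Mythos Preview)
Your proof is correct and follows essentially the same approach as the paper: invoke the shortest-path characterization lemmas to get $\alpha,\beta$, then use vertex-disjointness at $\w_{i,i}^{\alpha,\beta}$ to force it to be \vertsplit, hence $(\alpha,\beta)\in S_{i,i}$ and $\alpha=\beta$. Your additional observation that the image-to-canonical collapse is forced by the adjacency structure of a \vertsplit vertex (so $\lefty(\w)-\w_{\Ver}-\righty(\w)$ is not even a path in \GvertU) is a genuine clarification; the paper simply asserts that $P_1,P_2$ are canonical paths directly from lemmas whose statements only guarantee images, so you have filled a small gap the paper leaves implicit.
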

\begin{proof}
    If $i$ is good, then by~\autoref{def:good-for-vertex-U} both the pairs $a_i - b_i$ and $c_i - d_i$ are satisfied by $\mathcal{P}$. Let $P_1, P_2\in \mathcal{P}$ be the paths that satisfy the terminal pairs $(a_i, b_i)$ and $(c_i, d_i)$ respectively.
    Since $P_1$ is a shortest $a_i - b_i$ path in \GvertU, by~\autoref{lem:vertical-canonical-is-shortest-G-vertex-U}(ii) it follows that $P_1$ is the vertical canonical path $\CanVertexU(\alpha\ ;\ a_i - b_i)$ for some $\alpha\in [N]$. Since $P_2$ is a shortest $c_i - d_i$ path in \GvertU, by~\autoref{lem:horizontal-canonical-is-shortest-G-vertex-U}(ii) it follows that $P_2$ is the horizontal canonical path $\CanVertexU(\beta\ ;\ c_i - d_i)$ for some $\beta\in [N]$.

    Using the fact that $P_1$ and $P_2$ are vertex-disjoint in \GvertU, we now claim that $\w_{i,i}^{\alpha,\beta}$ is \vertsplit:
    \begin{claim}
        The vertex $\w_{i,i}^{\alpha,\beta}$ is \vertsplit by the splitting operation of~\autoref{def:splitting-operation-vertex-U}.
        \label{clm:must-be-vert-split-vertex-i-i-U}
    \end{claim}
    \begin{proof}
        By~\autoref{def:splitting-operation-vertex-U}, every black vertex of \GintU is either \vertsplit or \notsplit. If $\w_{i,i}^{\alpha,\beta}$ was \notsplit (\autoref{fig:split-vertex-not-U}), then by~\autoref{def:hori-canonical-Gvertex-U} and~\autoref{def:vert-canonical-Gvertex-U}, the vertex $\w_{i,i,\Hor}^{\alpha,\beta}=\w_{i,i,\Ver}^{\alpha,\beta}$ belongs to both $P_1$ and $P_2$ contradicting the fact that they are vertex-disjoint.
    \end{proof}
    By~\autoref{clm:must-be-vert-split-vertex-i-i-U}, we know that the vertex $\w_{i,i}^{\alpha,\beta}$ is \vertsplit. Hence, from~\autoref{eqn:clique-to-gt-reduction-U} and~\autoref{def:splitting-operation-vertex-U}, it follows that $\alpha=\beta$ which concludes the proof of the lemma.
\end{proof}
\medskip

\begin{lemma}
    \label{lem:good_vertices-U}
    If both $i,j \in [k]$ are good and $i \neq j$, then $v_{\delta_i}-v_{\delta_j} \in E(G)$.
\end{lemma}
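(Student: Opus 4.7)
The plan is to mirror the proof of the directed analogue \autoref{lem:good_vertices} from \autoref{sec:2kvdsp-to-clique}, but using the undirected building blocks developed in \autoref{sec:construction-of-Gvert-U} and \autoref{sec:characterizing-shortest-in-G-vertex-U}. First, since both $i$ and $j$ are good, by \autoref{def:good-for-vertex-U} there exist paths $Q_1, Q_2 \in \mathcal{P}$ which satisfy the pairs $(a_i, b_i)$ and $(c_j, d_j)$ respectively, and these two paths are vertex-disjoint in \GvertU. Applying \autoref{lem:good-equal-vertex-U} to the good indices $i$ and $j$ (the lemma pins down a single $\delta_i \in [N]$ witnessing goodness for $i$, and likewise $\delta_j$ for $j$), we conclude that $Q_1$ is the vertical canonical path $\CanVertexU(\delta_i\ ;\ a_i - b_i)$ and $Q_2$ is the horizontal canonical path $\CanVertexU(\delta_j\ ;\ c_j - d_j)$.

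Next, I would show that the grid vertex $\w_{i,j}^{\delta_i,\delta_j}$ must be \vertsplit. This is the key step and the only real content of the lemma. Suppose for contradiction that $\w_{i,j}^{\delta_i,\delta_j}$ is \notsplit (\autoref{fig:split-vertex-not-U}); then by \autoref{def:splitting-operation-vertex-U} we have $\w_{i,j,\Hor}^{\delta_i,\delta_j} = \w_{i,j,\Ver}^{\delta_i,\delta_j}$. By inspecting \autoref{def:hori-canonical-Gvertex-U}, the path $Q_2 = \CanVertexU(\delta_j\ ;\ c_j - d_j)$ contains the vertex $\w_{i,j,\Hor}^{\delta_i,\delta_j}$, because $Q_2$ traverses row $\delta_j$ of the grid $U_{i,j}$ and at column $\delta_i$ it visits the ``horizontal'' incarnation of the corresponding grid vertex. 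Symmetrically, by \autoref{def:vert-canonical-Gvertex-U}, $Q_1 = \CanVertexU(\delta_i\ ;\ a_i - b_i)$ contains $\w_{i,j,\Ver}^{\delta_i,\delta_j}$. In the \notsplit case these two vertices coincide, so $Q_1$ and $Q_2$ share a vertex, contradicting vertex-disjointness. Hence $\w_{i,j}^{\delta_i,\delta_j}$ is \vertsplit.

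Finally, by \autoref{def:splitting-operation-vertex-U}, the vertex $\w_{i,j}^{\delta_i,\delta_j}$ being \vertsplit means $(\delta_i, \delta_j) \in S_{i,j}$. Since $i \neq j$, \autoref{eqn:clique-to-gt-reduction-U} gives the desired conclusion $v_{\delta_i} - v_{\delta_j} \in E(G)$.

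The argument is essentially a verbatim port of \autoref{lem:good_vertices}, with the only non-routine verification being that the two (distinct) images $\w_{i,j,\Hor}^{\delta_i,\delta_j}$ and $\w_{i,j,\Ver}^{\delta_i,\delta_j}$ are indeed the unique vertices of $Q_1$ and $Q_2$ lying in the cell for $(i,j)$ at coordinate $(\delta_i,\delta_j)$. This follows directly from the definitions of the horizontal/vertical canonical paths in \GvertU (\autoref{def:hori-canonical-Gvertex-U} and \autoref{def:vert-canonical-Gvertex-U}), so no new ideas beyond what was required in the directed case are needed.
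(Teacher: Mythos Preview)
Your proposal is correct and follows essentially the same approach as the paper: set up $Q_1,Q_2$ via \autoref{lem:good-equal-vertex-U}, argue by contradiction that $\w_{i,j}^{\delta_i,\delta_j}$ must be \vertsplit (else the \notsplit vertex lies on both canonical paths), and then read off $v_{\delta_i}-v_{\delta_j}\in E(G)$ from \autoref{eqn:clique-to-gt-reduction-U}. The paper's proof is a near-verbatim match of yours.
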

\begin{proof}
    Since $i$ and $j$ are good, by~\autoref{def:good-for-vertex-U}, there are paths $Q_1, Q_2 \in \mathcal{P}$ satisfying the pairs $(a_i, b_i), (c_j, d_j)$ respectively. By~\autoref{lem:good-equal-vertex-U}, it follows that
    \begin{itemize}
        \item $Q_1$ is the vertical canonical path $\CanVertexU(\delta_i\ ;\ a_i - b_i)$.
        \item $Q_2$ is the horizontal canonical path $\CanVertexU(\delta_j\ ;\ c_j - d_j)$.
    \end{itemize}

    Using the fact that $Q_1$ and $Q_2$ are vertex-disjoint in \GvertU, we now claim that $\w_{i,j}^{\delta_i,\delta_j}$ is \vertsplit:
    \begin{claim}
        The vertex $\w_{i,j}^{\delta_i,\delta_j}$ is \vertsplit by the splitting operation of~\autoref{def:splitting-operation-vertex-U}.
        \label{clm:must-be-vert-split-edge-i-j-U}
    \end{claim}
    \begin{proof}
        By~\autoref{def:splitting-operation-vertex-U}, every black vertex of \GintU is either \vertsplit or \notsplit. If $\w_{i,j}^{\delta_j,\delta_j}$ was \notsplit (\autoref{fig:split-vertex-not-U}), then by~\autoref{def:hori-canonical-Gvertex-U} and~\autoref{def:vert-canonical-Gvertex-U}, the vertex $\w_{i,j,\Hor}^{\delta_i,\delta_j}=\w_{i,j,\Ver}^{\delta_i,\delta_j}$ belongs to both $Q_1$ and $Q_2$ contradicting the fact that they are vertex-disjoint
    \end{proof}
    By~\autoref{clm:must-be-vert-split-edge-i-j-U}, we know that the vertex $\w_{i,j}^{\delta_i,\delta_j}$ is \vertsplit. Since $i\neq j$, from~\autoref{eqn:clique-to-gt-reduction-U} and~\autoref{def:splitting-operation-vertex-U}, it follows that $v_{\delta_i}-v_{\delta_j}\in E(G)$ which concludes the proof of the lemma.

\end{proof}
\medskip

From~\autoref{lem:good_size-vertex-U} and~\autoref{lem:good_vertices-U}, it follows that the set $X:=\{v_{\delta_i}\ : i\in Y\}$ is a clique of size $\geq (2\epsilon)k$ in $G$.

\subsection{Proof of \autoref{thm:inapprox-vertex-result-U} and \autoref{thm:hardness-vertex-result-U}}
\label{sec:proof-of-main-theorem-vertex-U}

Finally we are ready to prove \autoref{thm:inapprox-vertex-result-U} and \autoref{thm:hardness-vertex-result-U}, which are restated below.
\medskip

\apxundirectedvertexthm*
\exactundirectedvertexthm*
\begin{proof}{\textbf{\autoref{thm:hardness-vertex-result-U}}}

    Given an instance $G$ of \kclique, we can use the construction from \autoref{sec:construction-of-Gvert-U} to build an instance $(U_{\vertex}, \mathcal{T})$ of Undirected-$2k$-\VDSP such that $U_{\vertex}$ is a $1$-planar graph (\autoref{clm:GvertexU-is-1-planar}). The graph \GvertU has $n=O(N^2 k^2)$ vertices (\autoref{clm:size-of-G-vertex-U}), and it is easy to observe that it can be constructed from $G$ (via first constructing \GintU) in $\poly(N,k)$ time.

    It is known that \kclique is W[1]-hard parameterized by $k$, and under ETH cannot be solved in $f(k)\cdot N^{o(k)}$ time for any computable function $f$~\cite{chen-hardness}. Combining the two directions from~\autoref{sec:2kvdsp-to-clique-U} (with $\epsilon = 0.5$) and~\autoref{sec:clique-to-2kvdsp-U}  we obtain a parameterized reduction from an instance $(G,k)$ of \kclique with $N$ vertices to an instance $(U_{\vertex}, \mathcal{T})$ of Undirected-$2k$-\VDSP where \GvertU is a $1$-planar graph (\autoref{clm:GvertexU-is-1-planar}) and has $O(N^2 k^2)$ vertices (\autoref{clm:size-of-G-vertex-U}). As a result, it follows that Undirected-$k$-\VDSP on $1$-planar graphs is W[1]-hard parameterized by number $k$ of terminal pairs, and under ETH cannot be solved in $f(k)\cdot n^{o(k)}$ time where $f$ is any computable function and $n$ is the number of vertices.
\end{proof}

\medskip

\begin{proof}{\textbf{\autoref{thm:inapprox-vertex-result-U}}}

    Let $\delta$ and $r_0$ be the constants from~\autoref{thm:cli_inapprox}. Fix any constant $\epsilon\in (0,1/2]$. Set $\zeta = \dfrac{\delta \epsilon}{2}$ and $k=\max \Big\{\dfrac{1}{2\zeta} , \dfrac{r_0}{2\epsilon}\Big\}$.

    Suppose to the contrary that there exists an algorithm $\mathbb{A}_{\VDSP}$ running in $f(k)\cdot n^{\zeta k}$ time (for some computable function $f$) which given an instance of Undirected-$k$-\VDSP with $n$ vertices can distinguish between the following two cases:
    \begin{itemize}
        \item[(1)] All $k$ pairs of the Undirected-$k$-\VDSP instance can be satisfied
        \item[(2)] The max number of pairs of the Undirected-$k$-\VDSP instance that can be satisfied is less than $(\frac{1}{2}+\epsilon)\cdot k$
    \end{itemize}
    We now design an algorithm $\mathbb{A}_{\Clique}$ that contradicts~\autoref{thm:cli_inapprox} for the values $q=k$ and $r=(2\epsilon)k$. Given an instance of $(G,k)$ of \kclique with $N$ vertices, we apply the reduction from~\autoref{sec:construction-of-Gvert-U} to construct an instance $(U_{\vertex}, \mathcal{T})$ of Undirected-$2k$-\VDSP where \GvertU has $n=O(N^2 k^2)$ vertices (\autoref{clm:size-of-G-vertex-U}). It is easy to see that this reduction takes $O(N^2 k^2)$ time as well. We now show that the number of pairs which can be satisfied from the Undirected-$2k$-\VDSP instance is related to the size of the max clique in $G$:
    \begin{itemize}
        \item If $G$ has a clique of size $q=k$, then by~\autoref{sec:clique-to-2kvdsp-U} it follows that all $2k$ pairs of the instance $(U_{\vert},\mathcal{T})$ of Undirected-$2k$-\VDSP can be satisfied.
        \item If $G$ does not have a clique of size $r=2\epsilon k$, then we claim that the max number of pairs in $\mathcal{T}$ that can be satisfied is less than $(\frac{1}{2}+\epsilon)\cdot 2k$. This is because if at least $(\frac{1}{2}+\epsilon)$-fraction of pairs in $\mathcal{T}$ could be satisfied then by~\autoref{sec:2kvdsp-to-clique-U} the graph $G$ would have a clique of size $\geq (2\epsilon) k=r$.
    \end{itemize}
    Since the algorithm $\mathbb{A}_{\VDSP}$ can distinguish between the two cases of all $2k$-pairs of the instance $(U_{\vertex}, \mathcal{T})$ can be satisfied or only less than $(\frac{1}{2}+\epsilon)\cdot2k$ pairs can be satisfied, it follows that $\mathbb{A}_{\Clique}$ can distinguish between the cases $\Clique(G)\geq q$ and $\Clique(G)<r$.

    The running time of the algorithm $\mathbb{A}_{\Clique}$ is the time taken for the reduction from~\autoref{sec:construction-of-Gvert-U} (which is $O(N^2 k^2)$) plus the running time of the algorithm $\mathbb{A}_{\VDSP}$ which is $f(2k)\cdot n^{\zeta\cdot 2k}$. It remains to show that this can be upper bounded by $g(q,r)\cdot N^{\delta r}$ for some computable function $g$:
    \begin{align*}
        & O(N^2 k^2) + f(2k)\cdot n^{\zeta\cdot 2k}\\
        &\leq  c\cdot N^2 k^2 + f(2k)\cdot d^{\zeta\cdot 2k}\cdot (N^2 k^2)^{\zeta\cdot 2k} \tag{for some constants $c,d\geq 1$: this follows since $n=O(N^2 k^2)$}\\
        &\leq c\cdot N^2 k^2 + f'(k)\cdot N^{2\zeta\cdot 2k} \tag{where $f'(k)=f(2k)\cdot d^{\zeta\cdot 2k}\cdot k^{2\zeta\cdot 2k}$}\\
        &\leq 2c\cdot f'(k)\cdot N^{2\zeta\cdot 2k} \tag{since $4\zeta k\geq 2$ implies $f'(k)\geq k^2$ and $N^{2\zeta\cdot 2k}\geq N^2$}\\
        &= 2c\cdot f'(k) \cdot N^{\delta r} \tag{since $\zeta=\frac{\delta \epsilon}{2}$ and $r=(2\epsilon)k$}
    \end{align*}
    Hence, we obtain a contradiction to~\autoref{thm:cli_inapprox} with $q=k, r=(2\epsilon)k$ and $g(k)=2c\cdot f'(k)=2c\cdot f(2k)\cdot d^{\zeta\cdot 2k}\cdot k^{2\zeta\cdot 2k}$.
\end{proof}

	\section{Conclusion \& Open Problems}
	\label{sec:app-open-problems}
	In this paper, we obtained approximate and exact lower bounds for all four variants of the $k$-\dsp problem.
	We leave open the following natural questions:
	\begin{itemize}
		\item Can we improve on the $(\frac{1}{2}+\epsilon)$ factor of the FPT inapproximability results for \edsp and \vdsp on planar and $1$-planar graphs respectively? Perhaps this could be achieved by modifying the reduction for the $o(k)$ factor lower bound on general graphs by Bentert et al.~\cite{bentertTightApproximationKernelization2024}.
		\item Can we obtain inapproximability lower bounds also for the $k$-\disjp problem on directed graphs, possibly on graph classes such as DAGs or planar graphs for which FPT or XP algorithms are known~\cite{DBLP:conf/focs/CyganMPP13,DBLP:journals/tcs/FortuneHW80}?
		\item \autoref{thm:hardness-vertex-result} gives W[1] hardness and, under ETH, an $f(k)\cdot n^{o(k)}$ lower bound for \vdsp on directed $1$-planar graphs.
		Can we get equivalent lower bounds for planar graphs, which would show Bérczi and Kobayashi's $n^{O(k)}$-time algorithm~\cite{berczi-kobayashi} to be tight?
		Alternatively, is an FPT algorithm for the problem possible by either adapting Cygan et al.'s $2^{2^{O(k^2)}}\cdot n^{O(1)}$-time algorithm for planar \vdp~\cite{DBLP:conf/focs/CyganMPP13}, or through an entirely new technique.
	\end{itemize}

	\bibliographystyle{splncs04}
	\bibliography{papers}

\end{document}